\documentclass[journal]{IEEEtran}

\pdfoutput=1

\usepackage{graphicx}
\usepackage{mathrsfs}
\usepackage{mathptmx}
\usepackage{times}
\usepackage{amsmath}
\usepackage{amsthm}
\usepackage{subcaption}
\usepackage{comment}
\usepackage[export]{adjustbox}
\usepackage{amssymb} 
\usepackage{epstopdf}
\usepackage{tikz}
\usepackage{pgfplots} 
\usetikzlibrary{external,positioning,decorations.pathreplacing,shapes,arrows,patterns}

\newtheorem{theorem}{\bf Theorem}
\newtheorem{prop}[theorem]{\bf Proposition} 
 
\newtheorem{definition}{\bf Definition} 
\newtheorem{example}{\bf Example} [section]
\newtheorem{lemma}{\bf Lemma} 

\newcommand{\mmse}{\mathsf{mmse}}
\newcommand{\expphi}{\left(e^{2\pi i \phi} \right)}

\tikzstyle{int}=[draw, fill=blue!10, minimum height = 1cm, minimum width=1.5cm,thick ]
\tikzstyle{sint}=[draw, fill=blue!10, minimum height = 0.5cm, minimum width=0.8cm,thick ]
\tikzstyle{sum}=[circle, fill=blue!10, draw=black,line width=1pt,minimum size = 0.5cm, thick ]
\tikzstyle{ssum}=[circle, fill=blue!10,draw=black,line width=1pt,minimum size = 0.1cm]
\tikzstyle{int1}=[draw, fill=blue!10, minimum height = 0.5cm, minimum width=1cm,thick ]
\tikzstyle{enc}=[draw, fill=blue!10, minimum height = 2.7cm, minimum width=1cm,thick ]
\tikzstyle{int}=[draw, fill=blue!10, minimum height = 1cm, minimum width=1.5cm,thick ]

\title{ \LARGE \bf
Distortion Rate Function of Sub-Nyquist \\Sampled Gaussian Sources}
\author{ 
\IEEEauthorblockN{
Alon Kipnis, Andrea J. Goldsmith, Yonina C. Eldar and Tsachy Weissman  }

\thanks{  A. Kipnis, A. J. Goldsmith and T. Weissman is with the Department of Electrical Engineering, Stanford University, Stanford, CA 94305 USA. 

Y. C. Eldar is with the Department of Electrical Engineering, Technion - Israel Institute of Technology Haifa 32000, Israel.}
\thanks{
This work was supported in part by the NSF Center for Science of Information (CSoI) under grant CCF-0939370, the BSF Transformative Science Grant 2010505 and the Intel Collaborative Research Institute for Computational Intelligence (ICRI-CI). \par
This paper was presented in part at the 51st Annual Allerton Conference on Communication, Control, and Computing (Allerton), September 2013.}
}

\begin{document}
\graphicspath{{../Figures/}}
\maketitle
\thispagestyle{empty}
\pagestyle{empty}

\begin{abstract}
The amount of information lost in sub-Nyquist sampling of a continuous-time Gaussian stationary process is quantified. We consider a combined source coding and sub-Nyquist reconstruction problem in which the input to the encoder is a noisy sub-Nyquist sampled version of the analog source. We first derive an expression for the mean squared error in the reconstruction of the process from a noisy and information rate-limited version of its samples. This expression is a function of the sampling frequency and the average number of bits describing each sample. It is given as the sum of two terms: Minimum mean square error in estimating the source from its noisy but otherwise fully observed sub-Nyquist samples, and a second term obtained by reverse waterfilling over an average of spectral densities associated with the polyphase components of the source. We extend this result to multi-branch uniform sampling, where the samples are available through a set of parallel channels with a uniform sampler and a pre-sampling filter in each branch. Further optimization to reduce distortion is then performed over the pre-sampling filters, and an optimal set of pre-sampling filters associated with the statistics of the input signal and the sampling frequency is found. This results in an expression for the minimal possible distortion achievable under any analog to digital conversion scheme involving uniform sampling and linear filtering. These results thus unify the Shannon-Whittaker-Kotelnikov sampling theorem and Shannon rate-distortion theory for Gaussian sources.
\end{abstract}
\begin{IEEEkeywords}
Source coding, rate-distortion, sub-Nyquist sampling, remote source coding, Gaussian processes.
\end{IEEEkeywords}


\section{INTRODUCTION}
\label{sec:Intro}
Consider the task of storing an analog source in digital memory. The trade-off between the bit-rate of the samples and the minimal possible distortion in the reconstruction of the signal from these samples is described by the distortion-rate function (DRF) of the source. A key idea in determining the DRF of an analog source is to map the continuous-time process into a discrete-time process based on sampling above the Nyquist frequency \cite[Sec. 4.5.3]{berger1971rate}.  Since wideband signaling and A/D technology limitations can preclude sampling signals at their Nyquist frequency \cite{mishali2011sub, eldar2015sampling}, an optimal source code based on such a discrete-time representation may be impractical in certain scenarios. In addition, some applications may be less sensitive to inaccuracies in the data, which suggests that the sampling frequency can be reduced far below the Nyquist frequency without significantly affecting performance. These considerations motivate us to consider the source coding problem in Fig.~\ref{fig:operational_scheme}, in which an analog random signal $X(\cdot)$ with additive noise needs to be reconstructed from its rate-limited samples. This introduces a combined sampling and source coding problem, which lies at the intersection of information theory and signal processing.\par
The parameters in this problem formulation are the sampling frequency $f_{s}$, the source coding rate $R$ and the average distortion $D$. If the sampling frequency is such that  the sampled process can be reconstructed from its samples, then the sampling operation has no effect on distortion and the trade-off between the source coding rate and the distortion is given by the indirect DRF (iDRF) of the source \cite{1057738}. The other extreme is when the source coding rate $R$ goes to infinity, in which case we are left with a signal processing problem: reconstructing an undersampled signal in the presence of noise \cite{815501}. The reductions of the general problem in these two special cases are illustrated by the diagram in Fig.~\ref{fig:diagram_scheme}. 

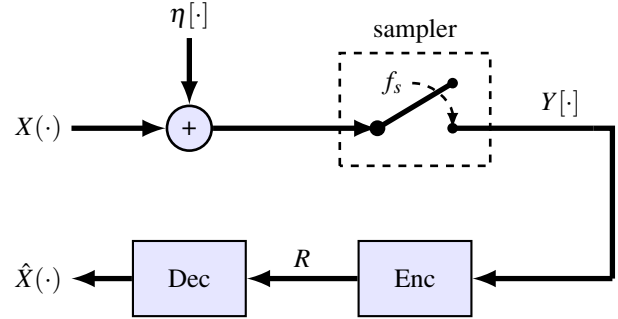
\begin{figure}
\begin{center}
\begin{tikzpicture}[node distance=2cm,auto,>=latex]
  \node at (0,0) (source) {$X(\cdot)$};
  \node [ssum, right of = source, node distance = 2cm] (sum) {+};
  \node [above of = sum,node distance = 1.5cm] (noise) {$\eta[\cdot]$};
  \node [coordinate, right of = sum,node distance = 2.5cm] (smp_in) {};
  \node [coordinate, right of = smp_in,node distance = 1cm] (smp_out){};
  \node [coordinate,above of = smp_out,node distance = 0.6cm] (tip) {};
\fill  (smp_out) circle [radius=2pt];
\fill  (smp_in) circle [radius=3pt];
\fill  (tip) circle [radius=2pt];
\node[left,left of = tip, node distance = 0.8 cm] (ltop) {$f_s$};
\draw[->,dashed,line width = 1pt] (ltop) to [out=0,in=90] (smp_out.north);
\node [right of = smp_out, node distance = 2cm] (right_edge) {};
\node [below of = right_edge] (right_b_edge) {};

           \node [right] (dest) [below of=source]{$\hat{X}(\cdot)$};
         \node [int] (dec) [right of=dest, node distance = 2cm] {$\mathrm{Dec}$};
                \node [int] (enc) [right of = dec, node distance = 3cm]{$\mathrm{Enc}$};

  \draw[-,line width=2pt] (smp_out) -- node[above, xshift = 0.5cm] {$Y[\cdot]$} (right_edge);
  \draw[-,line width = 2]  (right_edge.west) -| (right_b_edge.east);
    \draw[->,line width = 2]  (right_b_edge.east) -- (enc.east);
   \draw[->,line width=2pt] (enc) -- node[above] {$R$} (dec);

   \draw[->,line width=2pt] (dec) -- (dest);
    \draw[->,line width=2pt] (source) -- (sum);
    \draw[->,line width=2pt] (noise) -- (sum);
    \draw[->, line width = 2pt] (sum) --  (smp_in);
    \draw[line width=2pt]  (smp_in) -- (tip);
    \draw[dashed, line width = 1pt] (smp_in)+(-0.5,-0.5) -- +(-0.5,1) -- node[above] {sampler} +(1.5,1) -- + (1.5,-0.5) -- +(-0.5,-0.5);
    
\end{tikzpicture}
\caption{\label{fig:operational_scheme} Combined sampling and source coding model.}
\end{center}
\end{figure}


In this work we focus on uniform sampling of Gaussian stationary processes under quadratic distortion, using single branch and multi-branch uniform sampling. We determine the expression for the three-dimensional manifold representing the trade-off among $f_{s}$, $R$ and $D$ in terms of the power spectral density (PSD) of the source, the noise and the sampling mechanism. In addition, we derive an expression for the optimal pre-sampling filter and the corresponding minimal distortion attainable under any such uniform sampling scheme. This minimal distortion provides a lower bound on the distortion achieved by any A/D conversion scheme with uniform sampling. In this sense, the distortion-rate sampling frequency function associated with our model quantifies the excess distortion incurred when source encoding is based on the information in any uniform sub-Nyquist sampling scheme of a Gaussian stationary source in lieu of the full source information about the analog source. \\

It is important to emphasize that in the model in Fig.~\ref{fig:operational_scheme} and throughout the paper, the bitrate $R$ is fixed and represents the number of \emph{bits per time unit} rather then the number of bits per sample. In particular, this model does not capture memory and quantization constraints of the samples at the encoder. This means that for any fixed $R$, minimal distortion is achieved by taking $f_s$ greater than or equal to $f_{Nyq}$, the Nyquist frequency of $X(\cdot)$, such that $X(\cdot)$ can be reconstructed from the samples $Y[\cdot]$ with zero error. In particular, our model shows no benefit for oversampling schemes, in agreement with the observations in \cite{370112} and \cite{335948}. In practice, memory and computational constraints may preclude the encoder from processing information at high sampling rates or high quantizer resolution. Our setting provides a distortion-rate bound regardless of the actual implementation of the ADC, which may be a sampler followed by a scalar quantizer as in pulse code modulation or with a feedback loop as in Sigma-Delta modulation \cite{sklar2001digital}. Such constraints on the encoder (not included in our model) lead to an interesting trade-off between sampling frequency and the number of bits per sample, which is investigated in \cite{KipnisAllerton2015}. 
%

\begin{figure}
\begin{center}
\begin{tikzpicture}

\node[int,align=center,minimum width=2cm] (root) at (0,0) {combined \\
source coding \\
and sampling};

\node[int,align=center,minimum width=2cm] (right) at (3,-2) {MMSE under \\
sub-Nyquist \\
sampling};

\node[int,align=center, minimum height = 1cm, minimum width=2cm] (left) at (-3,-2) {indirect \\ source \\ coding};

\draw [->, line width=1pt] (root) -| node[above, xshift = 0.5cm] {$f_s > f_{Nyq}$} (left);

\draw [->, line width=1pt] (root) -| node[above, xshift = -0.5cm] {$R\rightarrow \infty$} (right);

\end{tikzpicture}

\caption{\label{fig:diagram_scheme} The source coding problem of Fig.~\ref{fig:operational_scheme} subsumes two classical problems in information theory and signal processing.}
\end{center}
\end{figure}
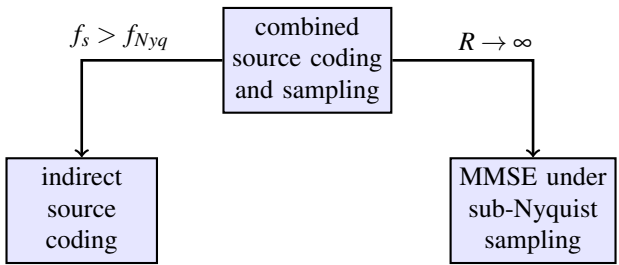

\subsection{Related work}

Shannon derived the quadratic DRF of a Gaussian bandlimited white-noise source \cite[Thm. 22]{Shannon1948}. Shannon's expression was extended to continuous-time Gaussian stationary sources with arbitrary PSD by Pinsker  and Kolmogorov in \cite{1056823}, now known as the Shannon-Kolmogorov-Pinsker (SKP) reverse waterfilling expression \cite{Berger1998}. Gel’fand and Yaglom \cite{gelfand1959calculation} used the Karhunen-Lo\'{e}ve expansion of the source over a finite time interval to map the continuous-time problem back to a discrete-time problem, and in this way provided a first source coding theorem for second order continuous-time stationary processes under quadratic distortion. A source coding theorem for a more general class of continuous-time sources was later proved by Berger \cite{Berger1968254}. In addition to these source coding theorems Berger \cite[Sec. 4.5.3]{berger1971rate} also suggested an approach to source coding based on mapping the continuous-time waveform to its discrete-time representation by sampling at increasingly high rates.
Berger did not resolved various technical difficulties that arise in this approach, such as the convergence of mutual information and error in vector quantization as the sampling rate increases. Pinsker showed that the mutual information between a pair of continuous-time Gaussian stationary processes can be approximated by the mutual information of their values over finite sets \cite{pinsker1964information}. Although this result settles some of the difficulties with the sampling approach to continuous-time source coding, Pinsker did not discuss it in the context of source coding theory. The sampling approach to continuous-time source coding was only recently settled in \cite{neuhoff2013information} by studying the behavior of a vector quantizer as the sampling frequency approaches infinity. In view of these papers, it is quite remarkable that SKP reverse waterfilling provides the minimal distortion theoretically achievable in any digital representation of a continuous-time source, regardless of the way the time index is discretized or the specific mapping of the analog waveform to a finite alphabet set.

Since in our model the encoder needs to deliver information about the source but cannot observe it directly, the problem of characterizing the DRF falls within the regime of \textit{indirect} or \textit{remote source coding} problems \cite[Section 3.5]{berger1971rate}. Indirect source coding problems were first introduced by Dobrushin and Tsybakov in \cite{1057738}, where a closed form expression was derived in the case where the observable process and the source are jointly Gaussian and stationary. We refer to this setting as the stationary Gaussian indirect source coding problem. This setting is a special case of our model when the sampled process can be fully reconstructed from its samples, which happens for example when $X(\cdot)$ is bandlimited and sampled above its Nyquist frequency. In their work, Dobrushin and Tsybakov implicitly showed that quadratic indirect source coding can be separated into two independent problems: minimal mean squared error (MMSE) estimation and standard (direct) source coding. A single shot version of this separation was investigated by Wolf and Ziv in \cite{1054469}. An additional analysis of this separation result was given by Witsenhausen in \cite{1056251}, who viewed it as a special case of a reduced distortion measure which holds in indirect source coding under any fidelity criterion. These results are discussed in detail in Section~\ref{sec:Indirect-Source-Coding}. \par

The other branch of the diagram in Fig.~\ref{fig:diagram_scheme} is obtained if we relax the rate constraint in the model in Fig. \ref{fig:operational_scheme}. The distortion at a given sampling frequency is then simply the MMSE in estimating $X\left(\cdot\right)$ from its noisy sub-Nyquist samples $\mathbf Y\left[\cdot\right]$. An expression for this MMSE as well as a description of the optimal pre-sampling filter that minimizes it were derived in \cite{1090615} for single branch sampling. See also \cite{815501} and \cite{4663942} for a simple derivation. In particular, the MMSE expression establishes the sufficiency of uniform sampling above the Nyquist frequency for perfect reconstruction of random stationary signals, a fact which was first noted in \cite{1057404}. A necessary and sufficient sampling rate for perfect stable reconstruction of a sampled signal is the Lebesgue measure of the support of its spectrum, or the \emph{spectral occupancy} of the signal. This condition was derived by Landau \cite{1447892}, which in fact considered the more general setting of non-uniform sampling \cite{Landau1967}, although without a pre-sampling operation. Nevertheless, it follows from \cite{YuxinNonUniform} that the spectral occupancy, now termed the \emph{Landau rate},\footnote{Although denoted the \emph{Nyquist rate} by Landau himself in \cite{1447892}.} is the minimal sampling frequency that allows zero MSE under uniform sampling even when linear pre-processing is allowed. One way to achieve zero error at the Landau rate is by employing \emph{multi-branch sampling}, in which the input is passed through $P$ independent branches of linear filters and uniform samplers. This sampling strategy was proposed by Papoulis in \cite{Papoulis1977}. The MMSE in multi-branch sampling, as well as the optimal pre-sampling filters that minimize it, were implicitly derived in \cite{ShannonMeetsNyquist}. It was shown there that the optimal pre-sampling filters that maximize the capacity of a channel with sampling at the receiver are the same filters that minimize the MMSE in sub-Nyquist sampling. These optimal pre-sampling filters are designed to select a set of frequency bands with maximal signal to noise ration (SNR) while preventing aliasing in each sampling branch. This is an extension of a characterization of the optimal pre-sampling filter in single branch sampling given in \cite{1090615}. These results on the MMSE in sub-Nyquist sampling will be discussed in more detail in Section~\ref{sec:mmse}. 

\subsection{Main contributions}
The main result of this paper is a closed form expression for the function $D\left(f_{s},R\right)$ which represents the minimal quadratic distortion achieved in the reconstruction of any continuous time Gaussian stationary processes from its rate $R$ uniform noisy samples at frequency $f_s$. This is shown to be given by a parametric \emph{reverse waterfilling} expression, which in the case of single branch sampling takes the form
\begin{subequations}
\label{eq:main_result_intro}
\begin{align}
R(f_s,\theta)= \frac{1}{2} \int_{-\frac{f_s}{2}}^\frac{f_s}{2} \log^+\left[ \widetilde{S}_{X|Y}(f) /\theta\right]df,
\end{align}
\begin{align}
D(f_s,R) =\sigma_X^2- \int_{-\frac{f_s}{2}}^\frac{f_s}{2}\left[ \widetilde{S}_{X|Y}(f)-\theta \right]^+df,
\end{align}
\end{subequations}
where the function $\widetilde{S}_{X|Y}(f)$ is defined in terms of the sampling frequency $f_s$, the pre-sampling filter $H(f)$, the PSD of the source $S_X(f)$ and the PSD of the noise $S_\eta(f)$. The proof of \eqref{eq:main_result_intro} relies on an extension of the stationary Gaussian indirect source coding problem considered by Dobrushin and Tsybakov to vector-valued processes, which is given by Theorem \ref{thm:indirect_vector_case}. 
\par
The result of Dobrushin and Tsybakov was obtained for the case where the source and the observable process are jointly Gaussian and stationary. In our setting the observable discrete time process $Y\left[\cdot\right]$ and the analog processes $X\left(\cdot\right)$ are still jointly Gaussian, but the optimal reconstruction process under quadratic distortion, $\left\{ \mathbb{E}\left[X\left(t\right)|Y\left[\cdot\right]\right],\,t\in \mathbb R\right\}$, is in general not a stationary process. An easy way to see this is to consider the estimation error at the sampling times $t\in\mathbb Z/f_s$ in the noiseless case $\eta(\cdot) \equiv 0$, which must vanish, while the estimation error at any $t\notin \mathbb Z/f_s$ is not necessarily zero. In Section \ref{sec:Frequency-Rate-Distortion-Functi} we present a way to overcome this difficulty. The idea is to use time discretization, after which we can identify a vector-valued process jointly stationary with the samples $Y\left[\cdot\right]$ which contains the same information as the discretized version of $X\left(\cdot\right)$. The result is the indirect DRF at any given sampling frequency in a discrete-time version of our problem, which converges to $D\left(f_{s},R\right)$ under mild conditions.\par

In practice, the system designer may choose the parameters of the sampling mechanism to achieve minimal reconstruction error for a given sampling frequency $f_s$ and source coding rate $R$. This suggests that for a given source statistic and a sampling frequency $f_{s}$, an optimal choice of the pre-sampling filters can further reduce the distortion for a given source coding rate. In the single branch setting, this optimization is carried out in Subsection \ref{subsec:Optimal-pre-Sampling-Filter} and leads to the function $D^{\star}\left(f_s,R\right)$, which gives a lower bound on $D(f_s,R)$ and is only a function of the source and noise PSDs. The optimal pre-sampling filter $H(f)$ is shown to pass only one frequency in each discrete aliasing set $f+f_s\mathbb Z$ and suppress the rest. In other words, minimal distortion is achieved by eliminating aliasing. \par
We later extend our results to systems with $P\in \mathbb N$ sampling branches where the samples are represented by a vector-valued process $\mathbf Y[\cdot]$. We derive expressions for $D(P,f_s,R)$ and $D^\star(P,f_s,R)$, which denote the DRF with average sampling frequency $f_s$ and the DRF under optimal pre-sampling filtering, respectively. As the number of sampling branches $P$ goes to infinity, $D^\star(P,f_s,R)$ is shown to converge (but not monotonically, see Fig.~\ref{fig:DistMulti}) to a smaller value $D^\dagger(f_s,R)$, which essentially describes the minimal distortion achievable under any uniform sampling scheme. The functions $D^\star(P,f_s,R)$ and $D^\dagger(f_s,R)$ depend only on the statistics of the source and the noise. In particular, if the noise is zero, then $D^{\star}\left(P,f_s,R\right)$ and $D^\dagger\left(f_s,R\right)$ describe a fundamental trade-off in signal processing and information theory associated with any Gaussian stationary source. \par
Our main result \eqref{eq:main_result_intro} shows that the function $D(f_s,R)$ is obtained by reverse waterfilling over the function $\widetilde{S}_{X|Y} (f)$ that was initially introduced to calculate the MMSE in sub-Nyquist sampling in \cite{1090615} and \cite{815501}, denoted by $\mmse_{X|Y}(f_s)$. As a result, the optimal pre-sampling filter that minimize $D(f_s,R)$ is the same optimal pre-sampling filter that minimizes $\mmse_{X|Y}(f_s)$. In Section~\ref{sec:mmse} we prove this result using an approach based on a decomposition of the signal to its \emph{polyphase components}. We also define the notion of an \emph{aliasing-free set}, and use it to describe the optimal pre-sampling filter. This approach allows us to derive the MMSE in sub-Nyquist sampling using multi-branch uniform sampling with the number of branches going to infinity. This polyphase approach to deriving the MMSE also inspires the derivation of our main result \eqref{eq:main_result_intro}. We note that the fact that the function $\widetilde{S}_{X|Y} (f)$ is used in computing both $\mmse_{X|Y}(f_s)$ and $D(f_s,R)$ is not related to recent results on the relation between mutual information and MMSE estimation \cite{guo2005mutual}. Indeed, no information measure over a Gaussian channel is explicitly considered in our setting.

\subsection{Organization}

The rest of the paper is organized as follows: the combined sampling and source-coding problem is presented in Section~\ref{sec:problem_statement}. An overview of the main results in a simplified version of the problem is given in  Section~\ref{sec:overview}. Sections~\ref{sec:mmse} and~\ref{sec:Indirect-Source-Coding} are dedicated to the special cases of sub-Nyquist sampling ($R\rightarrow \infty$) and indirect source coding ($f_s > f_{Nyq}$), as shown in the respective  branches in the diagram of Fig.~\ref{fig:diagram_scheme}. In Section~\ref{sec:Frequency-Rate-Distortion-Functi} we prove our main results for single branch sampling, which is extended to multi-branch sampling in Section~\ref{sec:main_multi}. Concluding remarks are given in Section~\ref{sec:Concluding-Remarks}. \\

Throughout this paper, we use round brackets and square brackets to distinguish between continuous-time and discrete-time processes. Vectors and matrices are denoted by bold letters. In addition, we use the word `rate' to indicate information rate rather than sampling rate, and use `sampling frequency' for the latter. In some cases it is more convenient to measure the information rate in bits per sample, which is given by $\bar{R} \triangleq R / f_s$.  \par

\section{Problem Statement} \label{sec:problem_statement}

The system model for our combined sampling and source coding problem is depicted in Fig.~\ref{fig:operational_scheme}. The source $X\left(\cdot\right)=\left\{ X\left(t\right),\, t\in\mathbb{R}\right\}$ is a real Gaussian stationary process with variance $\sigma_{X}^{2}\triangleq\int_{-\infty}^{\infty}S_{X}\left(f\right)df<\infty$,  and power spectral density (PSD)
\[
S_X\left(f\right) \triangleq \int_{-\infty}^\infty\mathbb E\left[X(t+\tau)X(t)  \right]e^{-2\pi i \tau f}d\tau.
\]
The noise $\eta(\cdot)$ is a real Gaussian stationary process independent of the source with PSD $S_\eta(f)$. The sampler receives the noisy source as an input, and produces a discrete time process $\mathbf Y[\cdot]$ at a rate of $f_s$ samples per time unit. The process $\mathbf Y[\cdot]$ is in general a complex vector-valued process since pre-sampling operations that result in a complex valued process are allowed in the sampler. The encoder represents the samples $\mathbf Y[\cdot]$ in an average rate of no more than $R$ bits per time unit. Assuming the noise is additive and independent poses no limitation on the generality. Indeed, for any jointly stationary and Gaussian process pairs $X(\cdot)$ and $Z(\cdot)$, this relationship can be created via a linear transformation, which can be seen as part of the sampler structure. When the optimal sampling structure in this case is considered, the results can be adjusted by a straightforward reweighing of the PSDs of $Z(\cdot)$ and $\eta(\cdot)$ according to the frequency response of this transformation.
\par
The main problem we consider is as follows: given a sampling scheme with sampling frequency $f_s$, what is the minimal expected quadratic distortion that can be attained between $X\left(\cdot\right)$ and $\hat{X}\left(\cdot\right)$ over all encoder-decoder pairs with code-rate that does not exceed $R$ bits per time unit, as $T$ goes to infinity? \par
Classical results in rate-distortion theory \cite{1057738,1055440,pinsker1964information} imply that this problem has the informational rate-distortion characterization depicted in Fig. \ref{fig:The-general-scheme}, where the optimization over all encoding-decoding pairs is replaced by an optimization over the test channel $\mathrm{P}_{\hat{X}|\mathbf Y}$ of limited information rate. \\

\begin{figure}
\begin{center}
\begin{tikzpicture}[node distance=2cm,auto,>=latex]
  \node at (0,0) (source) {$X(\cdot)$} ;
  \node [ssum, right of = source, node distance = 2cm] (sum) {+};
  \node [above of = sum,node distance = 1.5cm] (noise) {$\eta(\cdot)$};
  \node [coordinate, right of = sum,node distance = 2.5cm] (smp_in) {};
  \node [coordinate, right of = smp_in,node distance = 1cm] (smp_out){};
  \node [coordinate,above of = smp_out,node distance = 0.6cm] (tip) {};
\fill  (smp_out) circle [radius=2pt];
\fill  (smp_in) circle [radius=3pt];
\fill  (tip) circle [radius=2pt];
\node[left,left of = tip, node distance = 0.8 cm] (ltop) {$f_s$};
\draw[->,dashed,line width = 1pt] (ltop) to [out=0,in=90] (smp_out.north);
\node [right of = smp_out, node distance = 2cm] (right_edge) {};
\node [below of = right_edge] (right_b_edge) {};

           \node [right] (dest) [below of=source]{$\hat{X}(\cdot)$};
                \node [int] (enc) [right of = dec, node distance = 3cm]{$\mathrm P_{\hat{X}| \mathbf Y}$};
                \node [above of = enc, node distance = 0.8cm] {$I_T \left( \hat{X};\mathbf Y \right) \leq R$};

  \draw[-,line width=2pt] (smp_out) -- node[above, xshift = 0.5cm] {$\mathbf Y[\cdot]$} (right_edge);
  \draw[-,line width = 2]  (right_edge.west) -| (right_b_edge.east);
    \draw[->,line width = 2]  (right_b_edge.east) -- (enc.east);

   \draw[->,line width=2pt] (enc) -- (dest);
    \draw[->,line width=2pt] (source) -- (sum);
    \draw[->,line width=2pt] (noise) -- (sum);
    \draw[->, line width = 2pt] (sum) -- (smp_in);
    \draw[line width=2pt]  (smp_in) -- (tip);
    \draw[dashed, line width = 1pt] (smp_in)+(-0.5,-0.5) -- +(-0.5,1) -- node[above] {sampler} +(1.5,1) -- + (1.5,-0.5) -- +(-0.5,-0.5);
    
\end{tikzpicture}
\caption{\label{fig:The-general-scheme}  Rate-distortion representation.}
\end{center}
\end{figure}
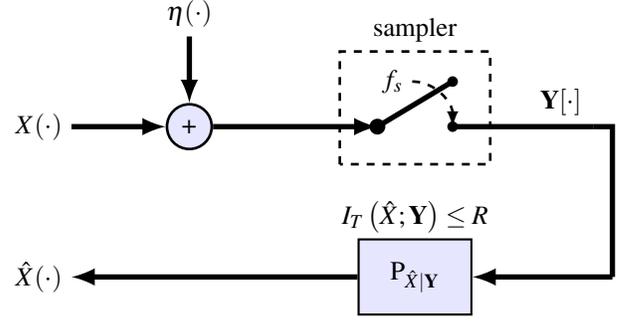

Specifically, for a finite $T>0$, denote by $X_T(\cdot)$ the restriction of the process $X(\cdot)$ to the interval $[-T,T]$. Similarly denote by $\mathbf Y_T[\cdot]$ the restriction of the process $\mathbf Y[\cdot]$ obtained by sampling $X_T(\cdot)$. The fidelity criterion is defined by the squared error between the original source and its reconstruction $\hat{X}(\cdot)=\left\{\hat{X}(t),\,t\in\mathbb R\right\}$, namely
\begin{align} 
d_T\left(\hat{x}\left(\cdot\right),x\left(\cdot\right)\right)& \triangleq \|\hat{x}(\cdot)-x(\cdot)\|_T^2  \label{eq:dist_def}
\end{align}
where $\left\|x(\cdot) \right\|_T$ is the $L_2$ norm of the signal $x(\cdot)$ over the interval $[-T,T]$, defined by
\[
\left\|{x}(\cdot)\right\|_T^2\triangleq \frac{1}{2T}\int_{-T}^{T} \left(x(t)\right)^{2} dt. 
\]
Define the function 
\begin{equation} \label{eq:d_T_def}
D_T  \triangleq \inf_{Y\overset{R}{\longrightarrow} \hat{X}} \mathbb E\,d_T \left(X(\cdot),\hat{X} \right)
\end{equation}
where the infimum is taken over all mappings from $\mathbf Y_T[\cdot]$ to $\hat{X}(\cdot)$ such that the mutual information rate
\[
I_T \left(\mathbf Y[\cdot]; \hat{X}(\cdot) \right) \triangleq \frac{1}{T} I\left(\mathbf Y_T[\cdot]; \hat{X}(\cdot) \right)
\]
is limited to $R$ bits per time unit. The indirect distortion-rate function (iDRF) of $X(\cdot)$ given $Y[\cdot]$, denoted by $D_{X|\mathbf Y}$, is defined by
\begin{equation} \label{eq:D_def}
D =  \liminf_{T\rightarrow \infty} D_T.
\end{equation}
Note that the number of samples in the interval $[-T,T]$ and consequently the number of bits per sample $\bar{R}$, is a function of the specific structure of the sampler which will be defined in the sequel. For example, for a uniform sampler with spacing $1/f_s$ between samples we have $\bar{R}=R/f_s$.  \\

Besides the sampling frequency $f_s$ and the source coding rate $R$, $D$ in \eqref{eq:D_def} depends on the sampling structure. In this work we restrict ourselves to samplers consisting of a pre-sampling filtering operation followed by a pointwise sampler. We focus on two basic structures:

\begin{figure}
\begin{center}
\begin{subfigure}{0.48\textwidth}
\centering

\begin{tikzpicture}[node distance=2cm,auto,>=latex]

\node at (0,0) (source) {$X(\cdot)$};
\node[coordinate] (source_up) [above of = source,node distance = 1cm]{};

\node[coordinate] (first_jnc) [right of = source, node distance=1.5cm] {};
  
\draw[-, line width=1pt] (source)--(first_jnc);   

\node[int1]  (pre_sampling2) [right of = first_jnc, node distance=0.8cm]{$H(f)$};  

\draw[->, line width=1pt] (first_jnc)--(pre_sampling2);   
  	  
\node [coordinate, right of = pre_sampling2,node distance = 1.7cm] (smp_in2) {};
  \node [coordinate, right of = smp_in2,node distance = 0.7cm] (smp_out2){};
	\node [coordinate,above of = smp_out2,node distance = 0.4cm] (tip2) {};
\fill  (smp_out2) circle [radius=2pt];
\fill  (smp_in2) circle [radius=2pt];
\fill  (tip2) circle [radius=2pt];
\node[left,left of = tip2, node distance = 0.5 cm] (ltop2) {$f_s$};

\node [right of = smp_out2, node distance=3cm]  (out) {$Y[\cdot]$};

\draw[->,densely dotted,line width = 1pt,thin] (ltop2) to [out=0,in=70] (smp_out2.north);
 \draw[line width=1pt]  (smp_in2) -- (tip2);
 \draw[-,line width=1pt]   (pre_sampling2)-- node[above] {\small $Z(\cdot)$}(smp_in2);

\draw[line width=1pt]  (smp_in2) -- (tip2);

\draw[->,line width = 1pt] (smp_out2) -- (out); 

\draw[line width=1pt, dashed] (0.9,0.8) rectangle (6,-0.7) ;
    
\end{tikzpicture}
\caption{single-branch sampler \vspace{10pt} }
\end{subfigure}

\begin{subfigure}{0.48\textwidth}
\centering
\begin{tikzpicture}[node distance=2cm,auto,>=latex]

\node at (0,0) (source) {$X(\cdot)$};
\node[coordinate] (source_up) [above of = source,node distance = 1cm]{};

\node[coordinate] (first_jnc) [right of = source, node distance=1.2cm] {};
\fill  (first_jnc) circle [radius=2pt];
  
\draw[->, line width=1pt] (source)--(first_jnc);   

\node[int1]  (pre_sampling2) [right of = first_jnc, node distance=1cm]{$H_2(f)$};  

\draw[->, line width=1pt] (first_jnc)--(pre_sampling2);   
  	  
\node [coordinate, right of = pre_sampling2,node distance = 1.7cm] (smp_in2) {};
  \node [coordinate, right of = smp_in2,node distance = 0.7cm] (smp_out2){};
	\node [coordinate,above of = smp_out2,node distance = 0.4cm] (tip2) {};
\fill  (smp_out2) circle [radius=2pt];
\fill  (smp_in2) circle [radius=2pt];
\fill  (tip2) circle [radius=2pt];
\node[left,left of = tip2, node distance = 0.5 cm] (ltop2) {$f_s/P$};

\node [coordinate] (enc_left) [right of = source, node distance = 6cm] {};
\node [right of = enc_left]  (out) {$\mathbf Y[\cdot]$};

\draw[->,densely dotted,line width = 1pt,thin] (ltop2) to [out=0,in=70] (smp_out2.north);
 \draw[line width=1pt]  (smp_in2) -- (tip2);
 \draw[-,line width=1pt]   (pre_sampling2)-- node[above] {\small $Z_2(\cdot)$}(smp_in2);
 \draw[->,line width=1pt] (smp_out2) -- node[above] {$Y_2[\cdot]$} (enc_left);
\draw[line width=1pt]  (smp_in2) -- (tip2);

\node[coordinate] (enc_bot) [below of = enc_left, node distance = 1.3cm] {};
\node[coordinate] (enc_top) [above of = enc_left, node distance = 0.8cm] {};

\fill (enc_bot) circle [radius=3pt];
\fill (enc_top) circle [radius=3pt];
\fill (enc_left) circle [radius=3pt];

\draw[line width = 3pt] (enc_top) -- (enc_bot); 
\draw[->,line width = 3pt] (enc_left) -- (out); 

\node [below of=enc_bot, node distance = 1cm] (right_down){};

\node[int1]  (pre_sampling3) [below of = pre_sampling2, node distance=1.3cm]{$H_P(f)$};  

\draw[->,line width=1pt] (first_jnc)|-(pre_sampling3);   

\draw[-,dotted, line width = 1pt] (pre_sampling2) -- (pre_sampling3) ;  	  
\node [coordinate, right of = pre_sampling3,node distance = 1.7cm] (smp_in3) {};
  \node [coordinate, right of = smp_in3,node distance = 0.7cm] (smp_out3){};
	\node [coordinate,above of = smp_out3,node distance = 0.4cm] (tip3) {};
\fill  (smp_out3) circle [radius=2pt];
\fill  (smp_in3) circle [radius=2pt];
\fill  (tip3) circle [radius=2pt];
\node[left,left of = tip3, node distance = 0.5 cm] (ltop3) {\small $f_s/P$};

\draw[->,dashed,densely dotted,line width = 1pt,thin] (ltop3) to [out=0,in=70] (smp_out3.north);
 \draw[line width=1pt]  (smp_in3) -- (tip3);
 \draw[-,line width=1pt]   (pre_sampling3)--node[above] {\small $Z_P(\cdot)$} (smp_in3);
 
\draw[->,line width=1pt] (smp_out3) -- node[above] {\small $ Y_P[\cdot]$} (enc_bot);
\draw[line width=1pt]  (smp_in3) -- (tip3);

\node[int1]  (pre_sampling1) [above of = pre_sampling2, node distance=0.8cm]{$H_1(f)$};  

\draw[->, line width=1pt] (first_jnc)|-(pre_sampling1);   
  	  
\node [coordinate, right of = pre_sampling1,node distance = 1.7cm] (smp_in1) {};
  \node [coordinate, right of = smp_in1,node distance = 0.7cm] (smp_out1){};
	\node [coordinate,above of = smp_out1,node distance = 0.4cm] (tip1) {};
\fill  (smp_out1) circle [radius=2pt];
\fill  (smp_in1) circle [radius=2pt];
\fill  (tip1) circle [radius=2pt];
\node[left,left of = tip1, node distance = 0.5 cm] (ltop1) {\small $f_s/P$};

\draw[->,dashed,densely dotted,line width = 1pt,thin] (ltop1) to [out=0,in=70] (smp_out1.north);
 \draw[line width=1pt]  (smp_in1) -- (tip1);
 \draw[-,line width=1pt]   (pre_sampling1)--node[above] {\small $Z_1(\cdot)$}(smp_in1);
 
\draw[->,line width=1pt] (smp_out1) -- node[above] {$Y_1[\cdot]$}  (enc_top);
\draw[line width=1pt]  (smp_in1) -- (tip1);

\draw[line width=1pt, dashed] (0.9,1.5) rectangle (7,-2) ;
 \end{tikzpicture}
\vspace{-10pt}
\caption{multi-branch sampler}
\end{subfigure}
\caption{Two sampling schemes. \label{fig:sampling_scheme}}
\end{center}
\end{figure}
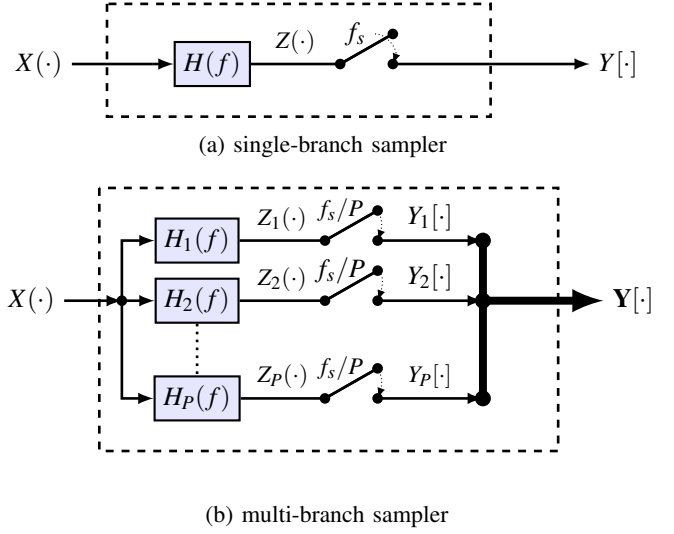

\subsubsection{Single-branch uniform sampling (Fig. \ref{fig:sampling_scheme}-a)} 
$H$ is an LTI system with frequency response $H(f)$ which serves as a pre-sampling filter. This means that the input to the pointwise sampler $Z(\cdot)=\left\{ Z(t),\, t\in\mathbb{R}\right\}$ and $X(\cdot)$ are jointly Gaussian and stationary with joint spectral density
\begin{align*}
S_{XZ}(f)&\triangleq \int_{-\infty}^\infty  \mathbb E\left[X(t+\tau)Z(t) \right]e^{-2\pi i \tau f}d\tau = S_X(f)H^*(f).
\end{align*} 
Although we allow an arbitrary noise PSD $S_{\eta}(f)$, in order for the uniform sampling operation to be well defined we require that
\begin{equation}
\label{eq:Sz_finite}
\int_{-\infty}^\infty
 S_{Z}\left(f\right)df=\int_{-\infty}^\infty S_{X+\eta}(f) \left|H(f)\right|^2 df< \infty.
\end{equation}
In \eqref{eq:Sz_finite} and henceforth we denote $S_{X+\eta}(f)\triangleq S_X(f)+S_\eta(f)$, which is justified since $X(\cdot)$ and $\eta(\cdot)$ are independent processes. 
We sample $Z(\cdot)$ uniformly at times $\frac{n}{f_s}$, resulting in the discrete time process 
\[
Y[n]=Z\left(\frac{n}{f_{s}}\right),\quad n\in\mathbb{Z}. 
\]
Recall that the spectral density of $Y[\cdot]$ is given by
\begin{align*}
S_Y\left(e^{2\pi i \phi}\right)&=\sum_{k\in \mathbb Z}\mathbb E\left[Y[n]Y[n+k] \right]e^{-2\pi i k \phi}\\
&=\sum_{k\in \mathbb Z}f_sS_Z\left(f_s(\phi-k)\right).
\end{align*}
We denote by $D(f_s,R)$ the iDRF \eqref{eq:D_def} using uniform single-branch sampling at frequency $f_s$. 
\subsubsection{Multi-branch or filter-bank uniform sampling (Fig. \ref{fig:sampling_scheme}-b)}  For each $p=1,\ldots,P$, $Z_p(\cdot)$ is the output of the LTI system $H_p$ whose input is the source $X(\cdot)$. The sequence $Y_p[\cdot]$ is obtained by uniformly sampling $Z_p(\cdot)$ at frequency $f_s/P$, i.e. 
\[
Y_p[n]=Z\left(\frac{nP}{f_s}\right),\quad p=1,\ldots,P.
\]
The output of the sampler is the vector $\mathbf Y[\cdot]=\left(Y_1[\cdot],\ldots,Y_p[\cdot]\right)$.  
Since each one of the $P$ branches produces samples at rate $f_s/P$, the sampling frequency of the system is $f_s$. The iDRF \eqref{eq:D_def} of $X(\cdot)$ given the vector process $\mathbf Y[\cdot]$ will be denoted $D(f_s,R)$. \\

The parameters of the two sampling schemes above are the average sampling frequency $f_s$ and the pre-sampling filters $H(f)$ or $H_1(f),\ldots, H_P(f)$. Given an average sampling frequency $f_s$ and a source coding rate $R$, we also consider the following question: what are the optimal pre-sampling filters that minimize $D(P,f_s,R)$? The value of $D(P,f_s,R)$  under an optimal choice of the pre-sampling filters is denoted by $D^\star(P,f_s,R)$, and is only a function of $f_s$, $R$, $P$ and the source and noise statistics. We also determine the behavior of $D^\star(P,f_s,R)$ as the number of branches $P$ goes to infinity. \\

\section{Overview of the Main Results \label{sec:overview}}
In this section we provide an overview of the main results under the simplified assumptions of a single branch sampler $P=1$ and no noise $\eta(\cdot) \equiv 0$.\par
Our first main result in Theorem~\ref{thm:main_result} implies that under these assumptions, 
the function $D(f_s,R)$ is given by the following parametric form
\begin{equation} \label{eq:pre_main_result}
\begin{split}
R\left(f_{s},\theta\right)  = & \frac{1}{2}\int_{-\frac{f_s}{2}}^{\frac{f_s}{2}}\log^{+}\left[\widetilde{S}_{X|Y}(f) /\theta\right]df, \\
D\left(f_{s},\theta\right)  = &  \mmse_{X|Y} (f_s)+\int_{-\frac{f_s}{2}}^{\frac{f_s}{2}}\min\left\{ \widetilde{S}_{X|Y}(f) ,\theta\right\} df ,
\end{split}
\end{equation}
where 
\begin{align} \label{eq:pre_J_def}
\widetilde{S}_{X|Y}(f) &= \frac{\sum_{k\in\mathbb{Z}} \left|H\left(f-f_s k\right)\right|^2  S_{X}\left(f-f_sk\right)^{2}}{\sum_{k\in\mathbb{Z}} \left|H\left(f-f_s k\right)\right| S_{X}\left(f-f_sk\right)},
\end{align}
and 
\begin{align}
\mmse_{X|Y} (f_s) \triangleq \int_{-\infty}^\infty \left[ S_X(f)-\widetilde{S}_{X|Y}(f) \right] df, \label{eq:pre_mmse_def}
\end{align}
is the MMSE in estimating $X(\cdot)$ from its uniform samples $Y[\cdot]$. The parametric solution \eqref{eq:pre_main_result} has the reverse waterfilling interpretation described in Fig.~\ref{fig:unimodal_waterfilling}. Note that the function $D(f_s,R)$ converges to $\mmse_{X|Y}(f_s)$ as $R\rightarrow \infty$, and to 
the DRF of $X(\cdot)$ as $f_s$ exceeds the Nyquist frequency of $X(\cdot)$. This agrees with the diagram in Fig.~\ref{fig:diagram_scheme}.  \par

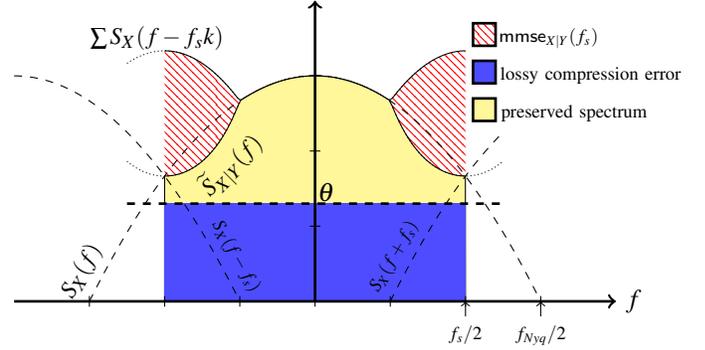
\begin{figure}

\begin{center}

\begin{tikzpicture}[scale=1]

\fill[fill=red!50, pattern=north west lines, pattern color=red] (-2,0) -- plot[domain=-2:2, samples=100]  (\x, {-(\x)*(\x)/3+3  +max(-(\x-4)*(\x-4)/3+3,0)+max(-(\x+4)*(\x+4)/3+3,0) }) -- (2,0);	

\draw  plot[domain=-2:2, samples=100] (\x, {-(\x)*(\x)/3+3  +max(-(\x-4)*(\x-4)/3+3,0)+max(-(\x+4)*(\x+4)/3+3,0)});	

\draw[densely dotted]  plot[domain=-2.5:1.6, samples=100] (\x, {-(\x)*(\x)/3+3  +max(-(\x-4)*(\x-4)/3+3,0)+max(-(\x+4)*(\x+4)/3+3,0)});	

\draw[dashed] plot[domain=-3:3, samples=100] (\x, {-\x*\x/3+3})  ;	

\fill[fill=yellow!50] (-2,0) -- plot[domain=-2:2, samples=100]  (\x, { ((-(\x)*(\x)/3+3)^(2) +(max(-(\x-4)*(\x-4)/3+3,0))^(2)+(max(-(\x+4)*(\x+4)/3+3,0) )^(2) + (max(-(\x+8)*(\x+8)/3+3,0) )^(2) +(max(-(\x-8)*(\x-8)/3+3,0))^(2) )/ (-(\x)*(\x)/3+3  +max(-(\x-4)*(\x-4)/3+3,0)+max(-(\x+4)*(\x+4)/3+3,0) +max(-(\x-8)*(\x-8)/3+3,0)+max(-(\x-8)*(\x-8)/3+3,0)  ) } ) -- (2,0);	

\draw (-2,0) -- plot[domain=-2:2, samples=100]  (\x, { ((-(\x)*(\x)/3+3)^(2) +(max(-(\x-4)*(\x-4)/3+3,0))^(2)+(max(-(\x+4)*(\x+4)/3+3,0) )^(2) + (max(-(\x+8)*(\x+8)/3+3,0) )^(2) +(max(-(\x-8)*(\x-8)/3+3,0))^(2) )/ (-(\x)*(\x)/3+3  +max(-(\x-4)*(\x-4)/3+3,0)+max(-(\x+4)*(\x+4)/3+3,0) +max(-(\x-8)*(\x-8)/3+3,0)+max(-(\x-8)*(\x-8)/3+3,0)  ) } ) -- (2,0) ;

\draw[densely dotted] plot[domain=-2.5:2.5, samples=100]  (\x, { ((-(\x)*(\x)/3+3)^(2) +(max(-(\x-4)*(\x-4)/3+3,0))^(2)+(max(-(\x+4)*(\x+4)/3+3,0) )^(2) + (max(-(\x+8)*(\x+8)/3+3,0) )^(2) +(max(-(\x-8)*(\x-8)/3+3,0))^(2) )/ (-(\x)*(\x)/3+3  +max(-(\x-4)*(\x-4)/3+3,0)+max(-(\x+4)*(\x+4)/3+3,0) +max(-(\x-8)*(\x-8)/3+3,0)+max(-(\x-8)*(\x-8)/3+3,0)  ) } ) ;

\fill[fill=blue!70] (-2,0) -- (-2,1.3) --(2,1.3) -- (2,0)  ;	
 
\draw[dashed] plot[domain=1:2.5, samples=100] (\x, {-(\x-4)*(\x-4)/3+3})  ;	
\draw[dashed] plot[domain=-4:-1, samples=100] (\x, {-(\x+4)*(\x+4)/3+3})  ;

\foreach \x in {-3,-2,-1,0,1,2,3}
 \draw[shift={(\x,0)}] (0pt,2pt) -- (0pt,-2pt);
  \foreach \y/\ytext in {1/,2/,3/}
  \draw[shift={(0,\y)}] (2pt,0pt) -- (-2pt,0pt) node[left] {$\ytext$};

\draw[->] (2,-0.2) node[below] {\scriptsize $f_s/2$} -- (2,0);
\draw[->] (3,-0.2) node[below] {\scriptsize $f_{Nyq}/2$} -- (3,0);


\draw [fill=red!50, line width=1pt, pattern=north west lines, pattern color=red] (2.1,3.4) rectangle  (2.4,3.7) node[left, xshift = 1.5cm, yshift = -0.2cm] {\scriptsize $\mmse_{X|Y}(f_s)$};

\draw [fill=blue!70, line width=1pt] (2.1,2.9) rectangle  (2.4,3.2) node[left, xshift = 2.6cm, yshift = -0.2cm, align = center] {\scriptsize lossy compression error};

\draw [fill=yellow!50, line width=1pt] (2.1,2.4) rectangle  (2.4,2.7) node[left, xshift=2.15cm, yshift = -0.2cm,align = left] {\scriptsize preserved spectrum};

\node at (0.15,1.45) {$\theta$};
\draw[dashed, line width=1pt] (-2.5,1.3) -- (2.5,1.3);
\draw[->,line width=1pt]  (-4,0)--(4,0) node[right] {$f$};
\draw[->,line width=1pt]  (0,0)--(0,4) node[right] {};
\node at (-3.1,0.4) [rotate=60] {\small $S_X(f)$};
\node at (1.05,0.6) [rotate=60] {\scriptsize $S_X(f+f_s)$};
\node at (-1.05,0.6) [rotate=-60] {\scriptsize $S_X(f-f_s)$};
\node at (-1.1,1.8) [rotate=45] {\small $\widetilde{S}_{X|Y}(f)$};
\node at (-2.1,3.5) [rotate=0] {\small $\sum S_X(f-f_sk)$};

\end{tikzpicture}
\end{center}
\vspace{-10pt}

\caption{ \label{fig:unimodal_waterfilling}
Reverse waterfilling interpretation of \eqref{eq:pre_main_result}: The function $D(f_s,R)$ is given by the sum of the sampling error $\mmse_{X|Y}(f_s)$ and the lossy compression error $\int_{-f_s/2}^{f_s/2}\min\left\{ \widetilde{S}_{X|Y}(f),\theta \right\}df $. The function $\sum_{k\in \mathbb Z} S_X(f-f_sk)$ is the   aliased PSD, which represents the full energy of the original signal within the band $(-f_s/2,f_s/2)$. The part of the energy recovered by the MMSE estimator is $\widetilde{S}_{X|Y}(f)$. 
}
\end{figure}

Next, we turn to find an expression for the optimal pre-sampling filter $H^\star(f)$ that minimizes the function $D(f_s,R)$. Since $H(f)$ appears in both nominator and denominator of \eqref{eq:pre_J_def}, its magnitude has no effect on the distortion and all that matters is whether $H(f)$ is zero (in which case we interpret \eqref{eq:pre_J_def} as zero) or not. Proposition~\ref{prop:opt_single} implies that $H^\star(f)$ is an anti-aliasing filter that passes the frequency bands with the highest SNR (and in the non-noisy case with highest energy), and suppresses the rest to prevent aliasing. We intuitively explain this result through Example~\ref{ex:joint_mmse}. In the special case where $S_X(f)$ is unimodal in the sense that it is non-increasing for $f>0$, $H^\star(f)$ is a simple low-pass filter with cut-off frequency $f_s/2$. The iDRF in this setting is given as:
\begin{equation} \label{eq:pre_main_optimal}
\begin{split}
R\left(f_{s},\theta\right)  = & \frac{1}{2}\int_{-\frac{f_s}{2}}^{\frac{f_s}{2}}\log^{+}\left[ S_{X}(f)/\theta  \right]df, \\
D^\star\left(f_{s},\theta\right)  = &  \mmse^\star_{X|Y} (f_s)+\int_{-\frac{f_s}{2}}^{\frac{f_s}{2}}\min\left\{ S_{X}(f),\theta\right\} df,
\end{split}
\end{equation}
where 
\[
\mmse^\star_{X|Y}(f_s) \triangleq \int_{-\infty}^\infty S_X(f) df - \int_{-\frac{f_s}{2}}^{\frac{f_s}{2}} S_{X}(f) df.
\]
Fig.~\ref{fig:unimodal_waterfilling_opt} provides an intuitive interpretation of \eqref{eq:pre_main_optimal} as a sum of two terms: the error due to sampling and the error due to lossy compression. The situation in the general case in which $S_X(f)$ is not unimodal is less intuitive: it is generally impossible to define a single pre-sampling filter that passes the frequencies with the maximal SNR and simultaneously eliminates aliasing. In such cases, it is useful to consider multi-branch sampling with a set of optimal pre-sampling filters. The expression for the corresponding iDRFs in multi-branch sampling and the iDRF under an optimal choice of such filters is given in Section~\ref{sec:main_multi}. \\

Since the iDRF \eqref{eq:D_def} is always bounded from below by the MMSE in estimating $X(\cdot)$ from $\mathbf Y[\cdot]$, we devote the following section to discuss the behavior the MMSE in sub-Nyquist sampling and the optimal choice of the pre-sampling filter that minimizes this error.

\begin{figure}
\begin{center}

\begin{tikzpicture}[scale=1]

\fill[fill=red!50, pattern=north west lines, pattern color=red]  plot[domain=-3:3, samples=100]  (\x, {-(\x)*(\x)/3+3 }) ;	

\draw  plot[domain=-3:3, samples=100]  (\x, {-(\x)*(\x)/3+3 }) ;	

\fill[fill=yellow!50] (-2,0) --  plot[domain=-2:2, samples=100]  (\x, {-(\x)*(\x)/3+3 }) --(2,0) ;	

\fill[fill=blue!70] (-2,0) -- (-2,1.1) --(2,1.1) -- (2,0)  ;	
 
\foreach \x in {-3,-2,-1,0,1,2,3}
 \draw[shift={(\x,0)}] (0pt,2pt) -- (0pt,-2pt);
  \foreach \y/\ytext in {1/,2/,3/}
  \draw[shift={(0,\y)}] (2pt,0pt) -- (-2pt,0pt) node[left] {$\ytext$};

\draw[->] (2,-0.2) node[below] {\scriptsize $f_s/2$} -- (2,0);
\draw[->] (3,-0.2) node[below] {\scriptsize $f_{Nyq}/2$} -- (3,0);


\draw [fill=red!50, line width=1pt, pattern=north west lines, pattern color=red] (2,3.4) rectangle  (2.3,3.7) node[left, xshift = 1.5cm, yshift = -0.2cm] {\scriptsize $\mmse^\star_{X|Y}(f_s)$};

\draw [fill=blue!70, line width=1pt] (2,2.9) rectangle  (2.3,3.2) node[left, xshift = 2.6cm, yshift = -0.2cm, align = center] {\scriptsize lossy compression error};

\draw [fill=yellow!50, line width=1pt] (2,2.4) rectangle  (2.3,2.7) node[left, xshift=2.1cm, yshift = -0.2cm,align = left] {\scriptsize preserved spectrum};

\node at (0.13,1.25) {$\theta$};
\draw[dashed, line width=1pt] (-2.6,1.1) -- (2.6,1.1);
\draw[->,line width=1pt]  (-4,0)--(4,0) node[right] {$f$};
\draw[->,line width=1pt]  (0,0)--(0,4) node[right] {};
\node at (-1.4,2.7) [rotate=40] {\small $S_X(f)$};

\end{tikzpicture}
\end{center}
\vspace{-10pt}

\caption{ \label{fig:unimodal_waterfilling_opt}
Reverse waterfilling interpretation of \eqref{eq:pre_main_optimal}: The function $D^\star(f_s,R)$ of a unimodal $S_X(f)$ and zero noise is given by the sum of the sampling error $\mmse^\star_{X|Y}(f_s)$ and the lossy compression error $\int_{f_s/2}^{f_s/2}\min\left\{ S_{X}(f),\theta \right\}df $. 
}
\end{figure}
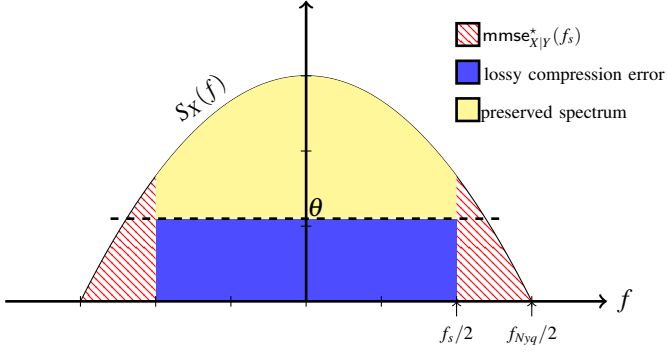


\section{MMSE in sub-Nyquist sampling of a Gaussian stationary process \label{sec:mmse}}

\begin{figure}
\begin{center}
\begin{tikzpicture}[node distance=2cm,auto,>=latex]
  \node at (0,0) (source) {$X(\cdot)$} ;
  \node [ssum, right of = source, node distance = 2cm] (sum) {+};
  \node [above of = sum,node distance = 1.5cm] (noise) {$\eta[\cdot]$};
  \node [coordinate, right of = sum,node distance = 2.5cm] (smp_in) {};
  \node [coordinate, right of = smp_in,node distance = 1cm] (smp_out){};
  \node [coordinate,above of = smp_out,node distance = 0.6cm] (tip) {};
\fill  (smp_out) circle [radius=2pt];
\fill  (smp_in) circle [radius=3pt];
\fill  (tip) circle [radius=2pt];
\node[left,left of = tip, node distance = 0.8 cm] (ltop) {$f_s$};
\draw[->,dashed,line width = 1pt] (ltop) to [out=0,in=90] (smp_out.north);
\node [right of = smp_out, node distance = 2cm] (right_edge) {};
\node [below of = right_edge] (right_b_edge) {};

           \node [right] (dest) [below of=source]{$\hat{X}(\cdot)$};
        \node [int] (dec) [right of=dest, node distance = 4cm,align = center] {MMSE \\ estimator};
               
  \draw[-,line width=2pt] (smp_out) -- node[above, xshift = 0.5cm] {$\mathbf Y[\cdot]$} (right_edge);
  \draw[-,line width = 2]  (right_edge.west) -| (right_b_edge.east);
    \draw[->,line width = 2]  (right_b_edge.east) -- (dec.east);

   \draw[->,line width=2pt] (dec) -- (dest);
    \draw[->,line width=2pt] (source) -- (sum);
    \draw[->,line width=2pt] (noise) -- (sum);
    \draw[->, line width = 2pt] (sum) -- (smp_in);
    \draw[line width=2pt]  (smp_in) -- (tip);
    \draw[dashed, line width = 1pt] (smp_in)+(-0.5,-0.5) -- +(-0.5,1) -- node[above] {sampler} +(1.5,1) -- + (1.5,-0.5) -- +(-0.5,-0.5);
\end{tikzpicture}

 \caption{\label{fig:mmse_model} System model for MMSE reconstruction under sub-Nyquist sampling.}
 \end{center}
\end{figure}
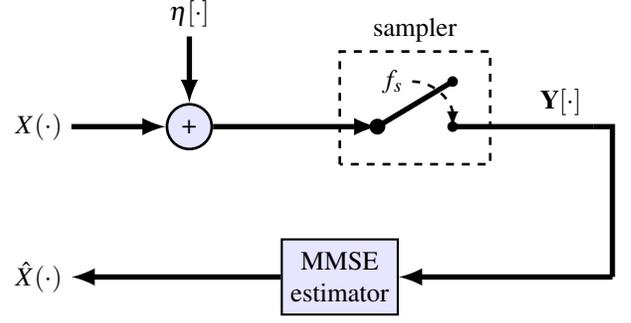

In this section we consider the right side of the diagram in Fig.~\ref{fig:diagram_scheme}, obtained from the general sampling and source coding problem of Fig. \ref{fig:The-general-scheme} with no rate constraint on the source encoder. This leaves us with the system model in Fig.~\ref{fig:mmse_model}, in which the problem we consider is to find the MMSE in estimating the source $X(\cdot)$ from samples $\mathbf Y[\cdot]$ which we denote as $\mmse_{X|\mathbf Y}(f_s)$:
\begin{equation} \label{eq:mmse_def}
 \mmse_{X|\mathbf Y}(f_s) \triangleq \lim_{T\rightarrow \infty} \mathbb E \| X(\cdot)- \tilde{X}(\cdot)\|_T^2,
\end{equation}
where $\tilde{X}(t)\triangleq\mathbb E\left[ X(t)| \mathbf Y[\cdot]\right]$, $t\in \mathbb R$, is the MMSE estimator of $X(\cdot)$ from its sub-Nyquist  samples $\mathbf Y[\cdot]$. In addition, we are interested in the optimal pre-sampling filter that minimizes the MMSE for a given input signal and sampling frequency. A single branch version of this problem without the pre-sampling filter can be found in \cite{815501}. The pre-sampling filter and its optimization is included in \cite{1090615}, where a similar setting was considered with applications in pulse-code modulation. The multi-branch case was solved in \cite[Prop. 3]{ShannonMeetsNyquist}, but the expression for the optimal pre-sampling filters was not explicit and relayed on a different proof. The main contribution of this section is a new way to prove these result, which is based on a polyphase decomposition of the source. The new proves of the above results provided here will be useful in proving our main results in Section \ref{sec:Frequency-Rate-Distortion-Functi}. \\

\subsection*{MMSE via polyphase components}

Since the instantaneous estimation error $X(t)-\tilde{X}(t)$ is periodic in $t$ hence uniformly bounded, \eqref{eq:mmse_def} can be written as
\begin{align} 
  &\mmse_{X|\mathbf Y}(f_s)  = \lim_{T\rightarrow \infty} \frac{1}{2T} \int_{-T}^T \mathbb E\left[\left(X(t)-\tilde{X}(t)\right)^2 \right]dt \nonumber  \\
&  =  \int_0^1 \lim_{N\rightarrow \infty}  \frac{1}{2N+1} \sum_{n=-N}^{N} \mathbb E \left[\left(X\left(\frac{n+\Delta}{f_s}\right)-\tilde{X}\left(\frac{n+\Delta}{f_s}\right) \right)^2 \right]  d\Delta  \nonumber \\
&  \quad \quad \quad \quad \quad  = \int_0^1 \mmse_{X|\mathbf Y} (f_s) d\Delta,
 \label{eq:mmse_average_equiv} 
\end{align}
where the process $X_\Delta[\cdot]$ is the \emph{$\Delta$ polyphase component} of $X(\cdot)$ \cite{1162788}, defined by
\begin{equation} \label{eq:X_delta_def}
X_\Delta[n]\triangleq X\left(\frac{n+\Delta}{f_s}\right),\quad n\in \mathbb Z,
\end{equation}
and $\tilde{X}_\Delta[n]\triangleq \mathbb E\left[ X_\Delta[n]|\mathbf Y[\cdot]\right]$. In \eqref{eq:mmse_average_equiv} we also denoted 
\[
\mmse_{X_\Delta|\mathbf Y} \triangleq  \frac{1}{2N+1} \sum_{n=-N}^{N} \mathbb E \left[\left(X\left(\frac{n+\Delta}{f_s}\right)-\tilde{X}\left(\frac{n+\Delta}{f_s}\right) \right)^2 \right] .
\]
Since $X_\Delta[\cdot]$ and $\mathbf Y[\cdot]$ are jointly Gaussian and stationary, $\mmse_{X_\Delta|\mathbf Y}$ can be evaluated using linear estimation techniques. For the single branch sampler of Fig.~\ref{fig:sampling_scheme}(a), this leads to 
\begin{prop} \label{prop:mmse_single} Consider the model of Fig.~\ref{fig:mmse_model} where we use the single branch sampler of Fig.~\ref{fig:sampling_scheme}(a). The MMSE in estimating $X(\cdot)$ from $Y[\cdot]$ is given by
 \begin{equation} \label{eq:mmse_single_theorem}
 \mmse_{X|Y}(f_s)=\sigma_X^2-\int_{-\frac{f_s}{2}}^{\frac{f_s}{2}} \widetilde{S}_{X|Y}(f)df,
 \end{equation}
 where $\sigma_X^2 = \mathbb E \left(X(t)\right)^2$ and
 \begin{align}
\widetilde{S}_{X|Y}(f)\triangleq \frac{\sum_{k\in \mathbb Z}S_X^2(f-f_sk)\left|H(f-f_sk)\right|^2}{\sum_{k\in \mathbb Z} S_{X+\eta}(f-f_sk)\left|H(f-f_sk)\right|^2}.
\label{eq:j_def}
\end{align} 
\end{prop}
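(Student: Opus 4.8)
The plan is to fix the polyphase offset $\Delta\in[0,1)$, evaluate the MMSE $\mmse_{X_\Delta|Y}$ of estimating the discrete-time process $X_\Delta[\cdot]$ of \eqref{eq:X_delta_def} from the entire sample sequence $Y[\cdot]$, and then average over $\Delta$ as dictated by \eqref{eq:mmse_average_equiv}. Because $X_\Delta[\cdot]$ and $Y[\cdot]$ are jointly Gaussian and stationary, the optimal estimator is a non-causal Wiener filter, so I would invoke the classical spectral error formula
\begin{equation*}
\mmse_{X_\Delta|Y}=\int_0^1\left[S_{X_\Delta}\expphi-\frac{\left|S_{X_\Delta Y}\expphi\right|^2}{S_Y\expphi}\right]d\phi ,
\end{equation*}
so that the whole computation reduces to identifying the three spectral densities appearing on the right.

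Next I would compute these densities. For the first, the autocorrelation $\mathbb E\left[X_\Delta[n+m]X_\Delta[n]\right]=\mathbb E\left[X\left(\tfrac{n+m+\Delta}{f_s}\right)X\left(\tfrac{n+\Delta}{f_s}\right)\right]$ is independent of $\Delta$, so its DTFT is the aliased source PSD $S_{X_\Delta}\expphi=\sum_{k}f_s S_X(f_s(\phi-k))$. The second, $S_Y\expphi=\sum_k f_s S_{X+\eta}(f_s(\phi-k))\left|H(f_s(\phi-k))\right|^2$, is already recorded in the single-branch section via $S_Z=S_{X+\eta}\left|H\right|^2$. The third is where the offset enters: using $S_{XZ}(f)=S_X(f)H^*(f)$ and the sampling aliasing identity, I expect
\begin{equation*}
S_{X_\Delta Y}\expphi=f_s\sum_k S_X(f_s(\phi-k))H^*(f_s(\phi-k))\,e^{2\pi i(\phi-k)\Delta},
\end{equation*}
in which the common factor $e^{2\pi i\phi\Delta}$ is irrelevant to the modulus while the phases $e^{-2\pi i k\Delta}$ survive.

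The decisive step is to average $\left|S_{X_\Delta Y}\expphi\right|^2$ over $\Delta$. Expanding the squared modulus gives a double sum over $k,k'$, and since $\int_0^1 e^{-2\pi i(k-k')\Delta}d\Delta=\delta_{k,k'}$, all off-diagonal terms drop out, collapsing the squared magnitude of a sum into a sum of squared magnitudes and producing $f_s^2\sum_k S_X^2(f_s(\phi-k))\left|H(f_s(\phi-k))\right|^2$. Dividing by $S_Y\expphi$ then yields exactly $f_s\,\widetilde S_{X|Y}(f_s\phi)$ with $\widetilde S_{X|Y}$ as in \eqref{eq:j_def} under $f=f_s\phi$ and $f_s(\phi-k)=f-f_sk$. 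Finally I would unfold the aliasing sum in the first term to get $\int_0^1\sum_k f_s S_X(f_s(\phi-k))\,d\phi=\int_{-\infty}^\infty S_X(f)df=\sigma_X^2$, and convert the cross term to $\int_{-f_s/2}^{f_s/2}\widetilde S_{X|Y}(f)df$ using the $f_s$-periodicity of $\widetilde S_{X|Y}$, arriving at \eqref{eq:mmse_single_theorem}.

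The main obstacle, and the real point of the polyphase decomposition, is precisely this orthogonality averaging over $\Delta$: without it one is stuck with $\left|\sum_k a_k\right|^2$ instead of $\sum_k\left|a_k\right|^2$, and the clean quotient $\widetilde S_{X|Y}$ of \eqref{eq:j_def} never emerges. The remaining care is routine — justifying the Wiener spectral formula (needing $S_Y\expphi>0$ a.e.\ together with the integrability from \eqref{eq:Sz_finite}) and interchanging the $\phi$- and $\Delta$-integrations.
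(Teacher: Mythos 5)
Your proposal is correct and follows essentially the same route as the paper's own proof: the polyphase decomposition with averaging over $\Delta$ via \eqref{eq:mmse_average_equiv}, the non-causal Wiener spectral formula for each jointly Gaussian stationary pair $\left(X_\Delta[\cdot],Y[\cdot]\right)$, and the orthogonality $\int_0^1 e^{2\pi i(k-m)\Delta}d\Delta=\delta_{k,m}$ to collapse the double aliasing sum in $\left|S_{X_\Delta Y}\right|^2$ into the quotient \eqref{eq:j_def}, followed by the change of variable $f=f_s\phi$. Your bookkeeping of the $f_s$ normalization factors and of the phase $e^{2\pi i(\phi-k)\Delta}$ is, if anything, slightly more careful than the appendix's, but the argument is the same.
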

\begin{proof} 
This result obtained by evaluating \eqref{eq:mmse_average_equiv}. Details can be found in Appendix~\ref{sec:proof_mmse}. 
\end{proof}

Note that since the denominator in \eqref{eq:j_def} is periodic in $f$ with period $f_s$, \eqref{eq:mmse_single_theorem} can be written as
\begin{align} 
\mmse_{X|Y}(f_s)&=\sigma_X^2-\int_{-\infty}^\infty \frac{S_X^2(f)|H(f)|^2}{\sum_{k\in \mathbb Z} S_{X+\eta}(f-f_sk)|H(f-f_sk)|^2}df \nonumber \\
=\int_{-\infty}^\infty S_X(f)&\left(1-\frac{S_X(f)|H(f)|^2}{\sum_{k\in \mathbb Z} S_{X+\eta}(f-f_sk)|H(f-f_sk)|^2} \right)df. \label{eq:J_modified}
\end{align}
This shows that the expression for $\mmse_{X|Y}(f_s)$ in Proposition~\ref{prop:mmse_single} is equivalent to the \cite[Eq. 10]{815501}. The alternate proof of this proposition given here using the new expression for the MMSE given in \eqref{eq:mmse_average_equiv} provides a new interpretation of the function $\widetilde{S}_{X|Y}(f)$ as the average of spectral densities of estimators of the stationary polyphase components of $X(\cdot)$, namely
\begin{equation} \label{eq:J_as_average}
\widetilde{S}_{X|Y}(f) = \int_0^1 {f_s}S_{X_\Delta|Y}(f/f_s) d\Delta.
\end{equation}

%
%
%

\subsection{An optimal pre-sampling filter \label{subsec:mmse_optimal}}
We now consider the pre-sampling filter $H$ as part of the system design and ask what is the optimal pre-sampling filter $H^\star$ that minimizes \eqref{eq:mmse_single_theorem}; as is apparent from \eqref{eq:mmse_single_theorem}, this problem is equivalent to finding the filter that maximizes $\widetilde{S}_{X|Y}(f)$ for every frequency $f\in \left(-f_s/2,f_s/2\right)$ independently, i.e. we are looking to determine
\begin{align}
\widetilde{S}_{X|Y}^\star(f) & \triangleq \sup_H \widetilde{S}_{X|Y}(f) \nonumber
\\ &= \sup_{H}\frac{\sum_{k\in\mathbb{Z}}S_{X}^{2}(f-f_sk)\left|H(f-f_sk)\right|^{2}}{\sum_{k\in\mathbb{Z}}S_{X+\eta}(f-f_sk)\left|H(f-f_sk)\right|^{2}}  \label{eq:sup_J}
\end{align}
in the domain $\left(-f_s/2,f_s/2\right)$. Note that scaling $H(f)$ has an equal effect on the nominator and denominator in \eqref{eq:sup_J} and hence the optimal $H(f)$ can only be specified by its support, i.e., those frequencies which are not blocked by the filter. \par
In what follows we will describe $H^\star(f)$ by defining a set of frequencies $F^\star$ of minimal Lebesgue measure such that
\begin{equation} \label{eq:maximal_set_pre}
\int_{F^\star} \frac{S_X^2(f)}{S_{X+\eta}(f)}df = \int_{-\frac{f_s}{2}}^\frac{f_s}{2} \sup_{k\in \mathbb Z} \frac{S_X^2(f-f_sk)}{S_{X+\eta}(f-f_sk)} df.
\end{equation}
Since the integrand in the right hand side (RHS) of \eqref{eq:maximal_set_pre} is periodic in $f$ with period $f_s$, excluding a set of Lebesgue measure zero, the set $F^\star$ will not contain two frequencies $f_1,f_2\in \mathbb R $ that differ by an integer multiple of $f_s$ due to its minimality. This property will be given the following name:
\begin{definition}[aliasing-free set] \label{def:aliasing_free} A measurable set $F \subset \mathbb R$ is said to be  \emph{aliasing-free} with respect to the sampling frequency $f_s$ if, for almost\footnote{By \emph{almost any} we mean \emph{for all but a set of Lebesgue measure zero}.} all pairs $f_1,f_2\in F$, it holds that $f_1-f_2 \notin f_s\mathbb Z=\left\{f_s k,\,k\in \mathbb Z \right\}$. \par
\end{definition} 
The aliasing-free property imposes the following restriction on the Lebesgue measure of a bounded set:

\begin{prop} \label{prop:aliasing_free}
Let $F$ be an aliasing-free set with respect to $f_s$. If $F$ is bounded, then the Lebesgue measure of $F$ does not exceed $f_s$.
\end{prop}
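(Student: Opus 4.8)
The plan is to reduce the length of $F$ to a statement about how $F$ projects onto a single fundamental domain of the shift $f \mapsto f + f_s$, namely the interval $[0,f_s)$. The \emph{aliasing-free} condition is essentially the assertion that, after folding $\mathbb R$ modulo $f_s$, almost every point of $[0,f_s)$ is covered by at most one point of $F$; once this is made precise, the measure bound follows immediately by a Tonelli computation.

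Concretely, first I would partition the line into the intervals $I_k = [k f_s,(k+1)f_s)$, $k\in\mathbb Z$, and set $F_k = F\cap I_k$. Since $F$ is bounded, only finitely many $F_k$ are nonempty, so no convergence issues arise. For $x\in[0,f_s)$ define the counting function
\[
\phi(x) \triangleq \sum_{k\in\mathbb Z}\mathbf{1}_F(x+f_s k),
\]
which records how many points of $F$ lie above $x$ in the folded picture. Because each map $f\mapsto f-kf_s$ is a translation, Tonelli's theorem gives
\[
\int_0^{f_s}\phi(x)\,dx = \sum_{k\in\mathbb Z}\mu\big(F\cap I_k\big) = \mu(F),
\]
where $\mu$ denotes Lebesgue measure.

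The crux is to show $\phi(x)\le 1$ for almost every $x\in[0,f_s)$. If $\phi(x)\ge 2$, then there are distinct integers $k_1,k_2$ with $x+f_s k_1$ and $x+f_s k_2$ both in $F$; these two points of $F$ differ by $f_s(k_1-k_2)\in f_s\mathbb Z\setminus\{0\}$, so $x+f_s k_1$ is an \emph{aliased} point of $F$. Let $B$ be the set of all such aliased points; the aliasing-free hypothesis says precisely that $\mu(B)=0$. Since $\{x:\phi(x)\ge 2\}$ is contained in the image of $B$ under folding, and folding is a countable piecewise translation (hence does not increase measure), the set $\{x:\phi(x)\ge 2\}$ is null as well. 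Therefore $\phi\le 1$ almost everywhere, and
\[
\mu(F) = \int_0^{f_s}\phi(x)\,dx \le \int_0^{f_s}1\,dx = f_s,
\]
as claimed.

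The step I expect to demand the most care is translating the definition of aliasing-free — stated there as a condition on \emph{pairs} $f_1,f_2\in F$ — into the pointwise statement $\mu(B)=0$, i.e.\ that the folding map is injective on $F$ off a null set. Everything after that is routine Tonelli bookkeeping, and boundedness of $F$ is used only to keep the sum over $k$ finite; countable additivity of $\mu$ would in fact suffice even without it.
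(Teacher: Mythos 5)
Your proof is correct, but it takes a genuinely different route from the paper's. The paper uses a packing argument: by the aliasing-free property the translates $F+nf_s$, $n=1,\dots,N$, intersect pairwise in null sets, so $\mu\bigl(\cup_{n=1}^N (F+nf_s)\bigr)=N\mu(F)$; since $F\subset(-M,M)$, all $N$ translates lie in $(-M,M+Nf_s)$, whence $N\mu(F)\le 2M+Nf_s$, and dividing by $Nf_s$ and letting $N\to\infty$ gives the bound. Your folding/Tonelli argument replaces this limiting procedure with the observation that the quotient map $f\mapsto f \bmod f_s$ is almost everywhere injective on $F$, hence measure-preserving from $F$ into $[0,f_s)$. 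Both proofs rest on the same reading of the aliasing-free definition, namely that $\mu\bigl(F\cap(F+nf_s)\bigr)=0$ for every $n\neq 0$; the paper even asserts these intersections are \emph{empty}, a harmless overstatement, and your concern about the ``almost all pairs'' phrasing is warranted --- read against two-dimensional Lebesgue measure the pair condition is vacuous, since the bad pairs lie on countably many lines, so the pointwise interpretation you adopt (and the paper implicitly adopts) is the only sensible one. What your approach buys: boundedness is never used in an essential way, so you in fact prove the stronger statement that \emph{any} aliasing-free set, bounded or not, has measure at most $f_s$, whereas the paper's packing argument genuinely needs $F$ bounded to trap the translates inside an interval. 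What the paper's approach buys: it stays at the level of finite unions and monotonicity of measure, avoiding the counting function and Tonelli entirely.
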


\begin{proof}
By the aliasing-free property, for any $n\in \mathbb Z \setminus \{0\}$ the intersection of $F$ and $F^\star + nf_s$ is empty. It follows that for all $N\in \mathbb N$, $\mu \left( \cup_{n=1}^N \left\{ F^\star+f_s n \right\} \right)=N\mu(F^\star)$. Now assume $F^\star$ is bounded by the interval $(-M,M)$ for some $M>0$. Then $\cup_{n=1}^N \left\{ F^\star+f_s n \right\}$ is bounded by the interval $(-M,M+Nf_s)$. It follows that
\[
\frac{\mu(F^\star)}{f_s}=\frac{N\mu(F^\star)}{Nf_s}=\frac{\mu(\cup_{n=1}^N\left\{F^\star+nf_s\right\})}{Nf_s} \leq \frac{2M+Nf_s}{Nf_s}.
\]
Letting $N\rightarrow \infty$, we conclude that $\mu(F^\star)\leq f_s$. \\
\end{proof}
We denote by $AF(f_s)$ the collection of all bounded aliasing free sets with respect to $f_s$. Note that a process whose spectrum's support is contained in $AF(f_s)$ admits no aliasing when uniformly sampled at frequency $f_s$, i.e, such a process can be reconstructed with probability one from its non-noisy uniform samples at frequency $f_s$ \cite{Lloyd1959}. As the following theorem shows, the optimal pre-sampling filter is characterized by an aliasing-free set with an additional maximality property.

\begin{theorem}
\label{thm:mmse_opt_single} For a fixed $f_s$, the optimal pre-sampling filter $H^\star(f)$ that maximizes $\widetilde{S}_{X|Y}(f)$, $f\in(-f_s/2,f_s/2)$ and minimizes $ \mmse_{X|Y}(f_s)$ is given by
\begin{equation} \label{eq:Hstar_def}
H^{\star}\left(f\right)=\begin{cases}
1 & f\in F^\star,\\
0 & \text{otherwise},
\end{cases}
\end{equation}
where $F^\star = F^\star\left(f_s, \frac{S_X^2(f)}{S_{X+\eta}(f)}\right)\in AF(f_s)$ satisfies
\begin{equation}
\label{eq:maximal_aliasing_free}
\int_{F^\star}  \frac{S_X^2(f)}{S_{X+\eta}(f)} df=\sup_{F\in AF(f_s)}
\int_{F}  \frac{S_X^2(f)}{S_{X+\eta}(f)} df.
\end{equation}
The optimal MMSE for sampling at frequency $f_s$ is 
\begin{align}
\mmse^\star_{X|Y}(f_s) & =  \sigma_{X}^{2}-\int_{F^\star} \frac{S_X^2(f)}{S_{X+\eta}(f)} df,
\label{eq:minimal_mse}
\end{align}
where $\sigma_X^2 = \mathbb E \left(X(t)\right)^2$.
\end{theorem}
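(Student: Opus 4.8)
The plan is to reduce the minimization of $\mmse_{X|Y}(f_s)$ to a pointwise maximization of $\widetilde{S}_{X|Y}(f)$ over the filter, solve that scalar problem explicitly, and then recast the optimizer in the set-theoretic language of \eqref{eq:maximal_aliasing_free}. By \eqref{eq:mmse_single_theorem}, minimizing $\mmse_{X|Y}(f_s)$ is the same as maximizing $\int_{-f_s/2}^{f_s/2}\widetilde{S}_{X|Y}(f)\,df$. The structural observation that makes this tractable is that, by \eqref{eq:j_def}, the value $\widetilde{S}_{X|Y}(f)$ depends on the filter only through the numbers $\{|H(f-f_sk)|^2\}_{k\in\mathbb Z}$, and as $f$ ranges over $(-f_s/2,f_s/2)$ the aliasing classes $\{f-f_sk\}_{k\in\mathbb Z}$ partition $\mathbb R$ up to a null set. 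Hence the optimization decouples across $f$, and it suffices to maximize $\widetilde{S}_{X|Y}(f)$ independently for almost every $f$, which is exactly the problem \eqref{eq:sup_J}.

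For fixed $f$ I would solve \eqref{eq:sup_J} directly. Writing $a_k=|H(f-f_sk)|^2\ge 0$, $s_k=S_X(f-f_sk)$ and $t_k=S_{X+\eta}(f-f_sk)$ with $t_k\ge s_k\ge 0$, the quotient becomes $(\sum_k s_k^2 a_k)/(\sum_k t_k a_k)$. Substituting $w_k=t_k a_k$ recasts it as the weighted average $(\sum_k (s_k^2/t_k)\,w_k)/(\sum_k w_k)$ of the numbers $s_k^2/t_k$ against the nonnegative weights $w_k$ (the indices with $t_k=0$ force $s_k=0$ and contribute nothing). A weighted average never exceeds its largest entry, so the supremum equals $\sup_k s_k^2/t_k$ and is attained by loading all weight on a maximizing index. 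This gives the pointwise optimum $\widetilde{S}_{X|Y}^\star(f)=\sup_k S_X^2(f-f_sk)/S_{X+\eta}(f-f_sk)$.

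I would then check that the indicator filter $H^\star=\mathbf 1_{F^\star}$ attains this optimum and read off the MMSE. Because $F^\star\in AF(f_s)$ meets each aliasing class in at most one point, $|H^\star(f-f_sk)|^2$ equals $1$ for a single index and $0$ for the rest, and by the maximality \eqref{eq:maximal_set_pre} the surviving index is precisely the maximizer; consequently $\widetilde{S}_{X|Y}(f)$ computed with $H^\star$ equals the pointwise supremum. Substituting into \eqref{eq:mmse_single_theorem} yields $\mmse^\star_{X|Y}(f_s)=\sigma_X^2-\int_{-f_s/2}^{f_s/2}\sup_k S_X^2(f-f_sk)/S_{X+\eta}(f-f_sk)\,df$, and an unfolding of the integral identifies the last term with $\int_{F^\star} S_X^2(f)/S_{X+\eta}(f)\,df$, which is \eqref{eq:minimal_mse}. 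The same unfolding, together with the fact that any aliasing-free set contributes at most the per-class maximum in each class, establishes the extremal characterization \eqref{eq:maximal_aliasing_free}.

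The main obstacle I anticipate is measure-theoretic rather than algebraic. One must guarantee that for almost every $f$ the supremum $\sup_k S_X^2(f-f_sk)/S_{X+\eta}(f-f_sk)$ is genuinely attained and that a maximizing index $k^\star(f)$ can be selected measurably, so that $F^\star$ is a bona fide measurable aliasing-free set. Attainment follows from the integrability of $S_X$: since $\int_{-f_s/2}^{f_s/2}\sum_k S_X(f-f_sk)\,df=\sigma_X^2<\infty$, one has $S_X(f-f_sk)\to 0$ as $|k|\to\infty$ for a.e. $f$, and the dominated quotients $S_X^2/S_{X+\eta}\le S_X$ vanish as well, so the supremum reduces to a maximum; ties are broken by the minimal-measure convention in \eqref{eq:maximal_set_pre}. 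The final technical step is the unfolding identity equating $\int_{-f_s/2}^{f_s/2}\sup_k(\cdot)\,df$ with $\int_{F^\star}(\cdot)\,df$, proved by partitioning the fundamental interval according to $k^\star(f)$ and using that each translation by $f_sk$ preserves Lebesgue measure.
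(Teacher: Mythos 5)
Your proposal is correct and follows essentially the same route as the paper: reduce to pointwise maximization of $\widetilde{S}_{X|Y}(f)$ over $f\in(-f_s/2,f_s/2)$, solve the resulting scalar fractional program by observing that the quotient is a weighted average of the ratios $S_X^2/S_{X+\eta}$ over the aliasing class (your substitution $w_k=t_k a_k$ is exactly the paper's homogeneity-based normalization $\sum_k y_k h_k=1$), and conclude that the indicator of a maximal aliasing-free set is optimal. Your additional care about attainment of the supremum and measurable selection of $k^\star(f)$ fills in details the paper leaves implicit, but does not alter the argument.
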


\begin{proof}
See Appendix~\ref{sec:proof_mmse_opt_single}. \\
\end{proof}
\subsubsection*{Remarks}
\begin{enumerate}
\item[(i)]
The proof also shows that 
\[
\int_{F^\star}  \frac{S_X^2(f)}{S_{X+\eta}(f)} df=
\int_{-\frac{f_s}{2}}^\frac{f_s}{2}  \widetilde{S}^\star_{X|Y}(f) df,
\]
where
\[
\widetilde{S}^\star_{X|Y}(f)  \triangleq  \sup_k \frac{S_X^2(f-f_sk)}{S_{X+\eta}(f-f_sk)},
\]
i.e. 
\[
\mmse^\star_{X|Y} (f_s)=\sigma_X^2-\int_{-\frac{f_s}{2}}^\frac{f_s}{2}  \widetilde{S}^\star_{X|Y}(f) df.
\]
\item[(ii)]
Since the SNR at each spectral line $f$ cannot be changed by $H(f)$, the filter $H^\star(f)$ can be specified only in terms of its support,
i.e. in \eqref{eq:Hstar_def} we may replace $1$
by any non-zero value, which can vary with $f$. \\
\end{enumerate}

Theorem \ref{thm:mmse_opt_single} motivates the following definition:
\begin{definition}
\label{def:maximal_aliasing_free} For a given spectral density $S(f)$ and a sampling frequency $f_s$,
an aliasing free set $F^\star \in AF(f_s)$ that satisfies
\[
\int_{F^\star} S(f) df= \sup_{F\in AF(f_s)} \int_{F} S(f) df
 \]
is called a \emph{maximal aliasing-free set} with respect to $f_s$ and the spectral density $S(f)$. Such a set will be denoted by $F^\star \left(f_s,S \right)$.
\end{definition}

Roughly speaking, the maximal aliasing free set $F^\star\left(f_s,S\right)$ can be constructed by going over all frequencies $f\in \left(-f_s/2,f_s/2\right)$, and including in $F^\star\left(f_s,S\right)$  the frequency $f^\star \in \mathbb R$ such that $S(f^\star)$ is maximal among all $S(f)$, $f\in f^\star-f_s \mathbb Z$. Since the estimator is aware of the PSD of the source, in order to estimate $X(\cdot)$ it needs only collect energy and avoid aliasing so that the signal can be uniquely identified. The question is whether there is an interesting interplay between collecting energy and preventing aliasing. Theorem~\ref{thm:mmse_opt_single} says that the optimal pre-sampling filter prefers to eliminate aliasing on the price of completely suppressing the energy of weaker bands. An intuition for this result is given through the following example.

\begin{figure}
\begin{center}
\begin{tikzpicture}[node distance=2cm,auto,>=latex]
  \node at (0,0.7) (s1) {$U_1$} ;
  \node at (0,0) (center) {};
  \node at (0,-0.7) (s2) {$U_2$} ;
  \node [ssum, right of = s1, node distance = 1.5cm] (sum1) {+};
  \node [ssum, right of = s2, node distance = 1.5cm] (sum2) {+};
  \node [above of = sum1,node distance = 1.3cm] (n1) {$\xi_1$};
    \node [below of = sum2,node distance = 1.3cm] (n2) {$\xi_2$};
    \node[int1, right of = sum1, node distance = 1.2cm] (h1) {$h_1$};
        \node[int1, right of = sum2, node distance = 1.2cm] (h2) {$h_2$};
        \node[ssum, right of = center, node distance = 3.6cm] (fv) {+};
	\node[int1, right of = fv, node distance = 2cm, align = center] (est) {$\mmse$ \\ estimator};
	\node[right of = est, node distance = 2cm] (dest) {$\hat{U}_1,\hat{U}_2$};

       \draw[->,line width=2pt] (sum1) -- (h1);
	  \draw[->,line width=2pt] (sum2) -- (h2);
    \draw[->,line width=2pt] (h1) -| (fv);
     \draw[->,line width=2pt] (h2) -| (fv);
   \draw[->,line width=2pt] (fv) -- node[above] {$V$} (est);
   \draw[->,line width=2pt] (s1) -- (sum1);
    \draw[->,line width=2pt] (s2) -- (sum2);
    \draw[->,line width=2pt] (n1) -- (sum1);
        \draw[->,line width=2pt] (n2) -- (sum2);
        \draw[->,line width=2pt] (est) -- (dest);
    
\end{tikzpicture}
\caption{\label{fig:exampe_joint_mmse}  Joint MMSE estimation from a linear combination.}
\end{center}
\end{figure}
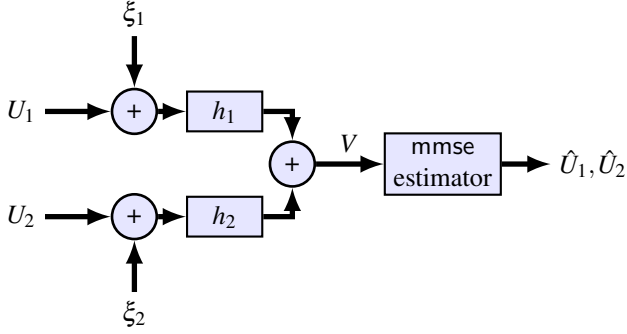

\begin{example}[joint MMSE estimation] \label{ex:joint_mmse}
 Consider the setting in Fig.~\ref{fig:exampe_joint_mmse}, where $U_1$ and $U_2$ be two independent Gaussian random variables with variances $C_{U_1}$ and $C_{U_2}$ respectively. We are interested in MMSE estimation of $\mathbf U=\left(U_1,U_2\right)$ from a noisy linear combination of their sum $V=h_1(U_1+\xi_1)+h_2(U_2+\xi_2)$, where $h_1,h_2 \in \mathbb R$ and $\xi_1,\xi_2$ are another two Gaussian random variables with variances $C_{\xi_1}$ and $C_{\xi_2}$ respectively, independent of $U_1$ and $U_2$ and independent of each other.  We have
\begin{align}
\label{eq:mmse_twoRV}
\mmse_{\mathbf U|V} & =\frac{1}{2}\left(\mmse_{U_1|V}+\mmse_{U_2|V}\right)\\
 = \frac{1}{2}\left(C_{U_1}\right.&+\left.C_{U_2}-\frac{h_1^2C_{U_1}^2+h_2^2 C_{U_2}^2}{h_1^2(C_{U_1}+C_{\xi_1})+h_2^2(C_{U_2}+C_{\xi_2})}\right). \nonumber 
\end{align}
The optimal choice of the coefficients vector $\mathbf h=\left(h_1,h_2\right)$ that minimizes \eqref{eq:mmse_twoRV} is 
\[
\mathbf h= 
\begin{cases}
\left(c,0\right) & \frac{C_{U_1}^2}{C_{U_1}+C_{\xi_1}} >  \frac{C_{U_2}^2}{C_{U_2}+C_{\xi_2}} \\
\left(0,c\right) & \frac{C_{U_1}^2}{C_{U_1}+C_{\xi_1}} <  \frac{C_{U_2}^2}{C_{U_2}+C_{\xi_2}},
\end{cases}
\]
where $c$ is any constant different from zero. If $\frac{C_{U_1}^2}{C_{U_1}+C_{\xi_1}} =  \frac{C_{U_2}^2}{C_2+C_{\xi_2}}$, then any non-trivial linear combination results in the same estimation error. 
\end{example}
This example can be generalized to a countable number of random variables $\mathbf U=\left(U_1,U_2,\ldots\right)$ and respective noise sequence $\mathbf \xi=\left(\xi_1,\xi_2,\ldots\right)$ such that $V=\sum_{i=1}^\infty h_i(U_i+\xi_i)<\infty$ with probability one. The optimal coefficient vector $\mathbf h=\left(h_1,h_2,\ldots \right)$ that minimizes $\mmse_{\mathbf U|\mathbf V}$ is the indicator for the maximum among
\[
 \left\{\frac{C_{U_i}^2}{C_{U_i}+C_{\xi_i}},~i=1,2,\ldots\right\}.
 \]

In the context of the expression for the optimal pre-sampling filter \eqref{eq:Hstar_def}, each frequency $f$ in the support of $S_X(f)$ can be seen as an independent component of the process $X(\cdot)$ with spectrum $\approx \mathbf 1_{[f,f+\Delta f)}S_X(f)$ (see for example the derivation of the SKP reverse waterfilling in \cite{Berger1998}).  
For a given $f\in\left( -f_s/2,f_s/2 \right)$, the analogue for the vector $\mathbf U$ in our case are the components of the source process that corresponds to the frequencies $f-f_s\mathbb Z$, which are folded and summed together due to aliasing: each set of the form $f-f_s\mathbb Z$ corresponds to a linear combination of a countable number of independent Gaussian random variables attenuated by the coefficients $\left\{H(f-f_sk),~k\in \mathbb Z\right\}$. The optimal choice of coefficients that minimizes the MMSE in joint estimation of all source components are those that pass only the spectral component with maximal $\frac{S_X^2(f')}{S_{X+\eta}(f')}$ among all $f'\in f-f_s\mathbb Z$, and suppress the rest. This means that under the MSE criterion, the optimal choice is to eliminate aliasing at the price of losing all information contained in spectral components other than the maximal one. \par 
An example of a maximal aliasing-free set for a specific PSD appears in Fig.~\ref{fig:aliasing_free_various}. The MMSE with the optimal pre-sampling filter and with an all-pass filter are shown in Fig.~\ref{fig:mmse_fig}.\\

\begin{figure}
\centering
\begin{tikzpicture}
\node[int] at (0,0) {\includegraphics[trim=1cm 2.6cm 1cm 2cm, clip=true, scale=0.45]{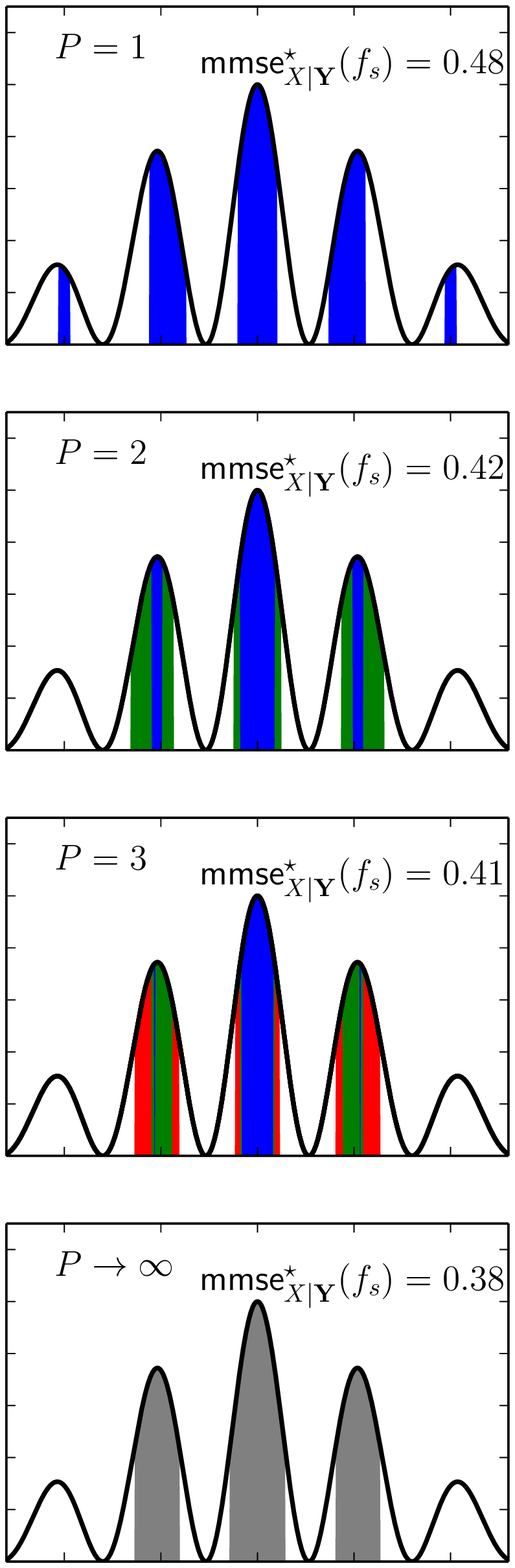}};
\node[int] at (4.5,0) {\includegraphics[trim=1cm 2.6cm 1cm 2cm, clip=true, scale=0.45] {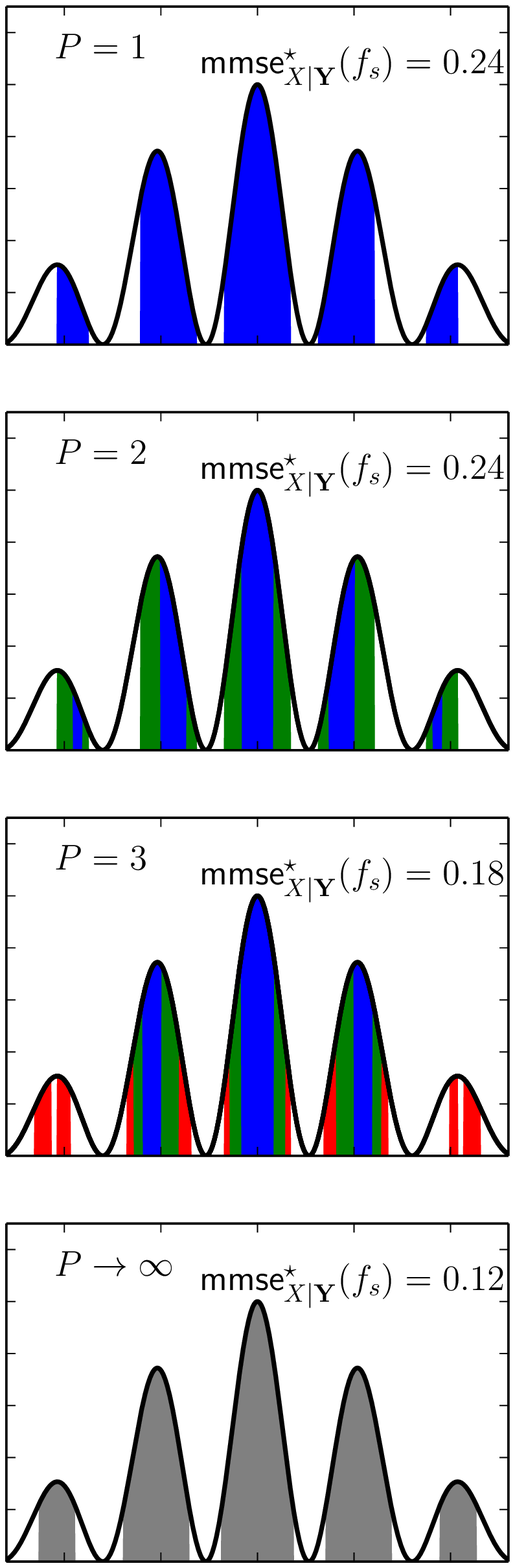}};
\end{tikzpicture}
\caption{\label{fig:aliasing_free_various}
Maximal aliasing-free sets with respect to the PSD $S_X^2(f)/S_{X+\eta}(f)$ and sampling frequencies $f_s = f_{Nyq}/4$ (left) and $f_s =  f_{Nyq}/2$ (right), for $1,2$ and $3$ sampling branches. The first, second and third maximal aliasing-free set is given by the frequencies below the blue, green and red areas, respectively. The sets below the total colored area all have Lebesgue measure $f_s$. Assuming $S_\eta(f)\equiv 0$, the white area bounded by the PSD equals $\mmse^\star_{X|\mathbf Y}(f_s)$. The ratio of this area to the total area bounded by the PSD is specified in each case. From Theorem~\ref{thm:mmse_Landau}, the case $P\rightarrow \infty$ corresponds to the set $\mathcal F^\star$ that achieves the RHS of \eqref{eq:mmse_bound_landau}.
}
\end{figure}

\begin{figure}
\begin{center}
\begin{tikzpicture}
\node at (0,0) {\includegraphics[trim=0cm 4cm 1.5cm 3cm, clip=true, scale=0.4]{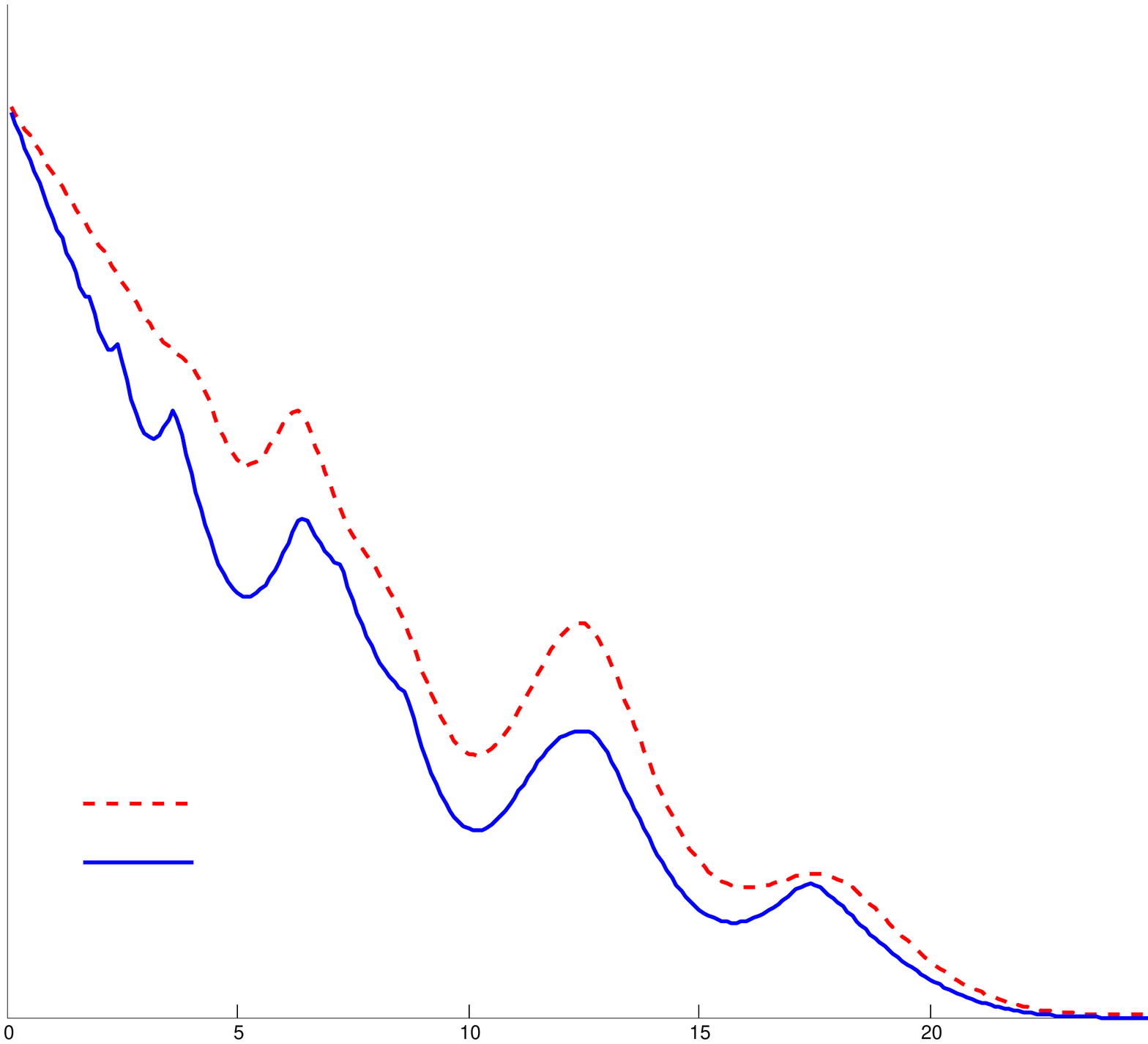}}; 
\node at (2,2)   {\includegraphics[trim=4cm 9.5cm 4cm 9cm, clip=true, scale=0.3,frame]{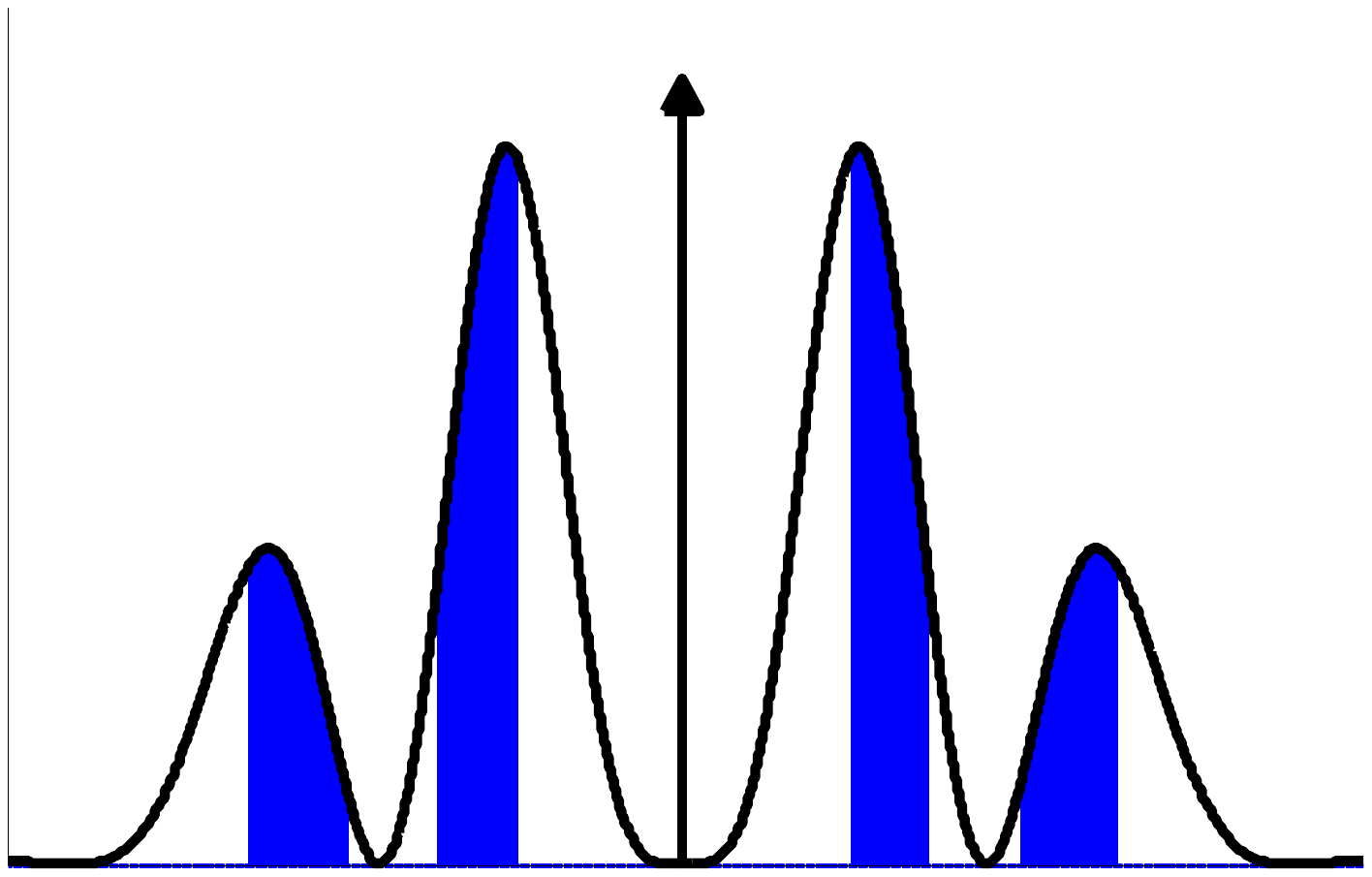}};
\node at (3.3,3) {\small $\frac{S_X(f)}{S_{X+\eta}(f)}$};
\draw[line width = 1pt] (-3.8,3.4) node [left] {$\sigma_X^2$}-- (-3.6,3.4);
\draw[->,line width = 2pt] (-3.7,-3.8) --  (4.1,-3.8) node[right] {\small $f_s$};
\draw[->,line width = 2pt] (-3.75,-3.8) -- node[above,rotate=90] {\small $MSE$} (-3.75,3.8);
\node at (-1.5,1.2) {\color{red} \small $\left|H(f)\right|\equiv 1$};
\node at (-2,-1) {\color{blue} \small $H^\star(f)$};
\fill[white] (-3.5,-2.7) rectangle (-2,-1.5);
\end{tikzpicture}
\caption{\label{fig:mmse_fig} The MMSE as a function of the sampling frequency $f_s$ in single branch sampling, with an optimal pre-sampling filter and an all-pass filter. The function $S_X^2(f)/S_{X+\eta}(f)$ is given in the small frame. For the case $f_s = f_{Nyq}/3$, the support of the optimal pre-sampling filter associated with this spectral density and sampling frequency is given by the shaded area.}
\end{center}
\end{figure}

It also follows from Theorem~\ref{thm:mmse_opt_single} and Proposition~\ref{prop:aliasing_free} that a lower bound on $\mmse^\star_{X|Y}(f_s)$ can be obtained by integrating over a set of Lebesgue measure $f_s$ with maximal $\frac{S_X^2(f)} {S_{X+\eta}(f)}$ (that is, without the aliasing-free property). This leads to 
\begin{equation} \label{eq:mmse_lower_bound}
\mmse_{X|Y}^\star(f_s) \geq \sigma_X^2 - \sup_{\mu(F)\leq f_s} \int_{F} \frac{S_X^2(f)}{S_{X+\eta}} df
\end{equation}
(the supremum is taken over all measurable subsets of $\mathbb R$ with Lebesgue measure not exceeding $f_s$). A special case in which the bound \eqref{eq:mmse_lower_bound} is achieved is described in the following example. 

\begin{example}[unimodal PSD] \label{ex:unimodal}
In the special case where the function $\frac{S_X^2(f)}{S_{X+\eta}(f)}$ is unimodal in the sense that it is non-increasing for $f>0$, the associated maximal aliasing-free set is the interval $(-f_s/2,f_s/2)$ and the optimal pre-sampling filter is a lowpass with cutoff frequency $f_s/2$. Theorem \ref{thm:mmse_opt_single} then implies that
\begin{equation} \label{eq:mmse_unimodal}
\mmse_{X|Y}^\star(f_s) = \sigma_X^2 - \int_{-\frac{f_s}{2}}^\frac{f_s}{2} \frac{S_X^2(f)}{S_{X+\eta}(f)} df.
\end{equation}
Since $S_X(f)$ is symmetric and non-increasing for $f>0$, $\mmse_{X|Y}^\star(f_s)$ in \eqref{eq:mmse_average_equiv} achieves the bound \eqref{eq:mmse_lower_bound}. 
\end{example}

In contrast to the case described in Example~\ref{ex:unimodal}, the bound in \eqref{eq:mmse_lower_bound} cannot be achieved by a single sampling branch in general. It can, however, be approached by increasing the number of sampling branches, as will be discussed in the following two subsections.

\subsection{Multi-branch sampling}
We now extend Propositions~\ref{prop:mmse_single} and \ref{thm:mmse_opt_single} to the case of multi-branch sampling. The system model is given by Fig.~\ref{fig:mmse_model} with the sampler of Fig.~\ref{fig:sampling_scheme}(b). 

\begin{theorem}[MMSE in multi-branch sampling]\label{thm:mmse_multi} For each $p=1,\ldots,P$, let $Z_p(\cdot)$ be the process obtained by passing a Gaussian stationary source $X(\cdot)$ corrupted by a stationary Gaussian noise $\eta(\cdot)$ through an LTI system $H_p(f)$. Let
$Y_p[\cdot]$, be the samples of the process $Z_p(\cdot)$ at frequency $f_s/P$, namely
\[ 
Y_p[n]=Z(nP/f_s)=h_p(\cdot)\star \left(X(\cdot)+\eta(\cdot)\right)(nP/f_s),\quad n\in \mathbb Z.
\]
The MMSE in estimating $X(\cdot)$ from the samples $\mathbf Y[\cdot]=\left(Y_1[\cdot],\ldots,Y_P[\cdot]\right)$, is given by
 
\begin{align}
\label{eq:mmse_multi}
\mmse_{X|\mathbf Y}(f_s)& = \sigma_{X}^{2}-\int_{-\frac{f_{s}}{2}}^{\frac{f_{s}}{2}}\mathrm{Tr}\left( \widetilde{\mathbf S}_{X|\mathbf Y}(f)\right)df.
\end{align}
Here $\sigma_X^2 = \mathbb E \left(X(t)\right)^2$, $\widetilde{\mathbf S}_{X|\mathbf Y}(f)$ is the $P\times P$ matrix defined by
\begin{equation} \label{eq:J_mat_def}
 \widetilde{\mathbf S}_{X|Y}(f)\triangleq \tilde{\mathbf  S}_{\mathbf Y}^{-\frac{1}{2}*}(f){\mathbf K}(f)\tilde{\mathbf  S}_{\mathbf Y}^{-\frac{1}{2}}(f),
\end{equation}
where the matrices $\tilde{\mathbf S}_{\mathbf Y}(f),\mathbf K(f) \in \mathbb C^{P\times P}$  are given by
\begin{align*}
\left(\tilde{\mathbf  S}_{\mathbf Y}(f)\right)_{i,j} & =\sum_{k\in\mathbb{Z}}\left\{ S_{X+\eta}H_{i}^*H_{j}\right\} \left(f-f_{s}k\right),
\end{align*}
 and 
\[
\left({\mathbf K}(f)\right)_{i,j}=\sum_{k\in\mathbb{Z}}\left\{ S_{X}^{2}H_{i}^*H_{j}\right\} \left(f-f_{s}k\right).
\]
\end{theorem}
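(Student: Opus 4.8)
The plan is to lift the polyphase proof of Proposition~\ref{prop:mmse_single} from a scalar observation to the vector observation $\mathbf Y[\cdot]$, so that Theorem~\ref{thm:mmse_multi} becomes the $P$-branch analogue of \eqref{eq:mmse_single_theorem}. The only genuinely new feature is that $\mathbf Y[\cdot]$ is a $P$-dimensional process living on the coarse grid $P\mathbb Z/f_s$, whereas the representation \eqref{eq:mmse_average_equiv} decomposes $X(\cdot)$ into polyphase components. First I would align the two grids: I would decompose $X(\cdot)$ into polyphase components at the \emph{per-branch} spacing $P/f_s$, writing the estimation error as an average over the fractional offset $\Delta$ of $\mmse_{X_\Delta|\mathbf Y}$, exactly as in the analogue of \eqref{eq:mmse_average_equiv} but with the slow sampling period. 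For each fixed $\Delta$ the scalar process $X_\Delta[\cdot]$ and the vector process $\mathbf Y[\cdot]$ then run at the common rate $f_s/P$ and are jointly Gaussian and stationary, which is what makes linear (Wiener) estimation applicable.

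For a fixed $\Delta$, $\mmse_{X_\Delta|\mathbf Y}$ is the linear MMSE of estimating the scalar $X_\Delta[\cdot]$ from the jointly stationary vector $\mathbf Y[\cdot]$, given in the frequency domain by
\[
\mmse_{X_\Delta|\mathbf Y}=\int_{-\frac12}^{\frac12}\left[S_{X_\Delta}(e^{2\pi i\phi})-\mathbf S_{X_\Delta\mathbf Y}(e^{2\pi i\phi})\,\mathbf S_{\mathbf Y}^{-1}(e^{2\pi i\phi})\,\mathbf S_{\mathbf Y X_\Delta}(e^{2\pi i\phi})\right]d\phi .
\]
I would then compute the ingredients from the model $Y_p=h_p\star(X+\eta)$ sampled at $f_s/P$: the observation spectral matrix $\mathbf S_{\mathbf Y}$ is the aliased sum of the cross-spectra $\{S_{X+\eta}H_i^*H_j\}$ folded by the sampling, and the cross-spectrum $\mathbf S_{X_\Delta\mathbf Y}$ collects the terms $S_X H_p^*$ weighted by the polyphase phase factors $e^{2\pi i\Delta(\cdot)}$ arising from the fractional offset $\Delta$.

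The decisive step is the integration over $\Delta$. Because $\mathbf S_{\mathbf Y}$ does not depend on $\Delta$, I would use cyclicity of the trace to move the $\Delta$-average onto the rank-one product $\mathbf S_{\mathbf Y X_\Delta}\mathbf S_{X_\Delta\mathbf Y}$; the phase factors $e^{2\pi i\Delta(\cdot)}$ are orthogonal under $\int_0^1 d\Delta$, so the average annihilates all cross-alias terms and leaves exactly the source matrix $\mathbf K(f)$, while the observation matrix assembles into $\tilde{\mathbf S}_{\mathbf Y}(f)$. Rewriting the resulting quadratic form with a Hermitian square root of $\tilde{\mathbf S}_{\mathbf Y}$ and using $\mathrm{Tr}(AB)=\mathrm{Tr}(BA)$ turns $\mathrm{Tr}(\mathbf K\,\tilde{\mathbf S}_{\mathbf Y}^{-1})$ into $\mathrm{Tr}(\tilde{\mathbf S}_{\mathbf Y}^{-\frac{1}{2}*}\mathbf K\,\tilde{\mathbf S}_{\mathbf Y}^{-\frac{1}{2}})=\mathrm{Tr}(\widetilde{\mathbf S}_{X|\mathbf Y}(f))$ of \eqref{eq:J_mat_def}, and the constant term $\int_0^1\!\int S_{X_\Delta}$ collapses to $\sigma_X^2$, giving \eqref{eq:mmse_multi}.

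The main obstacle is the multirate bookkeeping in the middle step: keeping track of how the per-branch sampling at $f_s/P$ folds the continuous spectra, how the polyphase offset $\Delta$ enters as phase factors, and verifying that the $\Delta$-average together with this folding reproduce precisely the $P\times P$ matrices $\tilde{\mathbf S}_{\mathbf Y}(f)$ and $\mathbf K(f)$ of the statement rather than some reindexed variant. A secondary technical point is the whitening step: one must argue that $\tilde{\mathbf S}_{\mathbf Y}(f)$ is invertible almost everywhere so that $\tilde{\mathbf S}_{\mathbf Y}^{-\frac{1}{2}}$ exists, and on the degenerate frequencies where it is singular restrict the quadratic form to the range of $\tilde{\mathbf S}_{\mathbf Y}(f)$, whose orthogonal complement carries no observation energy. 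Specializing $P=1$ collapses every matrix to a scalar and recovers \eqref{eq:mmse_single_theorem}, a useful consistency check.
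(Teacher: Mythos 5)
Your proposal follows essentially the same route as the paper's own proof: the polyphase average \eqref{eq:mmse_average_equiv}, the per-$\Delta$ vector Wiener formula $\mathbf S_{X_\Delta|\mathbf Y}=\mathbf S_{X_\Delta\mathbf Y}\mathbf S_{\mathbf Y}^{-1}\mathbf S_{X_\Delta\mathbf Y}^{*}$, the orthogonality $\int_0^1 e^{2\pi i\Delta(k-l)}d\Delta=\delta_{k,l}$ that annihilates the cross-alias terms and turns the $\Delta$-average of $\mathbf S_{X_\Delta\mathbf Y}^{*}\mathbf S_{X_\Delta\mathbf Y}$ into $\mathbf K$, and the square-root/trace rewriting that produces \eqref{eq:J_mat_def}. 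The one point of divergence is exactly the ``reindexed variant'' risk you flagged: the paper's proof does \emph{not} work on the coarse grid $P/f_s$; it takes $Y_p[n]=Z_p(n/f_s)$ and polyphase components $X_\Delta[n]=X\left((n+\Delta)/f_s\right)$, i.e.\ every branch is sampled at rate $f_s$, and it is this convention that makes the spectral replicas land at multiples of $f_s$ and the integration range equal $\left(-f_s/2,f_s/2\right)$ as displayed in \eqref{eq:mmse_multi}. If you carry out your plan at per-branch spacing $P/f_s$ (which is what the hypothesis of Theorem~\ref{thm:mmse_multi} literally states), you will obtain the same structural formula but with $f_s/P$ in place of $f_s$ in both the aliasing sums and the integration limits; this mismatch is an inconsistency internal to the paper (compare the hypothesis of Theorem~\ref{thm:mmse_multi} with its displayed matrices, and with the $AF(f_s/P)$ sets of Theorem~\ref{thm:mmse_opt_filters_bank}), not a defect in your reasoning, but to land on the stated equations you must adopt the paper's per-branch-rate-$f_s$ convention. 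Finally, your handling of frequencies where $\tilde{\mathbf S}_{\mathbf Y}(f)$ is singular, by restricting the quadratic form to its range, goes beyond the paper's proof, which tacitly assumes invertibility and defers the degenerate case to a remark following Theorem~\ref{thm:filterbanks_discrete}.
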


\begin{proof}
See Appendix \ref{sec:proof_mmse}.
\end{proof}

\subsection{Optimal pre-sampling filter bank}

It follows from Theorem~\ref{thm:mmse_multi} that minimizing the MMSE in multi-branch sampling is equivalent to maximizing the sum of the eigenvalues of $\widetilde{\mathbf S}_{X| \mathbf Y}(f)$ of \eqref{eq:J_mat_def} for every $f \in (-f_s/2,f_s/2)$. A characterization of the set of pre-sampling filter that maximizes this sum is given in \cite[Prop. 3]{ShannonMeetsNyquist}. We will provide here a different proof, which will be useful in proving a similar result for the general combined sampling and source coding problem.

\begin{theorem}
\label{thm:mmse_opt_filters_bank}
The optimal pre-sampling filters $H^\star_1(f),\ldots,H^\star_P(f)$ that maximize $\mathrm {Tr}\,\widetilde{\mathbf S}_{X|\mathbf Y}(f)$ of \eqref{eq:J_mat_def} and minimize $\mmse_{X|\mathbf Y}(f_s)$ of \eqref{eq:mmse_multi} are given by 
\begin{equation}
H_{p}^{\star}(f)=\mathbf 1_{F_p^\star}(f) \triangleq  \begin{cases}
1 & f\in F_p^\star,\\
0 & f\notin F_p^\star,
\end{cases}\label{eq:filterbanks_discrete_Hdef},\quad p=1,\ldots,P,
\end{equation}
where the sets $F_1^\star,\ldots,F_P^\star \subset \mathbb R$ satisfy:
\begin{enumerate}
\item[(i)] $F^\star_p \in AF(f_s/P)$ for all $p=1,\ldots,P$.
\item[(ii)] For $p=1$,
\[
\int_{F_1^\star} \frac{S_X^2(f)}{S_{X+\eta}(f)} df = \int_{-\frac{f_s}{2}}^\frac{f_s}{2} J_1^\star(f) df, 
\]
where 
\[
J^\star_1(f) \triangleq \sup_{k\in \mathbb Z} \frac{S_X^2(f-kf_s/P)}{S_{X+\eta}(f-kf_s/P)},
\]
and for $p=2,\ldots,P$,
\[
\int_{F_p^\star} \frac{S_X^2(f)}{S_{X+\eta}(f)} df = \int_{-\frac{f_s}{2}}^\frac{f_s}{2} J_p^\star(f) df, 
\]
where 
\[
J^\star_p(f) \triangleq \sup_{k\in \mathbb Z} \frac{S_X^2(f-kf_s/P)}{S_{X+\eta}(f-kf_s/P)}\mathbf 1_{\mathbb R\setminus \left\{F^\star_1\cup\cdots \cup F_{p-1}^\star \right\}}.
\]
\end{enumerate}
The resulting MMSE is 
\begin{align} \label{eq:mmse_opt_multi_def}
\mmse_{X|\mathbf Y}^\star\left(f_s\right) & \triangleq \sigma_X^2 -\sum_{p=1}^{P}\int_{F^\star_p}\frac{S_X^2(f)}{S_{X+\eta}(f)}df \\ & =  \sigma_{X}^{2} -\sum_{p=1}^{P}\int_{-\frac{f_s}{2}}^\frac{f_s}{2} J_p^\star(f) df \nonumber .
\end{align}
\end{theorem}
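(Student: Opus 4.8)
The plan is to reduce the minimization of $\mmse_{X|\mathbf Y}(f_s)$ in \eqref{eq:mmse_multi} to a pointwise maximization of $\mathrm{Tr}\,\widetilde{\mathbf S}_{X|\mathbf Y}(f)$, and then to solve this maximization by exposing the rank-one structure hidden in \eqref{eq:J_mat_def}. Since $\widetilde{\mathbf S}_{X|\mathbf Y}(f)$ depends only on the filter responses on the aliasing set $\{f-kf_s/P:k\in\mathbb Z\}$ of the per-branch sampling frequency $f_s/P$, and distinct cosets of this set are disjoint, the objective $\int_{-f_s/2}^{f_s/2}\mathrm{Tr}\,\widetilde{\mathbf S}_{X|\mathbf Y}(f)\,df$ decouples across $f$; it therefore suffices to maximize $\mathrm{Tr}\,\widetilde{\mathbf S}_{X|\mathbf Y}(f)$ for each $f$ separately. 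Writing $\mathbf h_k$ for the vector collecting the (conjugated) filter responses $\{H_p(f-kf_s/P)\}_{p=1}^{P}$ on the $k$-th aliasing component, the matrices of Theorem~\ref{thm:mmse_multi} take the form $\tilde{\mathbf S}_{\mathbf Y}(f)=\sum_k S_{X+\eta}(f-kf_s/P)\,\mathbf h_k\mathbf h_k^*$ and $\mathbf K(f)=\sum_k S_X^2(f-kf_s/P)\,\mathbf h_k\mathbf h_k^*$: both are assembled from the same directions $\mathbf h_k$, weighted by the noisy spectrum and by $S_X^2$ respectively.

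Next I would set $\mathbf u_k=\sqrt{S_{X+\eta}(f-kf_s/P)}\,\mathbf h_k$ and let $c_k=S_X^2(f-kf_s/P)/S_{X+\eta}(f-kf_s/P)$ denote the per-component SNR, so that $\tilde{\mathbf S}_{\mathbf Y}=\sum_k\mathbf u_k\mathbf u_k^*$ and $\mathbf K=\sum_k c_k\mathbf u_k\mathbf u_k^*$. Using $\mathrm{Tr}\,\widetilde{\mathbf S}_{X|\mathbf Y}=\mathrm{Tr}(\tilde{\mathbf S}_{\mathbf Y}^{-1}\mathbf K)$ (by cyclicity of the trace and Hermitian symmetry of the square root) gives the scalarization
\[
\mathrm{Tr}\,\widetilde{\mathbf S}_{X|\mathbf Y}(f)=\sum_k c_k\,w_k,\qquad w_k\triangleq\mathbf u_k^*\tilde{\mathbf S}_{\mathbf Y}^{-1}\mathbf u_k .
\]
The key observation is that the weights satisfy $0\le w_k\le 1$ and $\sum_k w_k=P$: the upper bound follows from $\mathbf u_k\mathbf u_k^*\preceq\tilde{\mathbf S}_{\mathbf Y}$ (which, after conjugation by $\tilde{\mathbf S}_{\mathbf Y}^{-1/2}$, bounds the single nonzero eigenvalue of a rank-one matrix by $1$), while $\sum_k w_k=\mathrm{Tr}(\tilde{\mathbf S}_{\mathbf Y}^{-1}\sum_k\mathbf u_k\mathbf u_k^*)=\mathrm{Tr}\,\mathbf I_P=P$. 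Maximizing the linear functional $\sum_k c_kw_k$ over this polytope is a fractional-knapsack problem whose optimum places unit weight on the $P$ largest $c_k$, giving the pointwise bound $\mathrm{Tr}\,\widetilde{\mathbf S}_{X|\mathbf Y}(f)\le\sum_{p=1}^{P}J_p^\star(f)$ with $J_p^\star$ as in the statement.

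It then remains to exhibit filters attaining the bound. Equality forces $w_k=1$ on the $P$ components of largest SNR in each coset and $w_k=0$ on the rest; this is realized by routing, for each coset, the $j$-th best component to branch $j$ via $\mathbf h_k=\mathbf e_j$, which spreads the active components orthogonally across the $P$ branches and reproduces the indicator filters \eqref{eq:filterbanks_discrete_Hdef}. Because each branch then passes exactly one frequency from every aliasing coset, its support contains no two points differing by a multiple of $f_s/P$, i.e. $F_p^\star\in AF(f_s/P)$ in the sense of Definition~\ref{def:aliasing_free}; ranking cosets greedily by SNR identifies $F_1^\star$ with the arg-max set, $F_2^\star$ with the best set on the complement, and so on, which is exactly the definition of $J_p^\star$ and makes the $F_p^\star$ disjoint. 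Substituting $\mathrm{Tr}\,\widetilde{\mathbf S}_{X|\mathbf Y}^\star(f)=\sum_p J_p^\star(f)$ into \eqref{eq:mmse_multi} yields \eqref{eq:mmse_opt_multi_def}. This is the vector-valued, multi-branch counterpart of Example~\ref{ex:joint_mmse} and of the single-branch Theorem~\ref{thm:mmse_opt_single}.

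The main obstacle is the two-sided control of the weights $w_k$: establishing $w_k\le 1$ and $\sum_k w_k=P$ rigorously, and handling the degenerate case in which the chosen $\mathbf h_k$ fail to span $\mathbb C^P$ so that $\tilde{\mathbf S}_{\mathbf Y}(f)$ is singular (which I would treat by restricting to its range, or by a continuity/perturbation argument). A secondary, measure-theoretic difficulty is to verify that the per-frequency optimizers assemble into measurable filters and, via Proposition~\ref{prop:aliasing_free}, that selecting the $P$ largest SNR values in every coset genuinely partitions into $P$ sets each aliasing-free with respect to $f_s/P$.
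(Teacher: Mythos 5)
Your proof is correct, and its achievability half coincides with the paper's: both arguments note that the greedy indicator filters make $\tilde{\mathbf S}_{\mathbf Y}(f)$ and $\mathbf K(f)$ diagonal, so that $\widetilde{\mathbf S}_{X|\mathbf Y}(f)$ becomes diagonal with entries $J_1^\star(f),\ldots,J_P^\star(f)$. The converse half, however, is genuinely different. The paper stacks the filters into an infinite matrix $\mathbf H(f)\in\mathbb C^{\mathbb Z\times P}$ and proves the operator inequality $\mathbf S\mathbf S_n^{-1}\mathbf S\succeq \mathbf S\mathbf H\left(\mathbf H^*\mathbf S_n\mathbf H\right)^{-1}\mathbf H^*\mathbf S$ on $\ell_2(\mathbb C)$, by reducing it (after factoring out $\mathbf S\mathbf S_n^{-1/2}$) to the statement that $\mathbf S_n^{1/2}\mathbf H\left(\mathbf H^*\mathbf S_n\mathbf H\right)^{-1}\mathbf H^*\mathbf S_n^{1/2}$ is a contraction, proved via Cauchy--Schwarz; this yields dominance of \emph{each ordered eigenvalue} of $\widetilde{\mathbf S}_{X|\mathbf Y}(f)$ by the corresponding $J_p^\star(f)$. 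You instead scalarize the trace as $\sum_k c_k w_k$ with $w_k=\mathbf u_k^*\tilde{\mathbf S}_{\mathbf Y}^{-1}\mathbf u_k$, prove the polytope constraints $0\le w_k\le 1$ and $\sum_k w_k=P$, and solve the resulting linear program by the fractional-knapsack rule. Your route is more elementary---only $P\times P$ matrices and no infinite-dimensional operator inequality---and makes the budget interpretation transparent (each aliasing component can consume at most one of $P$ degrees of freedom), which nicely echoes Example~\ref{ex:joint_mmse}. What it buys less of is strength: it bounds only the trace, which suffices here since the objective \eqref{eq:mmse_multi} is linear in $\mathrm{Tr}\,\widetilde{\mathbf S}_{X|\mathbf Y}(f)$, but it does not give the eigenvalue-by-eigenvalue dominance stated in remark (i) after the theorem, which the paper later relies on to prove Theorem~\ref{thm:opt_filters_bank}, where the distortion is a nonlinear waterfilling function of the individual eigenvalues; recovering that along your route would require an additional tool such as the Poincar\'e separation theorem or the paper's PSD-dominance argument. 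Your two flagged obstacles (singular $\tilde{\mathbf S}_{\mathbf Y}$ and measurability of the frequency-wise selection) are real but minor, and are handled no more rigorously in the paper itself.
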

\begin{proof}
See Appendix~\ref{sec:proof_mmse}.\\
\end{proof}

\subsubsection*{Remarks}
\begin{enumerate}
\item[(i)] The proof implies an even stronger statement than Theorem~\ref{thm:mmse_opt_filters_bank}: the filters $H_1^\star(f),\ldots,H_P^\star(f)$ yield a set of eigenvalues of $\widetilde{\mathbf S}_{X| \mathbf Y}(f)$ which are uniformly maximal, in the sense that the $i^\mathrm{th}$ eigenvalue of $\widetilde{\mathbf S}_{X| \mathbf Y}(f)$ is always smaller than the $i^\mathrm{th}$ eigenvalue of $\widetilde{\mathbf S}^\star_{X| \mathbf Y}(f)$. This is an important fact that will be used in proving Theorem~\ref{thm:opt_filters_bank} below. 
\item[(ii)] As in the single-branch case in Theorem~\ref{thm:mmse_opt_single}, the filters $H^\star_{1}(f),\ldots,H^\star_{P}(f)$ are specified only in terms of their support and
in \eqref{eq:filterbanks_discrete_Hdef} we can replace $1$ by any non-zero value which may vary with $p$ and $f$. 
\item[(iii)]
Condition $(ii)$ for the sets $F_1^\star,\ldots,F_P^\star$ can be relaxed in the following sense: if $F_1^\star,\ldots,F_P^\star$ satisfy condition $(i)$ and $(ii)$, then $\mmse^\star_{X|\mathbf Y}(f_s)$ is achieved by any pre-sampling filters defined as the indicators of the sets $F_1',\ldots,F_P'$ in $AF(f_s/P)$ for which 
\[
\sum_{p=1}^P \int_{F'_p} \frac{S_X^2(f)}{S_{X+\eta}(f)} df = \sum_{p=1}^P \int_{F_p^\star} \frac{S_X^2(f)}{S_{X+\eta}(f)} df.
\]
\item[(iv)]
One possible construction for $F_1^\star,\ldots, F_P^\star$ is as follows: over all frequencies $-\frac{f_s}{2P}\leq f < \frac{f_s}{2P}$, for each $f$ denote by $f^\star_1(f),\ldots,f^\star_P(f)$ the $P$ frequencies that correspond to the largest values among $\left\{\frac{S_X^2(f-f_sk)}{S_{X+\eta}(f-f_sk)},\,k\in \mathbb Z\right\}$. Then assign each $f^\star_p(f)$ to $F^\star_p$. Under this construction, the set $F^\star_p$ can be seen as the $p^\textrm{th}$ \emph{maximal aliasing free set} with respect to $f_s/P$ and $\frac{S_{X}^{2}(f)}{S_{X+\eta}(f)}$. This is the construction that was used in Fig.~\ref{fig:aliasing_free_various}.
\end{enumerate} 

Fig.~\ref{fig:mmse_differentsPs} illustrates $\mmse^\star_{X|\mathbf Y} (f_s)$ as a function of $f_s$ for a specific PSD and $P=1,2$ and $3$. As this figure shows, increasing the number of sampling branches does not necessarily decrease $\mmse_{X|\mathbf Y}^\star(f_s)$ and may even increase it for some\footnote{Note that if $P_1$ and $P_2$ are co-primes, then even if $P_2>P_1$, there is no way to choose the pre-sampling filters for the system with $P_2$ branches and sampling frequency $f_s/P_2$ at each branch to produce the same output as the system with $P_1$ sampling branches and sampling frequency $f_s/P_1$.}  $f_s$. However, we will see below that $\mmse_{X|\mathbf Y}^\star(f_s)$ converges to a fixed number as $P$ increases.

\begin{figure}
\begin{center}

\begin{tikzpicture}
\node at (0,0) {\includegraphics[trim=2.2cm 8.8cm 7cm 5cm, clip=true, scale=0.6]{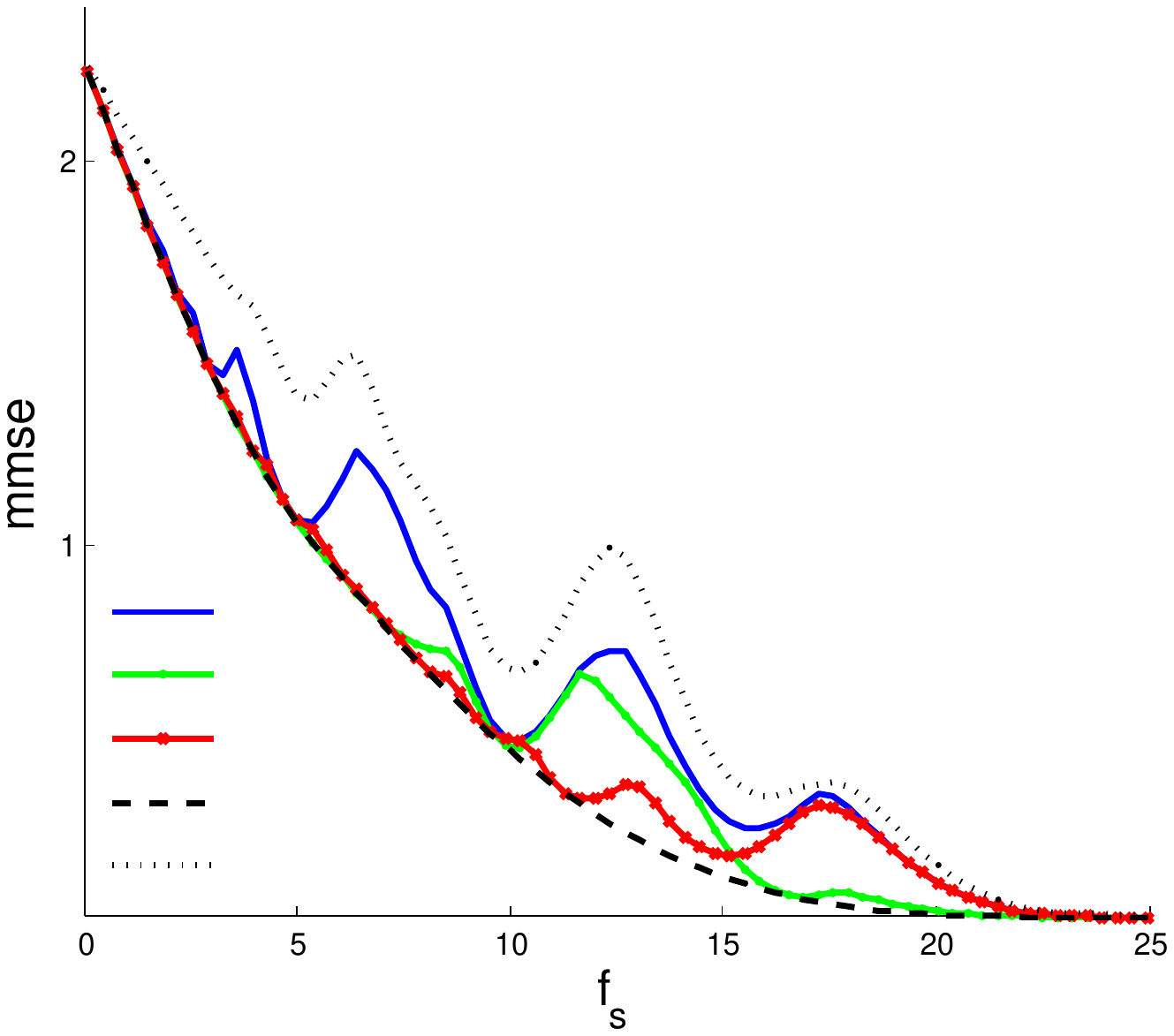} };
\node at (2,1) { \includegraphics[trim=2.5cm 8.5cm 1.7cm 7cm, clip=true, scale=0.29,frame]{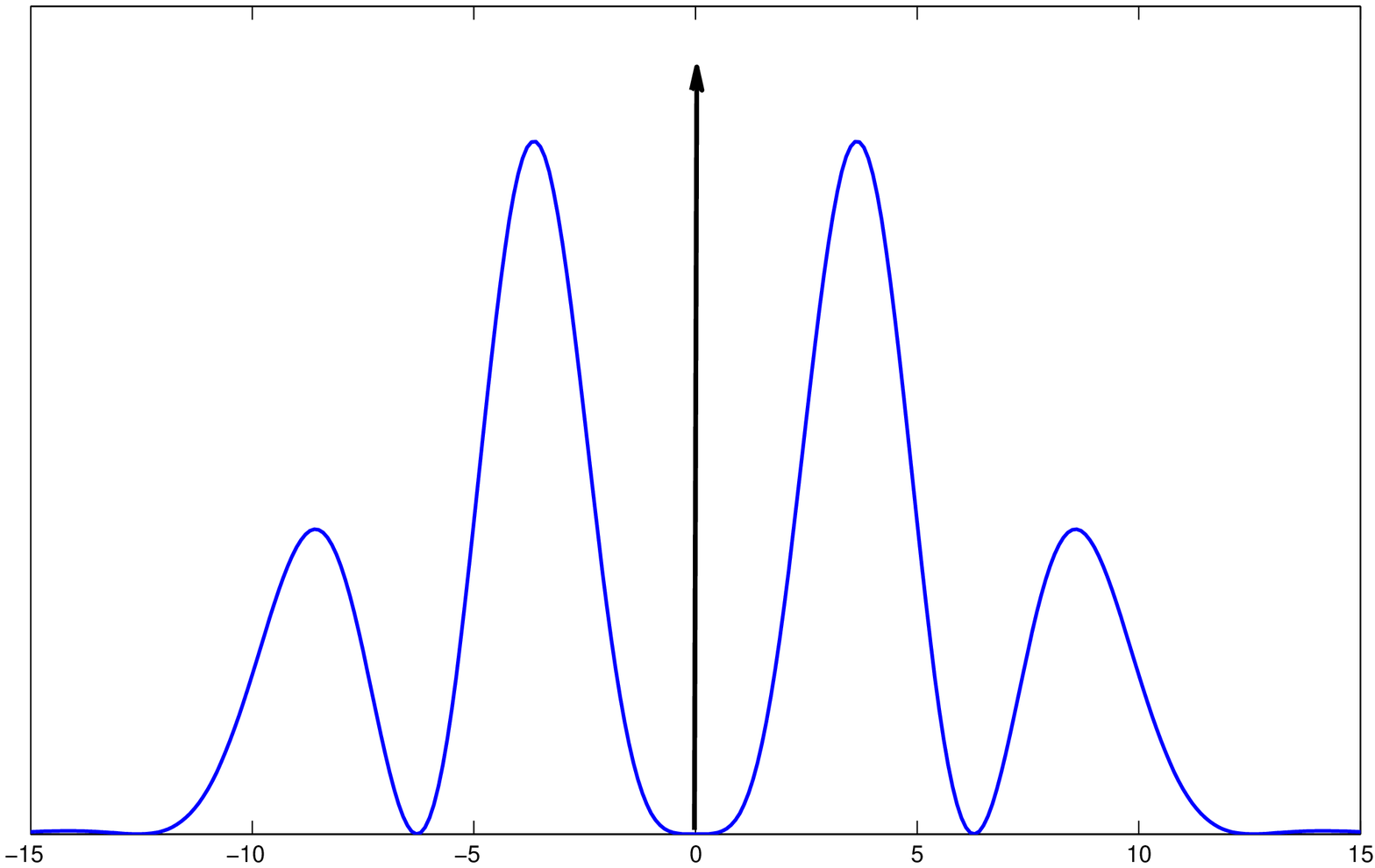} };

\node at (-2,-1.8) {\small $P=1$};
\node at (-2,-2.3) {\small $P=2$};
\node at (-2,-2.7) {\small $P=3$};
\node at (-2,-3.1) {\small $P\rightarrow \infty$};
\node at (-1.8,-3.6) {\small $H(f) \equiv 1$};

\draw[->,line width = 2] (-3.4,-3.95) -- (4,-3.95) node[right] {$f_s$};
\draw[->,line width = 2] (-3.4,-3.95) -- node[above, rotate = 90] {MSE} (-3.4,3);

\node at (3.7,2) {\small $S_X(f)$};
\draw  (-3.5,2.1) node[left] {$\sigma_X^2$} -- (-3.3,2.1);

\draw[dashed]  (3.2,-4) node[below] {$f_{Nyq}$} -- (3.2,-3);
\end{tikzpicture}

\caption{\label{fig:mmse_differentsPs}
MMSE under multi-branch sampling and optimal pre-sampling filter-bank, for $P=1,2$ and $3$ and $P$ large enough such that the bound \eqref{eq:mmse_lower_bound} is attained. The upper dashed line corresponds to $P=1$ and no pre-sampling filtering. The PSD $S_X(f)$ is given in the small frame, where we assumed $S_\eta(f) \equiv 0$.}
\end{center}
\end{figure}

\subsection{Increasing the number of sampling branches} \label{subsec: optimal_sampling}
The set $F^\star$ defined in Theorem~\ref{thm:mmse_opt_single} to describe $ \mmse_{X|Y}^\star(f_s)$ was obtained by imposing two constraints: (1) a measure constraint $\mu(F^\star)\leq f_s$, which is associated only with the sampling frequency, and (2) an aliasing-free constraint, imposed by the sampling structure. Theorem~\ref{thm:mmse_opt_filters_bank} says that in the case of multi-branch sampling, the aliasing-free constraint can be relaxed to $F^\star = \bigcup_{p=1}^P F^\star_p$, where now we only require that each $F^\star_p$ is aliasing-free with respect to $f_s/P$. This means that $F^\star$ need not be aliasing-free but its Lebesgue measure must still not exceed $f_s$. This implies that the lower bound \eqref{eq:mmse_lower_bound} still holds with multiple sampling branches. 
Increasing the number of branches $P$ allows more freedom in choosing an optimal frequency set $F^\star = \bigcup_{p=1}^P F^\star_p$, which eventually converges to a set $\mathcal F^\star$ that achieves the RHS of \eqref{eq:mmse_lower_bound} as shown in Fig.~\ref{fig:aliasing_free_various}. 
This means that the RHS of \eqref{eq:mmse_lower_bound} is achieved with an arbitrarily small gap if we increase the number of sampling branches, as stated in the next theorem.
\begin{theorem}\label{thm:mmse_Landau} For any $f_s>0$ and $\epsilon>0$, there exists $P\in \mathbb N$ and a set of LTI filters $H^\star_1(f),\ldots,H^\star_P(f)$ such that 
\[
\mmse_{X|\mathbf Y}^\star(f_s)-\epsilon <\sigma_X^2-\sup_{\mu(F)\leq f_s} \int_{F} \frac{S_X^2(f)}{S_{X+\eta}(f)} df.
\]
\end{theorem}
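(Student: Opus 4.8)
The plan is to show that the gap between the $P$-branch optimum and the lower bound \eqref{eq:mmse_lower_bound} vanishes as $P\to\infty$. Write $g(f)\triangleq S_X^2(f)/S_{X+\eta}(f)$, and note that $0\le g\le S_X$ so that $g\in L^1(\mathbb R)$. By Theorem~\ref{thm:mmse_opt_filters_bank} together with the explicit construction of $F_1^\star,\dots,F_P^\star$ given after it, for each $P$ the optimal filters yield $\mmse^\star_{X|\mathbf Y}(f_s)=\sigma_X^2-\int_{F^\star}g\,df$, where $F^\star=\bigcup_{p=1}^P F_p^\star$ is obtained by retaining, in each aliasing class $f+(f_s/P)\mathbb Z$ with $f\in I_P\triangleq[-f_s/2P,f_s/2P)$, the $P$ frequencies carrying the largest values of $g$; in particular $\mu(F^\star)=f_s$. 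Since \eqref{eq:mmse_lower_bound} reads $\mmse^\star_{X|\mathbf Y}(f_s)\ge\sigma_X^2-V^\star$ with $V^\star\triangleq\sup_{\mu(F)\le f_s}\int_F g\,df$, the theorem is equivalent to the assertion $V^\star-\int_{F^\star}g\,df\to0$.

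First I would reduce to the case where the super-level sets of the integrand are finite unions of intervals. Let $\mathcal A_P$ denote the collection of sets expressible as a disjoint union of $P$ sets each aliasing-free with respect to $f_s/P$; by Proposition~\ref{prop:aliasing_free} every $F\in\mathcal A_P$ satisfies $\mu(F)\le f_s$, and by Theorem~\ref{thm:mmse_opt_filters_bank} the $P$-branch value equals $\mathrm{MB}_P(g)\triangleq\sup_{F\in\mathcal A_P}\int_F g$. Since $V^\star=\sup_{\mu(F)\le f_s}\int_F g$ likewise ranges over sets of measure at most $f_s$, both $V^\star$ and $\mathrm{MB}_P$ are suprema of $1$-Lipschitz functionals of the integrand, hence are themselves $1$-Lipschitz in $L^1$: replacing $g$ by any $\tilde g$ perturbs each by at most $\|g-\tilde g\|_1$. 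Approximating $g$ by a non-negative step function $g_\epsilon$ (a finite combination of indicators of intervals) with $\|g-g_\epsilon\|_1<\epsilon$, it therefore suffices to prove $V^\star-\mathrm{MB}_P\to0$ for $g_\epsilon$, whose optimal bathtub set $\mathcal F^\star=\{g_\epsilon>\lambda^\star\}$ of measure $f_s$ is a finite union of, say, $m$ intervals, and which is bounded by some $B$.

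The core estimate is then a counting argument. By the bathtub principle $V^\star=\int_{\mathcal F^\star}g_\epsilon$, and since $\mu(F^\star)=\mu(\mathcal F^\star)=f_s$,
\[
V^\star-\int_{F^\star}g_\epsilon\,df=\int_{\mathcal F^\star\setminus F^\star}g_\epsilon\,df-\int_{F^\star\setminus\mathcal F^\star}g_\epsilon\,df\le B\,\mu(\mathcal F^\star\setminus F^\star).
\]
Writing $n_P(f)\triangleq\#\bigl(\mathcal F^\star\cap(f+(f_s/P)\mathbb Z)\bigr)$ for $f\in I_P$, the fact that the top-$P$ selection discards exactly the surplus members of $\mathcal F^\star$ in each overloaded class gives $\mu(\mathcal F^\star\setminus F^\star)=\int_{I_P}(n_P-P)^+\,df$. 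The tiling identity $\int_{I_P}n_P(f)\,df=\mu(\mathcal F^\star)=f_s=P\,|I_P|$ shows $n_P$ has average exactly $P$, while the Riemann-sum error bound for the indicator of $m$ intervals gives the pointwise estimate $|n_P(f)-P|\le m$. Hence
\[
\mu(\mathcal F^\star\setminus F^\star)=\int_{I_P}(n_P-P)^+\,df\le m\,|I_P|=\frac{m f_s}{P}\xrightarrow[P\to\infty]{}0.
\]
Combining with the two $L^1$-approximation steps yields $V^\star-\int_{F^\star}g\le 2\epsilon+Bmf_s/P$, which lies below any prescribed tolerance once $\epsilon$ is small and $P$ large, establishing the claim.

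I expect the main obstacle to be the honest handling of the reduction together with ties. For a step function $g_\epsilon$ the threshold $\lambda^\star$ may be attained on a set of positive measure, so identifying $\mathcal F^\star$ with a clean super-level set—and hence the exact accounting $\mu(\mathcal F^\star\setminus F^\star)=\int_{I_P}(n_P-P)^+\,df$—requires breaking ties consistently, e.g.\ by an arbitrarily small monotone perturbation that splits equal values; this perturbs $V^\star$ and $\mathrm{MB}_P$ within the same $L^1$ budget. The remaining ingredients, namely the $1$-Lipschitz property, the tiling identity, and the $O(m/P)$ Riemann-sum bound, are routine.
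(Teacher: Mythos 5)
Your proposal is correct, but it takes a genuinely different route from the paper's proof. The paper argues constructively and never invokes Theorem~\ref{thm:mmse_opt_filters_bank}: by regularity of the Lebesgue measure it covers the extremal set $\mathcal F^\star$ tightly by finitely many intervals, splits them until all have (approximately) the common length $f_s/P$ --- each such interval automatically lies in $AF(f_s/P)$ --- and takes $H_p^\star=\mathbf 1_{I_p}$; absolute continuity of $F\mapsto\int_F S_X^2(f)/S_{X+\eta}(f)\,df$ with respect to Lebesgue measure then makes the MMSE achieved by these (generally suboptimal) filters $\epsilon$-close to the bound \eqref{eq:mmse_lower_bound}, and $\mmse^\star_{X|\mathbf Y}(f_s)$ can only be smaller. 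You instead analyze the \emph{optimal} $P$-branch filters directly (the top-$P$-per-aliasing-class selection of Theorem~\ref{thm:mmse_opt_filters_bank}, Remark (iv)) and show their value $\mathrm{MB}_P$ converges to $V^\star=\sup_{\mu(F)\le f_s}\int_F g\,df$, via the $1$-Lipschitz-in-$L^1$ reduction to step functions, the tiling identity, and the occupancy estimate $|n_P(f)-P|\le m$. What the paper's route buys is concreteness: the branch filters are plain band-pass filters supported on intervals, which can moreover be chosen symmetric so the impulse responses are real. What your route buys is an explicit convergence rate, $V^\star-\mathrm{MB}_P\le 2\epsilon+Bmf_s/P$, and it sidesteps the step the paper's short argument glosses over --- making the covering intervals ``all of approximately the same length'' hides a commensurability issue (equalizing finitely many interval lengths to a common value $f_s/P$ requires a rational-approximation argument), which your class-counting renders unnecessary. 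One simplification: the tie-breaking perturbation you flag as the main obstacle is not actually needed. In each aliasing class the $g$-sum of the top-$P$ selection dominates the $g$-sum of any $P$ members of $\mathcal F^\star$ in that class, so the one-sided bound $\int_{F^\star}g\,df\ge\int_{\mathcal F^\star}g\,df-B\int_{I_P}\left(n_P(f)-P\right)^+df$ holds however ties are resolved; your argument only uses this inequality, never the exact identity $\mu(\mathcal F^\star\setminus F^\star)=\int_{I_P}\left(n_P(f)-P\right)^+df$.
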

\begin{proof}
Since any interval of length $f_s$ is in $AF(f_s)$, it is enough to show that the set $\mathcal F^\star$ of measure $f_s$ that satisfies
\[
\int_{{\mathcal F}^\star}\frac{S_X^2(f)}{S_{X+\eta}(f)} df = \sup_{\mu( F)\leq f_s} \int_{F} \frac{S_X^2(f)}{S_{X+\eta}(f)} df,
\]
can be approximated by $P$ intervals of measure $f_s/P$ (The set $\mathcal F^\star$ corresponds to the frequencies below the gray area in the case $P\rightarrow \infty$ in Fig.~\ref{fig:aliasing_free_various}). By the regularity of the Lebesgue measure, a tight cover of $\mathcal F^\star$ by a finite number of intervals is possible. These intervals may be split and made arbitrarily small so that after a finite number of steps we can eventually have all of them at approximately the same length. Denote the resulting intervals $I_1,\ldots,I_P$. For $p=1,\ldots,P$, define the $p^\textrm{th}$ filter $H^\star_p(f)$ as the indicator function of the $p^\textrm{th}$ interval: $H^\star_p(f) = \mathbf 1_{I_p}(f) \triangleq \mathbf 1\left(f\in I_p \right)$. \par
By extending the argument above it is possible to show that we can pick each one of the sets $F^\star_p$ to be symmetric about the $y$ axis, so that the corresponding filter $H_p^\star(f)$ has a real impulse response. In fact, the construction described in Remark (iii) of Theorem~\ref{thm:mmse_opt_filters_bank} yields a symmetric maximal aliasing free set.
\end{proof}

Theorem~\ref{thm:mmse_Landau} implies that $\mmse_{X|\mathbf Y}^\star(f_s)$ converges to  
\[
\sigma_X^2-\sup_{\mu(F)\leq f_s} \int_{F} \frac{S_X^2(f)}{S_{X+\eta}(f)} df,
\]
as the number of sampling branches $P$ goes to infinity.\\

\subsubsection*{Remark} If we take the noise process $\eta(\cdot)$ to be zero in \eqref{eq:mmse_lower_bound}, we get
\begin{align}
\mmse_{X|\mathbf Y}^\star(f_s) & \geq \sigma_X^2-\sup_{\mu({F})\leq f_s}\int_{{F}} S_X(f)df 
\nonumber \\ & = \inf_{\mu({F})\leq f_s}\int_{\mathbb R \setminus F} S_X(f)df. \label{eq:mmse_bound_landau}
\end{align}
This shows that perfect reconstruction (in $\mathbf L_2$ norm) of $X(\cdot)$ is not possible unless the support of $S_X(f)$ is contained in a set of Lebesgue measure not exceeding $f_s$, a fact which agrees with the well-known condition of Landau for a stable sampling set of an analytic function \cite{Landau1967}. See also \cite{nitzan2012revisiting} for a recent and a much simpler proof of Landau's theorem.\\

Theorem \ref{thm:mmse_Landau} shows that uniform multi-branch sampling can be used to sample at the Landau rate and achieves an arbitrarily small MMSE by taking enough sampling branches and a careful selection of the pre-sampling filters. Fig.~\ref{fig:mmse_differentsPs} shows the optimal MMSE under uniform multi-branch sampling as a function of the sampling frequency for a specific PSD, which corresponds to the case $P\rightarrow \infty$.

\section{Indirect Source Coding\label{sec:Indirect-Source-Coding}}
The main goal of this section is to extend the indirect source coding problem solved in \cite{1057738}, in which the source and the observation are two jointly stationary Gaussian processes, to the case where the source and the observation are two jointly Gaussian vector-valued processes. By doing so we introduce concepts and notions which will be useful in the rest of the paper. We begin by reviewing the problem of indirect source coding.\\

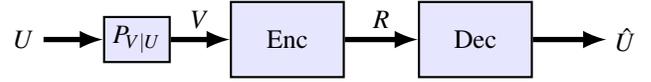
\begin{figure}
\begin{center}
\begin{tikzpicture}[node distance=2cm,auto,>=latex]
  \node at (0,0) (source) {$U$};
  \node [sint] (channel) [right of = source, node distance = 1.5cm] {$P_{V|U}$};
    \node [int] (enc) [right of = channel]{$\mathrm{Enc}$};
    \node [int] (dec) [right of=enc, node distance = 2.5cm] {$\mathrm{Dec}$};
    \node [right] (dest) [right of=dec]{$\hat{U}$};
   \draw[->,line width=2pt] (enc) --node[above] {$R$} (dec) ;
   \draw[->,line width=2pt] (dec) -- (dest);
   \draw[->,line width=2pt] (channel) -- node[above] {$V$} (enc);
      \draw[->,line width=2pt] (source) -- (channel);
 \end{tikzpicture}
\end{center}
\caption{\label{fig:general_indirect_RD}Indirect source coding model}
\end{figure}

In an indirect source coding problem, the encoder needs to describe the source by observing a different process, statistically related to the source, rather than the source itself \cite[Sec. 4.5.4]{berger1971rate}. A general model for indirect source coding is described by the diagram in Fig. \ref{fig:general_indirect_RD}, where $P_{V|U}$ represents a general conditional distribution, or a `channel', between the source $U$ and the observable process $V$. It follows from \cite{1057738} that the minimal averaged distortion attainable between  a stationary source process $U$ and its reconstruction using any code of rate $R$ bits per source symbol applied to $V$, is described by the indirect distortion- rate function (iDRF) of $U$ given $V$. This function is defined as the minimal expected averaged distortion between any pair of stationary processes $U$ and $\hat{U}$, such that the mutual information rate between $V$ and $\hat{U}$ is limited to $R$ bits per source symbol\footnote{This is in agreement with the notation introduced in Section~\ref{sec:problem_statement}, i.e., $R$ is measured by \emph{bits per time unit} when the source is in continuous-time .} \cite{gray2011entropy}. \\

We will begin by reviewing basic properties of the iDRF for the special case of jointly Gaussian source and observations. 

\subsection{Quadratic Gaussian indirect distortion-rate}
The general quadratic Gaussian settings of the iDRF corresponds to the case where the source $U$ and the observable process $V$ are jointly Gaussian. The definition of the iDRF in this case is similar to the definition of the function $D(f_s,R)$ in Section~\ref{sec:problem_statement}, where the minimization in \eqref{eq:d_T_def} is over all mappings from $U$ to $\hat{V}$ such that the mutual information rate between $U$ and $V$ does not exceed $R$ bits per source symbol. In the discrete-time version of this problem, we replace the distortion in \eqref{eq:dist_def} with the distortion 
\begin{equation}
 d_N\left(x[\cdot],\hat{x}[\cdot] \right) \triangleq  \frac{1}{2N+1} \sum_{n=-N}^N \left( x[n] - \hat{x}[n] \right)^2, \label{eq:dist_disc}
\end{equation}

In order to get some insight into the nature of indirect source coding in Gaussian settings and the corresponding iDRF, it is instructive to start with a simple example.

\begin{example} \label{ex:simple_scalar_case}
Consider two correlated sequences of i.i.d zero mean and jointly Gaussian random variables $U$ and $V$ with variances $C_U$, $C_{V}$, and covariance $C_{UV}$. As shown in \cite{1054469}, the iDRF function of the source sequence $U$ given the observed sequence $V$, under a quadratic distortion measure, is given by
%
\begin{align}
D_{U|V}\left(R\right) & =  C_{U}-\left(1-2^{-2R}\right)\frac{C_{UV}^{2}}{C_{V}}\label{eq:D_simple}\\
 & = \mmse_{U|V}+2^{-2R}C_{U|V},\nonumber 
\end{align}
where $\mmse_{U|V}=C_{U}-C_{U|V}$ is the MMSE in estimating $U$ from $V$ and $C_{U|V}\triangleq \frac{C_{UV}^{2}}{C_{V}}$ is the variance of the estimator $\mathbb E\left[U|V\right]$.
The equivalent rate-distortion function is 
\begin{equation}
R_{U|V}\left(D\right)=\begin{cases}
\frac{1}{2}\log\frac{C_{U|V}}{D-\mmse_{U|V}} & C_{U}> D>\mmse_{U|V},\\
0 & D\geq C_{U}.
\end{cases}\label{eq:RD_simple}
\end{equation}

Equation \eqref{eq:D_simple} can be intuitively interpreted as an extension of the regular DRF of a Gaussian i.i.d source $D_U(R)=2^{-2R}C_U$ to the case where the information about the source is not entirely available at the encoder. Since $C_{U|V}\leq C_U$, the slope of $D_{U|V}(R)$ is more moderate than that of $D_U(R)$, which confirms the intuition that an increment in the bit-rate when describing noisy measurements is less effective in reducing distortion as the intensity of the noise increases. \par
If we denote $\theta = D-\mmse_{U|V}$, then \eqref{eq:D_simple} and \eqref{eq:RD_simple} are equivalent to 
\begin{subequations}
\label{eq:example_iid_parametric}
\begin{align}\label{eq:example_iid_D}
D(\theta)& =\mmse_{U|V}+ \min\left\{C_{U|V},\theta \right\} \\
& = C_U-\left[C_{U|V}-\theta \right]^+, \nonumber
\end{align}
where 
\begin{equation} 
R (\theta)=\frac{1}{2} \log^+ \left( \frac{C_{U|V}}{\theta}\right), \label{eq:example_iid_R}
\end{equation}
\end{subequations}
i.e, we express the iDRF $D_{U|V}(R)$, or the equivalent rate-distortion $R_{U|V}(D)$, through a joint dependency of $D$ and $R$ on the parameter $\theta$.\par
The relation between $V$ and $\hat{V}$ under optimal quantization can also be described by the `backward' Gaussian channel
\begin{equation} \label{eq:backward_channel}
 \frac{C_{VU}}{C_V}V= \hat{V}+\xi,
\end{equation}
where $\xi$ is a zero-mean Gaussian random variable independent of $\hat{V}$ with variance $\min\left\{C_{U|V},\theta\right\}$. The random variable $\xi$ can be understood as the part of the observable process that is lost due to lossy compression discounted by the factor $C_{VU}/C_V$. Since $\frac{C_{VU}}{C_V} V=\mathbb E\left[U|V\right]$, it also suggests that an optimal source code can be achieved by two separate steps:
\begin{enumerate}
\item[(i)]
Obtain an MMSE estimate of the source $U$ given the observed variable $V$.
\item[(ii)]
Apply an optimal direct source code to the MMSE estimator $\mathbb E\left[U|V \right]$.
\end{enumerate}
\end{example}
Although in this example the parametric representation \eqref{eq:example_iid_parametric} is redundant, we shall see below that this representation, the backward Gaussian channel \eqref{eq:backward_channel} and the decomposition of distortion into an MMSE part plus the regular DRF of the estimator are repeating motifs in quadratic Gaussian indirect source coding problems. \\

\begin{figure}

\begin{center}
\begin{tikzpicture}[scale=1]
 \fill[fill=red!50, pattern=north west lines, pattern color=red] (-4,0) -- (-3.5,0.22)--plot[domain=-3.5:3.5, samples=100] (\x, {(0.7+3*cos(24*3.14*\x)^2)*exp(-\x*\x/10)}) -- (3.5,0.22)--(4,0);	

 \fill[fill=blue!50] (-4,0) -- (-3.5,0.13)--plot[domain=-3.5:3.5, samples=100] (\x, {(0.6+2.7*cos(24*3.14*\x)^2)*exp(-\x*\x/8)}) -- (3.5,0.13)--(4,0);	

 \fill[fill=yellow!50] (-2.6,1.3) -- plot[domain=-2.6:-1.8, samples=100] (\x, {(0.6+2.7*cos(24*3.14*\x)^2)*exp(-\x*\x/8)}) --(-1.8,1.3);	

 \fill[fill=yellow!50] (1.8,1.3) -- plot[domain=1.8:2.6, samples=100] (\x, {(0.6+2.7*cos(24*3.14*\x)^2)*exp(-\x*\x/8)}) --(2.6,1.3);	

 \fill[fill=yellow!50] (-0.75,1.3) -- plot[domain=-0.75:0.75, samples=100] (\x, {(0.6+2.7*cos(24*3.14*\x)^2)*exp(-\x*\x/8)}) --(0.75,1.3);	

 \draw (-4,0) -- (-3.5,0.13)--plot[domain=-3.5:3.5, samples=100] (\x, {(0.6+2.7*cos(24*3.14*\x)^2)*exp(-\x*\x/8)}) -- (3.5,0.13)--(4,0);	

 \draw (-4,0) -- (-3.5,0.22)--plot[domain=-3.5:3.5, samples=100] (\x, {(0.7+3*cos(24*3.14*\x)^2)*exp(-\x*\x/10)}) -- (3.5,0.22)--(4,0);	

\foreach \x/\xtext in {-3,-2,-1,0,1,2,3}
 \draw[shift={(\x,0)}] (0pt,2pt) -- (0pt,-2pt) node[below] {$\xtext$};
  \foreach \y/\ytext in {1/,2/,3/}
    \draw[shift={(0,\y)}] (2pt,0pt) -- (-2pt,0pt) node[left] {$\ytext$};


\draw [fill=red!50, line width=1pt, pattern=north west lines, pattern color=red] (1,3.4) rectangle  (1.3,3.7) node[left, xshift = 1.9cm, yshift = -0.2cm] {\scriptsize estimation error};

\draw [fill=blue!70, line width=1pt] (1,2.9) rectangle  (1.3,3.2) node[left, xshift = 2.7cm, yshift = -0.2cm, align = center] {\scriptsize lossy compression error};

\draw [fill=yellow!50, line width=1pt] (1,2.4) rectangle  (1.3,2.7) node[left, xshift=2.2cm, yshift = -0.2cm,align = left] {\scriptsize preserved spectrum};

\node at (0.15,1.45) {$\theta$};
\draw[dashed, line width=1pt] (-4,1.3) -- (4,1.3);
\draw[->,line width=1pt]  (-4,0)--(4,0) node[right] {$f$};
\draw[->,line width=1pt]  (0,0)--(0,4) node[right] {};
\node at (-2.9,1.7) [rotate=60] {\small $S_X(f)$};
\node at (-2.7,0.65) [rotate=60] {\small ${S}_{X|Z}(f)$};
\end{tikzpicture}
\end{center}
\vspace{-10pt}

\caption{ \label{fig:waterfilling}
Reverse waterfilling interpretation for \eqref{eq:DnT}: Water is being poured into the area bounded by the graph of $S_{X|Z}(f)$ up to level $\theta$. The rate is determined by integration over the preserved part through \eqref{eq:DnT_R}. The distortion in \eqref{eq:DnT_D} is the result of integrating over two parts: (i) $S_X(f)-S_{X|Z}(f)$, which results in the MMSE of estimating $X(\cdot)$ from $Z(\cdot)$, (ii) $\min\left\{S_{X|Z}(f),\theta\right\}$, which is the error due to lossy compression. 
}
\end{figure}
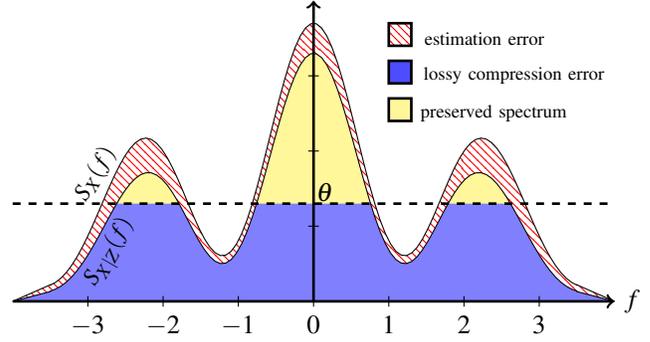

Next, we consider an indirect source coding problem in the more general case where the source and the observable process are jointly Gaussian and stationary. This problem can be obtained from our general model in Fig. \ref{fig:The-general-scheme} if we assume that the process $Z(\cdot)$ in Fig.~\ref{fig:sampling_scheme} can be recovered from its samples $\mathbf Y[\cdot]$ with zero error. In this case, the iDRF of $X(\cdot)$ given $\mathbf Y[\cdot]$ reduces to the iDRF of $X(\cdot)$ given $Z(\cdot)$, which we denote as $D_{X|Z}(R)$. An expression for $D_{X|Z}(R)$ was found by Dubroshin and Tsybakov in \cite{1057738}.
\begin{theorem}[Dobrushin and Tsybakov \cite{1057738}]
\label{thm:[Dobrushin-and-Tsybakov]}
Let~$X(\cdot)$~and $Z(\cdot)$ be two jointly stationary Gaussian stochastic processes with spectral densities $S_{X}(f)$,
$S_Z(f)$, and joint spectral density $S_{XZ}(f)$.
The indirect distortion-rate function of $X(\cdot)$ given $Z(\cdot)$ is
\begin{subequations}
\label{eq:DnT}
\begin{align}
\label{eq:DnT_R}
R\left(\theta\right)& = \frac{1}{2} \int_{-\infty}^\infty \log^+\left[ {S_{X|Z}(f)}/\theta \right] df, \\ \label{eq:DnT_D}
D_{X|Z}\left(\theta\right)&  = \mmse_{X|Z}+\int_{-\infty}^\infty \min\left\{S_{X|Z}(f),\theta \right\}df  \\
 & = \sigma_X^2-\int_{-\infty}^\infty\left[S_{X|Z}(f)-\theta\right]^+df,  \nonumber
\end{align}
\end{subequations}
where 
\begin{equation}
\label{eq:mmse_x_z}
S_{X|Z}(f)\triangleq \frac{\left|S_{XZ}(f) \right|^2}{S_Z(f)}=\frac{S_X^2(f)\left|H(f) \right|^2}{S_{X+\eta}(f)\left|H(f) \right|^2}
\end{equation}
is the spectral density of the MMSE estimator of $X(\cdot)$ from $Z(\cdot)$, $[x]^+=\max\left\{x,0\right\}$,  and 
\begin{align*}
\mmse_{X|Z}  = & \int_{-\infty}^\infty \mathbb E \left( X(t)-\mathbb E \left[X(t)|Z(\cdot)\right] \right)^2 dt \\
 = & \int_{-\infty}^\infty \left(S_X(f)-S_{X|Z}(f)\right)df
\end{align*}
is the MMSE.\\
\end{theorem}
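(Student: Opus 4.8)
The plan is to prove the theorem via the \emph{separation principle} for quadratic indirect source coding: I would reduce the indirect problem to the direct (standard) coding of the MMSE estimator process $\tilde{X}(t)\triangleq\mathbb E\left[X(t)\mid Z(\cdot)\right]$ and then invoke the Shannon--Kolmogorov--Pinsker (SKP) reverse waterfilling formula. This is the continuous-time, stationary analogue of the two-step decomposition already exhibited in the scalar Example~\ref{ex:simple_scalar_case}, and it makes rigorous the observation attributed to Wolf--Ziv \cite{1054469} and Witsenhausen \cite{1056251}, whose reduced-distortion viewpoint is the key enabler.

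First I would establish the orthogonal decomposition of the distortion. Since any admissible reconstruction $\hat{X}(\cdot)$ is obtained by applying a code to $Z(\cdot)$ (in this setting $Z$ is recoverable from $\mathbf Y$), for each fixed $t$ the value $\hat{X}(t)$ is a square-integrable function of $Z(\cdot)$. By the orthogonality principle for Gaussian MMSE estimation, the estimation error $X(t)-\tilde{X}(t)$ is orthogonal to every such function, and in particular to $\tilde{X}(t)-\hat{X}(t)$. Hence $\mathbb E[(X(t)-\hat X(t))^2]=\mathbb E[(X(t)-\tilde X(t))^2]+\mathbb E[(\tilde X(t)-\hat X(t))^2]$; integrating over $[-T,T]$, normalizing by $1/(2T)$, and letting $T\to\infty$ gives $\mathbb E\,d_T(X,\hat X)\to \mmse_{X|Z}+\mathbb E\,d_T(\tilde X,\hat X)$, where the first summand is a constant independent of the code.

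The next step transfers the rate constraint from $Z$ to $\tilde X$, so that minimizing the residual distortion $\mathbb E\,d_T(\tilde X,\hat X)$ subject to $I_T(Z;\hat X)\le R$ becomes exactly the direct DRF of the stationary Gaussian process $\tilde X(\cdot)$ at rate $R$. For the converse, because $\tilde X$ is a deterministic function of $Z$, the data-processing inequality gives $I_T(\tilde X;\hat X)\le I_T(Z;\hat X)\le R$, so the residual distortion is bounded below by the DRF of $\tilde X$. For achievability, I would restrict the reconstruction to depend on $Z$ only through $\tilde X$; the resulting Markov chain $\hat X - \tilde X - Z$ together with data processing forces $I_T(Z;\hat X)=I_T(\tilde X;\hat X)$, so an optimal direct code for $\tilde X$ is feasible in the indirect problem with no rate penalty. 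The residual problem thus coincides with direct source coding of $\tilde X(\cdot)$, whose spectral density is $S_{X|Z}(f)=|S_{XZ}(f)|^2/S_Z(f)$. Applying the SKP formula yields $R(\theta)=\tfrac12\int_{-\infty}^\infty\log^+[S_{X|Z}(f)/\theta]\,df$ and $D_{\tilde X}(\theta)=\int_{-\infty}^\infty\min\{S_{X|Z}(f),\theta\}\,df$; adding back $\mmse_{X|Z}$ gives \eqref{eq:DnT_D}. The equivalent second form follows from $\int_{-\infty}^\infty S_{X|Z}(f)\,df=\sigma_X^2-\mmse_{X|Z}$ and the pointwise identity $S-\min\{S,\theta\}=[S-\theta]^+$, which convert $\mmse_{X|Z}+\int\min\{S_{X|Z},\theta\}\,df$ into $\sigma_X^2-\int[S_{X|Z}(f)-\theta]^+\,df$.

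I expect the main obstacle to be the rate-constraint transfer and the attendant sufficiency argument in continuous time: rigorously showing that $\tilde X$ is an \emph{information-lossless} sufficient statistic, so that the indirect rate constraint $I_T(Z;\hat X)\le R$ is equivalent to the direct constraint $I_T(\tilde X;\hat X)\le R$, and that the $T\to\infty$ limits of the mutual-information-rate constraint and of the orthogonal distortion decomposition may be taken consistently. For these limiting and sufficiency statements I would lean on the spectral-domain mutual-information convergence results of Pinsker \cite{pinsker1964information} and the original analysis of Dobrushin and Tsybakov \cite{1057738}; the estimation-theoretic decomposition is otherwise elementary, but its interface with the informational rate is precisely the delicate point.
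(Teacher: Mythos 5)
Your proposal is correct and takes essentially the same route as the paper: the paper presents this theorem as a consequence of the separation principle (Proposition~\ref{prop:For-an-indirect}) combined with SKP reverse waterfilling applied to the stationary estimator process $\tilde{X}(t)=\mathbb{E}\left[X(t)|Z(\cdot)\right]$, whose spectral density is $S_{X|Z}(f)$, which is exactly your decomposition. The only difference is that you additionally sketch the proof of the separation step itself (orthogonality of the estimation error plus the data-processing/Markov-chain transfer of the rate constraint), a step the paper defers to \cite{1057738} and \cite{1054469}.
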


\subsubsection*{Remarks}
\begin{enumerate}
\item[(i)] In \eqref{eq:mmse_x_z} and in similar expressions henceforth, we interpret fractions as zero if both numerator and denominator are zero, i.e. \eqref{eq:mmse_x_z} can be read as
\[
\frac{S_X^2(f)\left|H(f) \right|^2}{S_{X+\eta}(f)\left|H(f) \right|^2}=\frac{S_X^2(f)}{S_{X+\eta}(f)} \mathbf 1_{\textrm{supp}H}(f),
\]
where $\mathbf 1_{\textrm{supp}H}(f)$ is the indicator function of the support of $H(f)$.
\item[(ii)]
Expressions of the form \eqref{eq:DnT} are still correct if the spectral density $S_{X|Z}(f)$ includes Dirac delta functions. This is because the Lebesgue integral is not affected by infinite values on a set of measure zero. This is in accordance with the fact that periodic components can be determined for all times by specifying only their magnitude and phase, which requires zero information rate.
\item[(iii)] The discrete-time counterpart of \eqref{eq:DnT} is
\begin{subequations}
\label{eq:DnT_discrete}
\begin{align}
\label{eq:DnT_discrete_R}
R\left(\theta\right)& = \frac{1}{2}\int_{-\frac{1}{2}}^{\frac{1}{2}}\log^{+}\left[S_{X|Z}\left(e^{2\pi i\phi}\right)\theta^{-1}\right]d\phi,
\end{align}
and
\begin{align}
\label{eq:DnT_discrete_D}
D\left(\theta\right) & = \mmse_{X|Z}+\int_{-\frac{1}{2}}^{\frac{1}{2}}\min\left\{ \theta,S_{X|Z}\left(e^{2\pi i\phi}\right)\right\} d\phi\nonumber \\
 & =  \sigma_{X}^{2}-\int_{-\frac{1}{2}}^{\frac{1}{2}}\left[S_{X|Z}\left(e^{2\pi i\phi}\right)-\theta\right]^{+}d\phi,
\end{align}
\end{subequations}
\end{enumerate}
where the distortion between $X[\cdot]$ and its reconstruction sequence $\hat{X}[\cdot]$ is defined by the limit over \label{eq:dist_disc}.

Equation \eqref{eq:DnT} defines the function $D_{X|Z}(R)$ through a joint dependency of $D_{X|Z}$ and $R$ on the parameter $\theta$. The distortion is the sum of the MMSE in estimating $X(\cdot)$ from $Z(\cdot)$, plus a second term which has a water-filling interpretation. This is illustrated in Fig. \ref{fig:waterfilling}. This expression generalizes the celebrated Shannon-Kolmogorov-Pinsker (SKP) \textit{reverse waterfilling} expression for a single stationary Gaussian source \cite{Pinsker1954,1056823,Berger1998}.\par
In analogy with \eqref{eq:backward_channel}, the solution \eqref{eq:DnT} implies the following backward
Gaussian channel to describe the relation between the observable process $Z(\cdot)$ to its quantized version $\hat{Z}(\cdot)$ under the optimal lossy compression scheme:
\begin{equation} \label{eq:backward_channel2}
\left\{ q_{X|Z}*Z\right\}(t)=\hat{Z}(t)+\xi(t),\quad t\in \mathbb R,
\end{equation}
where $\xi(\cdot)$ is a noise process independent of $\hat{Z}(\cdot)$ with spectral density $S_\xi(f)=\min\{S_{X|Z}(f),\theta\}$, and $q_{X|Z}(t)$ is the impulse response of the Wiener filter in estimating $X(\cdot)$ from $Z(\cdot)$ with corresponding frequency response $Q(f)=\frac{S_{XZ}^*(f)}{S_Z(f)}$. The spectral counterpart of \eqref{eq:backward_channel2} is
\[
S_{X|Z}(f)=S_{\hat{Z}}(f)+\min\left\{S_{X|Z}(f),\theta \right\}.
\]
This decomposition of $S_{X|Z}(f)$ is seen in Fig. \ref{fig:waterfilling}, where $S_{\hat{Z}}(f)$ corresponds to the preserved part of the spectrum, and $\min\left\{S_{X|Z}(f),\theta \right\}$ corresponds to the error due to lossy compression.

\subsection{Separation principle}
Example~\ref{ex:simple_scalar_case} and Theorem~\ref{thm:[Dobrushin-and-Tsybakov]} suggest a general structure for the solution of indirect source coding problems under quadratic distortion. The following proposition follows from the proof in \cite{1057738}, where a separate proof is given in \cite{1054469}. 
\begin{prop}
\label{prop:For-an-indirect} Let $\mathbf U$ and $\mathbf V$ be any pair of vector valued random  processes. The iDRF of $\mathbf U$ given $\mathbf V$ under quadratic distortion can be written as 
\begin{equation}
D_{\mathbf U|\mathbf V}\left(R \right)=\mmse_{\mathbf U|\mathbf V}+D_{\mathbb{E}\left[\mathbf U|\mathbf V\right]}\left(R\right),\label{eq:prop_separation}
\end{equation}
where $D_{\mathbb{E}\left[\mathbf U|\mathbf V\right]}(R)$ is the (direct) distortion-rate function of the estimator $\mathbb{E}\left[\mathbf U|\mathbf V\right]$. 
\end{prop}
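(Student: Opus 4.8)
\textbf{Overall approach.} The plan is to establish the separation identity \eqref{eq:prop_separation} by exploiting the fact that, under quadratic distortion, the relevant information in $\mathbf V$ about $\mathbf U$ is entirely captured by the conditional expectation $\mathbb E[\mathbf U|\mathbf V]$. The key structural observation is that the squared error decomposes orthogonally via the conditional-expectation projection: for any reconstruction $\hat{\mathbf U}$ that is a function of $\mathbf V$ (through the test channel $\mathrm P_{\hat{\mathbf U}|\mathbf V}$), the estimation residual $\mathbf U-\mathbb E[\mathbf U|\mathbf V]$ is orthogonal to any measurable function of $\mathbf V$. This is the vector-valued analogue of the Wolf--Ziv decomposition in Example~\ref{ex:simple_scalar_case}, and it reduces the indirect problem to a direct rate-distortion problem for the estimator process $\mathbb E[\mathbf U|\mathbf V]$.

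\textbf{Key steps.} First I would fix an arbitrary test channel $\mathrm P_{\hat{\mathbf U}|\mathbf V}$ satisfying the rate constraint $I(\mathbf V;\hat{\mathbf U})\le R$, and write the expected distortion as $\mathbb E\|\mathbf U-\hat{\mathbf U}\|^2$. Introducing $\tilde{\mathbf U}\triangleq\mathbb E[\mathbf U|\mathbf V]$, I would insert and subtract $\tilde{\mathbf U}$ to obtain the expansion
\[
\mathbb E\|\mathbf U-\hat{\mathbf U}\|^2=\mathbb E\|\mathbf U-\tilde{\mathbf U}\|^2+\mathbb E\|\tilde{\mathbf U}-\hat{\mathbf U}\|^2+2\,\mathbb E\langle \mathbf U-\tilde{\mathbf U},\,\tilde{\mathbf U}-\hat{\mathbf U}\rangle.
\]
The second step is to show the cross term vanishes: since $\hat{\mathbf U}\leftrightarrow\mathbf V\leftrightarrow\mathbf U$ forms a Markov chain (the reconstruction depends on $\mathbf U$ only through $\mathbf V$), both $\tilde{\mathbf U}$ and $\hat{\mathbf U}$ are functions of $\mathbf V$, so $\tilde{\mathbf U}-\hat{\mathbf U}$ is $\sigma(\mathbf V)$-measurable; conditioning on $\mathbf V$ and using $\mathbb E[\mathbf U-\tilde{\mathbf U}\mid\mathbf V]=0$ kills the inner product. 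This yields
\[
\mathbb E\,d(\mathbf U,\hat{\mathbf U})=\mmse_{\mathbf U|\mathbf V}+\mathbb E\,d(\tilde{\mathbf U},\hat{\mathbf U}).
\]
The third step is the reduction of the optimization: because $\mmse_{\mathbf U|\mathbf V}$ is a constant independent of the choice of test channel, minimizing the left side over all rate-$R$ channels is equivalent to minimizing $\mathbb E\,d(\tilde{\mathbf U},\hat{\mathbf U})$ over the same class. Finally I would argue that the feasible set for the latter problem coincides with that of the \emph{direct} rate-distortion problem for $\tilde{\mathbf U}$: since $\tilde{\mathbf U}$ is a deterministic function of $\mathbf V$, the data-processing inequality gives $I(\tilde{\mathbf U};\hat{\mathbf U})\le I(\mathbf V;\hat{\mathbf U})\le R$, so any admissible channel yields an admissible direct code for $\tilde{\mathbf U}$; conversely, any rate-$R$ channel from $\tilde{\mathbf U}$ to $\hat{\mathbf U}$ composes with the deterministic map $\mathbf V\mapsto\tilde{\mathbf U}$ to give a rate-$R$ channel from $\mathbf V$. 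Hence the infimum equals $D_{\mathbb E[\mathbf U|\mathbf V]}(R)$, establishing \eqref{eq:prop_separation}.

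\textbf{Main obstacle.} The principal subtlety is the equivalence of the two rate constraints in the last step, i.e. that restricting to test channels factoring through $\tilde{\mathbf U}$ incurs no loss. The ``$\ge$'' direction is immediate from data processing, but the converse requires care: one must verify that for any near-optimal direct code for $\tilde{\mathbf U}$ the induced reconstruction satisfies the Markov structure and the rate bound when viewed as a code operating on $\mathbf V$. For vector-valued (and ultimately continuous-time) processes, this demands attention to the definition of mutual information rate, measurability of the conditional-expectation map, and the $L_2$-convergence needed to pass from finite blocklength to the asymptotic iDRF in \eqref{eq:D_def}. I would handle these by invoking the finite-dimensional separation result of \cite{1057738} and \cite{1054469} on the restrictions $\mathbf U_T,\mathbf V_T$, then passing to the limit $T\to\infty$ using the stationarity and joint-Gaussianity assumptions to guarantee the requisite convergence of both the MMSE term and the estimator's DRF.
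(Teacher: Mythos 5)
Your proposal is correct in substance, but note that the paper itself offers no proof of this proposition: it is stated as following from the proof in \cite{1057738}, with a separate (single-shot) proof credited to Wolf and Ziv \cite{1054469}. What you have written is essentially a self-contained reconstruction of that standard argument --- orthogonal decomposition of the quadratic error around $\tilde{\mathbf U}=\mathbb E[\mathbf U|\mathbf V]$, constancy of the MMSE term over the feasible set, and the data-processing equivalence of the two rate-constraint sets --- so you are on the same route as the references the paper points to, not a genuinely different one.

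Two steps deserve tightening. First, in the cross-term argument you assert that $\hat{\mathbf U}$ is a function of $\mathbf V$, so that $\tilde{\mathbf U}-\hat{\mathbf U}$ is $\sigma(\mathbf V)$-measurable; this is not guaranteed, since the test channel $\mathrm P_{\hat{\mathbf U}|\mathbf V}$ may be randomized. The repair uses exactly the Markov chain you already invoked: condition on the pair $(\mathbf V,\hat{\mathbf U})$, note that $\mathbf U \leftrightarrow \mathbf V \leftrightarrow \hat{\mathbf U}$ gives $\mathbb E\bigl[\mathbf U \mid \mathbf V,\hat{\mathbf U}\bigr]=\mathbb E\bigl[\mathbf U\mid\mathbf V\bigr]=\tilde{\mathbf U}$, and observe that $\tilde{\mathbf U}-\hat{\mathbf U}$ is measurable with respect to $(\mathbf V,\hat{\mathbf U})$; the inner product then vanishes. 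Second, your closing appeal to stationarity and joint Gaussianity to pass to the limit $T\to\infty$ is unnecessary and actually narrows the claim: the proposition is stated for \emph{any} pair of vector-valued processes (the paper emphasizes it holds for both discrete and continuous time), and indeed the per-$T$ decomposition and the coincidence of the two feasible sets hold for every finite $T$ without any distributional assumption, so the identity survives the $\liminf$ in \eqref{eq:D_def} as soon as the time-averaged MMSE converges, which is implicit in writing $\mmse_{\mathbf U|\mathbf V}$. No Gaussian or stationarity hypotheses should enter.
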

This proposition is valid under both discrete and continuous time indices, and therefore the time index was suppressed. \\

We can now revisit Example~\ref{ex:simple_scalar_case} and Theorem \ref{thm:[Dobrushin-and-Tsybakov]}
to observe that both are consequences of Proposition \ref{prop:For-an-indirect}.\par
Going back to our general sampling model of Fig.~\ref{fig:The-general-scheme}, we can use Proposition~\ref{prop:For-an-indirect} to write
\[
D\left(f_s,R\right)=\mmse_{ X|\mathbf Y}(f_s)+D_{\tilde{ X}}(R),
\]
where the process $\tilde{X}(\cdot)$ is defined by
\[
\tilde{ X}\left(\cdot\right)\triangleq\left\{ \mathbb{E}\left[ X\left(t\right)|\mathbf Y\left[\cdot\right]\right],\, t\in\mathbb{R}\right\}. 
\]
This shows that the solution to our combined sampling and source coding problem is a sum of two terms. The first term is the MMSE in sub-Nyquist sampling already given by Theorem~\ref{thm:mmse_multi}. The second term is the DRF of the process $\tilde{X}(\cdot)$. Since $\tilde{X}(\cdot)$ is not stationary, we currently do not have the means to find its DRF. In fact, the DRF of $\tilde{X}(\cdot)$ is obtained as a special case of our main result in Section~\ref{sec:Frequency-Rate-Distortion-Functi} below.

\subsection{Vector-valued sources}
We now derive the counterpart of \eqref{eq:DnT_discrete} for vector-valued processes. We recall that for the Gaussian stationary source $X[\cdot]$, the counterpart
of the SKP reverse water-filling was given in \cite[Eq. (20) and (21)]{1628751}, as
\begin{subequations}
\label{eq:vector_RD}
\begin{align}
R\left(\theta\right)&=\sum_{i=1}^{M}\int_{-\frac{1}{2}}^{\frac{1}{2}}\frac{1}{2}\log^{+}\left[\lambda_{i}\left(\mathbf {S}_{\mathbf X}\right)\theta^{-1}\right]d\phi,\label{eq:vector_RD_R}\\
D_{\mathbf X}\left(\theta\right)&=\frac{1}{M}\sum_{i=1}^{M}\int_{-\frac{1}{2}}^{\frac{1}{2}}\min\left\{ \lambda_{i}\left(\mathbf {S}_{\mathbf X}\right),\theta\right\} d\phi,\label{eq:vector_RD_D}
\end{align}
\end{subequations}
where $\lambda_{1}\left(\mathbf {S}_{\mathbf X}\right),..., \lambda_{M}\left(\mathbf {S}_{\mathbf X}\right)$
are the eigenvalues of the spectral density matrix $\mathbf{S}_{\mathbf X}\left(e^{2\pi i \phi} \right)$ at frequency $\phi$. Combining \eqref{eq:vector_RD} with the separation principle allows us to extend Theorem~\ref{thm:[Dobrushin-and-Tsybakov]} to Gaussian vector processes. 
\begin{theorem}
\label{thm:indirect_vector_case}Let $\mathbf{X}\left[\cdot\right]=\left(X_1[\cdot],\ldots,X_M[\cdot] \right)$
be an $M$ dimensional vector-valued Gaussian stationary 
process, and let $\mathbf{Y}\left[\cdot\right]$ be another vector
valued process such that $\mathbf{X}\left[\cdot\right]$ and $\mathbf{Y}\left[\cdot\right]$
are jointly Gaussian and stationary. The indirect distortion-rate function
of $\mathbf{X}\left[\cdot\right]$ given $\mathbf{Y}\left[\cdot\right]$
under quadratic distortion is given by
\begin{align*}
R\left(\theta\right)=\sum_{i=1}^{M} \int_{-\frac{1}{2}}^{\frac{1}{2}}\frac{1}{2}\log^{+}\left[\lambda_i\left(\mathbf {S}_{\mathbf X|\mathbf Y}\right)\theta^{-1}\right]d\phi,
\end{align*}
\begin{align*}
D\left(\theta\right)  = & \mmse_{\mathbf{X}|\mathbf{Y}}
+\frac{1}{M}\sum_{i=1}^{M}\int_{-\frac{1}{2}}^{\frac{1}{2}}\min\left\{ \lambda_i\left(\mathbf {S}_{\mathbf X|\mathbf Y}\right),\theta\right\} d\phi\\
 = & \frac{1}{M}\int_{-\frac{1}{2}}^{\frac{1}{2}}\mathrm{Tr}~\mathbf S_{\mathbf{X}}\left(e^{2\pi i \phi}\right)d\phi 
 \\& -\frac{1}{M}\sum_{i=1}^{M}\int_{-\frac{1}{2}}^{\frac{1}{2}}\left[\lambda_i\left(\mathbf {S}_{\mathbf X|\mathbf Y}\right)-\theta\right]^{+}d\phi,
\end{align*}
where $\lambda_1\left(\mathbf {S}_{\mathbf X|\mathbf Y}\right),...,\lambda_M\left(\mathbf {S}_{\mathbf X|\mathbf Y}\right)$ are the eigenvalues of 
\[
\mathbf {S}_{\mathbf{X|Y}}\left(e^{2\pi i\phi}\right)\triangleq \mathbf {S}_{\mathbf{XY}}\left(e^{2\pi i\phi}\right) \mathbf{S}_{\mathbf{Y}}^{-1}\left(e^{2\pi i\phi}\right)\mathbf{S}_{\mathbf{XY}}^* \left(e^{2\pi i\phi}\right),
\]
which is the spectral density matrix of the MMSE estimator of $\mathbf{X}\left[\cdot\right]$ from $\mathbf{Y}\left[\cdot\right]$. Here $\mmse_{\mathbf{X}|\mathbf{Y}}$
is defined as
\begin{align*}
\mmse_{\mathbf{X}|\mathbf{Y}} & = \frac{1}{M}  \int_{-\frac{1}{2}}^{\frac{1}{2}}\mathrm{Tr}\left(\mathbf S_{\mathbf{X}}(e^{2\pi i\phi})-\mathbf S_{\mathbf{X|Y}}(e^{2\pi i\phi})\right)d\phi\\
 & = \frac{1}{M}  \sum_{i=1}^{M}\mmse_{{X_i}|\mathbf{Y}},
\end{align*}
where $\mmse_{{X_i}|\mathbf{Y}}$ is the MMSE in estimating the $i^\mathrm{th}$ coordinate of $X[\cdot]$ from $\mathbf Y[\cdot]$. 
\end{theorem}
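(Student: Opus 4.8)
The plan is to obtain this theorem as a direct consequence of the separation principle of Proposition~\ref{prop:For-an-indirect} combined with the vector-valued reverse waterfilling formula \eqref{eq:vector_RD}. First I would invoke Proposition~\ref{prop:For-an-indirect}, which holds for arbitrary vector-valued processes, to write
\[
D_{\mathbf{X}|\mathbf{Y}}(R) = \mmse_{\mathbf{X}|\mathbf{Y}} + D_{\tilde{\mathbf{X}}}(R),
\]
where $\tilde{\mathbf{X}}[\cdot]=\left\{\mathbb{E}\left[\mathbf{X}[n]\mid \mathbf{Y}[\cdot]\right],\,n\in\mathbb{Z}\right\}$ is the MMSE estimator process. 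This reduces the indirect problem to two separate tasks: evaluating the MMSE, and computing the \emph{direct} DRF of the estimator process $\tilde{\mathbf{X}}[\cdot]$.

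The key observation is that $\tilde{\mathbf{X}}[\cdot]$ is itself a Gaussian stationary vector process whose spectral density matrix equals $\mathbf{S}_{\mathbf{X}|\mathbf{Y}}(e^{2\pi i\phi})$. Indeed, since $\mathbf{X}[\cdot]$ and $\mathbf{Y}[\cdot]$ are jointly Gaussian and stationary, the conditional expectation coincides with the non-causal Wiener estimator, which is a linear time-invariant filtering of $\mathbf{Y}[\cdot]$ with frequency response $\mathbf{S}_{\mathbf{XY}}(e^{2\pi i\phi})\mathbf{S}_{\mathbf{Y}}^{-1}(e^{2\pi i\phi})$; the spectral density matrix of its output is therefore $\mathbf{S}_{\mathbf{XY}}\mathbf{S}_{\mathbf{Y}}^{-1}\mathbf{S}_{\mathbf{XY}}^{*}=\mathbf{S}_{\mathbf{X}|\mathbf{Y}}$. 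It is precisely the stationarity of $\tilde{\mathbf{X}}[\cdot]$ in the discrete-time setting that makes its direct DRF tractable, in contrast to the continuous-time estimator $\tilde{X}(\cdot)$ of Section~\ref{sec:Frequency-Rate-Distortion-Functi}, which is \emph{not} stationary. With this identification in hand, $D_{\tilde{\mathbf{X}}}(R)$ is computed by applying \eqref{eq:vector_RD} to the spectral density matrix $\mathbf{S}_{\mathbf{X}|\mathbf{Y}}$, which yields the stated reverse-waterfilling expression for $R(\theta)$ and the term $\frac{1}{M}\sum_{i}\int_{-\frac12}^{\frac12}\min\{\lambda_i(\mathbf{S}_{\mathbf{X}|\mathbf{Y}}),\theta\}\,d\phi$ over its eigenvalues $\lambda_i$.

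To pass between the two equivalent forms of $D(\theta)$, I would invoke the orthogonality principle of MMSE estimation, which gives the additive decomposition $\mathbf{S}_{\mathbf{X}}(e^{2\pi i\phi}) = \mathbf{S}_{\tilde{\mathbf{X}}}(e^{2\pi i\phi}) + \mathbf{S}_{\mathbf{X}-\tilde{\mathbf{X}}}(e^{2\pi i\phi})$, and hence
\[
\mmse_{\mathbf{X}|\mathbf{Y}} = \frac{1}{M}\int_{-\frac12}^{\frac12}\mathrm{Tr}\left(\mathbf{S}_{\mathbf{X}}(e^{2\pi i\phi})-\mathbf{S}_{\mathbf{X}|\mathbf{Y}}(e^{2\pi i\phi})\right)d\phi.
\]
Adding this to the waterfilling distortion and using the identities $\mathrm{Tr}\,\mathbf{S}_{\mathbf{X}|\mathbf{Y}}=\sum_i\lambda_i(\mathbf{S}_{\mathbf{X}|\mathbf{Y}})$ and $\lambda_i-\min\{\lambda_i,\theta\}=[\lambda_i-\theta]^{+}$ converts the first form into the second, exactly mirroring the scalar computation in Theorem~\ref{thm:[Dobrushin-and-Tsybakov]}.

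I expect the main obstacle to lie not in this algebra but in justifying the spectral-domain steps rigorously. Specifically, one must verify that the Wiener output $\tilde{\mathbf{X}}[\cdot]$ is a bona fide stationary process with the claimed spectral density matrix, which requires $\mathbf{S}_{\mathbf{Y}}(e^{2\pi i\phi})$ to be invertible almost everywhere together with integrability conditions ensuring that the filter $\mathbf{S}_{\mathbf{XY}}\mathbf{S}_{\mathbf{Y}}^{-1}$ and the eigenvalue integrals are finite. One must also confirm that the estimation error and $\mathbf{Y}[\cdot]$ are jointly uncorrelated across all lags, so that the additive spectral decomposition above is valid, and that Proposition~\ref{prop:For-an-indirect} and formula \eqref{eq:vector_RD} indeed apply to the particular process $\tilde{\mathbf{X}}[\cdot]$. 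These are standard facts for jointly stationary Gaussian vector processes, but the integrability hypotheses under which everything is finite must be stated with care.
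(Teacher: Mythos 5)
Your proposal is correct and follows essentially the same route as the paper: the paper's proof is exactly the one-line combination of the separation principle (Proposition~\ref{prop:For-an-indirect}) with the vector reverse-waterfilling formula \eqref{eq:vector_RD} applied to the spectral density matrix $\mathbf{S}_{\mathbf{X}|\mathbf{Y}}$ of the MMSE estimator process. Your additional justifications---that the estimator is the stationary output of the non-causal Wiener filter with PSD matrix $\mathbf{S}_{\mathbf{XY}}\mathbf{S}_{\mathbf{Y}}^{-1}\mathbf{S}_{\mathbf{XY}}^{*}$, and the orthogonality-based passage between the two forms of $D(\theta)$---are details the paper leaves implicit, so you have simply made the same argument more explicit.
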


\begin{proof}
This is an immediate consequence of Proposition~\ref{prop:For-an-indirect}
and Equations (\ref{eq:vector_RD_R}) and (\ref{eq:vector_RD_D}). 
\end{proof}

\subsection{Lower bound on the DRF of vector sources\label{subsec:lower_bound}} 

Throughout this subsection we suppress the time index and allow the processes considered to have either continuous or discrete time indices. We also use $R$ to represents either bits per time unit or bits per symbol, according to the time index.\\

In Section~\ref{sec:mmse} we exploited the fact that the polyphase components $X_\Delta[\cdot]$ defined in \eqref{eq:X_delta_def} and the process $\mathbf Y[\cdot]$ are jointly Gaussian to compute the MMSE of $X(\cdot)$ given $\mathbf Y[\cdot]$. This was possible since the overall MMSE is given by averaging the MMSE in estimating each one of the polyphase components $X_\Delta[\cdot]$ over $0\leq \Delta <1$, as expressed by \eqref{eq:mmse_average_equiv}. Unfortunately, the iDRF does not satisfy such an averaging property in general. Instead, we have the following proposition, which holds for any source distribution and distortion measure. 

\begin{prop}\label{prop:lower_bound} \emph{(average distortion bound)} Let $\mathbf{X}$ and $\mathbf{Y}$
be two vector-valued processes. The iDRF
of $\mathbf{X}$ given $\mathbf{Y}$, under a single-letter distortion measure $\tilde{d}$,
satisfies
\begin{equation}
D_{\mathbf{X}|\mathbf{Y}}\left(R\right)\geq\frac{1}{M}\sum_{i=1}^{M}D_{X_{i}|\mathbf{Y}}\left(R\right),\label{eq:lower_bound}
\end{equation}
where $\mathbf{X}=\left(X_{1},\ldots X_{M}\right)$, and $D_{\mathbf{X}|\mathbf{Y}}\left(R\right)$ is defined using the distortion measure 
$\tilde{d}\left(\mathbf{X},\mathbf{\hat{X}}\right)=\frac{1}{M}\sum_{i=1}^{M}\tilde{d}\left(X_{i},\hat{X}_{i}\right)$.
\end{prop}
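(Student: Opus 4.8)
The plan is to reduce the vector bound to $M$ scalar indirect source-coding problems by exhibiting, inside any admissible vector test channel, an admissible test channel for each coordinate, and then to use the data processing inequality to control the per-coordinate rate. First I would fix an arbitrary conditional law $P_{\hat{\mathbf X}|\mathbf Y}$ that meets the rate constraint $I(\mathbf Y;\hat{\mathbf X})\le R$ and produces a reconstruction $\hat{\mathbf X}=(\hat X_1,\ldots,\hat X_M)$. The additive structure of the distortion measure $\tilde d(\mathbf X,\hat{\mathbf X})=\frac1M\sum_{i=1}^M\tilde d(X_i,\hat X_i)$ immediately gives $\mathbb E\,\tilde d(\mathbf X,\hat{\mathbf X})=\frac1M\sum_{i=1}^M\mathbb E\,\tilde d(X_i,\hat X_i)$, so it suffices to bound each summand from below by $D_{X_i|\mathbf Y}(R)$.

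The key step is the observation that each coordinate $\hat X_i$ is a deterministic projection of the full reconstruction $\hat{\mathbf X}$, so the Markov chain $\mathbf Y \to \hat{\mathbf X} \to \hat X_i$ holds and the data processing inequality yields $I(\mathbf Y;\hat X_i)\le I(\mathbf Y;\hat{\mathbf X})\le R$. Consequently the marginal law $P_{\hat X_i|\mathbf Y}$ induced by the vector channel is itself an admissible test channel for the scalar indirect problem of describing $X_i$ from $\mathbf Y$ at rate $R$. By the definition of the scalar iDRF this forces $\mathbb E\,\tilde d(X_i,\hat X_i)\ge D_{X_i|\mathbf Y}(R)$ for every $i$, and summing over $i$ gives $\mathbb E\,\tilde d(\mathbf X,\hat{\mathbf X})\ge \frac1M\sum_{i=1}^M D_{X_i|\mathbf Y}(R)$ for the chosen channel. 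Since the right-hand side does not depend on the channel, taking the infimum of the left-hand side over all admissible vector channels, which is by definition $D_{\mathbf X|\mathbf Y}(R)$, preserves the inequality and yields the claim.

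To make this rigorous at the level of processes, I would carry out the argument above on each finite block $[-N,N]$ (or the discrete-time analogue), where mutual information and expected distortion are ordinary finite-dimensional quantities and the data processing inequality applies verbatim, producing the block inequality $D^{N}_{\mathbf X|\mathbf Y}(R)\ge \frac1M\sum_{i=1}^M D^{N}_{X_i|\mathbf Y}(R)$ for every $N$. Passing to the limit defining the iDRF then uses the superadditivity of $\liminf$, namely $\liminf_N\sum_i a_i^{N}\ge \sum_i \liminf_N a_i^{N}$, to transfer the inequality to the limiting quantities. The distortion decomposition and the data processing inequality themselves are routine; the only real subtlety, and the step I expect to demand the most care, is this bookkeeping, verifying that the block-wise rate bound survives the passage to the mutual-information rate and that the interchange of the infimum over test channels with the $\liminf$ over block length does not reverse the inequality.
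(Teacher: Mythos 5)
Your proof is correct and follows essentially the same route as the paper's: the paper's one-sentence argument --- that the distortion at each coordinate under the optimal rate-$R$ vector code cannot be smaller than that of the optimal rate-$R$ code designed for that coordinate alone --- is exactly your projection argument, which you formalize via the Markov chain $\mathbf Y \to \hat{\mathbf X} \to \hat X_i$ and the data processing inequality. The explicit DPI step and the block-limit bookkeeping (superadditivity of $\liminf$) are details the paper leaves implicit, and they go through as you describe.
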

\begin{proof}
The distortion at each coordinate $m=1,\ldots,M$ obtained by an optimal code of rate ${R}$ that was designed to minimize the distortion averaged over all coordinates cannot be smaller than the distortion of the optimal rate-$R$ code designed specifically for the $m^\textrm{th}$ coordinate.
\end{proof}

Note that $D_{\mathbf{X}|\mathbf{Y}}\left(\infty\right)=\mmse_{\mathbf{X|Y}}=\frac{1}{M}\sum_{i=1}^{M}D_{X_{i}|\mathbf{Y}}(\infty)$
and $D_{\mathbf{X}|\mathbf{Y}}(0)=\sigma_{\mathbf X}^2=\frac{1}{M}\sum_{i=1}^{M}D_{X_{i}|\mathbf{Y}}(0)$,
i.e. the bound is always tight for $R=0$ and $R\rightarrow\infty$. \\ 
 
The proof of Proposition~\ref{prop:lower_bound} implies that equality in the bound \eqref{eq:lower_bound} is achieved when the optimal indirect rate-$R$ code for the vector process $\mathbf X$ induces an indirect optimal rate-$R$ code for each one of the coordinates. This is the case if the $M$ optimal indirect rate-$R$ codes for each coordinate are all functions of a single indirect rate-$R$ code. Indeed, the bound is  tight when $R\rightarrow\infty$ since any code essentially describes $\mathbb E\left[\mathbf X(\cdot)|\mathbf Y[\cdot] \right]$, which is a sufficient statistic for the MMSE reconstruction problem. Another case of equality is described in the following example.

 \begin{example}[i.i.d vector source] \label{ex:AID2}
Let $\mathbf{U}=\left(U_{1},\ldots,U_{M}\right)$ and $\mathbf{V}=\left(V_{1},\ldots,V_{P}\right)$
be two i.i.d jointly Gaussian vector sources with covariance matrices
$\mathbf{C_{\mathbf{U}}}$, $\mathbf{C_{V}}$, and $\mathbf{C_{UV}}$.
In order to find the iDRF of $U_{m}$
given $\mathbf{V}$, for $m=1,\ldots,M$, we use Proposition \ref{prop:For-an-indirect} to obtain 
\[
D_{Um|\mathbf{V}}(R)=\mmse_{U_{m}|\mathbf{V}}+2^{-2R}C_{U_{m}|\mathbf{V}}.
\]
Here we relied on the fact that the distortion-rate function of the Gaussian random i.i.d source $\mathbb E[U_m|\mathbf V]$ is $2^{-2R}C_{U_m|\mathbf V}=2^{-2R} \mathbf C_{U_m\mathbf V} \mathbf \mathbf C^{-1}_{\mathbf V} C_{U_m\mathbf V}^*$. The bound \eqref{eq:lower_bound} implies 
\begin{align}
D_{\mathbf{U}|\mathbf{V}}(R) & \geq\frac{1}{M}\sum_{m=1}^{M}\left(\mmse_{U_{m}|\mathbf{V}}+2^{-2R}C_{U_{m}|\mathbf{V}}\right)\nonumber \\
 & =\mmse_{\mathbf{U}|\mathbf{V}}+\frac{1}{M}2^{-2R}\mathrm{Tr}\,\mathbf{C_{U|V}}\nonumber \\
 & =\mmse_{\mathbf{U}|\mathbf{V}}+\frac{1}{M}2^{-2R}\sum_{m=1}^{P\wedge M}\lambda_{i}\left(\mathbf{C_{U|V}}\right)\label{eq:lower_bound_vector_example},
\end{align}
where $P\wedge M=\min\left\{ P,M\right\} $ is the maximal rank of the matrix $\mathbf{C_{U|V}}$. \par
We now compare \eqref{eq:lower_bound_vector_example} to the true value of the iDRF of $\mathbf{U}$ given $\mathbf{V}$, which is obtained using Theorem~\ref{thm:indirect_vector_case},
\begin{align}
R(\theta) & =\frac{1}{2}\sum_{i=1}^{P\wedge M}\log^{+}\left(\lambda_{i}\left(\mathbf{C_{U|V}}\right)/\theta\right),\nonumber \\
D_{\mathbf{U|V}}(\theta) & =\mmse_{\mathbf{U|V}}+\frac{1}{M}\sum_{i=1}^{P\wedge M}\min\left\{ \lambda_{i}\left(\mathbf{C_{U|V}}\right),\theta\right\} \label{eq:lower_bound_vector_example2}.
\end{align} 
From \eqref{eq:lower_bound_vector_example}
and \eqref{eq:lower_bound_vector_example2} we conclude the following eigenvalue inequality, valid for any $R\geq 0$:
\begin{equation}
\sum_{i=1}^{P\wedge M}\min\left\{ \lambda_{i}\left(\mathbf{C_{U|V}}\right),\theta\right\} \geq2^{-2R}\sum_{m=1}^{P\wedge M}\lambda_i\left(\mathbf{C_{U|V}}\right)\label{eq:lower_bound_vector_example3},
\end{equation}
where 
\[
R(\theta)=\frac{1}{2}\sum_{i=1}^{P\wedge M}\log^{+}\left(\lambda_i \left(\mathbf{C_{U|V}}\right)/\theta\right).
\] 
\end{example}
If $P=1$, then $\mathbf{C_{U|V}}$ has a single non-zero eigenvalue and equality holds in \eqref{eq:lower_bound_vector_example3}. 
As will be seen by the next example, equality in \eqref{eq:lower_bound} when the observable process is one-dimensional is indeed special to the i.i.d case. The next example will also be used later  to prove a lower bound for the combined sampling and source coding problem in Theorem~\ref{thm:lower_bound}.
\begin{example}[vector stationary source]
Let $\mathbf X[n]=\left( X_1[n],\ldots,X_M[n] \right)$, $n\in \mathbb Z$, be a Gaussian stationary  vector source and let $Y[\cdot]$ be a one-dimensional process jointly Gaussian and stationary with $\mathbf X[\cdot]$. From Theorem~\ref{thm:indirect_vector_case}, it follows that the iDRF of $\mathbf X[\cdot]$ given $Y[\cdot]$ is 
\begin{align*}
D_{\mathbf X|Y}(\theta)=\mmse_{\mathbf X|Y}
+\frac{1}{M}\int_{-\frac{1}{2}}^\frac{1}{2} \min \left\{\sum_{m=1}^M S_{X_m|Y}\left(e^{2\pi i\phi}\right),\theta\right\} d\phi,
\end{align*}
\[
R(\theta)=\frac{1}{2} \int_{-\frac{1}{2}}^{\frac{1}{2}} \log^+\left[ \sum_{m=1}^M  S_{ X_m|Y}\left(e^{2\pi i\phi}\right)/\theta\right]d\phi.
\]
Here we used the fact that the rank of $\mathbf S_{\mathbf X|Y}\left(e^{2\pi i\phi}\right)$ is at most one, and thus the sum of the eigenvalues of $\mathbf S_{\mathbf X|Y}\left(e^{2\pi i\phi}\right)$ equals its trace, which is given by $\sum_{m=1}^M S_{X_m|Y}\left( e^{2\pi i \phi}\right)$. Considering the $m^\textrm{th}$ coordinate of $\mathbf X[\cdot]$ separately, the iDRF of $X_m[\cdot]$ given $Y[\cdot]$ is 
\begin{align*}
D_{X_m|Y}(\theta)= &\mmse_{ X_m|Y}+\int_{-\frac{1}{2}}^\frac{1}{2} \min \left\{ S_{X_m|Y}\left(e^{2\pi i\phi}\right),\theta\right\} d\phi,
\end{align*}
where
\[
R(\theta_m)=\frac{1}{2} \int_{-\frac{1}{2}}^{\frac{1}{2}} \log^+\left[  S_{X_m|Y}\left(e^{2\pi i\phi}\right)/\theta_m\right]d\phi.
\]
Since $\mmse_{\mathbf X|Y}=\frac{1}{M}\sum_{m=1}^M\mmse_{X_m|Y}$, the bound \eqref{eq:lower_bound} implies
\begin{align*}
& \int_{-\frac{1}{2}}^\frac{1}{2} \sum_{m=1}^M \min\left\{ S_{X_m|Y}\left(e^{2\pi i\phi}\right),\theta_m(R) \right\} d\phi \\
&\quad \quad \leq \int_{-\frac{1}{2}}^\frac{1}{2} \min\left\{\sum_{m=1}^M S_{X_m|Y}\left( e^{2\pi i \phi}\right),\theta(R) \right\} d\phi.
\end{align*}
\end{example}

\section{Indirect DRF under sub-Nyquist sampling \label{sec:Frequency-Rate-Distortion-Functi}}
In this section we solve our main source coding problem for the case of single branch sampling. Specifically, we derive a closed form expression for the function $D\left(f_s,R\right)$ defined in 
\eqref{eq:D_def} and for its minimal value over all pre-sampling filters $H(f)$. \par
From the definition of $D\left(f_s,R\right)$ in Section~\ref{sec:problem_statement} we can already deduce the following facts about $D(f_s,R)$:
\begin{prop}
\label{prop:basic_properties_of_DSFR}
Consider the combined sampling and source coding problem of Section~\ref{sec:problem_statement}. The function $D(f_s,R)$ satisfies:
\begin{enumerate}
\item[(i)]
For all $f_s>0$ and $R\geq0$,
\[
 D(f_s,R)\geq D_{X|Z}(R),
 \]
 where $D_{X|Z}(R)$ was defined in \eqref{eq:DnT}. In addition,
 \[
 D(f_s,R)\geq\mmse_{X|Y}(f_s),
 \]
 where $\mmse_{X|Y}(f_s)$ is the MMSE in reconstructing $X(\cdot)$ from the uniform samples $Y[\cdot]$ given in Proposition \ref{prop:mmse_single}.

\item [(ii)]
If the process $Z(\cdot)$ has almost surely Riemann integrable realizations, then the reconstruction error of $Z(\cdot)$ from $Y[\cdot]$ can be made arbitrarily small by sampling at a high enough frequency\footnote{Note that $Z(\cdot)$ does not need to be bandlimited. The only assumption on $Z(\cdot)$ is finite variance, i.e. $S_Z(f)$ is in $L_1$.}. It follows that as $f_s$ goes to infinity, $D\left(f_{s},R\right)$ converges to $D_{X|Z}\left(R\right)$. In particular, if $Z\left(\cdot\right)$ is bandlimited, then $D\left(f_{s},R\right)=D_{X|Z}\left(R\right)$ for any $f_{s}$ above the Nyquist frequency of $Z\left(\cdot\right)$.
\item [(iii)] For a fixed $f_s>0$, $D(f_s,R)$ is a monotone non-increasing function of $R$ which converges to $ \mmse_{X|Y}(f_s)$ as $R$ goes to infinity. It is not necessarily non-increasing in $f_s$ since $ \mmse_{X|Y}(f_s)$ is not necessarily non-increasing in $f_s$. \\
\end{enumerate}
{\,}
\end{prop}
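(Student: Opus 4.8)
The three parts reflect different features of the model, so the plan is to handle them in turn, leaning on the separation principle of Proposition~\ref{prop:For-an-indirect} throughout.

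\emph{Part (i).} For the first bound I would argue by data processing. Because the samples are a pointwise restriction of $Z(\cdot)$, namely $Y[n]=Z(n/f_s)$, any reconstruction $\hat{X}(\cdot)$ generated from $Y[\cdot]$ depends on $Z(\cdot)$ only through $Y[\cdot]$, so $\hat{X}(\cdot)-Y[\cdot]-Z(\cdot)$ is a Markov chain and $I_T(\hat{X};Z)\le I_T(\hat{X};Y)\le R$. Thus every test channel feasible for the sampled problem at rate $R$ is also feasible for the problem in which the encoder observes all of $Z(\cdot)$, while achieving the same distortion; taking the infimum over the (smaller) class of $Y$-based codes yields $D(f_s,R)\ge D_{X|Z}(R)$. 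For the second bound I would simply invoke the separation principle: $D(f_s,R)=\mmse_{X|Y}(f_s)+D_{\tilde{X}}(R)\ge \mmse_{X|Y}(f_s)$, since the direct DRF $D_{\tilde{X}}(R)$ is nonnegative.

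\emph{Part (ii).} Here the lower bound $D(f_s,R)\ge D_{X|Z}(R)$ is already supplied by part (i), so it remains to produce a matching upper bound that vanishes as $f_s\to\infty$. The idea is that when the realizations of $Z(\cdot)$ are almost surely Riemann integrable, the Wiener-filtered estimate $\mathbb{E}[X(t)|Z(\cdot)]$, being a linear functional of $Z(\cdot)$, is approximated by Riemann sums of the samples $Y[\cdot]$, so that $Z(\cdot)$ can be reconstructed from its samples with $L_2$ error tending to zero as $f_s$ grows. I would then feed this reconstruction into the optimal rate-$R$ test channel for the $Z$-problem (the one achieving \eqref{eq:DnT}) and argue, by continuity of the distortion and of the mutual-information rate under an $L_2$-small perturbation of the observed process, that the induced $Y$-based scheme attains $D_{X|Z}(R)+o(1)$. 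Combined with the lower bound this forces $D(f_s,R)\to D_{X|Z}(R)$. When $Z(\cdot)$ is bandlimited and $f_s$ exceeds its Nyquist frequency the approximation becomes exact: Shannon--Whittaker interpolation recovers $Z(\cdot)$ from $Y[\cdot]$ with zero error, so the two processes generate the same information and $D(f_s,R)=D_{X|Z}(R)$ identically.

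\emph{Part (iii).} Monotonicity in $R$ is immediate from the definition \eqref{eq:D_def}: the set of test channels satisfying $I_T(\hat{X};Y)\le R$ only grows as $R$ increases, so the infimum $D(f_s,R)$ is non-increasing. The limit is again read off from separation, $D(f_s,R)=\mmse_{X|Y}(f_s)+D_{\tilde{X}}(R)$, together with the fact that the direct DRF of the finite-variance process $\tilde{X}(\cdot)$ tends to $0$ as $R\to\infty$; hence $D(f_s,R)\downarrow \mmse_{X|Y}(f_s)$. The failure of monotonicity in $f_s$ then follows without further work: since $D(f_s,R)$ converges to $\mmse_{X|Y}(f_s)$ for large $R$, and the latter need not be non-increasing in $f_s$ (the sample grids at two different frequencies are not nested, so increasing $f_s$ can realign the aliased spectrum unfavorably), $D(f_s,R)$ cannot be non-increasing in $f_s$ either.

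\emph{Main obstacle.} The only delicate step is the continuity argument in part (ii): making rigorous that the mutual-information rate and the distortion of the $Z$-problem are stable under replacing $Z(\cdot)$ by its reconstruction from denser and denser samples. I would control this using Pinsker's approximation of the mutual information between continuous-time Gaussian processes by the information in their samples \cite{pinsker1964information}, together with the convergence of the sampled quantization scheme as $f_s\to\infty$ established in \cite{neuhoff2013information}; the finer quantitative version of this convergence is exactly what the main theorem of Section~\ref{sec:Frequency-Rate-Distortion-Functi} delivers.
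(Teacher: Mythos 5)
Your proposal is correct and takes essentially the approach the paper intends: the paper offers no formal proof of this proposition, presenting it as facts "deducible from the definition," and your arguments — code-class inclusion/data processing and the separation principle of Proposition~\ref{prop:For-an-indirect} for part (i), vanishing reconstruction error of $Z(\cdot)$ from its dense samples for part (ii), and feasible-set monotonicity plus $D_{\tilde{X}}(R)\rightarrow 0$ and the non-monotonicity of $\mmse_{X|Y}(f_s)$ for part (iii) — are exactly those deductions. The continuity step you flag in part (ii) is likewise left implicit by the paper, and the tools you cite (\cite{pinsker1964information}, \cite{neuhoff2013information}) are the same ones the paper invokes for the analogous limiting argument in the proof of Theorem~\ref{thm:main_result}.
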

\par

\subsection{Lower bound}
Note that $(i)$ in Proposition~\ref{prop:basic_properties_of_DSFR} implies that the manifold defined by $D\left(f_s,R\right)$ in the three dimensional space $\left(f_s,R,D\right)$ is bounded from below by the two cylinders $ \mmse_{X|Y}(f_s)$ and $D_{X|Z}(R)$ (and from above by the plane $D=\sigma_X^2$). A tighter lower bound is obtained using Proposition~\ref{prop:lower_bound}.
\begin{theorem} \label{thm:lower_bound}
Consider the combined sampling and source coding problem of Fig.~\ref{fig:operational_scheme} with the single branch sampler of Fig.~\ref{fig:sampling_scheme}-(a). We have the following bound of the indirect distortion-rate of $X(\cdot)$ given $Y[\cdot]$:
\begin{align} \label{eq:lower_bound_to_show}
D\left(f_s,R\right) \geq &\mmse_{X| Y}(f_s)\\ \nonumber
& + \int_{0}^\Delta \int_{-\frac{1}{2}}^\frac{1}{2} \min \left\{S_{X_\Delta| Y}\left(e^{2\pi i\phi} \right) ,\theta_\Delta\right\}d\phi d\Delta,
\end{align}
where 
\[
 S_{X_\Delta|Y}(e^{2\pi i \phi})=\frac {\sum_{k,l\in \mathbb Z} S_{XZ}\left(f_s(\phi-k)\right) S_{XZ}^*\left(f_s(\phi-l)\right) e^{2\pi i (k-l) \Delta} }{\sum_{k\in \mathbb Z}S_Z\left(f_s(\phi-k)\right)} \nonumber,
\]
and for each $0\leq\Delta\leq 1$, $\theta_\Delta$ satisfies
\[
\bar{R}=R/f_s=\frac{1}{2} \int_{-\frac{1}{2}}^\frac{1}{2} \log^+\left[ S_{X_\Delta| Y}\left(e^{2\pi i\phi} \right)/\theta_\Delta\right]d\phi.
\]
\end{theorem}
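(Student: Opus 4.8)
The plan is to bound $D(f_s,R)$ from below by splitting $X(\cdot)$ into its polyphase components and applying the average-distortion bound of Proposition~\ref{prop:lower_bound}, which reduces the problem to a family of scalar jointly-stationary Gaussian indirect source-coding problems, each solved in closed form by the Dobrushin--Tsybakov theorem (Theorem~\ref{thm:[Dobrushin-and-Tsybakov]}). For each fixed $\Delta\in[0,1)$ the polyphase component $X_\Delta[n]=X((n+\Delta)/f_s)$ from \eqref{eq:X_delta_def} is a discrete-time stationary Gaussian process jointly stationary with the samples $Y[\cdot]$; a routine Wiener-filter computation of the cross-spectrum between $X_\Delta[\cdot]$ and $Y[\cdot]$ and of the aliased spectrum $S_Y$ yields the MMSE-estimator spectrum $S_{X_\Delta|Y}(e^{2\pi i\phi})$ in the stated ratio form. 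I would begin by recording this joint-stationarity and the explicit formula for $S_{X_\Delta|Y}$.

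The core step is the average-distortion bound. Fixing $M\in\mathbb N$ and $\Delta_m=(m-1)/M$, form the vector process $\mathbf X^{(M)}[\cdot]=(X_{\Delta_1}[\cdot],\ldots,X_{\Delta_M}[\cdot])$, which is jointly Gaussian and stationary with $Y[\cdot]$, and apply Proposition~\ref{prop:lower_bound} under the normalized per-coordinate quadratic distortion to obtain $D_{\mathbf X^{(M)}|Y}(R)\ge \frac1M\sum_{m=1}^M D_{X_{\Delta_m}|Y}(R)$. Each summand is the scalar indirect DRF of $X_{\Delta_m}[\cdot]$ given $Y[\cdot]$, which Theorem~\ref{thm:[Dobrushin-and-Tsybakov]}, in its discrete-time form \eqref{eq:DnT_discrete} at bits-per-symbol $\bar R=R/f_s$, evaluates as $\mmse_{X_{\Delta_m}|Y}+\int_{-1/2}^{1/2}\min\{S_{X_{\Delta_m}|Y}(e^{2\pi i\phi}),\theta_{\Delta_m}\}\,d\phi$, with $\theta_{\Delta_m}$ fixed by the rate equation in the statement. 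Equivalently, one may argue directly in the continuum: the squared-error distortion of any rate-$R$ code equals the $\Delta$-average $\int_0^1\mathbb E\|\hat X_\Delta-X_\Delta\|^2\,d\Delta$ of the per-polyphase distortions, and for almost every $\Delta$ the restriction $\hat X_\Delta$ is itself an admissible rate-$R$ reconstruction of $X_\Delta$ from $Y$ by the data-processing inequality, so $\mathbb E\|\hat X_\Delta-X_\Delta\|^2\ge D_{X_\Delta|Y}(R)$; taking the infimum over codes gives $D(f_s,R)\ge\int_0^1 D_{X_\Delta|Y}(R)\,d\Delta$.

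It then remains to assemble the pieces. Letting $M\to\infty$ turns the averages into integrals over $\Delta$: the estimation terms satisfy $\frac1M\sum_m\mmse_{X_{\Delta_m}|Y}\to\int_0^1\mmse_{X_\Delta|Y}\,d\Delta=\mmse_{X|Y}(f_s)$ by the polyphase identity \eqref{eq:mmse_average_equiv}, while $\frac1M\sum_m\int\min\{\cdots\}\,d\phi\to\int_0^1\int_{-1/2}^{1/2}\min\{S_{X_\Delta|Y}(e^{2\pi i\phi}),\theta_\Delta\}\,d\phi\,d\Delta$. This reproduces exactly the claimed bound, where the outer integral runs over $\Delta\in[0,1]$.

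The step I expect to be the main obstacle is linking the finite-dimensional quantity to the genuinely continuous-time $D(f_s,R)$ defined through $\liminf_{T\to\infty}$ in \eqref{eq:D_def}: either justifying $D_{\mathbf X^{(M)}|Y}(R)\to D(f_s,R)$ as the discretization refines, or, in the continuum route, controlling the mutual-information rate of the restricted reconstruction $\hat X_\Delta$ so that it is a bona fide rate-$R$ code. Both require commuting the $T\to\infty$ limit with the $\Delta$-integral (and with $M\to\infty$); the uniform boundedness and periodicity of the instantaneous error that underlie \eqref{eq:mmse_average_equiv}, together with the data-processing inequality for information rates, are the tools I would use to make these interchanges rigorous.
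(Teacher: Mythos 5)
Your proposal follows essentially the same route as the paper's own proof: sample the polyphase parameter at finitely many points $\Delta_1,\ldots,\Delta_M$, apply the average-distortion bound of Proposition~\ref{prop:lower_bound} to the vector process $\left(X_{\Delta_1}[\cdot],\ldots,X_{\Delta_M}[\cdot]\right)$, evaluate each scalar term with the discrete-time Dobrushin--Tsybakov formula \eqref{eq:DnT_discrete} at $\bar{R}=R/f_s$, and pass to the limit $M\to\infty$ using continuity in $\Delta$ together with the identity \eqref{eq:mmse_average_equiv}. You also correctly isolate the one delicate point --- identifying the limit of $D_{\mathbf X^M|Y}(\bar R)$ with the continuous-time $D(f_s,R)$ of \eqref{eq:D_def} --- which the paper itself handles only by deferring to the discretization argument in the proof of Theorem~\ref{thm:main_result}.
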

\begin{proof}
For a given finite set of points $\Delta_1,\ldots \Delta_M$ in $[0,1)$ define the vector valued process 
\[
{\mathbf X}^M[n] = \left( X_{\Delta_1}[n],\ldots,X_{\Delta_M}[n]\right),\quad n\in \mathbb Z,
\]
where for $m=1,\ldots,M$, the discrete-time process $X_{\Delta_m}[\cdot]$ is defined in \eqref{eq:X_delta_def}. By Proposition~\ref{prop:lower_bound} we have  
\begin{equation} \label{eq:lower_bound_proof}
D_{{\mathbf X}^M| Y}(\bar{R}) \geq \frac{1}{M}\sum_{m=1}^M D_{X_{\Delta_m}| Y}(\bar{R}).
\end{equation}
It follows from the proof of Proposition~\ref{prop:mmse_single} that for all $m=1,\ldots,M$, $X_{\Delta_m}[\cdot]$ and $Y[\cdot]$ are jointly Gaussian and stationary, with $S_{X_{\Delta_m}|Y}\left( e^{2\pi i\phi}\right)$ given by \eqref{eq:mmse_single_proof}. Applying the discrete-time version of Theorem~\ref{thm:[Dobrushin-and-Tsybakov]}, i.e. \eqref{eq:DnT_discrete}, the iDRF of $X_{\Delta_m}[\cdot]$ given $Y[\cdot]$ is 
\begin{align} \label{eq:lower_bound_proof_2}
D_{X_{\Delta_m}|Y}(\bar{R}) = \mmse_{X_{\Delta}|Y} +\int_{-\frac{1}{2}}^\frac{1}{2} \min \left\{S_{X_{\Delta_m}|Y}\left( e^{2\pi i\phi}\right), \theta_{\Delta_m}\right\}d\phi,
\end{align}
where for a fixed $\bar{R}$, $\theta_{\Delta_m}$ satisfies
\[
\bar{R}(\theta_{\Delta_m})=\int_{-\frac{1}{2}}^\frac{1}{2} \log^+\left[ S_{X_{\Delta_m}|Y}\left( e^{2\pi i\phi}\right)/\theta_{\Delta_m}\right]d\phi.
\] \par
$D_{X_{\Delta}}(\bar{R})$ is a continuous function of $\Delta$ and hence integrable with respect to it. As the number of points $M$ goes to infinity with vanishing division parameter $\max_{m_1\neq m_2}\left|\Delta_{m_1}-\Delta_{m_2}\right|$, the RHS of \eqref{eq:lower_bound_proof} converges to the integral of \eqref{eq:lower_bound_proof_2} over the interval $(0,1)$. The RHS of \eqref{eq:lower_bound_proof} converges to $D_{X|Y}(\bar{R})=D(f_s,R)$ by a similar argument that is used in the proof of Theorem~\ref{thm:main_result} to follows. Using \eqref{eq:mmse_average_equiv}, \eqref{eq:lower_bound_to_show} follows. 
\end{proof}

\subsection{Discrete-time sampling}
We first solve the discrete-time counterpart of our main source coding problem. Here the underlying process is $X\left[\cdot\right]$ and we observe a factor $M$ down-sampled version of the discrete time process $Z\left[\cdot\right]$, which is jointly Gaussian and jointly stationary with $X\left[\cdot\right]$. Note that unlike what was discussed in Section \ref{sec:Indirect-Source-Coding}, the source process and the observable process are no longer jointly stationary. 
\begin{theorem}[single branch decimation]
\label{thm:discrete_decimation_rate_distortion} Let $X\left[\cdot\right]$
and $Z\left[\cdot\right]$ be two jointly Gaussian stationary processes. Given $M\in \mathbb N$, define the process $Y[\cdot]$ by
 $Y\left[n\right]=Z\left[Mn\right]$, for all $n\in\mathbb Z$. The indirect distortion-rate
function of $X\left[\cdot\right]$ given $Y\left[\cdot\right]$, under the quadratic distortion \eqref{eq:dist_disc}, is given by
\begin{align*}
R \left(\theta\right)& = \frac{1}{2}\int_{-\frac{1}{2}}^{\frac{1}{2}}\log^{+}\left[J_{M}\left(e^{2\pi i\phi}\right)\theta^{-1}\right]d\phi,
\end{align*}
\begin{align*}
D\left(\theta\right) & = 
\mmse_{X|Y}(M)  +\int_{-\frac{1}{2}}^{\frac{1}{2}}\min\left\{ J_{M}\left(e^{2\pi i\phi}\right),\theta\right\} d\phi\\
 & =  \sigma_X^2 -\int_{-\frac{1}{2}}^{\frac{1}{2}}\left[J_{M}\left(e^{2\pi i\phi}\right)-\theta\right]^{+}d\phi,
\end{align*}
where 
\[
J_M \expphi \triangleq \frac{1}{M}\frac{\sum_{m=0}^{M-1}\left|S_{XZ}\left(e^{2\pi i\frac{\phi-m}{M}}\right)\right|^{2}}{\sum_{m=0}^{M-1}S_{Z}\left(e^{2\pi i\frac{\phi-m}{M}}\right)},
\]
and $\mmse_{X|Y}(M)$
is defined by
\begin{align*}
\lim_{N\rightarrow\infty}\frac{1}{2N+1}\sum_{n=-N}^{N}\mathbb{E}\left[X\left[n\right]-\mathbb E\left({X}[n]|Y[\cdot]\right)\right]^{2}\\
=\frac{1}{M}\sum_{n=0}^{M-1}\mathbb{E}\left[X\left[n\right]-\mathbb E\left({X}[n]|Y[\cdot]\right)\right]^{2}.
\end{align*}
\textup{ }\end{theorem}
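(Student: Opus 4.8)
The plan is to reduce this decimation problem---in which $X[\cdot]$ and $Y[\cdot]$ are \emph{not} jointly stationary---to the jointly stationary vector-valued indirect source coding problem already solved in Theorem~\ref{thm:indirect_vector_case}. First I would block the source into its polyphase components: for $m=0,\ldots,M-1$ set $X_m[n]\triangleq X[Mn+m]$ and form the $M$-dimensional vector process $\mathbf X[n]\triangleq\left(X_0[n],\ldots,X_{M-1}[n]\right)$. Since $Y[n]=Z[Mn]$ and $\mathbf X[n]$ are both indexed by the block index $n$, the pair $(\mathbf X[\cdot],Y[\cdot])$ \emph{is} jointly Gaussian and stationary. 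Moreover the per-symbol quadratic distortion on $X[\cdot]$ equals, up to a vanishing fraction of boundary blocks as $N\to\infty$, the normalized vector distortion $\tilde d(\mathbf X,\hat{\mathbf X})=\frac1M\sum_{m=0}^{M-1}(X_m-\hat X_m)^2$ of Proposition~\ref{prop:lower_bound}, and any rate-$R$ description of $\{X[n]\}$ is a rate-$R$ description of $\{\mathbf X[n]\}$ and vice versa. Hence the iDRF of $X[\cdot]$ given $Y[\cdot]$ coincides with the iDRF of the vector process $\mathbf X[\cdot]$ given the scalar observation $Y[\cdot]$.

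I would then invoke Theorem~\ref{thm:indirect_vector_case}. The crucial simplification is that the observation $Y[\cdot]$ is one-dimensional, so the estimator spectral density matrix $\mathbf S_{\mathbf X|Y}(e^{2\pi i\phi})=\mathbf S_{\mathbf X Y}\,S_Y^{-1}\,\mathbf S_{\mathbf X Y}^{*}$ has rank at most one. Its single nonzero eigenvalue therefore equals its trace, and the reverse-waterfilling formulas collapse to a single integral governed by $\mathrm{Tr}\,\mathbf S_{\mathbf X|Y}(e^{2\pi i\phi})=\sum_{m=0}^{M-1}\left|S_{X_mY}(e^{2\pi i\phi})\right|^{2}/S_Y(e^{2\pi i\phi})$; this is exactly the scalar-observation specialization worked out in the vector stationary source example at the end of Section~\ref{sec:Indirect-Source-Coding}. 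It then remains to identify this trace with $M\,J_M(e^{2\pi i\phi})$. Using the standard decimation relations $S_{X_mY}(e^{2\pi i\phi})=\frac1M\sum_{\ell=0}^{M-1}S_{XZ}(e^{2\pi i(\phi-\ell)/M})\,e^{2\pi i m(\phi-\ell)/M}$ and $S_Y(e^{2\pi i\phi})=\frac1M\sum_{\ell=0}^{M-1}S_Z(e^{2\pi i(\phi-\ell)/M})$, the numerator $\sum_{m=0}^{M-1}\left|S_{X_mY}\right|^{2}$ is evaluated via the orthogonality relation $\sum_{m=0}^{M-1}e^{2\pi i m(\ell'-\ell)/M}=M\delta_{\ell\ell'}$, which kills the cross terms and yields $\frac1M\sum_{\ell=0}^{M-1}\left|S_{XZ}(e^{2\pi i(\phi-\ell)/M})\right|^{2}$. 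Dividing by $S_Y$ gives $\mathrm{Tr}\,\mathbf S_{\mathbf X|Y}=M\,J_M$.

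With this identity in hand the argument is a matter of matching conventions. Measuring the rate in bits per block (equivalently per sample of $Y[\cdot]$) makes $R$ here equal to the per-vector-symbol rate of Theorem~\ref{thm:indirect_vector_case}, and replacing the waterfilling level $\vartheta$ of that theorem by $M\theta$ turns $\log^{+}\!\left[\mathrm{Tr}\,\mathbf S_{\mathbf X|Y}/(M\theta)\right]=\log^{+}\!\left[J_M/\theta\right]$ and $\frac1M\min\{\mathrm{Tr}\,\mathbf S_{\mathbf X|Y},M\theta\}=\min\{J_M,\theta\}$, reproducing the stated $R(\theta)$ and the waterfilling term of $D(\theta)$. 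The MMSE offset matches because $\mmse_{\mathbf X|Y}=\frac1M\sum_{m=0}^{M-1}\mmse_{X_m|Y}$ is, by stationarity of each polyphase component and the periodicity (period $M$) of the instantaneous estimation error in $n$, exactly the quantity $\mmse_{X|Y}(M)$ of the statement; combined with $\sigma_X^2=\frac1M\int_{-1/2}^{1/2}\mathrm{Tr}\,\mathbf S_{\mathbf X}(e^{2\pi i\phi})\,d\phi$ this also delivers the equivalent form $\sigma_X^2-\int_{-1/2}^{1/2}\left[J_M(e^{2\pi i\phi})-\theta\right]^{+}d\phi$.

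I expect the main obstacle to be the first step rather than any of the spectral computations: rigorously justifying that the iDRF of the non-jointly-stationary pair $(X[\cdot],Y[\cdot])$ equals that of the blocked pair $(\mathbf X[\cdot],Y[\cdot])$. One must show that the infimum defining $D_{X|Y}$ over codes operating on $Y[\cdot]$ and the corresponding infimum for $\mathbf X[\cdot]$ agree in the limit $N\to\infty$---the block boundaries contribute only an $O(1/N)$ fraction of the symbols, and both the mutual-information rate and the average distortion are continuous under this regrouping---and to pin down the rate normalization (bits per sample of $Y[\cdot]$) consistently with Theorem~\ref{thm:indirect_vector_case}.
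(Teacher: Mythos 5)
Your proposal is correct and takes essentially the same route as the paper's own proof: both stack $X[\cdot]$ into the $M$-dimensional blocked (polyphase) process jointly stationary with $Y[\cdot]$, identify the iDRF of $X[\cdot]$ given $Y[\cdot]$ with that of the blocked process, exploit the rank-one structure of $\mathbf S_{\mathbf X Y}S_Y^{-1}\mathbf S_{\mathbf X Y}^{*}$ so that its single nonzero eigenvalue equals its trace, evaluate that trace via the multirate spectral identities and the orthogonality of the $M$-th roots of unity, and finally invoke Theorem~\ref{thm:indirect_vector_case}. If anything, your explicit normalization bookkeeping (trace $=M J_M$, water level rescaled to $M\theta$) is more careful than the paper's, which states the reduction from the non-jointly-stationary pair to the stacked pair without the boundary-effect justification you flag.
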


\begin{proof}
While the details can be found in Appendix~\ref{sec:Proof-of-discrete}, an outline of the proof is as follows:
given $M\in \mathbb N$, define the vector-valued process $\mathbf X^M[\cdot]$ by
\[
\mathbf X^M[n] \triangleq \left(X[Mn],X[Mn+1],\ldots,X[Mn+M-1] \right),\quad n\in\mathbb Z.
\]
The process $\mathbf X^M[\cdot]$ is a stacked version of $X[\cdot]$ over $M$-length blocks, and hence shares the same iDRF given $Y[\cdot]$. Since $\mathbf X[\cdot]$ and $Y[\cdot]$ are jointly Gaussian and stationary, the result follows by applying Theorem \ref{thm:indirect_vector_case}. 
\end{proof}

\subsection{Single branch sampling}
We are now ready to solve our combined sampling and source coding problem introduced in Section~\ref{sec:problem_statement}. Note that here we go back to the model of Fig.~\ref{fig:The-general-scheme} with the single branch sampler of Fig.~\ref{fig:sampling_scheme}(a).
\begin{theorem}[single branch sampling]
\label{thm:main_result}Let $X\left(\cdot\right)$ and $Z\left(\cdot\right)$
be two jointly Gaussian stationary stochastic processes with almost surely Riemann integrable realizations and $L_1$ PSDs $S_{X}\left(f\right)$,
$S_{Z}\left(f\right)$ and $S_{XZ}\left(f\right)$. Let $Y\left[\cdot\right]$
be the discrete time process defined by $Y\left[n\right]=Z\left(n/f_{s}\right)$,
 where $f_{s}>0$. The indirect
distortion-rate function of $X\left(\cdot\right)$ given $Y\left[\cdot\right]$, is given by
\begin{subequations}
\label{eq:main_theorem}
\begin{align}
R\left(f_s,\theta\right)  = & \frac{1}{2}\int_{-\frac{f_s}{2}}^{\frac{f_s}{2}}\log^{+}\left[\widetilde{S}_{X|Y}(f)\theta^{-1}\right]df,\label{eq:main_rate}
\end{align}
\begin{align}
\label{eq:main_dist}
D\left(f_{s},\theta\right)  = & \mmse_{X|Y}(f_s)+\int_{-\frac{f_s}{2}}^{\frac{f_s}{2}}\min\left\{ \widetilde{S}_{X|Y}(f),\theta\right\} df\\
  =  &\sigma_{X}^{2}-\int_{-\frac{f_s}{2}}^{\frac{f_s}{2}}\left[\widetilde{S}_{X|Y}(f)-\theta\right]^{+}df, \nonumber
\end{align}
\end{subequations}
where $\sigma_X^2 = \mathbb E \left(X(t)\right)^2$,
\begin{align} \label{eq:J_def_single}
\widetilde{S}_{X|Y}(f) &= \frac{\sum_{k\in\mathbb{Z}}\left|S_{XZ}\left(f-f_sk\right)\right|^{2}}{\sum_{k\in\mathbb{Z}}S_{Z}\left(f-f_sk\right)},
\end{align}
and 
\[
\mmse_{X|Y}(f_s) =  \sigma_{X}^{2}-\int_{-\frac{f_s}{2}}^{\frac{f_s}{2}}\widetilde{S}_{X|Y}(f)df.
\]
\end{theorem}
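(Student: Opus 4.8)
The plan is to realize the continuous-time iDRF as the limit of a sequence of discrete-time decimation problems, to which Theorem~\ref{thm:discrete_decimation_rate_distortion} applies verbatim. First I would fix a large integer $M$ and discretize both processes at the fine rate $Mf_s$, setting $X^{(M)}[n]\triangleq X(n/(Mf_s))$ and $Z^{(M)}[n]\triangleq Z(n/(Mf_s))$. These are jointly Gaussian and stationary, and the samples actually observed satisfy $Y[n]=Z(n/f_s)=Z^{(M)}[Mn]$, so $Y[\cdot]$ is exactly a factor-$M$ decimation of $Z^{(M)}[\cdot]$. Theorem~\ref{thm:discrete_decimation_rate_distortion} then gives the iDRF of $X^{(M)}[\cdot]$ given $Y[\cdot]$ in closed form through the function $J_M(e^{2\pi i\phi})$ and the block MMSE $\mmse_{X|Y}(M)$, the rate being measured in bits per sample of $Y$, i.e. $\bar R=R/f_s$.

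The central computation is to identify $\lim_{M\to\infty}J_M$. Using the aliasing relation for sampled processes, $S_{Z^{(M)}}(e^{2\pi i\psi})=Mf_s\sum_k S_Z(Mf_s(\psi-k))$ and the analogous identity for the cross-spectrum $S_{X^{(M)}Z^{(M)}}$, I would substitute $\psi=(\phi-m)/M$ and sum over $m=0,\ldots,M-1$. The denominator collapses cleanly: since the pairs $(m,k)$ with $0\le m<M$, $k\in\mathbb Z$ enumerate all of $\mathbb Z$ via $m+Mk$, the double sum telescopes into the single aliased sum $Mf_s\sum_j S_Z(f_s\phi-f_sj)$. The numerator is a sum over $m$ of squared aliased cross-spectra; its diagonal ($k=l$) part reproduces $f_s\sum_j|S_{XZ}(f_s\phi-f_sj)|^2$, so that after the change of variables $f=f_s\phi$ one obtains $J_M(e^{2\pi i\phi})\to f_s\,\widetilde S_{X|Y}(f_s\phi)$ with $\widetilde S_{X|Y}$ exactly as in \eqref{eq:J_def_single}. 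A consistency check is that $\int_{-1/2}^{1/2}J_M\,d\phi=\sigma_X^2-\mmse_{X|Y}(M)$ converges to $\int_{-f_s/2}^{f_s/2}\widetilde S_{X|Y}(f)\,df=\sigma_X^2-\mmse_{X|Y}(f_s)$.

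The main analytic obstacle is to show that the off-diagonal contributions to the numerator (the terms $k\neq l$, in which $S_{XZ}$ is evaluated at two points separated by $Mf_s(l-k)$) do not survive the limit. I would control them after integration against $\log^+[\,\cdot/\theta]$ and $\min\{\cdot,\theta\}$ over $\phi$: rewritten in the variable $f$, they reduce to overlap integrals of the form $\int S_{XZ}(f)\overline{S_{XZ}(f-Mf_s n)}\,df$ with $n\neq0$, which tend to zero as $M\to\infty$ because $S_{XZ}\in L_2$ (a consequence of $|S_{XZ}(f)|^2\le S_X(f)S_Z(f)$ together with the $L_1$ hypotheses) and the autocorrelation of an $L_2$ function vanishes at infinity. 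Combined with the boundedness of the denominator (reading fractions as zero where both sums vanish, per the standing convention), this yields convergence of the rate and distortion integrals of the decimation problem to \eqref{eq:main_rate} and \eqref{eq:main_dist}, once $\theta$ is relabelled by the factor $f_s$ so that $J_M/\theta\to\widetilde S_{X|Y}/\theta$; the residual Jacobian factor $d\phi=df/f_s$ is precisely what converts the decimation rate into $\bar R=R/f_s$.

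Two convergences then remain. The block MMSE satisfies $\mmse_{X|Y}(M)\to\mmse_{X|Y}(f_s)$ directly from the polyphase averaging identity \eqref{eq:mmse_average_equiv}, since $\mmse_{X|Y}(M)$ is the Riemann sum of the instantaneous estimation error over one sampling period. The more delicate point, which I expect to be the true crux, is that the iDRF of the finely discretized source $X^{(M)}[\cdot]$ converges to $D(f_s,R)$ as $M\to\infty$; this is where the hypothesis of almost surely Riemann integrable realizations is used, ensuring via a Pinsker-type approximation of mutual information by its finite-dimensional restrictions \cite{pinsker1964information,neuhoff2013information} that no reconstruction-relevant information about $X(\cdot)$ is lost upon passing to sufficiently fine time samples. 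Chaining these limits produces the parametric pair \eqref{eq:main_theorem}; the detailed decimation computation and the discretization limit are deferred to the appendix.
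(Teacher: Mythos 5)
Your proposal follows the same route as the paper's own proof: discretize $X(\cdot)$ and $Z(\cdot)$ at rate $Mf_s$, observe that $Y[\cdot]$ is a factor-$M$ decimation of the fine discretization of $Z(\cdot)$, apply Theorem~\ref{thm:discrete_decimation_rate_distortion}, identify $\lim_M J_M$ with $f_s\widetilde{S}_{X|Y}(f_s\phi)$ via a diagonal/off-diagonal split of the numerator, and invoke Riemann integrability together with \cite{neuhoff2013information} for the convergence of the iDRF of the discretized source to $D(f_s,R)$; the rate and MMSE bookkeeping also match.

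However, there is a concrete flaw in your treatment of the analytic crux, the vanishing of the cross terms. You deduce $S_{XZ}\in L_2(\mathbb R)$ from $|S_{XZ}(f)|^2\le S_X(f)S_Z(f)$ and the $L_1$ hypotheses; this inference is invalid, since a product of two $L_1$ functions need not be integrable. Concretely, take $S_X=S_Z=S_{XZ}=|f|^{-1/2}\mathbf 1_{[-1,1]}(f)$ (i.e. $Z=X$, no noise, all-pass filter, realizations analytic since the spectrum is compactly supported): all hypotheses of the theorem hold, yet $|S_{XZ}|^2=|f|^{-1}\mathbf 1_{[-1,1]}$ is not integrable. Nor can this be rescued by ``boundedness of the denominator'': the aliased sum $h(f)=\sum_k S_Z(f-f_sk)$ is only a.e.\ finite, not bounded, and it vanishes wherever the entire aliasing set does. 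The paper's Lemma~\ref{lem:proof_main} resolves exactly this point by dividing \emph{before} invoking square-integrability: the overlap argument is applied to $f_1(\phi)=S_{XZ}(f_s\phi)/\sqrt{h(f_s\phi)}$, for which $|f_1(\phi)|^2\le |S_{XZ}(f_s\phi)|^2/S_Z(f_s\phi)\le S_X(f_s\phi)$ is integrable with integral $\sigma_X^2/f_s$. A second, smaller gap: term-by-term decay of the overlap integrals at lag $Mn$ for each fixed $n\neq 0$ (your autocorrelation/Riemann--Lebesgue argument) does not by itself control the full series over $n\neq 0$, which has infinitely many terms at every $M$; the paper obtains uniform-in-$n$ control by choosing $M$ so that the $L_2$ tails of $f_1,f_2$ outside $\left[-\frac{M+1}{2},\frac{M+1}{2}\right]$ are below $\epsilon^2$ and splitting the index pairs into three classes handled by Cauchy--Schwarz and $|ab|\le |a|^2+|b|^2$. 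Both repairs are local, and with them your argument coincides with the paper's proof.
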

\begin{proof}
see Appendix \ref{sec:proof:main_result}. The basic idea of the proof is to 
approximate the continuous time processes $X\left(\cdot\right)$ and $Z\left(\cdot\right)$ by discrete time processes, and take the limit in the solution to the discrete problem given by Theorem \ref{thm:discrete_decimation_rate_distortion}.
\end{proof}

\subsection{Discussion \label{sub:Discussion:}}
We see that for a given sampling frequency $f_{s}$, the optimal solution has a similar form as in the stationary case \eqref{eq:DnT} and Theorem \ref{thm:[Dobrushin-and-Tsybakov]}, where the function $\widetilde{S}_{X|Y}(f)$ takes the role of $S_{X|Z}\left(f\right)$. That is, the minimal distortion is obtained by a MMSE term plus a term determined by reverse waterfilling over the function $\widetilde{S}_{X|Y}(f)$. By writing 
\begin{align*}
\mmse_{X|Y}(f_s) & = \sigma_X^2 - \int_{-\frac{f_s}{2}}^\frac{f_s}{2} \widetilde{S}_{X|Y}(f)df  \\
& = \int_{-\frac{f_s}{2}}^\frac{f_s}{2} \left( \sum_{k\in \mathbb Z} S_X(f-f_sk) - \widetilde{S}_{X|Y}(f) \right) df,
\end{align*}
we see that \eqref{eq:main_theorem} has a waterfilling interpretation similar to Fig.~\ref{fig:waterfilling}, which is given by Fig.~\ref{fig:main_result_waterfilling}.

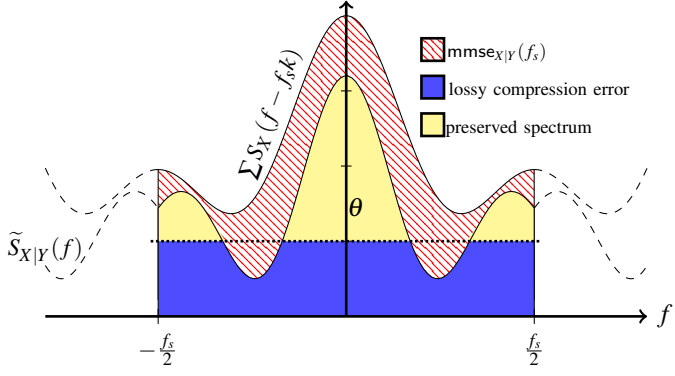
\begin{figure}
\begin{center}
\begin{tikzpicture}[scale=1]
 \fill[fill=red!50, pattern=north west lines, pattern color=red] (-2.5,0.22)--plot[domain=-2.5:2.5, samples=100] (\x, {(1.7+2.3*cos(20*3.14*\x)^2)*exp(-\x*\x/10)}) -- (2.5,0.22);	
 \draw (-2.5,0.22)--plot[domain=-2.5:2.5, samples=100] (\x, {(1.7+2.3*cos(20*3.14*\x)^2)*exp(-\x*\x/10)}) -- (2.5,0.22);	
 
  \draw[dashed] plot[domain=2.5:4, samples=100] (\x, {(1.7+2.3*cos(20*3.14*(\x-5))^2)*exp(-(\x-5)*(\x-5)/10)});	

  \draw[dashed] plot[domain=-4:-2.5, samples=100] (\x, {(1.7+2.3*cos(20*3.14*(\x+5))^2)*exp(-(\x+5)*(\x+5)/10)});

 \fill[fill=blue!70]  (-2.5,0)--plot[domain=-2.5:2.5, samples=100] (\x, {(0.6+2.6*cos(24*3.14*\x)^2)*exp(-\x*\x/8)}) -- (2.5,0);	

 \fill[fill=yellow!50] (-2.5,1) -- plot[domain=-2.5:-1.8, samples=100] (\x, {(0.6+2.6*cos(24*3.14*\x)^2)*exp(-\x*\x/8)}) --(-1.6,1);	

 \fill[fill=yellow!50] (1.6,1) -- plot[domain=1.8:2.5, samples=100] (\x, {(0.6+2.6*cos(24*3.14*\x)^2)*exp(-\x*\x/8)}) --(2.5,1);	

 \fill[fill=yellow!50] (-0.85,1) -- plot[domain=-0.75:0.75, samples=100] (\x, {(0.6+2.6*cos(24*3.14*\x)^2)*exp(-\x*\x/8)}) --(0.85,1);	

 \draw (-2.5,0)--plot[domain=-2.5:2.5, samples=100] (\x, {(0.6+2.6*cos(24*3.14*\x)^2)*exp(-\x*\x/8)})  -- (2.5,0);	
 
 \draw[dashed] plot[domain=-4:-2.5, samples=100] (\x, {(0.6+2.6*cos(24*3.14*(\x+5))^2)*exp(-(\x+5)*(\x+5)/8)});	

 \draw[dashed] plot[domain=2.5:4, samples=100] (\x, {(0.6+2.6*cos(24*3.14*(\x-5))^2)*exp(-(\x-5)*(\x-5)/8)});

  \foreach \y/\ytext in {1/,2/,3/}
    \draw[shift={(0,\y)}] (2pt,0pt) -- (-2pt,0pt) node[left] {$\ytext$};

\draw (2.5,-0.1)node[below] {\small $\frac{f_s}{2}$} -- (2.5,0.1) ;
\draw (-2.5,-0.1)node[below] {\small $-\frac{f_s}{2}$} -- (-2.5,0.1) ;

\draw [fill=red!50, line width=1pt, pattern=north west lines, pattern color=red] (1,3.4) rectangle  (1.3,3.7) node[left, xshift = 1.5cm, yshift = -0.2cm] {\scriptsize $\mmse_{X|Y}(f_s)$};

\draw [fill=blue!70, line width=1pt] (1,2.9) rectangle  (1.3,3.2) node[left, xshift = 2.6cm, yshift = -0.2cm, align = center] {\scriptsize lossy compression error};

\draw [fill=yellow!50, line width=1pt] (1,2.4) rectangle  (1.3,2.7) node[left, xshift=2.1cm, yshift = -0.2cm,align = left] {\scriptsize preserved spectrum};

\node at (0.13,1.45) {$\theta$};
\draw[densely dotted, line width=1pt] (-2.6,1) -- (2.6,1);
\draw[line width=1pt,->]  (-4,0)--(4,0) node[right] {$f$};
\draw[line width=1pt,->]  (0,0)--(0,4.2) node[right] {};
\node at (-1,2.7) [rotate=70] {\small $\sum S_X \left(f-f_sk\right)$};
\node at (-4,0.9) [rotate=0] {\small $\widetilde{S}_{X|Y}(f)$};
\end{tikzpicture}

\caption{Waterfilling interpretation of \eqref{eq:main_theorem}. The function $D(f_s,R)$ is the sum of the MMSE and the lossy compression error. \label{fig:main_result_waterfilling}}

\end{center}
\end{figure}

Comparing equations \eqref{eq:main_dist} and \eqref{eq:prop_separation}, we have the following interpretation of the second term in \eqref{eq:main_dist}:
\begin{prop} \label{prop:cyclo}
The (direct) distortion-rate function of the non-stationary process $\tilde{X}\left(\cdot\right)=\left\{ \mathbb{E}\left[X(t)|Y\left[\cdot\right]\right],\, t\in\mathbb{R}\right\}$ is given by
\begin{align*}
R\left(\theta\right) & =  \frac{1}{2}\int_{-\frac{f_{s}}{2}}^{\frac{f_{s}}{2}}\log^{+}\left[ \widetilde{S}_{X|Y}(f)  \theta^{-1}\right]df, \\
D_{\tilde{X}}\left(\theta\right) &  =  \int_{-\frac{f_{s}}{2}}^{\frac{f_{s}}{2}}\min\left\{ \widetilde{S}_{X|Y}(f),\theta\right\} df,
\end{align*}
where $\widetilde{S}_{X|Y}(f)$ is defined by \eqref{eq:J_def_single}.
\end{prop}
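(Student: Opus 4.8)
The plan is to obtain this result by subtracting the sampling MMSE from the full indirect distortion-rate function, exploiting the separation principle already established for quadratic Gaussian indirect source coding. The key observation is that the reconstruction process $\tilde{X}(\cdot)$ appearing in the statement is precisely the MMSE estimator process $\{\mathbb{E}[X(t)|Y[\cdot]],\,t\in\mathbb{R}\}$, so its direct DRF is exactly the ``compression'' component that the separation principle isolates from the indirect problem.

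First I would invoke Proposition~\ref{prop:For-an-indirect}, the separation principle, specialized to the single-branch sampling model. As already recorded in Section~\ref{sec:Indirect-Source-Coding}, this yields the decomposition
\[
D(f_s,R) = \mmse_{X|Y}(f_s) + D_{\tilde{X}}(R),
\]
where $D_{\tilde{X}}(R)$ is the (direct) DRF of $\tilde{X}(\cdot)$ under the same $L_2$ fidelity criterion \eqref{eq:dist_def} used throughout. Next I would substitute the explicit parametric form of $D(f_s,\theta)$ supplied by Theorem~\ref{thm:main_result}, namely \eqref{eq:main_dist}, together with its companion rate expression \eqref{eq:main_rate}. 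Since Theorem~\ref{thm:main_result} already writes the distortion as $\mmse_{X|Y}(f_s)+\int_{-f_s/2}^{f_s/2}\min\{\widetilde{S}_{X|Y}(f),\theta\}\,df$, subtracting $\mmse_{X|Y}(f_s)$ from both sides of the decomposition leaves exactly
\[
D_{\tilde{X}}(\theta) = \int_{-\frac{f_s}{2}}^{\frac{f_s}{2}}\min\{\widetilde{S}_{X|Y}(f),\theta\}\,df,
\]
while the parametric rate $R(\theta)$ is carried over verbatim from \eqref{eq:main_rate}, since the separation identity pairs the indirect and direct DRFs at one and the same value of $R$. This completes the derivation.

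The step I expect to require the most care is justifying that the separation principle of Proposition~\ref{prop:For-an-indirect} genuinely applies to $\tilde{X}(\cdot)$, which is a \emph{non-stationary} (in fact cyclostationary, with period $1/f_s$) continuous-time process. Proposition~\ref{prop:For-an-indirect} is stated for arbitrary vector-valued processes and asserted to hold under both discrete and continuous time indices, so the decomposition itself is not in doubt; the subtlety is to confirm that the direct DRF $D_{\tilde{X}}(R)$ entering it is defined with respect to precisely the time-averaged squared-error distortion and the same $T\to\infty$ limiting procedure as in the definition \eqref{eq:D_def} of the indirect DRF. Once this is checked, the proposition is an immediate algebraic consequence of Theorem~\ref{thm:main_result} and the separation identity, with no further reverse-waterfilling computation needed --- indeed, the point is that the DRF of the non-stationary estimator, which could not be computed directly in Section~\ref{sec:Indirect-Source-Coding}, is recovered ``for free'' as the difference of two quantities already in hand.
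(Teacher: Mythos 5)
Your proposal is correct and follows essentially the same route as the paper: the paper obtains this proposition precisely by comparing \eqref{eq:main_dist} with the separation identity \eqref{eq:prop_separation}, i.e., by applying Proposition~\ref{prop:For-an-indirect} to write $D(f_s,R)=\mmse_{X|Y}(f_s)+D_{\tilde{X}}(R)$ and then cancelling the $\mmse_{X|Y}(f_s)$ term against the parametric form of Theorem~\ref{thm:main_result}, exactly as you do. Your cautionary remark about the estimator process being non-stationary and the need for the direct DRF to be defined under the same time-averaged quadratic distortion is the same implicit justification the paper relies on (and later develops in the cited work on cyclostationary processes).
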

The process $\tilde{X}(\cdot)$ is in fact a \emph{cyclo-stationary} process. A deeper treatment of the DRF of such processes is provided in \cite{KipnisCyclo}, where the idea behind the proof of Theorem~\ref{thm:main_result} is extended to derive a general form for the DRF of such processes. 


The function $\widetilde{S}_{X|Y}(f)$ depends on the sampling frequency, the filter $H(f)$ and the spectral densities $S_X(f)$ and $S_\eta(f)$, but is independent of $R$. If we fix $R$ and consider a change in $\widetilde{S}_{X|Y}(f)$ such that 
\[
\widetilde{C}_{X|Y} \triangleq \int_{-\frac{f_s}{2}}^\frac{f_s}{2} \widetilde{S}_{X|Y}(f)df
\]
is increased, then from \eqref{eq:main_rate} we see that $\theta$ also increases to maintain the same fixed rate $R$. On the other hand, the expression for $D\left(f_s,R\right)$ in \eqref{eq:main_dist} exhibits a negative linear dependency on $\widetilde{C}_{X|Y}$. In this interplay between the two terms in \eqref{eq:main_dist}, the negative linear dependency in $\widetilde{S}_{X|Y}(f)$ is stronger then a logarithmic dependency of $\theta$ in $\widetilde{C}_{X|Y}$ and the distortion reduces with an increment in $\widetilde{C}_{X|Y}$. The exact behavior is obtained by taking the functional derivative of $D\left(f_s,R\right)$ with respect to $\widetilde{S}_{X|Y}(f)$ at the point $f\in \left(-f_s/2,f_s/2\right)$, which is non-positive. A simple analogue for that dependency can be seen in Example~\ref{ex:simple_scalar_case}, where the distortion in \eqref{eq:D_simple} is a non-increasing function of $C_{U|V}$. \par
We summarize the above in the following proposition:
\begin{prop} \label{prop:min_max}
For a fixed $R\geq 0$, minimizing $D\left(f_s,R\right)$ is equivalent to maximizing 
\[
\int_{-\frac{f_s}{2}}^{\frac{f_s}{2}} \widetilde{S}_{X|Y}(f) df,
\]
where $\widetilde{S}_{X|Y}(f)$ is defined by \eqref{eq:J_def_single}.
\end{prop}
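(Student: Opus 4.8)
The plan is to read off from the parametric characterization \eqref{eq:main_theorem} that, at a \emph{fixed} rate $R$, the distortion $D(f_s,R)$ is a \emph{monotone} functional of the profile $\widetilde{S}_{X|Y}(\cdot)$, so that enlarging this profile pointwise — and in particular enlarging its integral $\widetilde{C}_{X|Y}$ — can only decrease $D$. First I would eliminate the rate constraint by treating $\theta$ as an implicit function of $\widetilde{S}_{X|Y}$ and $R$: fixing $R$ in \eqref{eq:main_rate} determines $\theta$, and the second line of \eqref{eq:main_dist} then reads $D=\sigma_X^2-\int_B[\widetilde{S}_{X|Y}(f)-\theta]^+df$, where $B=(-f_s/2,f_s/2)$ and the active (above-water) set is $A=\{f\in B:\widetilde{S}_{X|Y}(f)>\theta\}$. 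Minimizing $D$ is thus the same as maximizing $\int_B[\widetilde{S}_{X|Y}-\theta]^+df$, and the task is to show this is aligned with maximizing $\int_B\widetilde{S}_{X|Y}$.

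The core step is a functional-derivative computation at fixed $R$. I would perturb $\widetilde{S}_{X|Y}\to\widetilde{S}_{X|Y}+\epsilon g$ with $g\geq0$ and let $\delta\theta$ be the shift that keeps $R$ fixed. Since $\log^+$ reduces to $\log$ on $A$, differentiating the constraint gives $\frac{1}{2}\int_A\left(\frac{\epsilon g}{\widetilde{S}_{X|Y}}-\frac{\delta\theta}{\theta}\right)df=0$, hence $\delta\theta=\frac{\epsilon\theta}{\mu(A)}\int_A\frac{g}{\widetilde{S}_{X|Y}}df$. Substituting into $\delta D=-\int_A(\epsilon g-\delta\theta)\,df$ yields $\delta D=-\epsilon\int_A g(f)\bigl(1-\theta/\widetilde{S}_{X|Y}(f)\bigr)df$. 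Because $\widetilde{S}_{X|Y}>\theta>0$ on $A$, the integrand is non-negative, so $\delta D\leq0$; equivalently the functional derivative $\delta D/\delta\widetilde{S}_{X|Y}(f)=-\bigl(1-\theta/\widetilde{S}_{X|Y}(f)\bigr)\mathbf 1_A(f)$ is non-positive, exactly as asserted in the discussion preceding the statement. Integrating this derivative along the linear path $\widetilde{S}_t=\widetilde{S}^{(2)}_{X|Y}+t(\widetilde{S}^{(1)}_{X|Y}-\widetilde{S}^{(2)}_{X|Y})$ upgrades the infinitesimal inequality to a global monotonicity statement: if $\widetilde{S}^{(1)}_{X|Y}\geq\widetilde{S}^{(2)}_{X|Y}$ almost everywhere, then $D^{(1)}(f_s,R)\leq D^{(2)}(f_s,R)$.

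To close the equivalence I would use the structure of the achievable profiles. From \eqref{eq:J_def_single} the value of $\widetilde{S}_{X|Y}(f)$ on $B$ is controlled \emph{independently} across frequencies by the restriction of $H$ to each aliasing set $f-f_s\mathbb Z$, so the family of achievable profiles admits a pointwise-maximal element, namely $\widetilde{S}^\star_{X|Y}(f)=\sup_k S_X^2(f-f_sk)/S_{X+\eta}(f-f_sk)$, attained by the anti-aliasing filter of Theorem~\ref{thm:mmse_opt_single}. This profile dominates every achievable profile almost everywhere, so by the monotonicity above it minimizes $D(f_s,R)$; being the pointwise supremum, it is also the almost-everywhere unique maximizer of $\widetilde{C}_{X|Y}=\int_B\widetilde{S}_{X|Y}df$. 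Conversely, any maximizer of $\widetilde{C}_{X|Y}$ cannot exceed $\widetilde{S}^\star_{X|Y}$ anywhere, hence must equal it a.e., and therefore also minimizes $D$. Thus the two optimization problems share the same solution set, which is precisely the assertion of the proposition.

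The main obstacle is rigor in the functional-derivative step: the active set $A$ itself moves under the perturbation, so I must argue that the contribution of $\partial A$ is of second order — using that $[\widetilde{S}_{X|Y}-\theta]^+$ and $[\log(\widetilde{S}_{X|Y}/\theta)]^+$ are continuously differentiable away from, and vanish to first order at, the water line — and that $\theta$ depends Lipschitz-continuously on $\widetilde{S}_{X|Y}$ at fixed $R$, which legitimizes the integration along the linear path. A secondary subtlety is the reverse implication (every $D$-minimizer maximizes $\widetilde{C}_{X|Y}$), which I would settle by the almost-everywhere uniqueness of the pointwise-supremum profile noted above rather than by a separate variational argument.
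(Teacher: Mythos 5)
Your core argument coincides with the paper's own justification: the paper proves Proposition~\ref{prop:min_max} precisely by the observation (made in the Discussion subsection immediately preceding it) that the functional derivative of $D(f_s,R)$ with respect to $\widetilde{S}_{X|Y}(f)$ at fixed $R$ is non-positive, and your computation $\delta D=-\epsilon\int_A g\left(1-\theta/\widetilde{S}_{X|Y}\right)df$, together with the constraint differentiation $\delta\theta=\frac{\epsilon\theta}{\mu(A)}\int_A g/\widetilde{S}_{X|Y}\,df$, is a correct and more rigorous rendering of that sketch. Your completion of the argument---path integration to obtain global monotonicity, then invoking the pointwise-maximal achievable profile $\widetilde{S}^\star_{X|Y}(f)=\sup_k S_X^2(f-f_sk)/S_{X+\eta}(f-f_sk)$ from Theorem~\ref{thm:mmse_opt_single} (cf.\ \eqref{eq:holder_ineq})---correctly establishes the direction the paper actually uses in Proposition~\ref{prop:opt_single} and Theorem~\ref{thm:optimal_linear}: the filter that maximizes $\int_{-f_s/2}^{f_s/2}\widetilde{S}_{X|Y}(f)\,df$ minimizes $D(f_s,R)$ for every $R\geq 0$.

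The one genuine flaw is your closing claim that the two optimization problems ``share the same solution set.'' That converse inclusion fails, because $D(f_s,R)$ is insensitive to the profile strictly below the water line: if $g\geq 0$ is supported on frequencies where $\widetilde{S}_{X|Y}+\epsilon g$ remains below $\theta$, then neither \eqref{eq:main_rate} nor \eqref{eq:main_dist} changes (your own derivative formula gives $\delta D=0$ there). Consequently, a filter that passes suboptimal frequencies in exactly those aliasing sets $f+f_s\mathbb Z$ whose maximal ratio $S_X^2/S_{X+\eta}$ lies entirely below the optimal water level still achieves the minimal $D$, while yielding a strictly smaller value of $\int\widetilde{S}_{X|Y}$; so a minimizer of $D$ need not maximize the integral. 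Your proposed remedy---the a.e.\ uniqueness of the integral-maximizing profile---addresses maximizers of the integral, not minimizers of $D$, and cannot close this gap. The proposition should therefore be read, as the paper uses it, as the one-directional statement that maximizing $\int\widetilde{S}_{X|Y}$ suffices to minimize $D$; under that reading your proof is complete and matches the paper's.
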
 

This says that a larger $\widetilde{S}_{X|Y}(f)$ accounts for more information available about the source through the samples $Y[\cdot]$, and motivates us to bound $\widetilde{S}_{X|Y}(f)$. Since $\widetilde{S}_{X|Y}(f)$ can be written as
\[
\widetilde{S}_{X|Y}(f) =\frac{\sum_{k\in\mathbb{Z}}S_{X|Z}\left(f-f_{s}k\right)S_{Z}\left(f-f_{s}k\right)}{\sum_{k\in\mathbb{Z}}S_{Z}\left(f-f_{s}k\right)},
\]
the following holds for almost every $f\in\left(-\frac{f_{s}}{2},\frac{f_{s}}{2}\right)$,
\begin{align}
\widetilde{S}_{X|Y}(f) & \leq\sup_{k}S_{X|Z}\left(f-f_{s}k\right) \nonumber \\
&=\sup_k \frac{S_X^2(f-f_sk)|H(f-f_sk)|^2}{S_{X+\eta}(f-f_sk)|H(f-f_sk)|^2} \nonumber \\
&=\sup_k \frac{S_X^2(f-f_sk)}{S_{X+\eta}(f-f_sk)} ,\label{eq:holder_ineq}
\end{align}
with equality if and only if for each $k\in\mathbb{Z}$, either $S_{X|Z}\left(f-f_{s}k\right)=\sup_{k}S_{X|Z}\left(f-f_{s}k\right)$ or $S_{Z}\left(f-f_{s}k\right)=0$. 
Thus, we have the following proposition.
\begin{prop}\label{prop:less_than_sup}
 For all $f_{s}>0$ and $R\geq0$, the indirect distortion-rate function of $X(\cdot)$ given $Y[\cdot]$ satisfies
  \[
D\left(f_{s},R\right)\geq D^*\left(f_s,R\right),
 \]
where $D^*\left(f_s,R\right)$ is the distortion-rate function of the Gaussian stationary process with PSD 
\[
\widetilde{S}^*(f)= \begin{cases} 
\sup_k \frac{S_X^2(f-f_sk)}{S_{X+\eta}(f-f_sk)}, 
& f\in \left(-\frac{f_s}{2},  \frac{f_s}{2}\right), \\
0, & \textrm{otherwise}.
\end{cases}
\]
\end{prop}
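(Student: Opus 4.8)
The plan is to compare the two reverse-waterfilling curves directly, using the pointwise bound \eqref{eq:holder_ineq} together with the closed form of $D(f_s,R)$ from Theorem~\ref{thm:main_result}. Write $g(f)\triangleq\widetilde{S}_{X|Y}(f)$ and $G(f)\triangleq\widetilde{S}^*(f)$, so that \eqref{eq:holder_ineq} reads $g(f)\le G(f)$ for almost every $f\in(-f_s/2,f_s/2)$. Since $G$ is supported on $(-f_s/2,f_s/2)$, the function $D^*(f_s,R)$ is the SKP reverse waterfilling over $G$, namely $R^*(\theta^*)=\tfrac12\int_{-f_s/2}^{f_s/2}\log^+[G(f)/\theta^*]df$ and $D^*(\theta^*)=\int_{-f_s/2}^{f_s/2}\min\{G(f),\theta^*\}df$, whereas $D(f_s,R)$ is given by \eqref{eq:main_rate}--\eqref{eq:main_dist}, i.e. reverse waterfilling over $g$ with water level $\theta$ plus the constant offset $\mmse_{X|Y}(f_s)$. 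For a fixed rate $R$ I denote by $\theta(R)$ and $\theta^*(R)$ the corresponding water levels, obtained by inverting $R(f_s,\cdot)$ and $R^*(\cdot)$.

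The first step is to compare the two water levels at a common rate. Because $g\le G$ pointwise, $\log^+[g(f)/\theta]\le\log^+[G(f)/\theta]$ for every $\theta$, so the rate functional over $G$ dominates that over $g$ at every level. Imposing $R^*(\theta^*(R))=R(f_s,\theta(R))$ and using that the rate functional is non-increasing in the level forces $\theta^*(R)\ge\theta(R)$ for all $R\ge0$. The second step is to differentiate the two distortion curves in $R$: a routine computation from the reverse-waterfilling relations yields $\frac{dD(f_s,R)}{dR}=-2\ln2\,\theta(R)$ and $\frac{dD^*(f_s,R)}{dR}=-2\ln2\,\theta^*(R)$, where the offset $\mmse_{X|Y}(f_s)$ in \eqref{eq:main_dist} drops out because it does not depend on $R$.

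It then suffices to track the gap $\delta(R)\triangleq D(f_s,R)-D^*(f_s,R)$. Its derivative is $\delta'(R)=2\ln2\,[\theta^*(R)-\theta(R)]\ge0$ by the first step, so $\delta$ is non-decreasing in $R$. At $R=0$ each water level exceeds the respective essential supremum, so that $D(f_s,0)=\mmse_{X|Y}(f_s)+\int_{-f_s/2}^{f_s/2}g\,df=\sigma_X^2$ and $D^*(f_s,0)=\int_{-f_s/2}^{f_s/2}G\,df=\sigma_X^2-\mmse^\star_{X|Y}(f_s)$, whence $\delta(0)=\mmse^\star_{X|Y}(f_s)\ge0$. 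Combining $\delta(0)\ge0$ with $\delta'\ge0$ gives $\delta(R)\ge0$ for all $R\ge0$, which is the claim; in fact the argument yields the slightly stronger bound $D(f_s,R)\ge D^*(f_s,R)+\mmse^\star_{X|Y}(f_s)$.

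The main obstacle will be a regularity matter rather than a structural one: justifying the differentiation of the parametric curves in $R$ and the inversion $R\mapsto\theta(R)$. Both distortion-rate curves are convex and non-increasing, hence absolutely continuous, so the derivative formulas hold almost everywhere and the fundamental theorem of calculus applies to recover $\delta$ from $\delta'$. One must also treat the degenerate situations in which $g$ or $G$ is constant on a set of positive measure, so the level is not uniquely pinned down at some rates, and the rates at which the water level crosses the essential supremum; these are standard and do not affect the sign of $\delta'$, so the monotonicity argument goes through unchanged.
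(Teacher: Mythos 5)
Your proof is correct, and it takes a genuinely different route from the paper's. The paper obtains Proposition~\ref{prop:less_than_sup} as an immediate corollary of two ingredients it has already set up: the pointwise bound \eqref{eq:holder_ineq} (your starting point as well) and Proposition~\ref{prop:min_max}, i.e.\ the claim that, at fixed $R$ and with $\sigma_X^2$ held fixed, the waterfilling distortion $\sigma_{X}^{2}-\int_{-f_s/2}^{f_s/2}\bigl[\widetilde{S}_{X|Y}(f)-\theta\bigr]^{+}df$ is non-increasing under a pointwise increase of $\widetilde{S}_{X|Y}$ --- a fact the paper justifies only by asserting that the functional derivative of $D(f_s,R)$ with respect to $\widetilde{S}_{X|Y}(f)$ is non-positive. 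Substituting the larger function $\widetilde{S}^*$ then yields $D(f_s,R)\geq \mmse^\star_{X|Y}(f_s)+D^*(f_s,R)\geq D^*(f_s,R)$. You instead prove the needed monotonicity directly for the specific pair $\bigl(\widetilde{S}_{X|Y},\widetilde{S}^*\bigr)$: water-level comparison at equal rate ($\theta^*(R)\geq\theta(R)$, which for $R>0$ follows from strict monotonicity of the rate functional wherever it is positive, so the flat-region caveat you raise is genuinely harmless), the slope identity $dD/dR=-2\ln 2\,\theta(R)$ (your computation is right, and the constant offset $\mmse_{X|Y}(f_s)$ indeed drops out), and integration of the gap from $R=0$, where $\delta(0)=\mmse^\star_{X|Y}(f_s)\geq 0$ by Remark (i) after Theorem~\ref{thm:mmse_opt_single}. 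The regularity issues you flag (convexity, hence local Lipschitz continuity and a.e.\ differentiability, plus uniqueness of the level for $R>0$) are exactly the right ones. Two remarks: (i) your ``slightly stronger bound'' $D(f_s,R)\geq D^*(f_s,R)+\mmse^\star_{X|Y}(f_s)$ is also implicit in the paper's route, since waterfilling over $\widetilde{S}^*$ with $\sigma_X^2$ held fixed equals precisely $\mmse^\star_{X|Y}(f_s)+D^*(f_s,R)$; (ii) what your approach buys is a self-contained, rigorous argument where the paper only gestures (the functional-derivative computation is never carried out), at the cost of being tailored to this one comparison rather than establishing the general monotonicity principle that the paper reuses elsewhere (Proposition~\ref{prop:opt_single}, Theorems~\ref{thm:opt_filters_bank} and~\ref{thm:optimal_linear}).
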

Note that the last expression is independent of the pre-sampling filter $H(f)$. Therefore, Proposition \ref{prop:less_than_sup} describes a lower bound which depends only on the statistics of the source and the noise. We will see in Theorem~\ref{thm:filterbanks_main} below that $D^*\left(f_s,R\right)$ is attainable for any given $f_s$ if we are allowed to choose the pre-sampling filter $H(f)$.\\

It is interesting to observe how Theorem \ref{thm:main_result} agrees with
the properties of $D(f_s,R)$ in the two special cases illustrated in Fig.~\ref{fig:diagram_scheme}.
\begin{enumerate}
\item[(i)] For $f_{s}$ above the Nyquist frequency of $Z(\cdot)$, $S_{Z}\left(f-f_{s}k\right)=0$ for any $k\neq0$. In this case the conditions for equality in \eqref{eq:holder_ineq} hold and 
\[
\widetilde{S}_{X|Y}(f)=\sup_k \frac{S_X^2(f-f_sk)}{S_{X+\eta}(f-f_sk)}=\frac{S_X^2(f)}{S_{X+\eta}(f)},
\] 
which means that \eqref{eq:main_theorem} is equivalent to \eqref{eq:DnT}. 
\item[(ii)]
If we take $R$ to infinity, then $\theta$ goes to zero and  \eqref{eq:main_dist} reduces to \eqref{eq:mmse_single_theorem}.
\end{enumerate}

In view of the above we see that Theorem \ref{thm:main_result} subsumes the two classical problems of finding $\mmse_{X|Y}(f_s)$ and $D_{X|Z}(R)$.

\subsection{Examples}
In Examples \ref{ex:Rectangular-spectrum} and \ref{ex:Non-monotonicity-in-the} below we derive a single letter expression for the function $D\left(f_s,R\right)$ under a given PSD $S_X(f)$, zero noise $S_\eta(f) \equiv 0$ and unit pre-sampling filter $|H(f)|\equiv 1$, i.e. when $S_X(f)=S_Z(f) $.
\setlength{\parindent}{0mm}
\begin{example}[rectangular spectrum] \label{ex:Rectangular-spectrum}

\setlength{\parindent}{3mm}
Let the spectrum of the source $X\left(\cdot\right)$ be
\[
S_{X}\left(f\right)=\begin{cases}
\frac{\sigma^{2}}{2W} & |f|\leq W,\\
0 & \text{otherwise},
\end{cases}
\]
for some $W>0$ and $\sigma>0$. In addition, assume that the noise is constant over the band $|f|\leq W$ with intensity $\sigma_\eta^2 = \gamma^{-1} \sigma_X^2 =\gamma^{-1} \sigma^2/(2W)$, where $\gamma>0$ can be seen as the SNR. For all frequencies $f\in\left(-f_s/2,f_s/2\right)$,
\begin{align*}
\widetilde{S}_{X|Y}(f) & = & \frac{\sum_{k\in\mathbb{Z}}S_{X}^{2}\left(f-f_s k\right)}{\sum_{k\in\mathbb{Z}}S_{X+\eta}\left(f-f_{s}k\right)}
  =  \frac{\sigma^{2}}{2W}\begin{cases}
\frac{\gamma}{1+\gamma} & \left|f\right|<W,\\
0 & \left|f\right|\geq W.
\end{cases}
\end{align*}
By Theorem \ref{thm:main_result} we have 
\begin{align*}
R\left(f_s,\theta\right) & =
 \begin{cases}
\frac{f_s}{2} \log \left( \frac{\sigma^2 \gamma } {2W\theta(1+\gamma)}\right) & 0\leq\frac{\theta}{\sigma^2 }(1+\gamma^{-1})<\frac{f_s}{2W}<1,\\
W  \log\left(\frac{\sigma^2 \gamma} {2W \theta(1+\gamma)}\right) & 0\leq\frac{\theta}{\sigma^2} (1+\gamma^{-1})<1\leq\frac{f_s}{2W},\\
0 & \text{otherwise},
\end{cases}
\end{align*}
and
\begin{align*}
D\left(f_s,\theta\right) & = \sigma^2  \begin{cases}
\left[1-\frac{f_{s}}{2W}\right]^{+}+\frac{\theta f_s}{\sigma^{2}} & \frac{\theta}{\sigma^{2}}\leq\min\left\{ \frac{f_s \gamma}{2W(1+\gamma)},1\right\} ,\\
1 & \text{otherwise}.
\end{cases}
\end{align*}
This can be written in a single expression as 
\begin{align}
D\left(f_{s},R\right) & = \sigma^{2}  \begin{cases}
1-\frac{f_{s}}{2W} + \frac{f_s}{2W} \frac{\gamma}{1+\gamma } 2^{\frac{-2R}{f_s}}    & \frac{f_{s}}{2W}<1,\\
\frac{1}{1+\gamma} + \frac{\gamma }{1+\gamma} 2^{-\frac{R}{W}}  & \frac{f_s}{2W}\geq1.
\end{cases}\label{eq:DistRate_square_spectrum-1}
\end{align}
\end{example} 
\par
Expression \eqref{eq:DistRate_square_spectrum-1} has a very intuitive structure: for frequencies
below the Nyquist frequency of the signal, the distortion as a function of the rate increases by a constant factor due to the error as a result of non-optimal sampling. This factor completely vanishes for $f_s$ greater than the Nyquist frequency of the signal, in which case $D\left(f_s,R\right)$
equals the iDRF of the process $X(\cdot)$
given $Z(\cdot) = X(\cdot)+\eta(\cdot)$, which by Theorem~\ref{thm:[Dobrushin-and-Tsybakov]} equals
\[
D_{X|Z}(R)=\mmse_{X|Z} + C_{X|Z}2^{-2R} = \frac{ \sigma^2}{1+\gamma}+  \frac{ \sigma^2  \gamma} {1+\gamma} 2^{-R/W}.
\]
This is depicted in Fig. \ref{fig:rect_spectrum} for $\gamma = 5$ and $\gamma\rightarrow \infty$.

\begin{figure}
\begin{center}
\begin{tikzpicture}
\node at (0,0) { \includegraphics[trim=0cm 0cm 0cm 0cm, clip=true, scale=0.47]{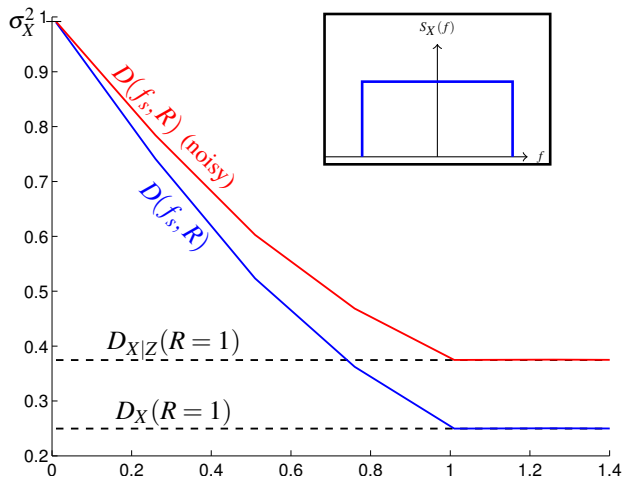}};
\draw[line width=1pt] (0,1) rectangle (3,3);
\draw[->] (0,1.1) -- (2.7,1.1) node[right] {\tiny $f$};
\draw[->] (1.5,1.1) -- (1.5,2.6) node[above] {\tiny $S_X(f)$};
\draw[line width=1pt, color=blue] (0.5,1.1) -- (0.5,2.1) -- (2.5,2.1) -- (2.5,1.1);
\draw (-3.7,2.9) node[left] {\small $\sigma_X^2$} --  (-3.5,2.9) ;
\node[rotate=-45] at (-2,1.5) {\color{red} $D(f_s,R)$ {\small (noisy)} };
\node[rotate=-45] at (-2,0.3) {\color{blue} $D(f_s,R)$ };
\node at (-2,-2.3) { $D_X(R=1)$ };
\node at (-2,-1.4) { $D_{X|Z}(R=1)$ };
\end{tikzpicture}
\caption{\label{fig:rect_spectrum} Distortion as a function of sampling frequency
$f_{s}$ and source coding rate $R=1\left[bit/sec\right]$ for a process with rectangular PSD and bandwidth $0.5$. The lower curve corresponds to zero noise and the upper curve corresponds to $S_\eta(f) = 0.2 S_X(f)$, where $\left|H(f)\right|\equiv 1$ in both cases. The dashed line represents the iDRF of the source given the pre-sampled process $Z(\cdot)$, which coincides with $D(f_s,R)$ for $f_s$ above the Nyquist frequency.}

\end{center}
\end{figure}

\begin{example}
\label{ex:Non-monotonicity-in-the}

\setlength{\parindent}{3mm}
The following example shows that the distortion-rate function is not necessarily monotonically decreasing in the sampling frequency. Here $S_{X}\left(f\right)$ has the band-pass structure 
\begin{equation}
S_{X}\left(f\right)=\begin{cases}
\frac{\sigma^{2}}{2} & 1\leq|f|\leq 2,\\
0 & \text{otherwise},
\end{cases}\label{eq:example_non_monotone}
\end{equation}
and we assume zero noise, i.e. $S_X(f)=S_Z(f)$.
We again obtain that for any $f\in\left(-f_s/2,f_s/2\right)$, $\widetilde{S}_{X|Y}(f)$ is either $\frac{\sigma^{2}}{2}$ or $0$. Thus, in order to find $D\left(f_s,R\right)$, all we need to know are for which values of $f\in\left(-f_s/2,f_s/2\right)$ the function $\widetilde{S}_{X|Y}(f)$ vanishes. This leads to
\[
D\left(f_{s},R\right)=\sigma^{2}\begin{cases}
2^{-R} & 4 \leq f_s,\\
1-\frac{f_s-2}{2} \left(1-2^{-\frac{2R}{f_s-2}}\right) & 3\leq f_s<4,\\
1-\frac{4-f_s}{2}\left(1-2^{-\frac{2R}{4-f_s}} \right) & 2\leq f_s<3,\\
1-(f_s-1)\left(1-2^{-\frac{R}{f_s-1}}\right) & 1.5\leq f_{s}<2, \\
1-(2-f_s) \left(1-2^{-\frac{R}{2-f_s}}\right) & 4/3 \leq f_{s}<1.5, \\
1-\frac{f_s}{2}\left(1-2^{-\frac{2R}{f_s}}\right) & 0\leq f_{s}<4/3, 
\end{cases}
\]
which is depicted in Fig. \ref{fig:non_monotone} for two different values of $R$.
\end{example}

\begin{figure}

\begin{center}

\begin{tikzpicture}

\node at (0,0) { \includegraphics[trim=0cm 0cm 0cm 0cm, clip=true, scale=0.47]{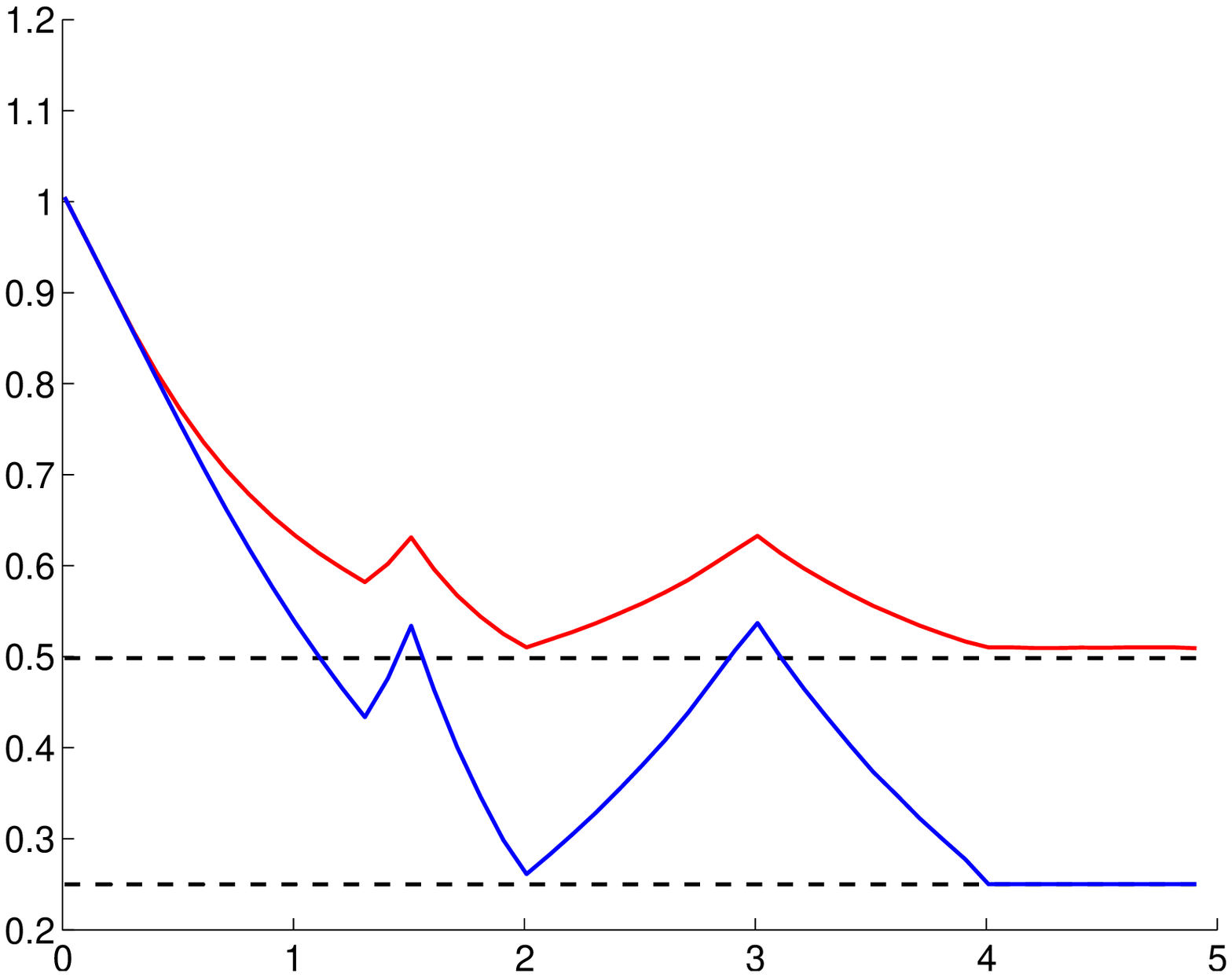}};
\draw[line width=1pt] (0,1) rectangle (3,3);
\draw[->] (0,1.1) -- (2.7,1.1) node[right] {\tiny $f$};
\draw[->] (1.5,1.1) -- (1.5,2.6) node[above] {\tiny $S_X(f)$};
\draw[line width=1pt, color=blue] (0.5,1.1) -- (0.5,2.1) -- (1,2.1)--(1,1.1) -- (2,1.1)--(2,2.1)--(2.5,2.1)--(2.5,1.1);
\draw (-3.7,1.8) node[left] {\small $\sigma_X^2$} --  (-3.5,1.8) ;

\draw[dotted] (2.35,-3)  node[below] {\small $f_{Nyq}$} -- (2.35,-1.1);
\draw[dotted]  (-0.6,-3) node[below] {\small $f_\mathrm{Landau}$} -- (-0.6,-1.1);

\node[rotate=0] at (3,-0.7) {\color{red} \small $D(f_s,R=1)$  };

\node[rotate=0] at (3.05,-2.3) {\color{blue} \small $D(f_s,R=2)$ };

\node at (-2.8,-2.4) {\small $D_X(R=2)$ };
\node at (-2.8,-0.95) {\small $D_X(R=1)$ };

\end{tikzpicture}
\caption{The function $D\left(f_{s},R\right)$ at two values of $R$ for the process with spectrum given in the small frame. Unlike in this example, single branch uniform sampling in general does not achieve $D(R)$ for $f_s \leq f_{Nyq}$. 
\label{fig:non_monotone}}

\end{center}
\end{figure}
 
\subsection{Optimal pre-sampling filter\label{subsec:Optimal-pre-Sampling-Filter}}
An optimization similar to the one carried out in Subsection~\ref{subsec:mmse_optimal} over the pre-sampling filter $H(f)$ can be performed over the function $\widetilde{S}_{X|Y}(f)$ in order to minimize the function $D(f_s,R)$. By Proposition~\ref{prop:min_max}, minimizing distortion for a given $f_{s}$ and $R$ is equivalent to maximizing $\widetilde{S}_{X|Y}(f)$ for every $f\in\left(-f_s/2,f_s/2\right)$ separately. But recall that the optimal pre-sampling filter $H^\star(f)$ that maximizes $\widetilde{S}_{X|Y}(f)$ was already given in Theorem \ref{thm:mmse_opt_single} in terms of the maximal aliasing free set associated with $\frac{S_X^2(f)}{S_{X+\eta}(f)}$. This leads us to the following conclusion:
\begin{prop}
\label{prop:opt_single} Given $f_s>0$, the optimal pre-sampling filter $H^\star(f)$ that minimizes $D(f_s,R)$, for all $R\geq 0$, is given by 
\[
H^{\star}\left(f\right)=\begin{cases}
1 & f\in F^\star,\\
0 & \text{otherwise},
\end{cases}
\]
where $F^\star \in AF(f_s)$ and satisfies 
\[
\int_{F^\star} \frac{S_X^2(f)}{S_{X+\eta}(f)}df= \int_{-\frac{1}{2}}^\frac{1}{2} \sup_k \frac{S_X^2(f-f_sk)}{S_{X+\eta}(f-f_sk)}df.
\]
The maximal value of $\widetilde{S}_{X|Y}(f)$ obtained this way is
\[
\widetilde{S}_{X|Y}^{\star} \left(f\right)  =  \sup_k \frac{S_X^2(f-f_sk)}{S_{X+\eta}(f-f_sk)},
\]
and the distortion-rate function at a given sampling frequency is given by
\begin{subequations}
\label{eq:optimal}
\begin{align}
R^\star \left(\theta\right) & =   \frac{1}{2}\int_{-\frac{f_s}{2}}^{\frac{f_s}{2}} \log^{+}\left[\widetilde{S}_{X|Y}^\star(f)/ \theta\right]df\\
& =  \frac{1}{2}\int_{F^\star}\log^{+}\left[\frac{S_X^2(f)}{S_{X+\eta}(f)} \theta^{-1}\right]df, \nonumber
\label{eq:filter_rate}
\end{align}
\begin{align}
D^{\star}\left(f_{s},\theta\right) & =  \sigma_{X}^{2}-\int_{-\frac{f_s}{2}}^{\frac{f_s}{2}}\left[\widetilde{S}_{X|Y}^\star(f)-\theta\right]^{+}df\\
& =  \sigma_{X}^{2}-\int_{F^\star}\left[\frac{S_X^2(f)}{S_{X+\eta}(f)}-\theta\right]^{+}df \nonumber.\label{eq:filter_dist}
\end{align}
\end{subequations}
\end{prop}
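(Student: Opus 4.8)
The plan is to combine the pointwise characterization of the optimal filter from Theorem~\ref{thm:mmse_opt_single} with the monotone dependence of the reverse-waterfilling distortion on the density $\widetilde{S}_{X|Y}(f)$. By Theorem~\ref{thm:main_result}, for a fixed $f_s$ the entire curve $D(f_s,R)$ is generated by reverse waterfilling over the single function $\widetilde{S}_{X|Y}(f)$, $f\in(-f_s/2,f_s/2)$, together with the filter-independent constant $\sigma_X^2$. It therefore suffices to show that $H^\star$ drives $\widetilde{S}_{X|Y}(f)$ to its largest possible value at every frequency, and that enlarging this function pointwise can only decrease $D(f_s,R)$, uniformly in $R$.

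First I would invoke Theorem~\ref{thm:mmse_opt_single}: the indicator $H^\star=\mathbf 1_{F^\star}$ of the maximal aliasing-free set $F^\star$ associated with $S_X^2/S_{X+\eta}$ attains, for almost every $f\in(-f_s/2,f_s/2)$, the pointwise supremum
\[
\widetilde{S}_{X|Y}(f)=\widetilde{S}^\star_{X|Y}(f)=\sup_{k}\frac{S_X^2(f-f_sk)}{S_{X+\eta}(f-f_sk)},
\]
and that no choice of $H$ can exceed this value at any frequency; this is precisely the upper bound on $\widetilde{S}_{X|Y}(f)$ established in \eqref{eq:holder_ineq}. Crucially, neither $H^\star$ nor $F^\star$ depends on $R$, so the same filter is a candidate minimizer for every rate simultaneously.

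The key step, which is the rigorous content behind Proposition~\ref{prop:min_max}, is the monotonicity of the waterfilling distortion in the density: if $\widetilde{S}^\star_{X|Y}(f)\ge \widetilde{S}_{X|Y}(f)$ for all $f$, then $D^\star(f_s,R)\le D(f_s,R)$ for every $R\ge 0$. I would argue this by differentiating the parametric curve \eqref{eq:main_theorem} along $\theta$: the distortion-rate slope equals a negative multiple of the waterlevel, $dD/dR=-2\theta$, at the point with waterlevel $\theta$. Fixing a common rate $R$ forces the waterlevels to satisfy $\theta^\star(R)\ge\theta(R)$, since $\widetilde S^\star_{X|Y}\ge\widetilde S_{X|Y}$ makes the rate integral in \eqref{eq:main_rate} larger at equal $\theta$ and the rate is strictly decreasing in $\theta$. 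Both curves emanate from the common point $(0,\sigma_X^2)$, and at every rate the dominating density produces a slope at least as negative; integrating this slope comparison in $R$ yields $D^\star(f_s,R)\le D(f_s,R)$ on the whole range. Equivalently, this shows that the functional derivative $\delta D/\delta\widetilde{S}_{X|Y}(f)$ is non-positive. The main obstacle lies here: one must verify the waterlevel comparison and treat the endpoints ($R=0$, where both curves meet at $\sigma_X^2$, and $R\to\infty$, where they tend to the respective MMSE values) carefully, so that the slope comparison delivers domination of the full curve rather than agreement at isolated rates, including when level sets of $\widetilde S_{X|Y}$ have positive measure.

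Finally I would read off the explicit formulas \eqref{eq:optimal}. Because $F^\star\in AF(f_s)$ contains, up to a null set, exactly one representative of each coset $f+f_s\mathbb Z$, the sums defining $\widetilde{S}_{X|Y}(f)$ in \eqref{eq:J_def_single} collapse to the single surviving term on the support of $H^\star$, giving $\widetilde{S}^\star_{X|Y}(f)=S_X^2/S_{X+\eta}$ evaluated at that representative. Substituting this into \eqref{eq:main_theorem} and changing variables from the band $(-f_s/2,f_s/2)$ to the transversal $F^\star$ converts the band integrals into the stated integrals over $F^\star$, completing the proof.
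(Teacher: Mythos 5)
Your proposal is correct and follows essentially the same route as the paper: invoke Theorem~\ref{thm:mmse_opt_single} to identify $H^\star=\mathbf 1_{F^\star}$ as the pointwise maximizer of $\widetilde{S}_{X|Y}(f)$, use the monotonicity of the reverse-waterfilling distortion in this density (the paper's Proposition~\ref{prop:min_max}) to conclude optimality uniformly in $R$, and then substitute $\widetilde{S}^\star_{X|Y}$ into \eqref{eq:main_theorem} to obtain \eqref{eq:optimal}. Your waterlevel-and-slope comparison is a more careful justification of the monotonicity step than the paper's informal functional-derivative discussion, but it is the same argument in substance.
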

\begin{proof} 
From Theorem~\ref{thm:mmse_opt_single} we conclude that the filter $H^\star(f)$ that maximizes $\widetilde{S}_{X|Y}(f)$ is given by the indicator function of the maximal aliasing free set $F^\star$. Moreover, with this optimal filter, \eqref{eq:main_theorem} reduces to \eqref{eq:optimal}.
\end{proof}
We emphasize that even in the absence of noise, the filter $H^\star(f)$ still plays a crucial role in reducing distortion by preventing aliasing as described in Subsection~\ref{subsec:mmse_optimal}.   Fig.~\ref{fig:OPSF} illustrates the effect of the optimal pre-sampling filter on the function $D(f_s,R)$. 
 
 \begin{figure}
\begin{center}
\begin{tikzpicture}
\node at (0,0.1) {\includegraphics[trim=0cm 0cm 0cm 0cm, clip=true, scale=0.55] {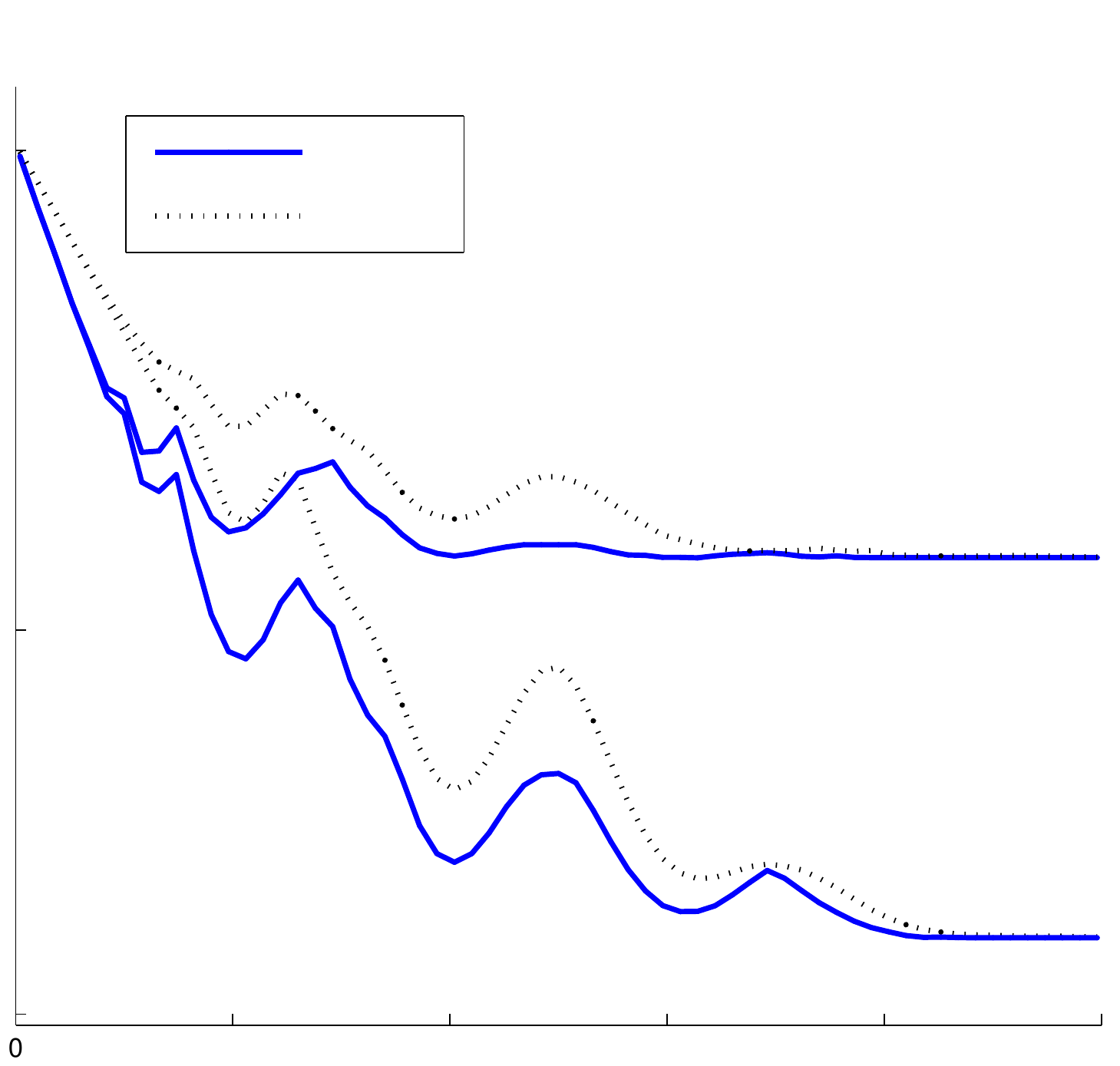}};
\node at (1.7,2.25) { \includegraphics[trim=2.5cm 8.5cm 1.7cm 7cm, clip=true, scale=0.24,frame]{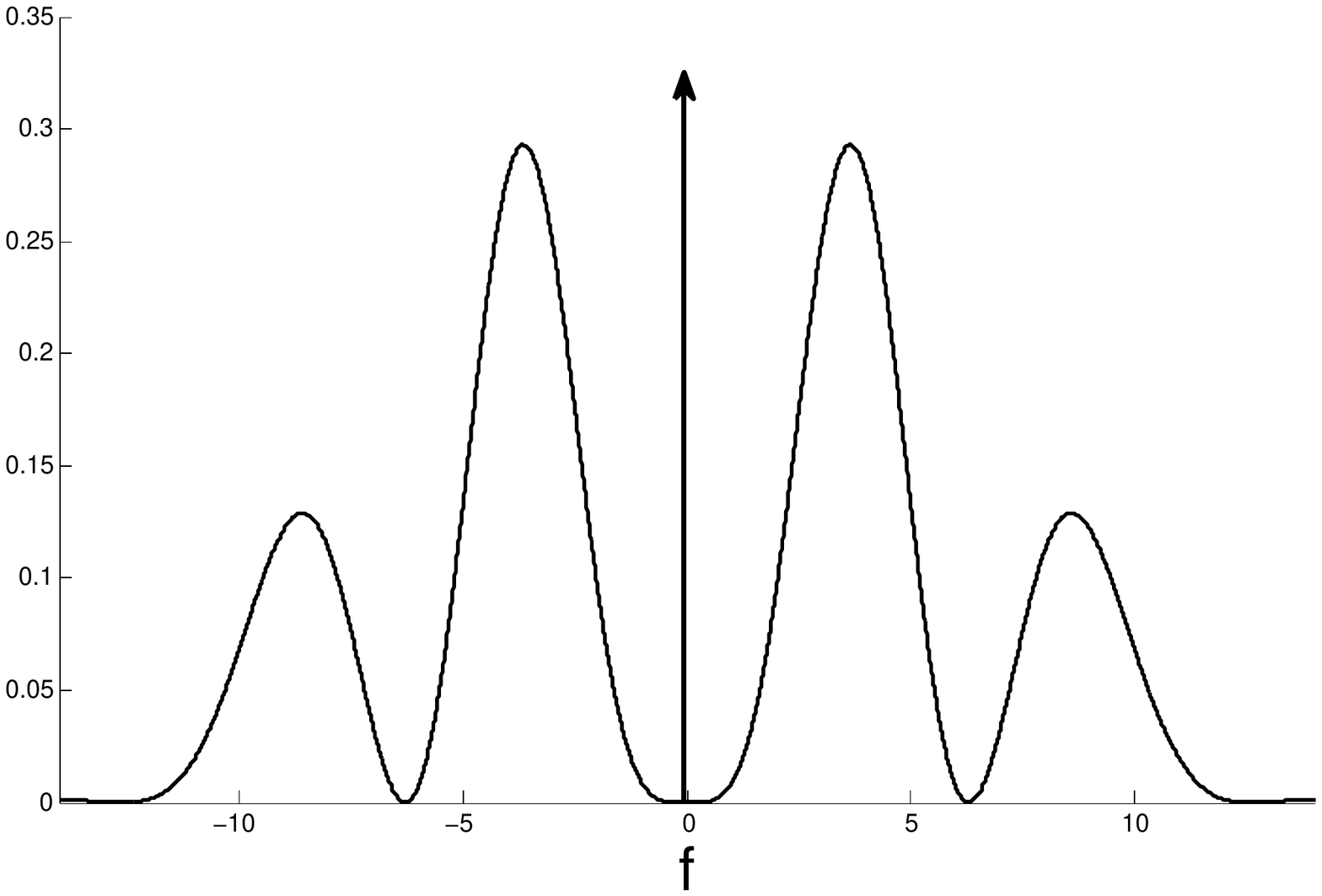} };
\node[fill=white] at (-1.5,2.85) {\scriptsize \color{blue} $H^\star(f)$};
\node[fill = white] at (-1.5,2.4) {\scriptsize  $\left| H(f) \right|\equiv 1$};

\draw[line width = 3, color=white] (-4,-3.52) -- (4,-3.52) ;
\draw[line width = 3, color=white] (-4.1,-3.52) -- (-4.1,3.3);
\draw[->,line width = 2] (-4.02,-3.5) -- (4,-3.5) node[right] {$f_s$};

\draw[->,line width = 2] (-4,-3.5) -- node[above, rotate = 90, xshift = 0.5cm] {\small $D$} (-4,3.5);

\node at (2.5,3.5) {\small $S_X(f)$};
\draw  (-4.05,2.9) node[left] {\small $\sigma_X^2$} -- (-3.95,2.9);

\draw  (-4.05,-3.4) node[left] {\small $\frac{\sigma_X^2}{5}$} -- (-3.95,-3.4);

\draw[dotted]  (2.9,-3.57) node[below] {$f_{Nyq}$} -- (2.9,0.7);

\draw[dashed] (-4,-0.1) -- node[below,xshift = -3.3cm] {\small $D_X(R_0)$} (4,-0.1);
\draw[dashed] (-4,-2.87) -- node[below, xshift = -3.3cm, yshift = 0.5cm] {\small $D_X(4R_0)$} (4,-2.87);

\end{tikzpicture}
\caption{\label{fig:OPSF} The functions $D^\star(f_s,R)$ and $D(f_s,R)$ at two fixed values of $R$. The plain and dashed curves were obtained using the optimal pre-sampling filter ($H(f)=H^\star(f)$) and without ($\left|H(f)\right|\equiv1$), respectively, for the same source statistic with $S_\eta(f)\equiv 0$ and $S_X(f)$ as given in the small frame.}
\end{center} 
\end{figure}
 
%
%
%
%
%
%
 
\setlength\parindent{10pt}
\section{Multi-branch sampling \label{sec:main_multi}}
We now generalize our analysis to the case where the sampling operation can be described by a multi-branch sampler as given in Fig.~\ref{fig:sampling_scheme}(b). Similar to the case of single branch sampling, we first consider the discrete-time counterpart and use it to derive our main result.

\subsection{Multi-branch decimation}
In the discrete-time counterpart of the combined sampling and source coding problem with multi-branch sampling, the source is the discrete-time process $X\left[\cdot\right]$ and the sampling operation at each branch is replaced by decimation by a factor $PM$, where $P\in \mathbb N$ is the number of sampling branches and  $M \in \mathbb N$ is the average number of time units at which $\mathbf Y[\cdot]$ samples $X[\cdot]$. 
\begin{theorem}[discrete-time multi-branch sampling]
\label{thm:filterbanks_discrete}
For $M\in \mathbb N$ and $p=1,\ldots,P$, let $Y_p[\cdot]$ be a decimation by a factor of $PM$ of the process $Z_p[\cdot]$, namely,
\begin{align*}
\mathbf{Y}\left[n\right] & =\left(Z_1[PMn],\ldots,Z_P[PMn]\right),
\end{align*}
where $X[\cdot]$ and $Z_{p}[\cdot]$ are jointly Gaussian stationary processes with spectral densities 
\[
S_{Z_{p}}\left(e^{2\pi i\phi}\right)=S_{X+\eta}\left(e^{2\pi i\phi}\right)\left|H_p\left(e^{2\pi i\phi}\right)\right|^{2},
\]
and
\[
S_{XZ_{p}}\left(e^{2\pi i\phi}\right)=S_X\left(e^{2\pi i\phi}\right)H_p^{*}\left(e^{2\pi i\phi}\right).
\]
The iDRF of the process $X[\cdot]$ given $\mathbf{Y}[\cdot]=\left(Y_1[\cdot],\ldots,Y_{P}[\cdot]\right)$,
is 
\begin{align*}
R\left(P,M,\theta\right)=
\frac{1}{2}\sum_{p=1}^{P}&\int_{-\frac{1}{2}}^{\frac{1}{2}}\log^{+}\left[\lambda_{p}\left(\mathbf J_{M}\left(e^{2\pi i\phi}\right)\right)\theta^{-1}\right]d\phi
\end{align*}
\begin{align*}
D\left(P,M,\theta\right)&  =\mmse_{X|\mathbf Y}\\ & ~~~~ + \sum_{p=1}^{P}\int_{-\frac{1}{2}}^{\frac{1}{2}}\min\left\{ \lambda_{p}\left(\mathbf J_M\left(e^{2\pi i\phi}\right) \right),\theta\right\} d\phi,\\
  &=\sigma_{X}^{2}-\sum_{p=1}^{P}\int_{-\frac{1}{2}}^{\frac{1}{2}}\left[\lambda_{p}\left(\mathbf J_{M}\left(e^{2\pi i\phi}\right)\right)-\theta\right]^{+}d\phi,
\end{align*}
where $\lambda_{1}\left(\mathbf J_M\left(e^{2\pi i\phi}\right)\right)\leq...\leq\lambda_P\left(\mathbf  J_M\left(e^{2\pi i\phi}\right)\right)$
are the eigenvalues of the $P\times P$ matrix
\begin{equation} \label{eq:J_M_def}
\mathbf J_M\left(e^{2\pi i\phi}\right)\triangleq\  {\mathbf S_{\mathbf Y}}^{-\frac{1}{2}*}\left(e^{2\pi i\phi}\right)\mathbf{K}_{M}\left(e^{2\pi i\phi}\right)\mathbf{S_{\mathbf Y}}^{-\frac{1}{2}}\left(e^{2\pi i\phi}\right).
\end{equation}
Here $\mathbf S_{\mathbf Y}\left(e^{2\pi i \phi} \right)$ is the PSD matrix of the process $\mathbf Y[\cdot]$ and is given by
\begin{align*}
\left(\mathbf S_{\mathbf Y}\left(e^{2\pi i\phi}\right)\right)_{i,j}& \triangleq\frac{1}{MP}\sum_{r=0}^{MP-1}S_{Z_{i}Z_{j}}\left(e^{2\pi i\frac{\phi-r}{MP}}\right) \\
& = \frac{1}{MP}\sum_{r=0}^{MP-1}\left\{S_{X+\eta}H_{i}^{*}H_{j}\right\}\left(  e^{2\pi i\frac{\phi-r}{MP}}\right),
\end{align*}
and $\mathbf S_{\mathbf Y}^{\frac{1}{2}}\left(e^{2\pi i\phi}\right)$ is such
that $\mathbf S_{\mathbf Y}\left(e^{2\pi i\phi}\right)={\mathbf S_{\mathbf Y}}^{\frac{1}{2}}\left(e^{2\pi i\phi}\right)\mathbf S_{\mathbf Y}^{\frac{1}{2}*}\left(e^{2\pi i\phi}\right)$. The $(i,j)^{\textrm{th}}$ entry of the $P\times P$ matrix $\mathbf{K}_M\left(e^{2\pi i\phi}\right)$ is given by 
\begin{align*}
\left(\mathbf{K}_M \right)_{i,j}\left(e^{2\pi i\phi}\right) & \triangleq\frac{1}{(MP)^2}\sum_{r=0}^{MP-1}\left\{S_{X}^{2}H_i^{*}H_j\right\}\left(e^{2\pi i\frac{\phi-r}{MP}}\right).\\
\end{align*}
\end{theorem}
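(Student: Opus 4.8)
The plan is to mirror the single-branch proof of Theorem~\ref{thm:discrete_decimation_rate_distortion}: reduce the problem to an ordinary vector-valued jointly stationary indirect source coding problem, to which Theorem~\ref{thm:indirect_vector_case} applies, and then identify the resulting eigenvalues with those of $\mathbf J_M\expphi$. First I would block the scalar source into vectors of length $PM$ by setting
\[
\mathbf X^{PM}[n]\triangleq\left(X[PMn],X[PMn+1],\ldots,X[PMn+PM-1]\right),\quad n\in\mathbb Z.
\]
Since $X[\cdot]$ and $\mathbf X^{PM}[\cdot]$ are deterministic, invertible functions of one another, they share the same iDRF given $\mathbf Y[\cdot]$; moreover $\mathbf X^{PM}[\cdot]$ is now $PM$-dimensional, jointly Gaussian and stationary with the $P$-dimensional observation $\mathbf Y[\cdot]$, both living at the decimated rate $1/(PM)$. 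Theorem~\ref{thm:indirect_vector_case} then expresses the iDRF through reverse waterfilling over the eigenvalues of $\mathbf S_{\mathbf X^{PM}|\mathbf Y}\expphi=\mathbf S_{\mathbf X^{PM}\mathbf Y}\mathbf S_{\mathbf Y}^{-1}\mathbf S_{\mathbf X^{PM}\mathbf Y}^*$.

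Next I would reduce the dimension of the eigenvalue problem from $PM$ to $P$. The matrix $\mathbf S_{\mathbf X^{PM}|\mathbf Y}\expphi$ is $PM\times PM$ but has rank at most $P$, so at most $P$ of its eigenvalues are nonzero. Writing $\mathbf A\triangleq\mathbf S_{\mathbf X^{PM}\mathbf Y}\mathbf S_{\mathbf Y}^{-1/2}$, it equals $\mathbf A\mathbf A^*$, whose nonzero eigenvalues agree with those of the $P\times P$ matrix $\mathbf A^*\mathbf A=\mathbf S_{\mathbf Y}^{-1/2*}\left(\mathbf S_{\mathbf X^{PM}\mathbf Y}^*\mathbf S_{\mathbf X^{PM}\mathbf Y}\right)\mathbf S_{\mathbf Y}^{-1/2}$. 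Comparing with $\mathbf J_M$ in \eqref{eq:J_M_def}, it then suffices to show that $\mathbf S_{\mathbf X^{PM}\mathbf Y}^*\mathbf S_{\mathbf X^{PM}\mathbf Y}$ coincides with $\mathbf K_M\expphi$ and that $\mathbf S_{\mathbf Y}$ is as stated, after which only the $P$ surviving integrals remain in the rate and distortion.

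The heart of the argument, and the step I expect to be the main obstacle, is this explicit spectral computation. The standard decimation identity gives the stacked cross spectrum
\[
\left(\mathbf S_{\mathbf X^{PM}\mathbf Y}\right)_{l,p}\expphi=\frac{1}{PM}\sum_{r=0}^{PM-1}S_{XZ_p}\left(e^{2\pi i\frac{\phi+r}{PM}}\right)e^{2\pi i l\frac{\phi+r}{PM}},
\]
and an analogous single aliasing sum for $\mathbf S_{\mathbf Y}$. Forming $\mathbf S_{\mathbf X^{PM}\mathbf Y}^*\mathbf S_{\mathbf X^{PM}\mathbf Y}$ requires summing the products of these entries over the $PM$ block coordinates $l=0,\ldots,PM-1$; the orthogonality relation $\sum_{l=0}^{PM-1}e^{2\pi i l(r-r')/(PM)}=PM\,\mathbf 1(r=r')$ collapses the resulting double aliasing sum into a single one, and substituting $S_{XZ_p}=S_X H_p^*$ produces the factor $S_X^2 H_i^* H_j$. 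This is exactly $\mathbf K_M\expphi$, up to the $1/(PM)^2$ normalization built into its definition, and it is the same polyphase/orthogonality computation that underlies the MMSE matrices of Theorem~\ref{thm:mmse_multi}; care with the aliasing indices and conjugation conventions is where the bookkeeping is most delicate.

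Finally I would translate the vector-symbol quantities of Theorem~\ref{thm:indirect_vector_case} back to the scalar source. One output vector symbol corresponds to exactly one length-$PM$ block, so the per-coordinate distortion of $\mathbf X^{PM}[\cdot]$ equals the per-symbol distortion of $X[\cdot]$, while the rate is measured per output symbol; a reparametrization of the waterfilling level $\theta$ absorbs the uniform $PM$ scaling between the raw eigenvalues of $\mathbf A\mathbf A^*$ and the eigenvalues $\lambda_p(\mathbf J_M\expphi)$. The $\mmse_{X|\mathbf Y}$ term is recovered as the $R\to\infty$ (equivalently $\theta\to 0$) limit and matches Theorem~\ref{thm:mmse_multi}. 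Collecting these pieces yields the stated parametric expressions.
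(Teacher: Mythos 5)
Your proposal follows essentially the same route as the paper's own proof: block $X[\cdot]$ into the $PM$-dimensional stationary process $\mathbf X^{PM}[\cdot]$, invoke Theorem~\ref{thm:indirect_vector_case}, pass from the rank-$P$ matrix $\mathbf S_{\mathbf X^{PM}\mathbf Y}\mathbf S_{\mathbf Y}^{-1}\mathbf S_{\mathbf X^{PM}\mathbf Y}^*$ to the $P\times P$ product $\mathbf S_{\mathbf Y}^{-\frac{1}{2}*}\mathbf S_{\mathbf X^{PM}\mathbf Y}^*\mathbf S_{\mathbf X^{PM}\mathbf Y}\mathbf S_{\mathbf Y}^{-\frac{1}{2}}$ via the equality of nonzero eigenvalues of $\mathbf A\mathbf A^*$ and $\mathbf A^*\mathbf A$, and collapse the double aliasing sum with the roots-of-unity orthogonality to identify $\mathbf K_M$. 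The proof is correct and matches the paper's argument in Appendix~\ref{sec:proof_filtersbank_discrete}, including the identification $\mmse_{\mathbf X^{PM}|\mathbf Y}=\mmse_{X|\mathbf Y}$ handled in the single-branch case.
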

\subsubsection*{Remark}
The case where the matrix $\mathbf S_{\mathbf Y}\left(e^{2\pi i\phi}\right)$ is not invertible for some $\phi \in \left(-\frac{1}{2},\frac{1}{2}\right)$ corresponds to linear dependency between the spectral components of the vector $\mathbf Y[\cdot]$. In this case, we can apply the theorem to the process $\mathbf Y'[\cdot]$ which is obtained from $\mathbf Y[\cdot]$ by removing linearly dependent components.\\
\begin{proof}
The proof is a multi-dimensional extension of the proof of Theorem~\ref{thm:discrete_decimation_rate_distortion}. Details are provided in Appendix~\ref{sec:proof_filtersbank_discrete}.
\end{proof}

\subsection{Main result: multi-branch sampling}
\begin{theorem}[filter-bank sampling]
\label{thm:filterbanks_main} For each $p=1,\ldots,P$, let $Z_p(\cdot)$ be the process obtained by passing a Gaussian stationary source $X(\cdot)$ corrupted by a Gaussian stationary noise $\eta(\cdot)$ through an LTI system $H_p$. Let
$Y_p[\cdot]$, be the samples of the process $Z_p(\cdot)$ at frequency $f_s/P$, namely
\[ 
Y_p[n]=Z_p(nP/f_s)=h_p*\left(X+\eta\right)(nP/f_s),\quad p=1,\ldots,P.
\]
The indirect distortion-rate function of $X(\cdot)$ given $\mathbf Y[\cdot]=\left(Y_1[\cdot],\ldots,Y_P[\cdot]\right)$, is given by
 
\begin{subequations}
\label{eq:multi_main}
\begin{align}
\label{eq:multi_rate}
R\left(\theta\right)=\frac{1}{2}\sum_{p=1}^{P}\int_{-\frac{f_{s}}{2}}^{\frac{f_{s}}{2}}\log^{+}\left[\lambda_{p}\left(\widetilde{\mathbf S}_{X|\mathbf Y}(f) \right)-\theta\right]df
\end{align}
\begin{align}
\label{eq:multi_dist}
D\left(f_s,\theta\right) & =\mmse_{X|\mathbf  Y}(f_s)+ \sum_{p=1}^{P}\int_{-\frac{f_{s}}{2}}^{\frac{f_{s}}{2}}\min\left\{ \lambda_{p}\left(\widetilde{\mathbf S}_{X|\mathbf Y}(f)\right),\theta\right\} df,\nonumber \\
 & =\sigma_{X}^{2}-\sum_{p=1}^{P}\int_{-\frac{f_{s}}{2}}^{\frac{f_{s}}{2}}\left[\lambda_{p}\left(\widetilde{\mathbf S}_{X|\mathbf Y}(f)\right)-\theta\right]^{+}df, 
\end{align}
\end{subequations}
where $\lambda_1\left(\widetilde{\mathbf S}_{X|\mathbf Y}(f)\right)\leq...\leq\lambda_P\left(\widetilde{\mathbf S}_{X|\mathbf Y}(f)\right)$
are the eigenvalues of the $P\times P$ matrix 
\[
\widetilde{\mathbf S}_{X|\mathbf Y}(f)=\tilde{\mathbf S}_{\mathbf Y}^{-\frac{1}{2}*}(f)\mathbf{K}(f)\tilde{\mathbf S}_{\mathbf Y}^{-\frac{1}{2}}(f),
\]
and the $(i,j)^{\textrm{th}}$ entry of the matrices $\tilde{\mathbf S}_{\mathbf Y}(f),{\mathbf K}(f) \in \mathbb C^{P\times P}$ are given by 
\begin{align*}
\left(\tilde{\mathbf S}_{\mathbf Y}\right)_{i,j}(f) & =\sum_{k\in\mathbb{Z}}\left\{ S_{X+\eta}H_{i}H^*_{j}\right\} \left(f-f_{s}k\right),
\end{align*}
 and 
\[
\mathbf{K}_{i,j}(f)=\sum_{k\in\mathbb{Z}}\left\{ S_{X}^{2}H_{i}H^*_{j}\right\} \left(f-f_{s}k\right).
\]
\end{theorem}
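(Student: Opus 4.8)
The plan is to prove Theorem~\ref{thm:filterbanks_main} as the vector-valued analogue of Theorem~\ref{thm:main_result}, replacing the single-branch decimation result of Theorem~\ref{thm:discrete_decimation_rate_distortion} by its multi-branch counterpart Theorem~\ref{thm:filterbanks_discrete} as the discrete-time input. First I would discretize time on the grid $\{m/(Mf_s)\}_{m\in\mathbb Z}$, $M\in\mathbb N$, obtaining jointly Gaussian stationary discrete-time processes $X[\cdot]$, $\eta[\cdot]$ and branch outputs $Z_p[\cdot]$. With this grid the $p$-th continuous-time sampler, which reads $Z_p(\cdot)$ at the instants $nP/f_s$, becomes exactly a decimation of $Z_p[\cdot]$ by the factor $PM$, so that $Y_p[n]=Z_p[PMn]$ and the hypotheses of Theorem~\ref{thm:filterbanks_discrete} hold verbatim with decimation factor $PM$. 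That theorem then supplies the iDRF of the discretized source given $\mathbf Y[\cdot]$ in the parametric reverse-waterfilling form built from the ordered eigenvalues $\lambda_p(\mathbf J_M(e^{2\pi i\phi}))$.

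The heart of the argument is the limit $M\to\infty$. Under the identification $f=f_s\phi$ of the discrete normalized frequency with continuous frequency over $(-f_s/2,f_s/2)$, I would show that the finite $MP$-term averages defining $\mathbf S_{\mathbf Y}(e^{2\pi i\phi})$ and $\mathbf K_M(e^{2\pi i\phi})$ converge, entrywise and for almost every $f$, to the infinite aliasing sums $\sum_{k\in\mathbb Z}\{S_{X+\eta}H_iH_j^*\}(f-f_sk)$ and $\sum_{k\in\mathbb Z}\{S_X^2H_iH_j^*\}(f-f_sk)$ of $\tilde{\mathbf S}_{\mathbf Y}(f)$ and $\mathbf K(f)$. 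This is where the hypotheses of $L_1$ PSDs and almost surely Riemann-integrable realizations enter: they make the discretized aliased spectra Riemann-sum approximations of the continuous aliased spectra and control the tails of the infinite sums. Two reorganizations must be tracked carefully here: the bookkeeping between bits-per-sample and bits-per-time-unit together with the corresponding rescaling of the water level $\theta$ (exactly as in the passage from Theorem~\ref{thm:discrete_decimation_rate_distortion} to Theorem~\ref{thm:main_result}), and the re-folding of the decimation aliasing lattice of spacing $f_s/P$ into cosets of spacing $f_s$, which reconciles the natural filter-bank spectral matrices with the $f_s$-aliased matrices $\tilde{\mathbf S}_{\mathbf Y}(f)$, $\mathbf K(f)$ integrated over $(-f_s/2,f_s/2)$. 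Using continuity of the matrix square root, inverse and eigenvalues as functions of the entries, these give $\lambda_p(\mathbf J_M)\to\lambda_p(\widetilde{\mathbf S}_{X|\mathbf Y}(f))$ a.e., and dominated convergence together with the continuity of $\log^+$, $\min\{\cdot,\theta\}$ and $[\cdot-\theta]^+$ lets me pass the limit through the integrals for $R(\theta)$ and $D(f_s,\theta)$. The additive MMSE constant is identified with $\mmse_{X|\mathbf Y}(f_s)$ of Theorem~\ref{thm:mmse_multi}, which also serves as a consistency check via the separation principle of Proposition~\ref{prop:For-an-indirect} and the $R\to\infty$ limit.

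The step I expect to be the main obstacle --- the real content beyond normalization bookkeeping --- is justifying that the iDRF of the discretized problem actually converges to the iDRF of the continuous-time problem, rather than merely that the two closed-form expressions agree in the limit. As in the single-branch proof this requires an interchange of the spectral-approximation limit ($M\to\infty$) with the time-horizon limit ($T\to\infty$) hidden in the definition \eqref{eq:D_def}, and a bound on the error incurred by replacing the squared-error integral over $[-T,T]$ by its Riemann sum on the grid, for which the almost sure Riemann-integrability of realizations is essential. A secondary technical point, already anticipated in the Remark following Theorem~\ref{thm:filterbanks_discrete}, is the behaviour of $\lambda_p(\mathbf J_M)$ at frequencies where $\mathbf S_{\mathbf Y}(e^{2\pi i\phi})$ or its limit $\tilde{\mathbf S}_{\mathbf Y}(f)$ degenerates; there one passes to the reduced process obtained by discarding the linearly dependent spectral components, so that the eigenvalues remain bounded and a dominating function for the interchange of limit and integral exists.
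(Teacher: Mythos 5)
Your overall route coincides with the paper's: discretize $X(\cdot)$ and the $Z_p(\cdot)$ on a fine grid so that each branch sampler becomes decimation by $PM$, invoke Theorem~\ref{thm:filterbanks_discrete}, use $\sigma^2_{X^M}=\sigma^2_X$, argue via Riemann integrability (and the convergence results of the sampled-source DRF) that the iDRF of the discretized problem converges to that of the continuous-time one, and then pass $M\to\infty$ inside the parametric expressions; your remarks on rate normalization and on degenerate $\mathbf S_{\mathbf Y}$ also match the paper's treatment.

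However, there is a genuine gap in the step you treat as routine: the convergence of the spectral matrices. No Riemann-sum approximation occurs in this step, and Riemann integrability of sample paths plays no role in it (it is needed only for the operational convergence that you correctly flag as a separate issue). The discrete-time cross PSDs of the sampled processes are \emph{exact} aliasing sums of the continuous ones, so summing over the $MP$ cosets makes the entries of $\mathbf S_{\mathbf Y}\left(e^{2\pi i\phi}\right)$ collapse \emph{exactly}, for every finite $M$, onto $f_s\bigl(\tilde{\mathbf S}_{\mathbf Y}\bigr)_{i,j}(f_s\phi)$; no limit is involved (this is \eqref{eq:filter_bank_Sy} in the paper's proof). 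The only genuinely $M$-dependent object is $\mathbf K_M$, whose $(i,j)$ entry is a coset sum of a \emph{product of two aliased sums}, $\sum_{k} S_{XZ_i}^{*}\left(f_s(\phi-m-kMP)\right)\sum_{l} S_{XZ_j}\left(f_s(\phi-m-lMP)\right)$, as in \eqref{eq:filter_bank_K_M}. The diagonal terms $k=l$ again collapse exactly onto $\mathbf K_{i,j}$; the entire analytic content of the limit is that the cross terms $k\neq l$ vanish as $M\to\infty$. This is precisely Lemma~\ref{lem:proof_main}, proved by splitting the index pairs and applying Cauchy--Schwarz together with the vanishing tails of the $L_2$ functions $S_{XZ_p}(f_s\cdot)\big/\sqrt{\|\mathbf S_{\mathbf Y}\|_2}$. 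Your proposal contains no substitute for this idea, and the mechanism you do propose cannot be executed as stated: the cross terms need not tend to zero pointwise a.e., so ``entrywise a.e.\ convergence plus dominated convergence'' is not available. What can (and need only) be proved is convergence in $L_1\left(-\tfrac12,\tfrac12\right)$, which is then transferred to the eigenvalues through the bound $\bigl\|\mathbf S_{\mathbf Y}^{-\frac12 *}\left(\mathbf K_M-\bar{\mathbf K}\right)\mathbf S_{\mathbf Y}^{-\frac12}\bigr\|_2\leq\|\mathbf S_{\mathbf Y}\|_2^{-1}\,\bigl\|\mathbf K_M-\bar{\mathbf K}\bigr\|_2$ (up to the normalization of $\mathbf K_M$) and the Lipschitz dependence of eigenvalues on the matrix, and finally into the reverse-waterfilling integrals. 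Without this cross-term lemma, or an equivalent argument, your proof does not close.
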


\begin{proof}
A full proof can be found in Appendix~\ref{sec:proof_of_filtersbank_main}.
The idea is similar to the proof of Theorem \ref{thm:main_result}:
approximate the continuous time processes $X\left(\cdot\right)$
and $Z\left(\cdot\right)$ by discrete time processes, then take
the limit in the discrete counterpart of the problem given by Theorem
\ref{thm:filterbanks_discrete}.
\end{proof}

\subsection{Optimal pre-sampling filter bank \label{subsec:optimal_multi_branch}}
A similar analysis as in the case of single branch sampling will show that for a fixed $R$, the distortion is a non-increasing function of the eigenvalues of $\widetilde{\mathbf S}_{X|Y}(f)$. This implies that the optimal pre-sampling filters 
$H_1^\star(f),\ldots,H_P^\star(f)$ that minimize the distortion for a given $R$ and $f_s$ are the same filters that minimize the MMSE  in the estimation of $X(\cdot)$ from the samples $\mathbf Y[\cdot]=\left(Y_1[\cdot],\ldots,Y_P[\cdot]\right)$ at sampling frequency $f_s$, given in Theorem~\ref{thm:mmse_opt_filters_bank}. Therefore, the following theorem applies:

\begin{theorem}
\label{thm:opt_filters_bank}
Given $f_s>0$, the optimal pre-sampling filters $H^\star_1(f),\ldots,H^\star_P(f)$ that minimize $D(P,f_s,R)$, for all $R\geq0$, are given by 
\begin{equation}
H_{p}^{\star}(f)=\begin{cases}
1 & f\in F_p^\star,\\
0 & f\notin F_p^\star,
\end{cases}\label{eq:filterbanks_discrete_Hdef_dist},\quad p=1,\ldots,P,
\end{equation}
where $F_1^\star,\ldots,F_P^\star$ satisfy conditions $(i)$ and $(ii)$ in Theorem~\ref{thm:mmse_opt_filters_bank}. The minimal distortion-rate function obtained this way is given by
\begin{subequations} \label{eq:multi_branch_optimal}
\begin{align} \label{eq:multi_branch_optimal_R}
R^\star\left(P,f_s,\theta\right)=\frac{1}{2}\sum_{p=1}^{P}\int_{F_p^\star} \log^{+}\left[\frac{S_{X}^{2}(f)}{S_{X+\eta}(f)}-\theta\right]df
\end{align}
\begin{align} \label{eq:multi_branch_optimal_D}
D^\star\left(P,f_s,\theta\right) & = \mmse_{X|\mathbf Y}^\star(f_s)+\sum_{p=1}^{P}\int_{F_{p}^\star}\min\left\{ \frac{S_{X}^{2}(f)}{S_{X+\eta}(f)},\theta\right\} df, \nonumber \\
 & =\sigma_{X}^{2} -\sum_{p=1}^{P}\int_{F_{p}^\star}\left[\frac{S_{X}^{2}(f)}{S_{X+\eta}(f)}-\theta\right]^{+}df. 
\end{align}
\end{subequations}
\\
\end{theorem}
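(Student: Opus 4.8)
The plan is to reduce the minimization of $D(P,f_s,R)$ over pre-sampling filters to the eigenvalue-optimization problem already solved for the MMSE in Theorem~\ref{thm:mmse_opt_filters_bank}, exactly as Proposition~\ref{prop:opt_single} did in the single-branch case. The starting point is the parametric form of $D(f_s,\theta)$ in Theorem~\ref{thm:filterbanks_main}, which depends on the filters $H_1,\ldots,H_P$ only through the ordered eigenvalues $\lambda_1(\widetilde{\mathbf S}_{X|\mathbf Y}(f)) \le \cdots \le \lambda_P(\widetilde{\mathbf S}_{X|\mathbf Y}(f))$ of the matrix $\widetilde{\mathbf S}_{X|\mathbf Y}(f)$. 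The first step is to establish, by the same functional-derivative (reverse-waterfilling) argument that precedes Proposition~\ref{prop:min_max}, that for each fixed $R$ the distortion is a non-increasing function of each individual eigenvalue $\lambda_p(\widetilde{\mathbf S}_{X|\mathbf Y}(f))$: raising an eigenvalue forces the water level $\theta$ up in order to keep $R$ fixed, but the linear dependence of $D$ on $\lambda_p$ through $\sigma_X^2-\sum_p\int[\lambda_p-\theta]^+df$ dominates the logarithmic response of $\theta$, so $D$ decreases.

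The crucial second step is that minimizing the MMSE does more than maximize the trace $\mathrm{Tr}\,\widetilde{\mathbf S}_{X|\mathbf Y}(f)$ (the sum of eigenvalues): by Remark~(i) following Theorem~\ref{thm:mmse_opt_filters_bank}, the filters defined by \eqref{eq:filterbanks_discrete_Hdef_dist} make the eigenvalues \emph{uniformly} maximal, i.e. the $i^{\mathrm{th}}$ ordered eigenvalue of $\widetilde{\mathbf S}_{X|\mathbf Y}(f)$ produced by any admissible filter bank is at most the $i^{\mathrm{th}}$ ordered eigenvalue of $\widetilde{\mathbf S}^\star_{X|\mathbf Y}(f)$ produced by $H_1^\star,\ldots,H_P^\star$, for almost every $f$ and every $i$. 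This coordinate-wise (rather than merely aggregate) domination is exactly what is required here, since the waterfilling distortion is not a function of the eigenvalue sum alone. Combining it with the per-eigenvalue monotonicity from the first step, applied one coordinate at a time along the interpolation from an arbitrary eigenvalue profile to the maximal one, shows that $H_1^\star,\ldots,H_P^\star$ minimize $D(P,f_s,\theta)$ at \emph{every} water level $\theta$, hence simultaneously for all $R\ge 0$, which is the assertion of the theorem.

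It remains to evaluate the optimized distortion-rate curve \eqref{eq:multi_branch_optimal}. With $H_p^\star=\mathbf 1_{F_p^\star}$ and the sets $F_p^\star$ aliasing-free and selecting distinct frequencies from each aliasing coset (conditions~(i) and~(ii) of Theorem~\ref{thm:mmse_opt_filters_bank}), the off-diagonal sums defining $\tilde{\mathbf S}_{\mathbf Y}(f)$ and $\mathbf K(f)$ vanish, so $\widetilde{\mathbf S}^\star_{X|\mathbf Y}(f)$ is diagonal and its eigenvalues are the pointwise signal-to-noise ratios $S_X^2/S_{X+\eta}$ evaluated at the selected frequencies. Substituting these eigenvalues into Theorem~\ref{thm:filterbanks_main} and pushing the integrals over $(-f_s/2,f_s/2)$ through the change of variables that identifies the $p^{\mathrm{th}}$ eigenvalue band with the set $F_p^\star$ converts the sum $\sum_p\int_{-f_s/2}^{f_s/2}(\cdot)\,df$ into $\sum_p\int_{F_p^\star}(\cdot)\,df$ and turns the MMSE term into $\mmse^\star_{X|\mathbf Y}(f_s)$ of \eqref{eq:mmse_opt_multi_def}, yielding \eqref{eq:multi_branch_optimal}.

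I expect the main obstacle to be the second step: the distortion depends on the \emph{whole} ordered eigenvalue profile and not just on its sum, so trace maximization — which is all the MMSE criterion literally demands — would not by itself guarantee distortion optimality. The argument therefore hinges entirely on the stronger uniform-maximality statement in Remark~(i), and the care needed lies in confirming that this coordinate-wise domination holds at almost every frequency and that the per-coordinate monotonicity can indeed be invoked one eigenvalue at a time.
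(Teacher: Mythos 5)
Your proposal is correct and takes essentially the same route as the paper's own proof: it combines the monotonicity of the reverse-waterfilling distortion in the eigenvalues of $\widetilde{\mathbf S}_{X|\mathbf Y}(f)$ with the uniform (coordinate-wise) eigenvalue maximality of $H_1^\star,\ldots,H_P^\star$ from Theorem~\ref{thm:mmse_opt_filters_bank} and its Remark~(i), and then substitutes these filters into Theorem~\ref{thm:filterbanks_main} to obtain \eqref{eq:multi_branch_optimal}. The subtlety you single out---that trace maximization alone would not suffice and the coordinate-wise domination is what carries the argument---is exactly the point the paper itself flags in Remark~(i) as the fact used in this proof.
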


\begin{proof}
The filters $H_1^\star(f),\ldots,H_P^\star(f)$ given by Theorem~\ref{thm:mmse_opt_filters_bank} maximize the eigenvalues of the matrix $\widetilde{\mathbf S}_{X|Y}(f)$ of \eqref{eq:J_mat_def} for every $f\in \left(-f_s/2,f_s/2\right)$. Since $D\left(P,f_s,R\right)$ is monotone non-increasing in these eigenvalues,  $H_1^\star(f),\ldots,H_P^\star(f)$ also minimize $D\left(P,f_s,R\right)$. For this choice of $H_1(f),\ldots,H_P(f)$, \eqref{eq:multi_main} reduces to \eqref{eq:multi_branch_optimal}.\\
\end{proof}

\subsection{Increasing the number of sampling branches}
We have seen in Theorem~\ref{thm:mmse_Landau} that minimizing the MMSE in sub-Nyquist sampling at frequency $f_s$ is equivalent to choosing a set of frequencies $\mathcal F^\star$ with $\mu(\mathcal F^\star)\leq f_s$ such that
\begin{equation}\label{eq:f_star}
\int_{{\mathcal F^\star}} \frac{S_X^2(f)}{S_{X+\eta}(f)}df  = 
\sup_{\mu({ F})\leq f_s} \int_F \frac{S_X^2(f)}{S_{X+\eta}(f)}df. 
\end{equation}
As in the case of Subsection~\ref{subsec: optimal_sampling} we see that for a given $R$ and $f_s$, by multi-branch uniform sampling we cannot achieve distortion lower than 
\begin{align} \label{eq:D_optimal_bound} 
 D_X^\dagger\left(f_s,R(\theta)\right)\triangleq\sigma_X^2-\int_{\mathcal F^\star} \left[\frac{S_X^2(f)}{S_{X+\eta}(f)}-\theta\right]^+ df,
\end{align}
where $\theta$ is determined by
\begin{align} \label{eq:R_optimal_bound}
R=\int_{\mathcal F^\star}  \log^+\left[\frac{S_X^2(f)}{S_{X+\eta}(f)}\theta^{-1} \right]df.
\end{align}
This is because Proposition~\ref{prop:min_max} asserts that in a parametric reverse water-filling representation of the form \eqref{eq:multi_branch_optimal}, an increment in 
\[
\int_{\bigcup_{i=1}^P \mathcal F^\star_p} \frac{S_X^2(f)}{S_{X+\eta}(f)}df
\]
reduces distortion. But for any $P$,  $\mu\left( {\bigcup_{i=1}^P F^\star_p}\right)\leq f_s$ so we conclude that $D_X^\dagger(f_s,R)\leq D^\star(P,f_s,R)$. The following theorem shows that $D_X^\dagger(f_s,R)$ can be achieved using enough sampling branches.
\begin{theorem} \label{thm:optimal_linear}
For any $f_s>0$ and $\epsilon>0$, there exists $P\in \mathbb N$ and a set of LTI filters $H^\star_1(f),\ldots,H^\star_P(f)$ such that using $P$ uniform sampling branches we have
\begin{subequations} \label{eq:optimal_linear}
\begin{align}\label{eq:optimal_linear_D}
D^\star(P,f_s,R) -\epsilon < \sigma_{X}^2-\int_{\mathcal F^\star} \left[\frac{S_X^2(f)}{S_{X+\eta}(f)}-\theta\right]^+df,
\end{align}
where $\theta$ is determined by
\begin{align}\label{eq:optimal_linear_R}
R=\int_{\mathcal F^\star}  \log^+\left[\frac{S_X^2(f)}{S_{X+\eta}(f)}\theta^{-1} \right]df,
\end{align}
\end{subequations}
and $\mathcal F^\star$ is defined by \eqref{eq:f_star}.
\end{theorem}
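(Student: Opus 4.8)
The plan is to establish the claim as a two-sided sandwich: the lower bound $D^\star(P,f_s,R)\geq D_X^\dagger(f_s,R)$ on the right-hand side of \eqref{eq:optimal_linear_D} holds for every $P$, having already been derived through Proposition~\ref{prop:min_max} in the discussion preceding the theorem, so it suffices to exhibit, for $P$ large, a bank of pre-sampling filters whose distortion at rate $R$ lies within $\epsilon$ \emph{above} $D_X^\dagger(f_s,R)$. In other words, I would show that $D^\star(P,f_s,R)\to D_X^\dagger(f_s,R)$ from above as $P\to\infty$, mirroring the convergence of $\mmse^\star_{X|\mathbf Y}(f_s)$ to its own lower bound in Theorem~\ref{thm:mmse_Landau}.

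For the filter construction I would reuse the interval approximation from the proof of Theorem~\ref{thm:mmse_Landau}. Given the maximal set $\mathcal F^\star$ of measure $f_s$ defined by \eqref{eq:f_star}, regularity of the Lebesgue measure provides disjoint intervals $I_1,\ldots,I_P$, each of length at most $f_s/P$ (hence each in $AF(f_s/P)$ by Proposition~\ref{prop:aliasing_free}), whose union $F_P^\star\triangleq\bigcup_{p=1}^P I_p$ satisfies $\mu(F_P^\star\triangle\mathcal F^\star)\to 0$. Setting $H_p^\star(f)=\mathbf 1_{I_p}(f)$, the intervals are mutually disjoint and aliasing-free with respect to $f_s$ (since $AF(f_s/P)\subseteq AF(f_s)$), so the matrices $\tilde{\mathbf S}_{\mathbf Y}(f)$ and $\mathbf K(f)$ of Theorem~\ref{thm:filterbanks_main} are diagonal. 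Therefore $\widetilde{\mathbf S}_{X|\mathbf Y}(f)$ is diagonal, its $p$-th entry being $S_X^2(f')/S_{X+\eta}(f')$ for the unique $f'\in(f-f_s\mathbb Z)\cap I_p$ and zero otherwise; the eigenvalue reverse-waterfilling in \eqref{eq:multi_main} then collapses (consistently with Theorem~\ref{thm:opt_filters_bank}) to the scalar reverse-waterfilling of $S_X^2/S_{X+\eta}$ over $F_P^\star$,
\begin{equation*}
D(P,f_s,R)=\sigma_X^2-\int_{F_P^\star}\left[\frac{S_X^2(f)}{S_{X+\eta}(f)}-\theta_P\right]^+df,
\end{equation*}
with $\theta_P$ fixed by $R=\int_{F_P^\star}\log^+\left[S_X^2(f)/\bigl(S_{X+\eta}(f)\,\theta_P\bigr)\right]df$.

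The remaining and principal step is to let $P\to\infty$ and show this converges to the right-hand side of \eqref{eq:optimal_linear_D}. Using $S_X^2/S_{X+\eta}\leq S_X\in L_1$ together with the integrability of $\log^+[S_X^2/S_{X+\eta}]$, the set convergence $\mu(F_P^\star\triangle\mathcal F^\star)\to 0$ lets me invoke dominated convergence so that $\int_{F_P^\star}g\to\int_{\mathcal F^\star}g$ for both the rate and the distortion integrands; monotonicity of the rate in $\theta$ then forces $\theta_P\to\theta$ and hence $D(P,f_s,R)\to D_X^\dagger(f_s,R)$. Since $D^\star(P,f_s,R)\leq D(P,f_s,R)$ by optimality of the filter bank and $D^\star(P,f_s,R)\geq D_X^\dagger(f_s,R)$ by the lower bound, taking $P$ large yields \eqref{eq:optimal_linear}. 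I expect the main obstacle to be precisely this continuity-of-waterfilling argument: one must rule out degenerate behaviour of $\theta\mapsto R(\theta)$ when $S_X^2/S_{X+\eta}$ equals the threshold on a set of positive measure, and must verify that the tight interval cover can keep the total measure at $f_s$ up to $o(1)$ while simultaneously forcing the density-weighted symmetric-difference integral to vanish.
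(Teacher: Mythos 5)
Your proposal follows essentially the same route as the paper's proof: it reuses the interval-cover filter construction from Theorem~\ref{thm:mmse_Landau}, observes that disjoint aliasing-free interval supports collapse the eigenvalue waterfilling of Theorem~\ref{thm:filterbanks_main} into scalar reverse waterfilling over $\bigcup_p I_p$ (consistent with Theorem~\ref{thm:opt_filters_bank}), and combines Proposition~\ref{prop:min_max} with the pre-established bound $D^\star(P,f_s,R)\geq D_X^\dagger(f_s,R)$ to conclude by a sandwich argument. The only difference is one of detail: you make explicit the continuity-of-waterfilling step (set convergence $\mu(F_P^\star\triangle\mathcal F^\star)\to 0$, absolute continuity of the integrals, and $\theta_P\to\theta$) that the paper compresses into a one-line appeal to Proposition~\ref{prop:min_max}, so your argument is, if anything, more complete than the original.
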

\begin{proof}
In Theorem~\ref{thm:mmse_Landau} we found a set of pre-sampling filters $H_1^\star(f),\ldots,H_P^\star(f)$ such that 
\[
\mmse_{X|\mathbf Y}^\star(f_s) - \epsilon < \sigma_X^2-\int_{\mathcal F^\star} \frac{S_X^2(f)}{S_{X+\eta}(f)}df.
\]
Since 
\[
\mmse_{X|\mathbf Y}^\star(f_s) = \sigma_X^2-\sum_{p=1}^P\int_{F_p^\star}\frac{S_X^2(f)}{S_{X+\eta}(f)}df, 
\]
where for $p=1,\ldots,P$, $H_p^\star(f) = \mathbf 1_{F^\star_p}(f)$, we conclude that
\[
\int_{\bigcup_{p=1}^PF_p^\star}  \frac{S_X^2(f)}{S_{X+\eta}(f)}df + \epsilon >  \int_{\mathcal F^\star}\frac{S_X^2(f)}{S_{X+\eta}(f)}df.
\]
By Proposition~\ref{prop:min_max},  maximizing $\sum_{p=1}^P \frac{S_X^2(f)}{S_{X+\eta}(f)}$ minimizes the distortion, so the distortion $D_X^\star(P,f_s,R)$ obtained by using $H_1^\star(f),\ldots,H_P^\star(f)$ is arbitrarily close to $D_X^\dagger(f_s,R)$. 
\end{proof}
An immediate corollary of Theorems \ref{thm:optimal_linear} and \ref{thm:opt_filters_bank} is
\[
\lim_{P\rightarrow \infty} D^\star(P,f_s,R) = D_X^\dagger(f_s,R),
\]
where $D^\dagger(f_s,R)$ is defined in \eqref{eq:D_optimal_bound}. The function $D_X^\dagger(f_s,R)$ is plotted in Fig.~\ref{fig:DistMulti} as a function of $f_s$ for two values of $R$. 

\subsection{Discussion}
The function $D_X^\dagger(f_s,R)$ is monotone in $f_s$ by its definition \eqref{eq:D_optimal_bound}, which is in contrast to $D(P,f_s,R)$ and $D^\star(P,f_s,R)$ that are not guaranteed to be monotone in $f_s$ as the example in Fig.~\ref{fig:DistMulti} shows. Fig.~\ref{fig:DistMulti} also suggests that multi-branch sampling can significantly reduce distortion for a given sampling frequency $f_s$ and source coding rate $R$ over single-branch sampling. Moreover, Theorem~\ref{thm:optimal_linear} shows that multi-branch sampling can achieve the bound $D_X^\dagger(f_s,R)$ with a sufficiently large number of sampling branches. Since having fewer branches is more appealing from a practical point of view, it is sometimes desired to use alternative sampling techniques yielding the same performance as uniform multi-branch sampling with less sampling branches. For example, it was noted in \cite{YuxinNonUniform} that a system with a large number of uniform sampling branches can be replaced by a system with fewer branches with a different sampling frequency at each branch, or by a single branch sampler with modulation. Fig. \ref{fig:DistMulti} also raises the possibility of reducing the sampling frequency without significantly affecting performance, as the function $D^\star(P,f_s,R)$ for $P>1$ approximately achieves the asymptotic value of $D^\star(f_{Nyq},R)$ at $f_s \approx f_{Nyq}/3$. 

\begin{figure}
\begin{center}
\begin{tikzpicture}
\node at (-0.07,0) {\includegraphics[trim=0cm 0cm 0cm 0cm, clip=true, scale=0.55] {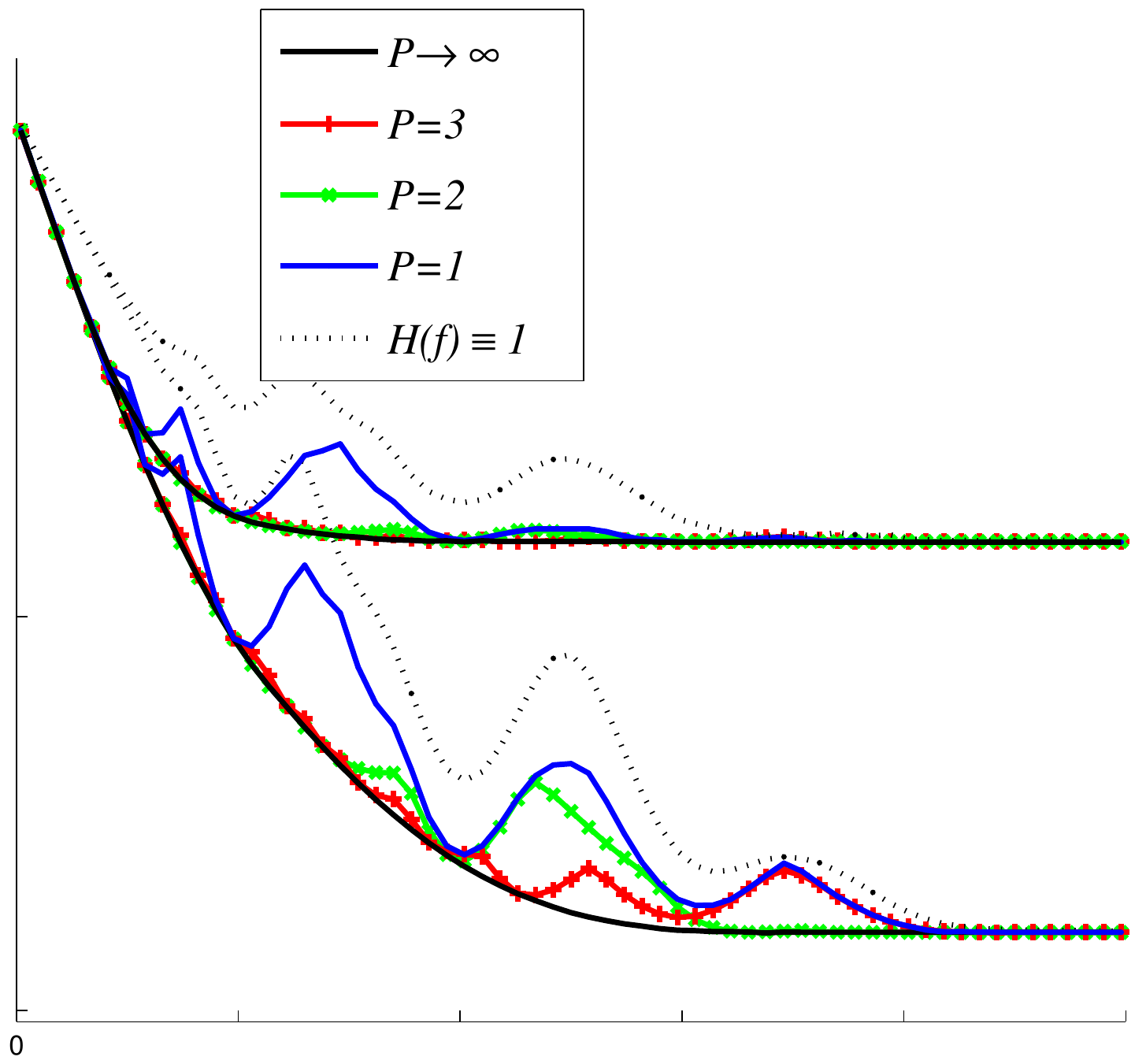}};
\node at (2.3,2.25) { \includegraphics[trim=2.5cm 8.5cm 1.7cm 7cm, clip=true, scale=0.24,frame]{wiggly_spectrum} };
\draw  [fill = white, draw = none] (0.7,-3.87) rectangle (1.2,-4.3);

\draw[->,line width = 2] (-4,-3.5) -- (4,-3.5) node[right] {$f_s$};
\draw[->,line width = 2] (-4,-3.5) -- node[above, rotate = 90, xshift = 0.5cm] {\small $D$} (-4,3.3);

\node at (2.5,3.5) {\small $S_X(f)$};
\draw  (-4.05,3) node[left] {\small $\sigma_X^2$} -- (-3.95,3);

\draw  (-4.05,-3.4) node[left] {\small $\frac{\sigma_X^2}{5}$} -- (-3.95,-3.4);

\draw[dotted]  (2.9,-3.5) node[below] {$f_{Nyq}$} -- (2.9,0.7);

\draw[dashed] (-4,-0.1) -- node[below,xshift = -0.6cm] {\small $D_X(R_0)$} (4,-0.1);
\draw[dashed] (-4,-2.87) -- node[below, xshift = -1cm, yshift = 0] {\small $D_X(4R_0)$} (4,-2.87);

\end{tikzpicture}

\caption{\label{fig:DistMulti}$D^\star(P,f_s,R)$ as a function of $f_s$ for $P=1,2,3$ and two fixed values of the source coding rate $R$. The spectrum of the source is given in the small frame.}
\end{center}
\end{figure}


\section{Conclusions and Future Work \label{sec:Concluding-Remarks}}
We considered a combined sampling and source coding problem, and derived an expression for the indirect distortion-rate function $D(f_s,R)$ of a continuous-time stationary Gaussian process corrupted by noise, given the uniform samples of this process obtained by single branch sampling and multi-branch sampling. By doing so we have generalized and unified the  Shannon-Whittaker-Kotelnikov sampling theorem and Shannon's rate-distortion theory for the important case of Gaussian stationary processes. An optimal design of the sampling structure that minimizes the distortion for a given sampling frequency $f_s$ and any source coding rate $R$ is shown to be the same as the sampling structure that minimizes the MMSE of signal reconstruction under regular sub-Nyquist sampling, i.e., without the bitrate constrained on the samples. This optimal sampling structure extracts the frequency components with the highest SNR. The function $D^{\star}\left(f_{s},R\right)$ associated with the optimal sampling structure is expressed only in terms of the spectral density of the source and the noise. It therefore describes a fundamental trade-off in information theory and signal processing associated with any Gaussian stationary source. \par
Since the optimal design of the sampling structure that leads to $D^\star(f_s,R)$ is tailored for a specific source statistic, it would be interesting to obtain a more universal sampling system which gives optimal performance in the case where the source statistic is unknown and taken from a family of possible distributions. For example, one may consider a `minmax' distortion approach which can be seen as the source coding dual of the channel coding problem considered in \cite{chen2013minimax}. This extension will incorporate signals with an unknown possibly sparse spectral support into our combined sampling and source coding setting. \par
The functions $D(f_s,R)$ and $D^\star(f_s,R)$ fully describe the amount of information lost in uniform sampling of an analog stationary Gaussian process, in the sense that any sampling and quantization scheme with the same constraining parameters must result in a worse distortion in reconstruction. A comparison between the distortion obtained by existing analog to digital conversion (ADC) techniques and the information theoretic bound $D^\star(f_s,R)$ can motivate the search for new ADC schemes or establish the optimality of existing ones. 
More generally, the combined source coding and sampling problem considered in this work can be seen as a source coding problem in which a constraint on the code to be a function of samples of the analog source at frequency $f_s$ is imposed. In practical ADC implementation other restrictions such as limited memory at the encoder and causality may apply. In order to understand the information theoretic bounds on such systems, it would be beneficial to extend our model to incorporate such restrictions. In particular, it is interesting to understand which restrictions lead to a non-trivial trade-off between the average number of bits per second used to represent the process and the sampling frequency of the system.


\appendices

\section{\label{sec:proof_mmse} }
In this Appendix we prove Propositions \ref{prop:mmse_single} and \ref{prop:opt_single}  and their multi-branch counterparts Theorems \ref{thm:mmse_multi} and \ref{thm:opt_filters_bank}. For ease of reading and understanding, we provide different proofs for the single-branch case and multi-branch case, although the former is clearly a special case of the latter.

\subsection{Proof of proposition \ref{prop:mmse_single}: MMSE in single branch sub-Nyquist sampling}
The result is obtained by evaluating \eqref{eq:mmse_average_equiv}. 
We first find $S_{X_\Delta|Y}(e^{2\pi i \phi})  = \frac{\left|S_{X_\Delta Y}\left(e^{2\pi i \phi}\right)\right|^2} {S_{Y}\left(e^{2\pi i \phi}\right)}$. We have 
\begin{align*}
S_{Y}(e^{2\pi i \phi}) & = \sum_{k\in\mathbb Z} S_{X+\eta}\left(f_s(\phi-k)\right)\left|H\left(f_s(\phi-k)\right)\right|^2,
\end{align*}
\begin{align*}
S_{X_\Delta Y}(e^{2\pi i \phi}) & = 
\sum_{l\in \mathbb Z}\mathbb E\left[X\left( \frac{n+l+\Delta}{f_s}\right)Z\left(\frac{n}{f_s}\right) \right]e^{-2\pi i l \phi} \\
 =  \sum_{k\in \mathbb Z}& S_X\left(f_s(\phi-k)\right)H^*\left(f_s(\phi-k)\right)e^{2\pi i k \Delta},
\end{align*}
where we used the fact that the spectral density of the polyphase component $X_\Delta[\cdot]$ equals
\[
S_{X_\Delta}\left(e^{2\pi i \phi}\right)=\sum_{k\in \mathbb Z}S_{X}\left(f_s(\phi-k) \right)e^{2\pi i k \Delta \phi}.
\]
This leads to
\begin{equation}\label{eq:mmse_single_proof}
\begin{split}
& S_{X_\Delta|Y}(e^{2\pi i \phi})  = \frac{\left|S_{X_\Delta Y}\left(e^{2\pi i \phi}\right)\right|^2} {S_{Y}\left(e^{2\pi i \phi}\right)}  \\
& = \frac{ \sum_{k,m\in \mathbb Z} S_XH^* \left(f_s(\phi-k)\right) S_XH\left(f_s(\phi-m)\right) e^{2\pi i (k-m) \Delta}}{\sum_{k\in \mathbb Z}S_Y\left(f_s(\phi-k)\right)}. 
 \end{split}
\end{equation}
Integrating \eqref{eq:mmse_single_proof} over $\Delta$ from $0$ to $1$ gives
 \begin{align}
\frac{\sum_{k\in \mathbb Z}S_X^2\left(f_s(\phi-k)\right)\left|H\left(f_s(\phi-k)\right)\right|^2}{\sum_{k\in \mathbb Z} S_{X+\eta}\left(f_s(\phi-k)\right)\left|H\left(f_s(\phi-k)\right)\right|^2}.
\label{eq:mmse_single_proof2}
\end{align}
Substituting \eqref{eq:mmse_single_proof2} into \eqref{eq:mmse_average_equiv} and changing the integration variable from $\phi$ to $f/f_s$ leads to \eqref{eq:mmse_single_theorem}.

\subsection{Proof of theorem~\ref{prop:opt_single} \label{sec:proof_mmse_opt_single}: optimal pre-sampling filter in single-branch sampling}

Since $\widetilde{ S}_{X|Y}(f) \geq0$, we can maximize the integral over $\widetilde{ S}_{X|Y}(f)$ by maximizing the latter for every $f$ in $\left(-\frac{f_{s}}{2},\frac{f_{s}}{2}\right)$. For a given $f$, denote $h_{k}=\left|H\left(f-f_{s}k\right)\right|^{2}$, $x_{k}=S_{X}^{2}\left(f-f_{s}k\right)$ and $y_{k}=S_{X+\eta}(f-f_sk)=S_{X}\left(f-f_{s}k\right)+S_{\eta}\left(f-f_{s}k\right)$.
We arrive at the following optimization problem 
\begin{align*}
\text{maximize} &  & \frac{\sum_{k\in\mathbb{Z}}x_{k}h_{k}}{\sum_{k\in\mathbb{Z}}y_{k}h_{k}}\\
\text{subject to} &  & h_{k}\geq0,\quad k\in\mathbb{Z}.
\end{align*}
Because the objective function is homogeneous in $\mathbf h=\left(...,h_{-1},h_{0},h_{1},...\right)$,
the last problem is equivalent to 
\begin{eqnarray*}
\text{maximize} &  & \sum_{k\in\mathbb{Z}}x_{k}h_{k}\\
\text{subject to} &  & h_{k}\geq0,\quad k\in\mathbb{Z},\\
 &  & \sum_{k\in\mathbb{Z}}y_{k}h_{k}=1.
\end{eqnarray*}
The optimal value of this problem is $\max_{k}\frac{x_{k}}{y_{k}}$, i.e. the maximal ratio over all pairs $x_{k}$ and $y_{k}$. The optimal $\mathbf h$ is the indicator for the optimal ratio: 
\[
h_{k}^{\star}=\begin{cases}
1 & k\in {\mathrm {argmax}}_{k}\frac{x_{k}}{y_{k}},\\
0 & otherwise.
\end{cases}
\]
If there is more than one $k$ that maximizes $\frac{x_k}{y_k}$, then we can arbitrarily decide on one of them. \par 
Going back to our standard notations, we see that for almost every $f\in\left(-\frac{f_{s}}{2},\frac{f_{s}}{2}\right)$,
the optimal $\widetilde{ S}_{X|Y}(f)$ is given by 
\begin{align*}
\widetilde{ S}_{X|Y}(f)^{\star}\left(f\right) & =  \max_{k\in\mathbb{Z}}\frac{S_{X}^{2}\left(f-f_{s}k\right)}{S_{X+\eta}\left(f-f_{s}k\right)},
\end{align*}
and the optimal $H\left(f\right)$ is such that $\left|H\left(f-f_{s}k\right)\right|^{2}$ is non-zero for the particular $k$ that achieves this maximum. This also implies that $F^\star$, the support of $H^\star(f)$, satisfies properties $(i)$ and $(ii)$ in Definition~\ref{def:aliasing_free}.

\subsection{\label{sec:proof_mmse_filters_bank} Proof of Theorem \ref{thm:mmse_multi}: MMSE in multi-branch sub-Nyquist sampling}

As in the proof of Theorem~\ref{prop:mmse_single} in the previous subsection, the result is obtained by evaluating 
\begin{equation} \label{eq:mmse_multi_proof}
mmse_{X_\Delta|\mathbf Y}(f_s)=\sigma_X^2-\int_{-\frac{1}{2}}^{\frac{1}{2}} \int_0^1 \mathbf S_{X_\Delta|\mathbf Y}\left(e^{2\pi i \phi} \right)d\Delta d\phi.
\end{equation}
Since 
\begin{align*}
 \mathbf C_{X_\Delta\mathbf Y}[k] & =\mathbb E\left[X_\Delta[n+k]\mathbf Y^*[n] \right]\\ & = \left( \mathbf C_{X_\Delta Y_1}[k],\ldots,C_{X_\Delta Y_P}[k] \right),
\end{align*}
we get
\begin{align*}
 \mathbf S_{X_{\Delta}\mathbf Y}\left(e^{2\pi i \phi}\right)=\left( S_{X_{\Delta} Y_1}\left(e^{2\pi i \phi}\right),\ldots,S_{X_{\Delta} Y_P}\left(e^{2\pi i \phi}\right)\right).
\end{align*}
Using $X_\Delta[n]=X\left(\frac{n+\Delta}{f_s}\right)$, for each $p=1,\ldots,P$, we have
\begin{align*}
S_{X_\Delta Y_p}\left(e^{2\pi i \phi}\right) & =\sum_{l\in \mathbb Z} \mathbb E\left[X\left(\frac{n+l+\Delta}{f_s} \right)Z_p\left(\frac{n}{f_s} \right) \right]e^{-2\pi i l \phi} \\
& = \sum_{k\in \mathbb Z}S_X\left(f_s\left(\phi-k\right)\right)H_p^*\left(f_s\left(\phi-k\right)\right)e^{2\pi i k \Delta}.
\end{align*}
In addition, the $\left(p,r\right)^\textrm{th}$ entry of the $P\times P$ matrix $ \mathbf S_{\mathbf Y}\left(e^{2\pi i \phi}\right)$ is given by
\[
\left\{  \mathbf S_{\mathbf Y}\left(e^{2\pi i \phi}\right)\right\}_{p,r} = \sum_{k\in \mathbb Z} \left\{S_{X+\eta}H_p^*H_r\right\} \left(f_s\left(\phi-k\right)\right),
\]
where we have used the shortened notation 
\[
\left\{S_1 S_2 \right\}(x)\triangleq S_1(x)S_2(x)
\] 
for two functions $S_1$ and $S_2$ with the same domain.
It follows that 
\[
 \mathbf S_{X_\Delta|\mathbf Y}\left(e^{2\pi i \phi}\right) = \left\{ \mathbf S_{X_\Delta \mathbf Y}  \mathbf S_{\mathbf Y}^{-1}  \mathbf S^*_{X_\Delta \mathbf Y}\right\}\left(e^{2\pi i \phi}\right)
\]
can also be written as 
\begin{equation}
 \mathbf S_{X_\Delta|\mathbf Y}\left(e^{2\pi i \phi}\right) = \mathrm {Tr}\left\{\mathbf S_{\mathbf Y}^{-\frac{1}{2}*}  \mathbf S_{X_\Delta \mathbf Y}^*  \mathbf S_{X_\Delta \mathbf Y}  \mathbf S_{\mathbf Y}^{-\frac{1}{2}} \right\} \left(e^{2\pi i \phi}\right), \label{eq:mmse_multi_proof1}
\end{equation}
where $ \mathbf S_{\mathbf Y}^{-\frac{1}{2}*}\left(e^{2\pi i \phi}\right)$ is the $P\times P$ matrix satisfying $ \mathbf S_{\mathbf Y}^{-\frac{1}{2}*}\left(e^{2\pi i \phi}\right)  \mathbf S_{\mathbf Y}^{-\frac{1}{2}}\left(e^{2\pi i \phi}\right)= \mathbf S_{\mathbf Y}^{-1}\left(e^{2\pi i \phi}\right)$. \par
The $(p,r)^\textrm{th}$ entry of $ \mathbf S_{X_\Delta \mathbf Y}^*\left(e^{2\pi i \phi}\right)  \mathbf S_{X_\Delta \mathbf Y}\left(e^{2\pi i \phi}\right)$ is given by 
\begin{align*}
& \left\{ \mathbf S_{X_\Delta \mathbf Y}^* \mathbf S_{X_\Delta \mathbf Y} \right\}_{p,r}\left(e^{2\pi i \phi}\right)  =  \sum_{k\in \mathbb Z}\left\{S_XH_p^* \right\}\left(f_s(\phi-k)\right)e^{2\pi i k \Delta} \nonumber \\
& \quad \quad \quad \quad \quad \times \sum_{l\in \mathbb Z}\left\{S_XH_r \right\}\left(f_s(\phi-l) \right) e^{-2\pi i  l \Delta}  \nonumber \\
= & \sum_{k,l\in \mathbb Z}\left[ \left\{S_XH_p^* \right\}\left(f_s(\phi-k) \right)
 \left\{S_XH_r \right\}\left(f_s(\phi-l) \right) e^{2\pi i \Delta(k-l)} \right],
\end{align*}
which leads to
\begin{align*}
\int_0^1 \left\{  \mathbf S_{X_\Delta \mathbf Y}^*  \mathbf S_{X_\Delta \mathbf Y} \right\}_{p,r}\left(e^{2\pi i \phi}\right) d\Delta = \sum_{k\in\mathbb Z} \left\{ S_X^2 H_p^* H_r \right\} \left(f_s(\phi-k)\right).
\end{align*}
From this we conclude that integrating \eqref{eq:mmse_multi_proof1} with respect to $\Delta$ from $0$ to $1$ results in  
\[
\mathrm {Tr}\left\{  \mathbf S_{\mathbf Y}^{-\frac{1}{2}*} \bar{\mathbf K}  \mathbf S_{\mathbf Y}^{-\frac{1}{2}} \right\} \left(e^{2\pi i \phi}\right),
\]
where $\bar{\mathbf K}\left(e^{2\pi i} \right)$ is the $P\times P$ matrix given by
\[
\bar{\mathbf K}_{p,r}\left(e^{2\pi i\phi} \right) = \sum_{k\in\mathbb Z} \left\{ S_X^2 H_p^* H_r \right\} \left(f_s(\phi-k)\right).
\]
The proof is completed by changing the integration variable in \eqref{eq:mmse_multi_proof} from $\phi$ to $f=\phi f_s$, so $S_{\mathbf Y}\left(e^{2\pi i \phi}\right)$ and $\bar{\mathbf K}\left(e^{2\pi i \phi}\right)$ are replaced by $\tilde{\mathbf{S}}_{\mathbf Y}(f)$ and $\mathbf K(f)$, respectively.

\subsection{Proof of theorem \ref{thm:mmse_opt_filters_bank}: optimal filter-bank in multi-branch sampling}
Let $\mathbf H(f) \in \mathbb C^{\mathbb Z \times P}$ be the matrix with $P$ columns of infinite length defined by
\[
\mathbf H(f) = \begin{pmatrix}
\vdots & \vdots & \cdots & \vdots \\
H_1(f-2f_s) & H_2(f-2f_s) & \cdots & H_P(f-2f_s) \\
H_1(f-f_s) & H_2(f-f_s) & \cdots & H_P(f-f_s) \\
H_1(f) & H_2(f) & \cdots & H_P(f) \\
H_1(f+f_s) & H_2(f+f_s) & \cdots & H_P(f+f_s) \\
H_1(f+2f_s) & H_2(f+2f_s) & \cdots & H_P(f+2f_s) \\
\vdots & \vdots & \cdots & \vdots \\ 
\end{pmatrix}.
\]
In addition, denote by $\mathbf S(f) \in \mathbb R^{\mathbb Z \times \mathbb Z}$  and $\mathbf S_n(f) \in \mathbb R^{\mathbb Z \times \mathbb Z}$ the infinite diagonal matrices with diagonal elements  $\left\{\mathbf S_X (f-f_s k),~ k\in \mathbb Z \right\}$ and $\left\{\mathbf S_{X+\eta} (f-f_s k),~ k\in \mathbb Z \right\}$, respectively. With this notation we can write 
\[
\widetilde{\mathbf S}_{X|Y}(f) = \left(\mathbf H^* \mathbf S_n \mathbf H \right)^{-\frac{1}{2}*} \mathbf H^* \mathbf S^2 \mathbf H \left(\mathbf H^* \mathbf S_n \mathbf H \right)^{-\frac{1}{2}},
\]
where we suppressed the dependency on $f$ in order to keep notation neat. Denote by $\mathbf H^\star(f)$ the matrix $\mathbf H(f)$ that corresponds to the filters $H^\star(f),\ldots,H^\star(f)$ that satisfy conditions $(i)$ and $(ii)$ in Theorem~\ref{thm:mmse_opt_filters_bank}. By part (iii) of the remark at the end of Theorem~\ref{thm:mmse_opt_filters_bank}, the structure of $\mathbf H^\star(f)$ can be described as follows: each column has a single non-zero entry, such that the first column indicates the largest among $\left\{\frac{S^2_X(f-f_sk)}{S_{X+\eta}(f-f_sk)},\,k\in \mathbb Z \right\}$, which is the diagonal of $\mathbf S(f) \mathbf S_n^{-1}(f) \mathbf S(f)$. The second column corresponds to the second largest entry of  $\left\{\frac{S^2_X(f-f_sk)}{S_{X+\eta}(f-f_sk)},\,k\in \mathbb Z \right\}$, and so on for all $P$ columns of $\mathbf H^\star(f)$. This means that $\widetilde{\mathbf  S}_{X|\mathbf Y}^\star(f)$ is a $P\times P$ diagonal matrix whose non-zero entries are the $P$ largest values among $\left\{\frac{S^2_X(f-f_sk)}{S_{X+\eta}(f-f_sk)},\,k\in \mathbb Z \right\}$, i.e, $\lambda_p \left( \widetilde{\mathbf  S}_{X|\mathbf Y}^\star(f) \right) = J_p^\star(f)$, for all $p=1,\ldots,P$. \par
It is left to establish the optimality of this choice of pre-sampling filters. Since the rank of $\widetilde{\mathbf S}_{X|Y}(f)$ is at most $P$, in order to complete the proof it is enough to show that for any $\mathbf H(f)$, the $P$ eigenvalues of the corresponding $\widetilde{\mathbf S}_{X|\mathbf Y}(f)$ are smaller then the $P$ largest eigenvalues of $\mathbf S(f) \mathbf S_n^{-1}(f) \mathbf S(f)$ compared by their respective order. Since the matrix $ $ entries of the diagonal matrices $\mathbf S(f)$ and $\mathbf S_n(f)$ are positive, the eigenvalues of $\widetilde{\mathbf S}_{X|\mathbf Y}(f)$ are identical to the $P$ non-zero eigenvalues of the matrix 
\[
\mathbf S \mathbf H \left(\mathbf H^* \mathbf S_n \mathbf H \right)^{-1} \mathbf H^* \mathbf S.
\]
It is enough to prove that the matrix
\[
\mathbf S \mathbf S_n^{-1} \mathbf S - \mathbf S \mathbf H \left(\mathbf H^* \mathbf S_n \mathbf H \right)^{-1} \mathbf H^* \mathbf S,
\]
is positive\footnote{In the sense that it defines a positive linear operator on the Hilbert space $\ell_2\left(\mathbb C \right)$.  The linear algebra notation we use here is consistent with the theory of positive operators on Hilbert spaces.}.  This is equivalent to 
\begin{equation} \label{eq:proof_mmse_opt_filterbank1}
\mathbf a^*\mathbf S \mathbf S_n^{-1} \mathbf S \mathbf a - \mathbf a^* \mathbf S \mathbf H \left(\mathbf H^* \mathbf S_n \mathbf H \right)^{-1} \mathbf H^* \mathbf S \mathbf a^* \geq 0,
\end{equation}
for any sequence $\mathbf a \in \ell_2\left(\mathbb C \right)$. By factoring out $\mathbf S \mathbf S_n^{-\frac{1}{2}}$ from both sides, \eqref{eq:proof_mmse_opt_filterbank1} reduces to
\begin{align} \label{eq:proof_mmse_opt_filterbank2}
\mathbf a^* \mathbf  a- \mathbf a^*\mathbf S_n^{\frac{1}{2}} \mathbf H \left(\mathbf H^* \mathbf S_n \mathbf H \right)^{-1} \mathbf H^* \mathbf S_n^{\frac{1}{2}}\mathbf a\geq 0.
\end{align}
The Cauchy-Schwartz inequality implies that
\begin{align} \label{eq:proof_mmse_opt_filterbank3}
& \left(\mathbf a^*\mathbf S_n^{\frac{1}{2}} \mathbf H \left(\mathbf H^* \mathbf S_n \mathbf H \right)^{-1} \mathbf H^* \mathbf S_n^{\frac{1}{2}}\mathbf a \right)^2 \nonumber \\  &\leq \mathbf a^* \mathbf a \times \mathbf a^* \mathbf S_n^{\frac{1}{2}} \mathbf H \left( \mathbf H^* \mathbf S_n \mathbf H\right)^{-1} \mathbf H^*  \mathbf S_n^{\frac{1}{2}} \mathbf S_n^{\frac{1}{2}} \mathbf H \left( \mathbf H^* \mathbf S_n \mathbf H\right)^{-1} \mathbf H^*\mathbf S_n^{\frac{1}{2}} \mathbf a \nonumber \\
& = \mathbf a^* \mathbf a \left( \mathbf a^* \mathbf S_n^{\frac{1}{2}} \mathbf H \left( \mathbf H^* \mathbf S_n \mathbf H\right)^{-1}  \mathbf H^*\mathbf S_n^{\frac{1}{2}} \mathbf a\right).
\end{align}
Dividing \eqref{eq:proof_mmse_opt_filterbank3} by $\left( \mathbf a^* \mathbf S_n^{\frac{1}{2}} \mathbf H \left( \mathbf H^* \mathbf S_n \mathbf H\right)^{-1}  \mathbf H^*\mathbf S_n^{\frac{1}{2}} \mathbf a\right)$ leads to \eqref{eq:proof_mmse_opt_filterbank2}.

\section{Proof of theorem \ref{thm:discrete_decimation_rate_distortion}: distortion-rate function in discrete-time sampling\label{sec:Proof-of-discrete}}

Note that $X\left[\cdot\right]$ and $Y\left[\cdot\right]$ are in general not jointly stationary for $M>1$, and we cannot use the discrete-time version of Theorem~\ref{thm:[Dobrushin-and-Tsybakov]} in \eqref{eq:DnT_discrete} as is. Instead we proceed as follows: For a given $M \in \mathbb N$, define the vector-valued process $\mathbf X^M[\cdot]$ by
\[
\mathbf X^M[n]=\left(X[Mn],X[Mn+1],\ldots,X[Mn+M-1] \right),
\] 
and denote by $X^M_m[\cdot]$ its $m^\textrm{th}$ coordinate, $m=0,\ldots,M-1$. For each $r,m=0,\ldots,M-1$ and $n,k \in \mathbb Z$, the covariance between $X^M_m[n]$ and $X^M_r[k]$ is given by
\[
C_{X^M_mX_r^M}[k]=\mathbb E\left[X^M_m[n+k]X^M_r[n]^* \right]=C_X[Mk+m-r].
\]
This shows that $Y[\cdot]=h[\cdot]*X^M_0[\cdot]$ is jointly stationary with the processes  $X^M[\cdot]$. \par
By properties of multi-rate signal processing (see for example
\cite{52200}),
\[
S_{Y}\left(e^{2\pi i\phi}\right)=\frac{1}{M}\sum_{m=0}^{M-1}S_{Z}\left(e^{2\pi i\frac{\phi-m}{M}}\right),
\]
\begin{align*}
S_{X^M_rX^M_s}\left(e^{2\pi i\phi}\right)
 & = S_{X^M_{r-s}X^M_0}\left(e^{2\pi i\phi}\right) \\
 & = \sum_{k\in\mathbb Z}C_{X}\left[Mk+r-s\right]e^{-2\pi ik\phi}\\
 & =  \frac{1}{M}\sum_{m=0}^{M-1}e^{2\pi i(r-s)\frac{\phi-m}{M}}S_{X}\left(e^{2\pi i\frac{\phi-m}{M}}\right), 
\end{align*}
and
\begin{align*}
S_{X^M_rY}\left(e^{2\pi i\phi}\right) & =  \sum_{k\in\mathbb Z}C_{X^M_rY}\left[k\right]e^{-2\pi ik\phi}\\
 & = \sum_{k\in\mathbb Z}C_{XZ}\left[Mk+r\right]e^{-2\pi ik\phi}\\
 & =  \frac{1}{M}\sum_{m=0}^{M-1}e^{2\pi ir\frac{\phi-m}{M}}S_{XZ}\left(e^{2\pi i\frac{\phi-m}{M}}\right),
\end{align*}
from which we can form the $M\times 1$ matrix 
\[
\mathbf S_{\mathbf{X}^M Y}\left(e^{2\pi i \phi }\right)=\begin{pmatrix}
S_{X^M_{0}Y}(e^{2\pi i \phi)})\\
\vdots\\
S_{X^M_{M-1}Y}(e^{2\pi i \phi})
\end{pmatrix}.
\]
The spectral density of the MMSE estimator of $\mathbf{X}\left[\cdot\right]$
from $Y\left[\cdot\right]$ equals $\mathbf S_{\mathbf{X}Y}S_{Y}^{-1} \mathbf S_{\mathbf{X}Y}^{*}\left(e^{2\pi i\phi}\right)$,
which is a matrix of rank one. Denote its non-zero eigenvalue by $J_{M}\left(e^{2\pi i\phi}\right)$,
which is given by the trace:
\begin{align*}
& J_{M}\left(e^{2\pi i\phi}\right)  =  \mathrm{Tr}~\mathbf S_{\mathbf{X}^M Y}S_{Y}^{-1} \mathbf S_{\mathbf{X}^MY}^{*}\left(e^{2\pi i \phi}\right)\\
& \quad \quad \quad \quad =  \frac{1}{S_{Y}\left(e^{2\pi i \phi}\right)}\sum_{r=0}^{M-1}\left|S_{X^M_rY }\left(e^{2\pi i \phi}\right)\right|^{2}\\
& = \frac{ \frac{1}{M} \sum_{r=0}^{M-1}\sum_{m=0}^{M-1}\sum_{l=0}^{M-1} S_{XZ}\left(e^{2\pi i\frac{\phi-m}{M}}\right)
 S_{XZ}^{*}\left(e^{2\pi i\frac{\phi-l}{M}}\right)e^{-2\pi ir\frac{m-l}{M}}} { \sum_{m=0}^{M-1}S_{Z}\left(e^{2\pi i\frac{\phi-m}{M}}\right) } \\
 &\quad\quad \quad \quad =  \frac{\sum_{m=0}^{M-1}\left|S_{XZ}\left(e^{2\pi i\frac{\phi-m}{M}}\right)\right|^{2}}{\sum_{m=0}^{M-1}S_{Z}\left(e^{2\pi i\frac{\phi-m}{M}}\right)}.
\end{align*}

By Theorem \ref{thm:indirect_vector_case}, the iDRF of $\mathbf{X}^M\left[\cdot\right]$ given $Y\left[\cdot\right]$
is 
\begin{subequations}
\label{eq:proof_discrete}
\begin{align}
\label{eq:proof_discreteR}
R\left(\theta\right) & = \frac{1}{2}\int_{-\frac{1}{2}}^{\frac{1}{2}}\log^{+}\left[J_{M}\left(e^{2\pi i\phi}\right)\theta^{-1}\right]d\phi,
\end{align}
\begin{align}
\label{eq:proof_discreteD}
D_{\mathbf X^M|Y} \left(\theta\right) & = \mmse_{\mathbf{X}^M|Y}+ \frac{1}{M}\int_{-\frac{1}{2}}^{\frac{1}{2}}\min\left\{ J_{M}\left(e^{2\pi i\phi}\right),\theta\right\} d\phi.
\end{align}
\end{subequations}
Note that
\begin{align*}
\mmse_{X|Y}  = & \lim_{N\rightarrow\infty}\frac{1}{2N+1}\sum_{n=-N}^{N}\mathbb{E}\left(X\left[n\right]-\mathbb{E}\left[X\left[n\right]|Y\left[\cdot\right]\right]\right)^{2}\\
= & \frac{1}{M}\sum_{m=0}^M\mathbb{E}\left(X^M_m\left[n\right]- \mathbb{E}\left[X_{m}^M\left[n\right]|Y\left[\cdot\right]\right]\right)^{2}\\
  = & \frac{1}{M} \sum_{m=0}^{M-1} \mmse_{X_{m}|Y}\\
  = & \mmse_{\mathbf{X}^M|Y}. 
\end{align*}

Since $\mathbf {X}^M[\cdot]$ is a stacked version of  $X[\cdot]$, both processes share the same indirect rate-distortion function given $Y\left[\cdot\right]$. Thus, the result is obtained by substituting $ \mmse_{\mathbf {X}^M | Y }$ and $J_M(e^{2\pi i \phi})$ in \eqref{eq:proof_discrete}.  \\

\section{Proof of theorem \ref{thm:main_result}\label{sec:proof:main_result}:  distortion-rate function in single branch sampling}

For each $M=1,2,...$ define $X^M\left[\cdot\right]$ and $Z^M\left[\cdot\right]$
to be the processes obtained by uniformly sampling $X\left(\cdot\right)$
and $Z\left(\cdot\right)$ at frequency $f_{s}M$, i.e. $X^M[n]=X\left(\frac{n}{f_{s}M}\right)$
and $Z^M [n]=Z\left(\frac{n}{f_{s}M}\right)$. The spectral
density of $X^M[\cdot]$ is 
\[
S_{X^M}\left(e^{2\pi i\phi}\right)=Mf_{s}\sum_{k\in\mathbb Z}S_{X}\left(Mf_{s}\left(\phi-k\right)\right).
\]
Using similar considerations as in the proof of Theorem \ref{thm:discrete_decimation_rate_distortion},
we see that $X^M[\cdot]$ and $Z^M[\cdot]$
are jointly stationary processes with cross correlation function
\[
C_{X^M Z^M}\left[k\right]=C_{XZ}\left(\frac{k}{Mf_{s}}\right),
\]
and cross spectral density
\[
S_{X^{M}Z^{M}}\left(e^{2\pi i\phi}\right)=Mf_{s}\sum_{k\in\mathbb Z}S_{XZ}\left(Mf_{s}\left(\phi-k\right)\right).
\]
Note that $Y\left[\cdot\right]$ is a factor-$M$ down-sampled version of $Z^M\left[\cdot\right]$, and the indirect rate-distortion function of $X^{M}\left[\cdot\right]$ given $Y\left[\cdot\right]$ is obtained by Theorem~\ref{thm:discrete_decimation_rate_distortion} as follows:  
\begin{equation}
\bar{R}_{X^M|Y}\left(\theta\right)=\frac{1}{2}\int_{-\frac{1}{2}}^{\frac{1}{2}}\log^{+}\left[\frac{1}{M}J_{M}\left(e^{2\pi i\phi}\right)\theta^{-1}\right]d\phi,\label{eq:Rate_M}
\end{equation}
\begin{align}
D_{X^M|Y}\left(\theta\right)  = &\mmse_{X^M|Y}(M) 
 +\int_{-\frac{1}{2}}^{\frac{1}{2}}\min\left\{ J_{M}\left(e^{2\pi i\phi}\right),\theta\right\} d\phi \nonumber  \\
  = & \sigma_{X^M}^2  -\int_{-\frac{1}{2}}^{\frac{1}{2}}\left[J_{M}\left(e^{2\pi i\phi}\right)-\theta\right]^{+}d\phi. \label{eq:Dist_M}
\end{align}

Since the sampling operation preserves the $L_2$ norm of the signal, we have $\sigma^2_{X^M}=\sigma^2_X$. In our case $J_M\left(e^{2\pi i\phi}\right)$ is obtained by substituting the spectral densities $S_{X^M Z^M}(e^{2\pi i \phi})$ and $S_{Z^M}(e^{2\pi i \phi})$,
\begin{align}
J_{M}\left(e^{2\pi i\phi}\right)  = &  \frac{1}{M}\frac{\sum_{m=0}^{M-1}\left|S_{X^M Z^M}\left(e^{2\pi i\frac{\phi-m}{M}}\right)\right|^{2}}{\sum_{m=0}^{M-1}S_{Z^M}\left(e^{2\pi i\frac{\phi-m}{M}}\right)} \nonumber \\
 = & \frac{f_s \sum_{m=0}^{M-1}\left|\sum_{k\in\mathbb Z}S_{XZ} \left(f_s M\left(\phi-m-Mk\right)\right)\right|^2 } { \sum_{m=0}^{M-1}  \sum_{k\in\mathbb Z}S_Z \left(f_s M\left(\phi-m-Mk\right)\right)}.
\label{eq:K_M_proof_main}
\end{align}
We now take the limit $M\rightarrow\infty$ in \eqref{eq:Rate_M} and \eqref{eq:Dist_M}. Under the assumption of Riemann integrability, the distortion between almost any sample path of $X\left(\cdot\right)$ and any reasonable reconstruction of it from $X^{M}[\cdot]$ (e.g., \emph{sample and hold}) will converge to zero. It follows that the distortion in reconstructing $X^{M}[\cdot]$ form $\hat{Y}[\cdot]$ must also converge to the distortion in reconstructing $X\left(\cdot\right)$ from $\hat{Y}\left[\cdot\right]$, and the indirect distortion-rate function of $X\left(\cdot\right)$ given $Y\left[\cdot\right]$ is obtained by this limit. For a detailed explanation on the converges of the DRF of a sampled source to the DRF of the continuous-time version we refer to \cite{neuhoff2013information}. Thus, all that remains is to show that  
\begin{equation} \label{eq:main_proof_to_show1}
\lim_{M\rightarrow\infty}\int_{-\frac{1}{2}}^{\frac{1}{2}} J_{M}\left(e^{2\pi i\phi}\right)d\phi  = f_s\int_{-\frac{1}{2}}^{\frac{1}{2}}J(f_s\phi)d\phi = \int_{-\frac{f_s}{2}}^{\frac{f_s}{2}}\widetilde{S}_{X|Y}(f)df.
\end{equation}
Denote 
\[
g\left(f\right)\triangleq \sum_{n\in\mathbb Z}\left|S_{XZ}\left(f-f_{s}n\right)\right|^{2},
\] 
\[
h\left(f\right)\triangleq \sum_{n\in\mathbb Z}S_{Z}\left(f-f_{s}n\right),
\]
and
\begin{align*}
g_{M}\left(f\right)\triangleq \sum_{m=0}^{M-1}\left|\sum_{k\in\mathbb Z}S_{XZ}\left(f-f_{s}\left(m-Mk\right)\right)\right|^{2} \\
 = \sum_{m=0}^{M-1} \sum_{k\in\mathbb Z}S_{XZ}^*\left(f-f_{s}\left(m-Mk\right)\right) \sum_{l\in\mathbb Z} S_{XZ}\left(f-f_{s}\left(m-Mk\right)\right).
\end{align*}
Note that since $S_Z(f)$ and $|S_{XZ}(f)|^2/{S_Z(f)}$ are $\mathbf L_1(\mathbb R)$ functions, $g(f)$, $h(f)$ and $g^M(f)$ are almost surely bounded periodic functions. Since the denominator in \eqref{eq:K_M_proof_main} reduces to $\sum_{n\in\mathbb Z}S_{Z}\left(f_{s}\left(\phi-n\right)\right)$, \eqref{eq:main_proof_to_show1} can be written as
\begin{equation} \label{eq:main_proof_to_show2}
\lim_{M\rightarrow \infty} \int_{-\frac{1}{2}}^\frac{1}{2} \frac{g_M(f_s \phi)}{h(f_s\phi)}d\phi=\int_{-\frac{1}{2}}^\frac{1}{2} \frac{g(f_s\phi)}{h(f_s \phi)} d\phi.
\end{equation}

Since the function $h(f)$ is periodic with period $f_s$, we can write the RHS of \eqref{eq:main_proof_to_show2} as
\begin{align}
\int_{-\frac{1}{2}}^\frac{1}{2}\frac{g_M (f_s\phi)}{h(f_s\phi)} d\phi \nonumber 
= & \int_{-\frac{1}{2}}^\frac{1}{2} \sum_{m=0}^{M-1} \left \{ \sum_{k\in\mathbb Z}\frac{S_{XZ}\left(f_s(\phi-m+Mk)\right)}{\sqrt{h\left(f_s(\phi-m+Mk) \right)}} \right. \\
& \times \left.  \sum_{l\in\mathbb Z} \frac{S_{XZ}^*\left(f_s(\phi-m+Mk)\right)}{\sqrt{h\left(f_s(\phi-m+Mk) \right)}} \right\} d\phi. \label{eq:main_proof_before_lemma}
\end{align}
Denoting 
\[
f_1(\phi)=\frac{S_{XZ}\left(\phi f_s\right)}{\sqrt{h\left(\phi f_s \right)}},
\]
and $f_2(\phi)=f_1^*(\phi)$, \eqref{eq:main_proof_to_show2} follows from the following lemma:
\begin{lemma} \label{lem:proof_main}
Let $f_1(\varphi)$ and $f_2(\varphi)$ be two complex valued bounded functions such that $\int_{-\infty}^\infty \left|f_i(\varphi) \right|^2 d\varphi<\infty $, $i=1,2$. Then for any $f_s > 0$,
\begin{equation} \label{eq:main_proof_lemma1}
\int_{-\frac{1}{2}}^{\frac{1}{2}} \sum_{m=0}^{M-1} \sum_{k\in \mathbb Z} f_1\left(\phi+m+kM\right) \sum_{l\in\mathbb Z} f_2\left(\phi+m+lM\right) d\phi
\end{equation}
converges to
\begin{equation} \label{eq:main_proof_lemma2}
\int_{-\frac{1}{2}}^{\frac{1}{2}} \sum_{n\in \mathbb Z} f_1\left(\phi-n\right)  f_2\left(\phi-n\right) d\phi,
\end{equation}
as $M$ goes to infinity.
\end{lemma}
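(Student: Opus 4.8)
The plan is to collapse the double-indexed integral into a single cross-correlation sum and then show that every off-diagonal term of that sum vanishes in the limit.

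First I would exploit the fact that as $(m,k)$ ranges over $\{0,\dots,M-1\}\times\mathbb Z$ the quantity $m+kM$ runs bijectively over $\mathbb Z$. Writing $F_i(\psi)\triangleq\sum_{k\in\mathbb Z}f_i(\psi+kM)$ for the $M$-periodization of $f_i$, the inner sums in \eqref{eq:main_proof_lemma1} are exactly $F_1(\phi+m)$ and $F_2(\phi+m)$. Substituting $\psi=\phi+m$ and observing that $(\phi,m)\in[-\tfrac12,\tfrac12]\times\{0,\dots,M-1\}$ sweeps $\psi$ over an interval of length $M$, i.e.\ one full period of the $M$-periodic product $F_1F_2$, the expression \eqref{eq:main_proof_lemma1} becomes $\int_0^M F_1(\psi)F_2(\psi)\,d\psi$. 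On the other side, the translates $[-\tfrac12,\tfrac12]-n$ tile $\mathbb R$, so \eqref{eq:main_proof_lemma2} equals $\int_{-\infty}^\infty f_1(u)f_2(u)\,du=\rho(0)$, where I set $\rho(t)\triangleq\int_{-\infty}^\infty f_1(u)f_2(u+t)\,du$.

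Next I would expand $\int_0^M F_1F_2$, interchange summation and integration, substitute $u=\psi+kM$ on each block $[0,M]$, and re-index by $j=l-k$ to obtain $\int_0^M F_1(\psi)F_2(\psi)\,d\psi=\sum_{j\in\mathbb Z}\rho(jM)$. Since the $j=0$ term is precisely $\rho(0)$, i.e.\ \eqref{eq:main_proof_lemma2}, the lemma is equivalent to the statement $\sum_{j\neq0}\rho(jM)\to0$ as $M\to\infty$. To close this I would use that $\rho$, being the cross-correlation of two $L_2$ functions, is bounded, continuous and satisfies $\rho(t)\to0$ as $|t|\to\infty$, and that $\widehat\rho(\xi)=\widehat{f_1}(-\xi)\widehat{f_2}(\xi)$ is a product of two $L_2$ functions, hence $\widehat\rho\in L_1$ by Cauchy--Schwarz. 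Poisson summation then gives $\sum_{j\in\mathbb Z}\rho(jM)=\tfrac1M\sum_{n\in\mathbb Z}\widehat\rho(n/M)$, whose right-hand side is a Riemann sum of mesh $1/M$ for $\int_{-\infty}^\infty\widehat\rho=\rho(0)$; letting $M\to\infty$ forces $\sum_{j\neq0}\rho(jM)\to0$. A more elementary and robust alternative is to approximate $f_1,f_2$ in $L_2$ by compactly supported functions $g_1,g_2$ supported in $[-A,A]$: the associated correlation $\rho_{g_1g_2}$ is then supported in $[-2A,2A]$ and contributes nothing to the off-diagonal sum once $M>2A$, while the residual bilinear error terms are estimated by Cauchy--Schwarz applied to the $\int_0^M$ of the corresponding periodizations.

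I expect the final step — controlling the off-diagonal tail $\sum_{j\neq0}\rho(jM)$ — to be the main obstacle. Termwise each summand is merely $O(1)$, bounded by $\big(\int|f_1|^2\big)^{1/2}\big(\int|f_2|^2\big)^{1/2}$, so cancellation and summability must come from the interplay of all terms rather than from individual decay; indeed the bilinear form $(f_1,f_2)\mapsto\int_0^M F_1F_2$ is \emph{not} uniformly bounded on $L_2\times L_2$, which is why a naive density argument fails and one must genuinely invoke either the Wiener-algebra structure ($\widehat\rho\in L_1$) underlying Poisson summation or the compact support of the approximants. Both of these are legitimate in the present setting because the periodized spectral densities $\sum_k S_{Z}(\cdot-f_sk)$ and $\sum_k|S_{XZ}(\cdot-f_sk)|^2$ are integrable, which guarantees that the periodizations $F_1,F_2$ are well defined and square-integrable over a period and thereby licenses the Fubini interchange used in the reduction.
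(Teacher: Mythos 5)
Your reduction is correct, and it is in substance the same decomposition the paper itself uses: writing $\rho(t)=\int_{-\infty}^{\infty}f_1(u)f_2(u+t)\,du$, your identity $\eqref{eq:main_proof_lemma1}=\sum_{j\in\mathbb Z}\rho(jM)$ isolates the diagonal term $\rho(0)$, which equals \eqref{eq:main_proof_lemma2} for every $M$ (this is the paper's sum over $k=l$), and the lemma becomes the claim that the off-diagonal tail $\sum_{j\neq 0}\rho(jM)$ vanishes as $M\to\infty$ (this is the paper's sum over $k\neq l$). The two arguments part ways only at this last step, and that is where your proposal has genuine gaps.

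Neither of your closing routes works as stated. In the Poisson route, $\widehat\rho(\xi)=\widehat f_1(-\xi)\widehat f_2(\xi)$ is a product of two $L_2$-Fourier transforms, hence an $L_1$ function defined only almost everywhere: its values on the measure-zero lattice $\left\{n/M\right\}$ are not even well defined, so the ``Riemann sum'' $\frac{1}{M}\sum_n\widehat\rho(n/M)$ is meaningless (and for any concrete representative it may be infinite for every $M$); moreover, pointwise Poisson summation is not guaranteed by ``$\rho$ continuous and vanishing at infinity, $\widehat\rho\in L_1$'' alone--one needs decay or amalgam-type hypotheses that are never established here. In the compact-support route, your own correct observation that $(f_1,f_2)\mapsto\int_0^M F_1F_2$ is not uniformly bounded on $L_2\times L_2$ is precisely what defeats the proposed repair: the error terms are controlled by the period-$L_2$ norms of periodized tails, $\bigl\|\sum_k(f_i-g_i)(\cdot+kM)\bigr\|_{L_2(0,M)}$, and these are not small merely because $\|f_i-g_i\|_{L_2(\mathbb R)}$ is small, nor because each $F_i$ lies in $L_2$ of a period (the fact you invoke). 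Indeed the lemma is false at its stated generality: for $f_1=f_2=(1+|\varphi|)^{-3/4}$, bounded and square-integrable, expression \eqref{eq:main_proof_lemma1} is $+\infty$ for every $M$ while \eqref{eq:main_proof_lemma2} is finite. So no argument from boundedness and square-integrability alone can close; one must import the application's structure, e.g.\ the Cauchy--Schwarz domination $|F_1(\psi)|^2\le\sum_k\bigl(|S_{XZ}|^2/S_Z\bigr)\bigl(f_s(\psi+kM)\bigr)$, whose right-hand side is the periodization of an $L_1$ function and, restricted to tails, yields exactly the uniform-in-$M$ control your Cauchy--Schwarz step needs; you gesture at this integrability but never extract the estimate. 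The paper, by contrast, closes with an elementary tail estimate and no Fourier analysis at all: choose $M$ so that $\int_{|\varphi|>(M+1)/2}|f_i|^2\,d\varphi<\epsilon^2$, bound the index pairs whose two translates both fall in the tail via $|ab|\le|a|^2+|b|^2$, and bound the two adjacent pairs $(k,l)\in\{(0,-1),(-1,0)\}$ via Cauchy--Schwarz (the paper's own handling of the doubly infinite index set is itself cavalier and implicitly needs the same tail-periodization control, but that direct estimate is the intended route, and it is the one to follow and tighten if you want a self-contained proof).
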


\subsubsection*{Proof of Lemma~\ref{lem:proof_main}}
Equation \eqref{eq:main_proof_lemma1} can be written as
\begin{align}
\label{eq:main_proof_lemma3}
&\int_{-\frac{1}{2}}^{\frac{1}{2}} \sum_{m=0}^{M-1} \sum_{k\in \mathbb Z} f_1\left(\phi+m+kM\right) f_2\left(\phi+m+kM\right) d\phi \\
\label{eq:main_proof_lemma4}
+ &\int_{-\frac{1}{2}}^{\frac{1}{2}} \sum_{m=0}^{M-1} \sum_{k\neq l} f_1\left(\phi+m+kM\right) f_2\left(\phi+m+lM\right) d\phi.
\end{align}
Since the term \eqref{eq:main_proof_lemma3} is identical to \eqref{eq:main_proof_lemma2}, all that is left is to show that \eqref{eq:main_proof_lemma4} vanishes as $M\rightarrow \infty$. Take $M$ large enough such that 
\[
\int_{\mathbb R \setminus [-\frac {M+1}{2},\frac {M+1}{2}]} \left|f_i(\phi)\right|^2 d\phi< \epsilon^2, \quad i=1,2.
\]
We can assume this $M$ is even without losing generality. By a change of variables  \eqref{eq:main_proof_lemma4} can be written as
\begin{equation} \label{eq:main_proof_lemma4a}
\sum_{k\neq l}\int_{-\frac{M+1}{2}}^\frac{M+1}{2}  f_1\left(\varphi+\frac{M}{2}+kM\right) f_2\left(\varphi+\frac{M}{2}+lM\right) d\varphi.
\end{equation}
We split the indices in the last sum into three disjoint sets:
\begin{enumerate}
\item $\mathcal I=\left\{k,l \in \mathbb Z\setminus\{0,-1\},\,k\neq l \right\}$,
\begin{align} 
&\left|\sum_{ \mathcal I} \int_{-\frac{M+1}{2}}^\frac{M+1}{2} f_1\left(\varphi+\frac{M}{2}+kM\right)
 f_2\left(\varphi+\frac{M}{2}+lM\right) d\varphi \right| \nonumber \\ \nonumber
&\overset{a}{\leq} \sum_{ \mathcal I} \int_{-\frac{M+1}{2}}^\frac{M+1}{2} \left|f_1\left(\varphi+\frac{M}{2}+kM\right)\right|^2 d\varphi \\ \nonumber
&\quad \quad  + \sum_{ \mathcal I} 
\int_{-\frac{M+1}{2}}^\frac{M+1}{2} \left|
 f_2\left(\varphi+\frac{M}{2}+lM\right) \right|^2 d\varphi  \\ \nonumber
 & \leq \int_{\mathbb R\setminus {[-\frac{M+1}{2},\frac{M+1}{2}]} } \left|f_1(\varphi)\right|^2 d\varphi \\ 
 &\quad \quad +\int_{\mathbb R\setminus {[-\frac{M+1}{2},\frac{M+1}{2}]} } \left|f_2(\varphi)\right|^2 d\varphi   \leq 2\epsilon^2, \label{eq:main_proof_lemma5}
\end{align}
where $(a)$ is due to the triangle inequality and since for any two complex numbers $a,b$, $|ab|\leq \frac{|a|^2+|b|^2}{2}\leq |a|^2+|b|^2$.
\item $k=0,l=-1$,
\begin{align} 
&\int_{-\frac{M+1}{2}}^\frac{M+1}{2}  f_1\left(\varphi+\frac{M}{2}\right) f_2\left(\varphi-\frac{M}{2}\right) d\varphi \nonumber \\ \nonumber & = \int_{-\frac{M+1}{2}}^0  f_1\left(\varphi+\frac{M}{2}\right) f_2\left(\varphi-\frac{M}{2}\right) d\varphi \\ \nonumber &\quad \quad +\int_0^{\frac{M+1}{2}} f_1\left(\varphi+\frac{M}{2}\right) f_2\left(\varphi-\frac{M}{2}\right) d\varphi \\ \nonumber & \overset{a}{\leq} \sqrt{\int_{-\frac{M+1}{2}}^0  f_1^2\left(\varphi+\frac{M}{2}\right)  d\varphi} \sqrt{\int_{-\frac{M+1}{2}}^0  f_2^2\left(\varphi-\frac{M}{2}\right)  d\varphi} \\ \nonumber
& \quad + \sqrt{\int_0^{\frac{M+1}{2}}  f_1^2\left(\varphi+\frac{M}{2}\right)  d\varphi} \sqrt{\int_0^{\frac{M+1}{2}}  f_2^2\left(\varphi-\frac{M}{2}\right)  d\varphi} \\ \nonumber
& \leq \sqrt{\int_{-\frac{M+1}{2}}^0  f_1^2\left(\varphi+\frac{M}{2}\right)  d\varphi} \sqrt{ \int_{\mathbb R \setminus [-\frac {M+1}{2},\frac {M+1}{2}]} f_2^2\left(\phi\right) d\phi } \\ \nonumber
& \quad + \sqrt{ \int_{\mathbb R \setminus [-\frac {M+1}{2},\frac {M+1}{2}]} f_1^2\left(\phi\right) d\phi } \sqrt{\int_0^{\frac{M+1}{2}}  f_2^2\left(\varphi-\frac{M}{2}\right)  d\varphi} \\ 
& \leq \epsilon \|f_1\|_2 + \epsilon \|f_2\|_2, \label{eq:main_proof_lemma6}
\end{align}
where $(a)$ follows from the Cauchy-Schwartz inequality.
\item $k=-1,l=0$, using the same arguments as in the previous case,
\begin{align} \label{eq:main_proof_lemma7}
\int_{-\frac{M+1}{2}}^\frac{M+1}{2}&  f_1\left(\varphi+\frac{M}{2}\right) f_2\left(\varphi-\frac{M}{2}\right) d\varphi \nonumber \\ &\quad \quad \leq \epsilon\left(\|f_1\|_2+\|f_2\|_2 \right). 
\end{align}
\end{enumerate} 

From \eqref{eq:main_proof_lemma5}, \eqref{eq:main_proof_lemma6} and \eqref{eq:main_proof_lemma7}, the sum \eqref{eq:main_proof_lemma4a} can be bounded by
\[
2\epsilon\left(\|f_1\|_2+\|f_2\|_2 \right)+2\epsilon^2,
\]
which can be made as close to zero as required.
Since \eqref{eq:main_proof_to_show2} follows from \eqref{eq:main_proof_before_lemma} and Lemma~\ref{lem:proof_main}, the proof is complete.

\section{Proof of Theorem~\ref{thm:filterbanks_discrete}: Discrete Multi-branch Sampling
\label{sec:proof_filtersbank_discrete} }
Similar to the proof of Theorem~\ref{thm:discrete_decimation_rate_distortion}, the iDRF of $X[\cdot]$ given $\mathbf Y[\cdot]$ coincides with the iDRF of the vector-valued process $\mathbf X^{PM}[\cdot]$ defined by
\[
{\mathbf X}^{MP}[n]=\left(X[PMn],X[PMn+1],\ldots X[PMn+PM-1] \right).
\]
For a given $M\in \mathbb N$. ${\mathbf X}^{MP}[\cdot]$ is a stationary Gaussian process with PSD matrix 
\[
\left( \mathbf {S_X}\right)_{r,s}\left(e^{2\pi i \phi} \right)=\frac{1}{MP}\sum_{m=0}^{MP-1}e^{2\pi i (r-s)\frac{\phi-m}{PM}}S_X\left(e^{2\pi i \frac{\phi-m}{MP}} \right).
\]
The processes $\mathbf Y[\cdot]$ and $\mathbf X^{PM}[\cdot]$ are jointly Gaussian and stationary with a $PM \times P$ cross PSD  whose $(m+1,p)^\textrm{th}$ entry is given by
\begin{align*}
\left(\mathbf S_{\mathbf X^{PM}\mathbf Y}\right)_{m,p}\left(e^{2\pi i \phi} \right) & = S_{\mathbf X_m^{PM}Y_p}\left(e^{2\pi i\phi}\right) \\
& = \sum_{k\in \mathbb Z} \mathbb E\left[X[PMk+m] Z_p[0] \right]e^{-2\pi i \phi k } \\
& = \frac{1}{PM} \sum_{r=0}^{PM-1}e^{2\pi i m \frac{\phi-r}{PM}} S_{XZ_p}\left(e^{2\pi i \frac{\phi-r}{PM}} \right),
\end{align*}
where we denoted by $X^{PM}_m$ the $m^\textrm{th}$ coordinate of $\mathbf X^{PM}[\cdot]$. The PSD of the MMSE estimator of $\mathbf X^{PM}[\cdot]$ from $\mathbf Y[\cdot]$ is given by
\begin{align} \label{eq:proof_multi_discrete_CPSD}
\mathbf S_{\mathbf X^{PM}|\mathbf Y}\left(e^{2\pi i \phi} \right) & = \left\{\mathbf S_{\mathbf X^{MP} \mathbf Y} \mathbf S^{-1}_{\mathbf Y} \mathbf S_{\mathbf X^{MP} \mathbf Y}^* \right\} \left(e^{2\pi i \phi} \right),
\end{align}
Since only the non-zero eigenvalues of  $\mathbf S_{\mathbf X^{PM}|\mathbf Y}\left(e^{2\pi i \phi} \right)$ contribute to the distortion in \eqref{eq:vector_RD_D}, we are interested in the non-zero eigenvalues of \eqref{eq:proof_multi_discrete_CPSD}. These are identical to the non-zero eigenvalues of 
\begin{equation} \label{eq:proof_multi_discrete_eigen}
\left\{\mathbf S_{\mathbf Y}^{-\frac{1}{2}*}\mathbf S_{\mathbf X^{PM} \mathbf Y}^*\mathbf S_{\mathbf X^{PM} \mathbf Y} \mathbf S_{\mathbf Y}^{-\frac{1}{2}}\right\} \left(e^{2\pi i \phi} \right),
\end{equation}
where $\left\{\mathbf S_{\mathbf Y}^{-\frac{1}{2}*}\mathbf S_{\mathbf Y}^{-\frac{1}{2}}\right\}\left(e^{2\pi i \phi} \right)=\mathbf S_{\mathbf Y}^{-1}\left(e^{2\pi i \phi} \right)$. The $(p,q)^\textrm{th}$ entry of the  $P\times P$ matrix  $\left\{ \mathbf S_{\mathbf X^{PM} \mathbf Y}^*\mathbf S_{\mathbf X^{PM} \mathbf Y}\right\} \left(e^{2\pi i \phi} \right)$ is given by 
\begin{align*}
&\frac{1}{(PM)^2} \sum_{l=0}^{PM-1} \sum_{r=0}^{PM-1} e^{-2\pi i l \frac{\phi-r}{PM} } S_{XZ_p} \left(e^{2\pi i \frac{\phi-r}{PM}} \right)  \\& ~~~~~~  \times \sum_{k=0}^{PM-1} e^{2\pi i l \frac{\phi-k}{PM} } S_{XZ_q} \left(e^{2\pi i \frac{\phi-k}{PM}} \right) \\
= & \frac{1}{(PM)^2}  \sum_{r=0}^{PM-1}  S_{XZ_p} \left(e^{2\pi i \frac{\phi-r} {PM}} \right)  S_{XZ_q} \left(e^{2\pi i \frac{\phi-r} {PM}} \right) \\
= & \frac{1}{(PM)^2}  \sum_{r=0}^{PM-1}  \left\{S_X^2 H^*_pH_q\right\} \left(e^{2\pi i \frac{\phi-r} {PM}} \right) ,
\end{align*}
which is the matrix $\mathbf K_M\left(e^{2\pi i \phi} \right)$ defined in Theorem~\ref{thm:filterbanks_discrete}. Applying Theorem~\ref{thm:indirect_vector_case} with the eigenvalues of \eqref{eq:proof_multi_discrete_eigen} completes the proof.

\section{Proof of theorem \ref{thm:filterbanks_main} \label{sec:proof_of_filtersbank_main}: distortion-rate function in multi-branch sampling}
For $M\in\mathbb{N}$, define $X^{M}\left[\cdot\right]$
and $Z_p^M\left[\cdot\right]$, $p=1,...,P$ to be the processes obtained by uniformly sampling $X\left(\cdot\right)$ and $Z_p\left(\cdot\right)$ at frequency $f_sMP$, i.e. $X^M[n]=X\left(\frac{n}{f_sMP}\right)$
and $Z_p^M[n]=Z\left(\frac{n}{f_sMP}\right)$. We have
\[
S_{X^{M}}\left(e^{2\pi i\phi}\right)=MPf_{s}\sum_{k\in\mathbb{Z}}S_{X}\left(MPf_{s}\left(\phi-k\right)\right),
\]
and
\[
S_{Z_{p}^{M}}\left(e^{2\pi i\phi}\right)=MPf_{s}\sum_{k\in\mathbb{Z}}\left\{ S_{X}\left|H_{p}\right|^{2}\right\} \left(MPf_{s}\left(\phi-k\right)\right).
\]
In addition,  $X^{M}\left[\cdot\right]$ and
$Z_{p}^{M}\left[\cdot\right]$ are jointly stationary processes with
cross spectral densities 
\[
S_{Z_p^{M}Z_r^M}\left(e^{2\pi i\phi}\right)=MPf_s\sum_{m=0}^{MP-1}\sum_{k\in\mathbb{Z}}S_{Z_pZ_r}\left(f_s\left(\frac{\phi-m}{MP}-k\right)\right),
\]
for all $p,r=1,\ldots,P$, and
\[
S_{X^{M}Z_p^{M}}\left(e^{2\pi i\phi}\right)=MPf_s\sum_{m=0}^{MP-1}\sum_{k\in\mathbb{Z}}S_{XZ_{p}}\left(f_s\left(\frac{\phi-m}{MP}-k\right)\right).
\]
Since $Y_p\left[\cdot\right]$ is a factor $M$
down-sampled version of $Z_p^M[\cdot]$, the indirect
distortion-rate function of $X^M[\cdot]$ given $\mathbf{Y}[\cdot]=\left(Y_1[\cdot],\ldots,Y_P[\cdot]\right)$
was found in Theorem \ref{thm:filterbanks_discrete} to be 
\begin{equation}
\bar{R}_{X^M|\mathbf Y}\left(P,M,\theta\right)=\frac{1}{2}\sum_{p=1}^{P}\int_{-\frac{1}{2}}^{\frac{1}{2}}\log^{+}\left[\lambda_{p}\left(\mathbf J_M(e^{2\pi i \phi})\right)\theta^{-1}\right]d\phi,\label{eq:Rate_M-1}
\end{equation}
\begin{align}
D_{X^M|\mathbf Y}\left(P,M,\theta\right) & =  \sigma_{X^M}^{2}-\sum_{p=1}^{P}\int_{-\frac{1}{2}}^{\frac{1}{2}}\left[\lambda_{p}\left(\mathbf J_M(e^{2\pi i \phi})\right)-\theta\right]^+d\phi,\label{eq:Dist_M-1}
\end{align}
where \begin{align*}
\mathbf J_M\left(e^{2\pi i\phi}\right) & =  \mathbf S_{\mathbf Y}^{-\frac{1}{2}*}\mathbf{K}_{M}\mathbf S_{\mathbf Y}^{-\frac{1}{2}}\left(e^{2\pi i\phi}\right),
\end{align*}
${\mathbf S}_{\mathbf Y}(e^{2\pi i \phi})$  is the spectral density matrix of the process $\mathbf Y[\cdot]$ with $(p,r)^{\textrm{th}}$ entry 
\begin{align}
\left({\mathbf S}_{\mathbf Y}\left(e^{2\pi i\phi}\right)\right)_{p,r} 
& = \frac{1}{MP}\sum_{m=0}^{MP-1}S_{Z_p^M Z_r^M}\left(e^{2\pi i\frac{\phi-m}{MP}}\right)\nonumber \\
 & =f_{s}\sum_{m=0}^{MP-1}\sum_{k\in\mathbb{Z}}S_{Z_p Z_r}\left(f_{s}\left(\phi-m-MPk\right)\right)\nonumber \\
 & =f_{s}\sum_{n\in\mathbb{Z}}S_{Z_p Z_r}\left(f_s\left(\phi-n\right)\right)\nonumber \nonumber \\
 & = f_s \sum_{n\in \mathbb Z} \left\{S_{X+\eta}H_p H_r^* \right\}\left(f_s\phi-f_sn \right) \nonumber \\
 & =\left(\tilde{\mathbf S}_{\mathbf Y}\right)_{p,r}\left(\phi f_{s}\right)f_s, \label{eq:filter_bank_Sy}
\end{align}
and
\begin{align}
\left(\mathbf{K}_{M}\right)_{p,r} & =\frac{1}{(MP)^2}\sum_{m=0}^{MP-1}\left\{S_{X^{M}}^{2}H_p H_r\right\}\left(e^{2\pi i\frac{\phi-m}{MP}}\right) \nonumber \\
 & =\frac{1}{(MP)^2}\sum_{m=0}^{MP-1}\left\{ S_{X^{M}Z_p^M}S_{X^MZ_r^M}^{*}\right\}\left(e^{2\pi i\frac{\phi-m}{MP}}\right)\nonumber \\
  &=f_s^2\sum_{m=0}^{MP-1}\left[\sum_{k\in\mathbb{Z}}S_{XZ_p}\left(f_{s}\left(\phi-m-kMP\right)\right) \right.  \label{eq:filter_bank_K_M} \\
  &\quad  \quad\quad\quad \left . \times \sum_{l\in\mathbb{Z}}S_{XZ_r}^{*}\left(f_{s}\left(\phi-m-lMP\right)\right)\right]. \nonumber
\end{align}

The idea now is that under the assumption of Riemann integrability,
the distortion between almost any sample path of $X\left(\cdot\right)$ and any reasonable reconstruction of it from $X^{M}\left[\cdot\right]$ will converge to zero as $M\rightarrow\infty$. It follows that the distortion in reconstructing $X^M\left[\cdot\right]$ form $\hat{\mathbf Y}\left[\cdot\right]$
must also converges to the distortion in reconstructing $X\left(\cdot\right)$ from $\hat{\mathbf Y}\left[\cdot\right]$, and the indirect distortion-rate function of $X\left(\cdot\right)$ given $\mathbf Y\left[\cdot\right]$ is obtained by this limit. That is, we are looking to evaluate \eqref{eq:Rate_M-1} and \eqref{eq:Dist_M-1} in the limit $M\rightarrow \infty$. \\

First note that 
\begin{align*}
\sigma_{X_{M}}^{2} & =\int_{-\frac{1}{2}}^{\frac{1}{2}}S_{X^{M}}\left(e^{2\pi i\phi}\right)d\phi\\
 & =\int_{-\frac{1}{2}}^{\frac{1}{2}}MPf_{s}\sum_{k\in\mathbb{Z}}S_{X}\left(f_{s}MP\left(\phi-k\right)\right)d\phi\\
 & =\int_{-\infty}^{\infty}S_{X}(f)df=\sigma_{X}^{2}.
\end{align*}
In addition, by a change of the integration variable from $f$ to $\phi=f/f_s$, we can write \eqref{eq:multi_main} as
\begin{subequations}
\label{eq:multi_main_proof}
\begin{align}
\label{eq:multi_rate_proof}
R\left(P,f_s,\theta\right)=\frac{f_s}{2}\sum_{p=1}^{P}\int_{-\frac{1}{2}}^{\frac{1}{2}}\log^{+}\left[\lambda_p\left(\bar{\mathbf J}(e^{2\pi i\phi})\right)/\theta\right]d\phi
\end{align}
\begin{align}
\label{eq:multi_dist_proof}
D\left(P,f_s,\theta\right)  =\sigma_{X}^{2}-\sum_{p=1}^{P}\int_{-\frac{1}{2}}^{\frac{1}{2}}\left[\lambda_{p}\left(\bar{\mathbf J}(e^{2\pi i\phi})\right)-\theta\right]^{+}d\phi, 
\end{align}
\end{subequations}
where in $\phi \in \left(-\frac{1}{2},\frac{1}{2}\right)$, the matrix $\bar{\mathbf J}(e^{2\pi i\phi})$ is given by
\[
\bar{\mathbf J}(e^{2\pi i\phi})={\mathbf S}_{\mathbf Y}^{-\frac{1}{2}*}(e^{2\pi i\phi})\bar{\mathbf{K}}(e^{2\pi i\phi}){\mathbf S}_{\mathbf Y}^{-\frac{1}{2}}(e^{2\pi i\phi}),
\]
and $\bar{\mathbf K}(e^{2\pi i\phi})=f_s^2{\mathbf K}(f_s\phi)$. It follows that in order to complete the proof, it is enough to show that the eigenvalues of ${\mathbf J}_M (e^{2\pi i\phi})$ seen as $L_1\left(-\frac{1}{2},\frac{1}{2} \right)$ functions in $\phi$ converge to the eigenvalues of $\bar{\mathbf J} (e^{2\pi i\phi})$. Since
\begin{align*}
\|\mathbf{S_{\mathbf Y}}^{-\frac{1}{2}*}\mathbf{K}_{M}\mathbf{S_{\mathbf Y}}^{-\frac{1}{2}}&-\mathbf{S_{\mathbf Y}}^{-\frac{1}{2}*}\bar{\mathbf{K}}\mathbf{S_{\mathbf Y}}^{-\frac{1}{2}}\|_2 \leq \|\mathbf{S_{\mathbf Y}}\|^{-1}_2 \|f_s\mathbf{K}_{M}-\bar{\mathbf{K}}\|_2,
\end{align*}
it is enough to prove convergence in $L_1\left(-\frac{1}{2},\frac{1}{2} \right)$ for each entry, i.e. that
\begin{equation} \label{eq:filters_bank_to_show}
\lim_{M\rightarrow \infty} \int _{-\frac{1}{2}}^\frac{1}{2} \frac{  \left|\left( \mathbf K_M\right)_{p,r}\left(e^{2\pi i \phi} \right) -\left( \bar{\mathbf K}\right)_{p,r}\left(e^{2\pi i \phi} \right) \right|} {\| \mathbf{S_{\mathbf Y}} \left(e^{2\pi i \phi} \right) \|_2} d\phi =0
\end{equation}
for all  $p,r=1,\ldots,P$. Since
\begin{align*}
\left( \bar{\mathbf K}\right)_{p,r}\left(e^{2\pi i \phi} \right) & = f_s^2 \left(\mathbf K\right)_{p,r}(f_s \phi) \\
& = f_s^2 \sum_{k\in \mathbb Z} \left\{S_X^2H_iH_j^* \right\}\left(f_s(\phi-k)\right),
\end{align*}
\eqref{eq:filters_bank_to_show} follows by applying  Lemma~\ref{lem:proof_main} to \eqref{eq:filter_bank_K_M} with 
\[
f_1(\phi) = \frac{S_{XZ_p}\left(f_s\phi\right)}{\sqrt{\|\mathbf S_{\mathbf Y}\left(e^{2\pi i \phi} \right)\|_2}},
\]
\[
f_2(\phi) = \frac{S_{XZ_r}^*\left(f_s\phi\right)}{\sqrt{\|\mathbf S_{\mathbf Y}\left(e^{2\pi i \phi} \right)\|_2}}.
\]

\addtolength{\textheight}{-0cm}   

\section*{ACKNOWLEDGMENT}
The authors would like to thank to I. E. Aguerri and D. G\"{u}nd\"{u}z for valuable discussions regarding the problem formulation, to S. Rini and Y. Chen for helpful remarks and discussions and to the anonymous reviewers for extensive and constructive comments that greatly improved the paper.


\bibliographystyle{IEEEtran}
\bibliography{IEEEfull,/Users/Alon1/LaTex/bibtex/sampling}

\begin{IEEEbiographynophoto} 
{Alon Kipnis} (S'14) received the B.Sc. degree in mathematics (summa cum laude) and the B.Sc. degree in electrical engineering (summa cum laude) in 2010, and the M.Sc in mathematics in 2012, all from Ben-Gurion University of the Negev. He is currently a Ph.D candidate in the department of electrical engineering at Stanford University. His research interests include information theory, data compression and stochastic analysis.
\end{IEEEbiographynophoto}

\begin{IEEEbiographynophoto}
{Andrea J. Goldsmith} (S’90–M’93–SM’99–F’05) 
Andrea Goldsmith is the Stephen Harris professor in the School of Engineering and a professor of Electrical Engineering at Stanford University. She was previously on the faculty of Electrical Engineering at Caltech. Her research interests are in information theory and communication theory, and their application to wireless communications and related fields. She co-founded and served as Chief Scientist of Wildfire.Exchange, and previously co-founded and served as CTO of Quantenna Communications, Inc. She has also held industry positions at Maxim Technologies, Memorylink Corporation, and AT\&T Bell Laboratories. Dr. Goldsmith is a Fellow of the IEEE and of Stanford, and has received several awards for her work, including the IEEE ComSoc Edwin H. Armstrong Achievement Award as well as Technical Achievement Awards in Communications Theory and in Wireless Communications,  the National Academy of Engineering Gilbreth Lecture Award, the IEEE ComSoc and Information Theory Society Joint Paper Award, the IEEE ComSoc Best Tutorial Paper Award, the Alfred P. Sloan Fellowship, the WICE Technical Achievement Award, and the Silicon Valley/San Jose Business Journal’s Women of Influence Award. She is author of the book ``Wireless Communications'' and co-author of the books ``MIMO Wireless Communications'' and ``Principles of Cognitive Radio", all published by Cambridge University Press, as well as an inventor on 28 patents. She received the B.S., M.S. and Ph.D. degrees in Electrical Engineering from U.C. Berkeley. \par
Dr. Goldsmith has served on the Steering Committee for the IEEE Transactions on Wireless Communications and as editor for the IEEE Transactions on Information Theory, the Journal on Foundations and Trends in Communications and Information Theory and in Networks, the IEEE Transactions on Communications, and the IEEE Wireless Communications Magazine. She participates actively in committees and conference organization for the IEEE Information Theory and Communications Societies and has served on the Board of Governors for both societies. She has also been a Distinguished Lecturer for both societies, served as President of the IEEE Information Theory Society in 2009, founded and chaired the student committee of the IEEE Information Theory society, and chaired the Emerging Technology Committee of the IEEE Communications Society. At Stanford she received the inaugural University Postdoc Mentoring Award, served as Chair of Stanford’s Faculty Senate in 2009 and currently serves on its Faculty Senate, Budget Group, and Task Force on Women and Leadership.
\end{IEEEbiographynophoto}

\begin{IEEEbiographynophoto}
{Yonina C. Eldar} (S’98–M’02–SM’07-F'12) received the B.Sc. degree in Physics in 1995 and the B.Sc. degree in Electrical Engineering in 1996 both from Tel-Aviv University (TAU), Tel-Aviv, Israel, and the Ph.D. degree in Electrical Engineering and Computer Science in 2002 from the Massachusetts Institute of Technology (MIT), Cambridge.\par
From January 2002 to July 2002 she was a Postdoctoral Fellow at the Digital Signal Processing Group at MIT. She is currently a Professor in the Department of Electrical Engineering at the Technion - Israel Institute of Technology, Haifa, Israel, where she holds the Edwards Chair in Engineering. 
 She is also a Research Affiliate with the Research Laboratory of Electronics at MIT and was a Visiting Professor at Stanford University, Stanford, CA.
 Her research interests are in the broad areas of
statistical signal processing, sampling theory and compressed sensing, optimization methods, and their applications to biology and optics. \par
Dr. Eldar has received numerous awards for excellence in research and teaching, including the IEEE Signal Processing Society Technical Achievement Award (2013), the IEEE/AESS Fred Nathanson Memorial Radar Award (2014), and the IEEE Kiyo Tomiyasu Award (2016). She was a Horev Fellow of the Leaders in Science and Technology program at the Technion and an Alon Fellow. She received the Michael Bruno Memorial Award from the Rothschild Foundation, the Weizmann Prize for Exact Sciences, the Wolf Foundation Krill Prize for Excellence in Scientific Research, the Henry Taub Prize for Excellence in Research (twice), the Hershel Rich Innovation Award (three times), the Award for Women with Distinguished Contributions, the Andre and Bella Meyer Lectureship, the Career Development Chair at the Technion, the Muriel \& David Jacknow Award for Excellence in Teaching, and the Technion’s Award for Excellence in Teaching (2 times). She received several best paper awards and best demo awards together with her research students and colleagues including the SIAM outstanding Paper Prize and the IET Circuits, Devices and Systems Premium Award, and was selected as one of the 50 most influential women in Israel. \par
 She is a member of the Young Israel Academy of Science and Humanities and the Israel Committee for Higher Education, and an IEEE Fellow. She is the Editor in Chief of Foundations and Trends in Signal Processing, a member of the IEEE Sensor Array and Multichannel Technical Committee and serves on several other IEEE committees. In the past, she was a Signal Processing Society Distinguished Lecturer, member of the IEEE Signal Processing Theory and Methods and Bio Imaging Signal Processing technical committees, and served as an associate editor for the IEEE Transactions On Signal Processing, the EURASIP Journal of Signal Processing, the SIAM Journal on Matrix Analysis and Applications, and the SIAM Journal on Imaging Sciences.  She was Co-Chair and Technical Co-Chair of several international conferences and workshops. \par
She is author of the book ``Sampling Theory: Beyond Bandlimited Systems" and co-author of the books ``Compressed Sensing" and ``Convex Optimization Methods in Signal Processing and Communications", all published by Cambridge University Press.
\end{IEEEbiographynophoto}

\begin{IEEEbiographynophoto}
{Tsachy Weissman} (S’99-M’02-SM’07-F’13) graduated summa cum laude with a B.Sc. in electrical engineering from the Technion in 1997, and earned his Ph.D. at the same place in 2001. He then worked at Hewlett Packard
Laboratories with the information theory group until 2003, when he joined Stanford University, where he is currently Professor of Electrical Engineering and incumbent of the STMicroelectronics chair in the School of Engineering.
He has spent leaves at the Technion, and at ETH Zurich.
Tsachy's research is focused on information theory, compression, communication, statistical signal processing, the interplay between them, and their applications. He is recipient of several best paper awards, and prizes for excellence in research and teaching. He served on the editorial board of the
IEEE TRANSACTIONS ON INFORMATION THEORY from Sept. 2010 to Aug. 2013, and currently serves on the editorial board of Foundations and Trends in Communications and Information Theory. He is Founding Director of the Stanford Compression Forum. 
\end{IEEEbiographynophoto}

\end{document}